\renewcommand\footnotetextcopyrightpermission[1]{}
\newcommand*\circled[1]{\textcircled{\scriptsize{#1}}}
\DeclareMathOperator*{\cart}{\times}
\newcommand{\mathcolorbox}[2]{\colorbox{#1}{$\displaystyle #2$}}
\definecolor{lightgray}{rgb}{0.83, 0.83, 0.83}
\newcommand{\tp}[1]{\mathcolorbox{lightgray}{#1}}
\newcommand{\xMapsto}[2][]{\ext@arrow 0599{\Mapstofill@}{#1}{#2}}
\def\Mapstofill@{\arrowfill@{\Mapstochar\Relbar}\Relbar\Rightarrow}
\let\origthelstnumber\thelstnumber
\newcommand*\Suppressnumber{%
  \lst@AddToHook{OnNewLine}{%
    \let\thelstnumber\relax%
     \advance\c@lstnumber-\@ne\relax%
    }%
}
\newcommand*\Reactivatenumber{%
  \lst@AddToHook{OnNewLine}{%
   \let\thelstnumber\origthelstnumber%
   \advance\c@lstnumber\@ne\relax}%
}
\NewDocumentCommand{\overarrow}{O{=} O{\uparrow} m}{%
  \overset{\makebox[0pt]{\begin{tabular}{@{}c@{}}#3\\[0pt]\ensuremath{#2}\end{tabular}}}{#1}
}
\newcommand{\IN}{\mathbb{N}}
\newcommand{\INz}{\mathbb{N}_0}
\newcommand{\IZ}{\mathbb{Z}}
\newcommand{\IQ}{\mathbb{Q}}
\newcommand{\IR}{\mathbb{R}}
\newcommand\dplus{\mbox{+\hspace*{-3px}+}}
\newcommand\dplusn[1]{\mbox{${+\hspace*{-3px}+}_{#1}$}}
\newcommand{\bij}{%
  \hookrightarrow\mathrel{\mspace{-13mu}}\rightarrow
}
\newcommand\mdh{\texttt{md\_hom}}
\newcommand\co[1]{\circledast_{#1}}
\newcommand\concat[2]{\underset{#2}{\dplus_{#1}}}
\newcommand\fsize[4]{{\overset{#1}{\underset{{#2}}{\Rightarrow}}{\hspace{-2.5px}^\texttt{#3}_\texttt{#4}}}}
\newcommand\sfsize[4]{{\overset{#1}{\underset{{#2}}{\Rightarrow}}{\hspace{-0px}^\texttt{#3}_\texttt{#4}}}}
\newcommand\size[3]{{\overset{#1}{\Rightarrow}{\hspace{-2.5px}^\texttt{#2}_\texttt{#3}}}}
\newcommand\bsize[3]{{\overset{#1}{\Rightarrow}{\hspace{-2.5px}^\texttt{#2}_\texttt{#3}}}}
\newcommand\ssize[3]{\overset{#1}{\Rightarrow}{\hspace{-0px}^\texttt{#2}_\texttt{#3}}}
\newcommand\sbsize[3]{\overset{#1}{\Rightarrow}{\hspace{-0px}^\texttt{#2}_\texttt{#3}}}
\newcommand\type{\texttt{TYPE}}
\newcommand\CO{\texttt{CO}}
\newcommand\MDH{\texttt{MDH}}
\newcommand\IDXs{\texttt{MDA-IDX-SETs}}
\newcommand\BUFIDXs{\texttt{BUF-IDX-SETs}}
\newcommand\IDXsxIDXs{\IDXs\,\overset{_{\boldsymbol{\cdot}}}{\times}\,\IDXs}
\newcommand\IDXFCT{\texttt{MDA-IDX-to-BUF-IDX}}
\newcommand\MDA{\mathfrak{a}}
\newcommand\Buf{\mathfrak{b}}
\newcommand\iv{\texttt{inp\_view} }
\newcommand\ov{\texttt{out\_view} }
\newcommand\idx{\mathfrak{idx}}
\DeclareFontFamily{U}{mathb}{}
\DeclareFontShape{U}{mathb}{m}{n}{
  <-5.5> mathb5
  <5.5-6.5> mathb6
  <6.5-7.5> mathb7
  <7.5-8.5> mathb8
  <8.5-9.5> mathb9
  <9.5-11.5> mathb10
  <11.5-> mathbb12
}{}
\newcommand\ASMLVL{\texttt{ASM-LVL}}
\newcommand\MDHLVL{\texttt{MDH-LVL}}
\theoremstyle{definition}
\newtheorem{definition}{Definition}
\newtheorem{lemma}{Lemma}
\newtheorem{proposition}{Proposition}
\newtheorem{notation}{Notation}
\newtheorem{example}{Example}
\newtheorem{note}{Note}
\newcommand*{\da@rightarrow}{\mathchar"0\hexnumber@\symAMSa 4B }
\newcommand*{\da@leftarrow}{\mathchar"0\hexnumber@\symAMSa 4C }
\newcommand*{\xdashrightarrow}[2][]{%
  \mathrel{%
    \mathpalette{\da@xarrow{#1}{#2}{}\da@rightarrow{\,}{}}{}%
  }%
}
\newcommand{\xdashleftarrow}[2][]{%
  \mathrel{%
    \mathpalette{\da@xarrow{#1}{#2}\da@leftarrow{}{}{\,}}{}%
  }%
}
\newcommand*{\da@xarrow}[7]{%
  \sbox0{$\ifx#7\scriptstyle\scriptscriptstyle\else\scriptstyle\fi#5#1#6\m@th$}%
  \sbox2{$\ifx#7\scriptstyle\scriptscriptstyle\else\scriptstyle\fi#5#2#6\m@th$}%
  \sbox4{$#7\dabar@\m@th$}%
  \dimen@=\wd0
  \ifdim\wd2 >\dimen@
    \dimen@=\wd2
  \fi
  \count@=2
  \def\da@bars{\dabar@\dabar@}%
  \@whiledim\count@\wd4<\dimen@\do{%
    \advance\count@\@ne
    \expandafter\def\expandafter\da@bars\expandafter{%
      \da@bars
      \dabar@
    }%
  }%
  \mathrel{#3}%
  \mathrel{%
    \mathop{\da@bars}\limits
    \ifx\\#1\\%
    \else
      _{\copy0}%
    \fi
    \ifx\\#2\\%
    \else
      ^{\copy2}%
    \fi
  }%
  \mathrel{#4}%
}
\begin{document}

\title[\small{(De/Re)-Composition of Data-Parallel Computations via Multi-Dimensional Homomorphisms}]{(De/Re)-Composition of Data-Parallel Computations \\ via Multi-Dimensional Homomorphisms}
\subtitle{Full Version}
\titlenote{
    A short version of this paper is published at ACM TOPLAS~\cite{10.1145/3665643}.
    The short version relies on a simplified formal foundation, for better illustration and easier understanding of our novel concepts and methodologies introduced in this~paper.
}

\author{Ari Rasch}
\affiliation{
  \institution{University of Muenster}
  \city{Muenster}
  \country{Germany}
}
\email{a.rasch@uni-muenster.de}

\begin{abstract}
Data-parallel computations, such as linear algebra routines (BLAS) and stencil computations, constitute one of the most relevant classes in parallel computing, e.g., due to their importance for deep learning.
Efficiently de-composing such computations for the memory and core hierarchies of modern architectures and re-composing the computed intermediate results back to the final result~--~we say \emph{(de/re)-composition} for short~--~is key to achieve high performance for these computations on, e.g., GPU and CPU.
Current high-level approaches to generating data-parallel code are often restricted to a particular subclass of data-parallel computations and architectures (e.g., only linear algebra routines on only GPU, or only stencil computations), and/or the approaches rely on a user-guided optimization process for a well-performing (de/re)-composition of computations, which is complex and error prone for the user.

We formally introduce a systematic (de/re)-composition approach, based on the algebraic formalism of \emph{Multi-Dimensional Homomorphisms~(MDHs)}\footnote{\url{https://mdh-lang.org}}.
Our approach is designed as general enough to be applicable to a wide range of data-parallel computations and for various kinds of target parallel architectures.
To efficiently target the deep and complex memory and core hierarchies of contemporary architectures, we exploit our introduced (de/re)-composition approach for a correct-by-construction, parametrized cache blocking and parallelization strategy.
We show that our approach is powerful enough to express, in the same formalism, the (de/re)-composition strategies of different classes of state-of-the-art approaches (scheduling-based, polyhedral, etc), and we demonstrate that the parameters of our strategies enable systematically generating code that can be fully automatically optimized (auto-tuned) for the particular target architecture and characteristics of the input and output data (e.g., their sizes and memory layouts).
Particularly, our experiments confirm that via auto-tuning,
we achieve higher performance than state-of-the-art approaches,
including hand-optimized solutions provided by vendors (such as NVIDIA cuBLAS/cuDNN and Intel oneMKL/oneDNN), on real-world data sets and for a variety of data-parallel computations, including:~%
linear algebra routines, stencil and quantum chemistry computations, data mining algorithms, and computations that recently gained high attention due to their relevance for deep learning.
\end{abstract}

\begin{CCSXML}
<ccs2012>
   <concept>
       <concept_id>10010147.10010169</concept_id>
       <concept_desc>Computing methodologies~Parallel computing methodologies</concept_desc>
       <concept_significance>500</concept_significance>
       </concept>
   <concept>
       <concept_id>10003752.10010124.10010131</concept_id>
       <concept_desc>Theory of computation~Program semantics</concept_desc>
       <concept_significance>500</concept_significance>
       </concept>
   <concept>
       <concept_id>10010147.10010257</concept_id>
       <concept_desc>Computing methodologies~Machine learning</concept_desc>
       <concept_significance>300</concept_significance>
       </concept>
   <concept>
       <concept_id>10011007.10011006.10011041</concept_id>
       <concept_desc>Software and its engineering~Compilers</concept_desc>
       <concept_significance>500</concept_significance>
       </concept>
 </ccs2012>
\end{CCSXML}

\ccsdesc[500]{Computing methodologies~Parallel computing methodologies}
\ccsdesc[500]{Theory of computation~Program semantics}
\ccsdesc[300]{Computing methodologies~Machine learning}
\ccsdesc[500]{Software and its engineering~Compilers}

\maketitle
\thispagestyle{empty}
\pagestyle{plain}

\section{Introduction}
\label{ch:introduction}

Data-parallel computations constitute one of the most relevant classes in parallel computing.
Important examples of such computations include
linear algebra routines~(BLAS)~\cite{1437325},
various kinds of stencil computations (e.g., Jacobi method and convolutions)~\cite{10.1145/3168824},
quantum chemistry computations~\cite{8661182},
and data mining algorithms~\cite{10.1145/3297280.3297330}.
The success of many application areas critically depends on achieving high performance for their data-parallel building blocks, on a variety of parallel architectures.
For example, highly-optimized BLAS implementations combined with the computational power of modern GPUs currently enable deep learning to significantly outperform other existing machine learning approaches (e.g., for speech recognition and image classification).

Data-parallel computations are characterized by applying the same function (a.k.a \emph{scalar function}) to each point in a multi-dimensional grid of data (a.k.a. \emph{array}), and combining the obtained intermediate results in the grid's different dimensions using so-called \emph{combine operators}.
Figures~\ref{example_intro_1} and~\ref{example_intro_2} illustrate data parallelism using as examples two popular computations:~%
i)~linear algebra \break routine \emph{Matrix-Vector multiplication~(\texttt{MatVec})}, and~%
ii)~stencil computation \emph{Jacobi~(\texttt{Jacobi1D})}.
In the case of \texttt{MatVec}, the grid is 2-dimensional and consists of pairs, each pointing to one element of the input matrix $M_{i,k}$ and the vector $v_{k}$.
To each pair, scalar function $f(M_{i,k},v_k):=M_{i,k}*v_{k}$~(multiplication) is applied, and results in the $i$-dimension are combined using combine operator $\co{1}( \,(x_1,\dotsc,x_n)\, , \,(y_1,\dotsc,y_m)\,):=(x_1,\dotsc,x_n,y_1,\dotsc,y_m)$~(concatenation) and in $k$-dimension using operator
$\co{2}( \,(x_1,\dotsc,x_n)\, , \,(y_1,\dotsc,y_n)\,):=(x_1+y_1,\dotsc,x_n+y_n)$~(point-wise addition).
Similarly, the scalar function of \texttt{Jacobi1D} is $f(v_{i+0},v_{i+1},v_{i+2}):=c*(v_{i+0}+v_{i+1}+v_{i+2})$ which computes the Jacobi-specific function for an arbitrary but fixed constant $c$;~%
\texttt{Jacobi1D}'s combine operator $\co{1}$ is concatenation.
We formally define scalar functions and combine operators
later in this paper.

\begin{figure}[t]
\centering
\includegraphics[width=\textwidth]{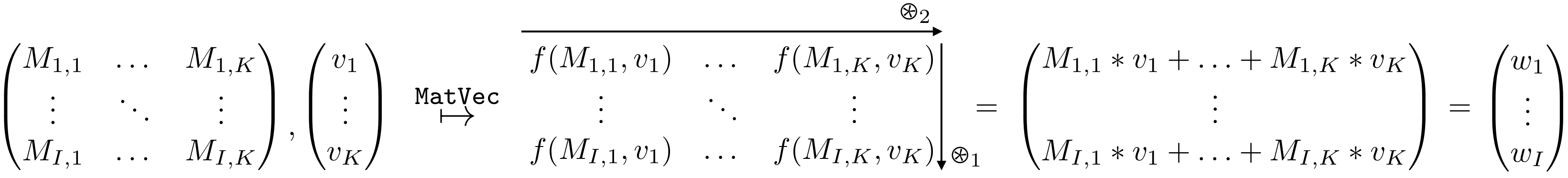}
\caption{Data parallelism illustrated using the example \emph{Matrix-Vector Multiplication} (\texttt{MatVec})}
\label{example_intro_1}
\end{figure}

\begin{figure}[b]
\centering
\includegraphics[scale=0.24]{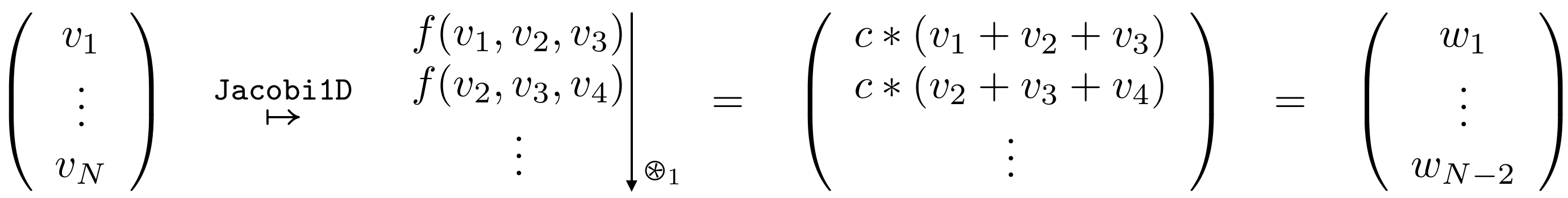}
\caption{Data parallelism illustrated using the example \emph{Jacobi 1D} ($\texttt{Jacobi1D}$)
}
\label{example_intro_2}
\end{figure}

Achieving high performance for data-parallel computations is considered important in both academia and industry, but has proven to be challenging.
In particular, achieving \emph{high performance} that is \emph{portable}
(i.e., the same program code achieves a consistently high level of performance across different architectures and characteristics of the input/output data, e.g., their size and memory layout)
and in a \emph{user-productive} way is identified as an ongoing, major research challenge.
This is because for high performance, an efficient \emph{(de/re)-composition} of computations
(illustrated in Figure~\ref{fig_decomp} and discussed thoroughly in this paper)
is required to efficiently
break down a computation for
the deep and complex memory and core hierarchies of state-of-the-art architectures,
via efficient cache blocking
and parallelization strategies.
Moreover, to achieve performance that is portable across architectures, the programmer has to consider that architectures often differ significantly in their characteristics~\cite{DBLP:journals/corr/abs-1911-11313}~--~depth of memory and core hierarchies, automatically managed caches (as in CPUs) vs manually managed caches (as in GPUs),~etc~--~which poses further challenges on identifying an efficient (de/re)-composition of computations.
Productivity is often also hampered:~state-of-the-art programming models (such as OpenMP~\cite{openmp-specification} for CPU, CUDA~\cite{cuda-specification} for GPU, and OpenCL~\cite{opencl-specification} for multiple kinds of architectures) operate on a low abstraction level;~thereby, the models require from the programmer explicitly implementing a well-performing (de/re)-composition, which involves
complex and error-prone index computations, explicitly managing memory and threads on multiple layers, etc.

\begin{figure}[b!]
    \centering
    \includegraphics[width=\textwidth]{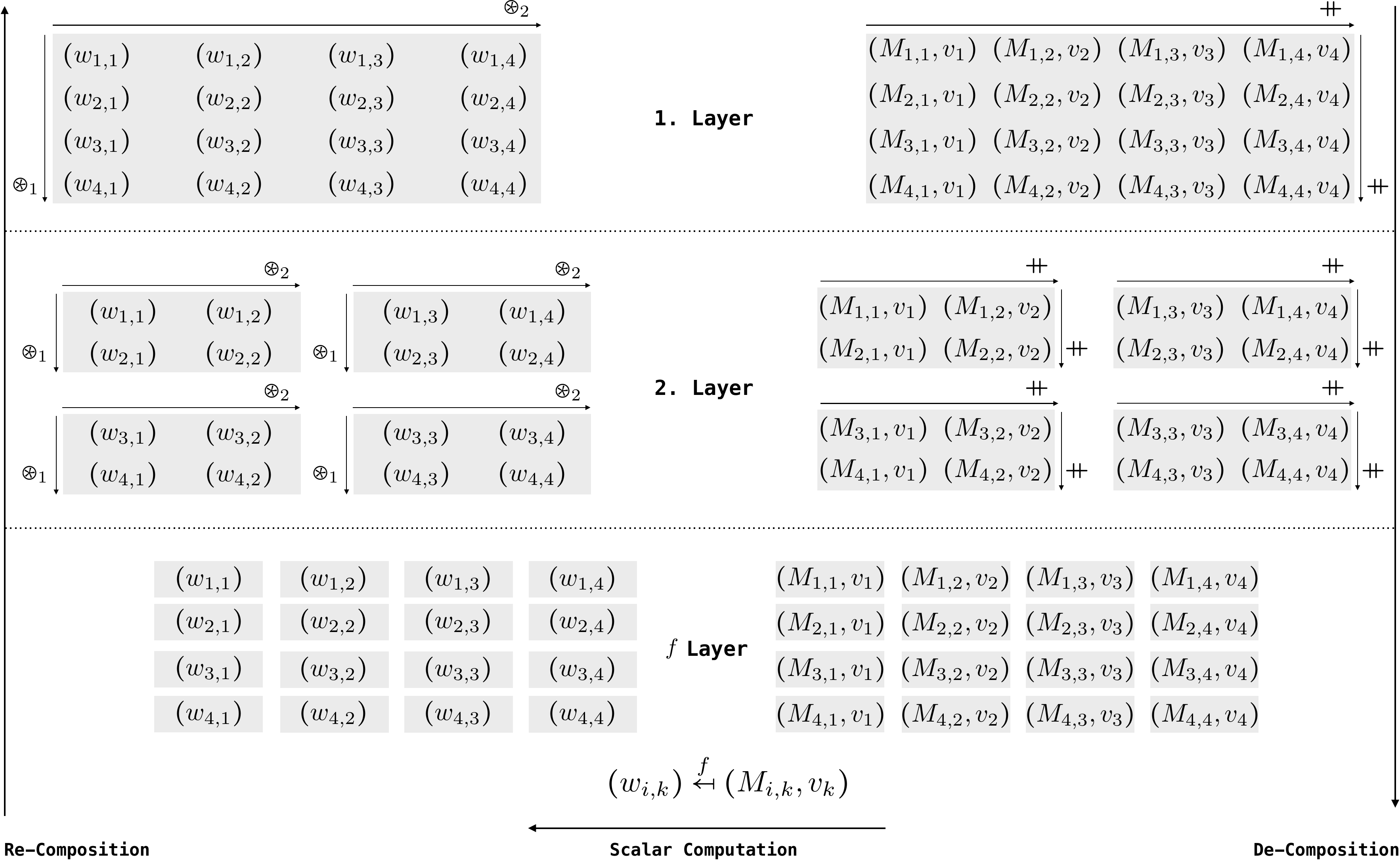}
    \caption{
    Example (de/re)-composition of \texttt{MatVec} (Figure~\ref{example_intro_1}) on a $4\times4$ input matrix~$M$ and a $4$-sized vector~$v$:
    i)~the \emph{de-composition phase} (right part of the figure) partitions the concatenated input data into parts (a.k.a. \emph{tiles} in programming), where $\dplus$ denotes the concatenation operator;
    ii)~to each part, scalar function $f$ is applied in the \emph{scalar phase} (bottom part of figure), which is defined for \texttt{MatVec} as: multiplying matrix element $M_{i,k}$ with vector element $v_k$, resulting in element $w_{i,k}$;
    iii)~the \emph{re-composition phase} (figure's left part) combines the computed parts to the final result, using combine operator $\co{1}$ for the first dimension (defined as \emph{concatenation} in the case of \texttt{MatVec}) and operator $\co{2}$ (\emph{point-wise addition}) for the second dimension.
    All basic building blocks (\emph{scalar function}, \emph{combine operator}, $\dotsc$)
    and concepts (e.g. \emph{partitioning})
    are defined in this paper, based on algebraic concepts.
    For simplicity, this example presents a (de/re)-composition on $2$ layers only, and we partition the input for this example
    into parts that have straightforward, equal sizes.
    Optimized values of
    semantics-preserving
    parameters
    (a.k.a. \emph{tuning parameters}), such as the number of parts and the application order of combine operators, are crucial for achieving high performance, as we discuss in this paper.
    Phases are arranged from right to left, inspired by the application order of function composition, as we also discuss later.
    }
    \label{fig_decomp}
\end{figure}

Current high-level approaches to generating data-parallel code usually struggle with addressing in one combined approach all three challenges:~\emph{performance}, \emph{portability}, and \emph{productivity}.
For example, approaches such as
Halide~\cite{10.1145/2491956.2462176},
Apache TVM~\cite{222575},
Fireiron~\cite{10.1145/3410463.3414632}, and
LoopStack~\cite{https://doi.org/10.48550/arxiv.2205.00618}
achieve high performance, but incorporate the user into the optimization process~--~by requiring from the user explicitly expressing optimizations in a so-called \emph{scheduling language}~--~which is error prone and needs expert knowledge about low-level code optimizations, thus hindering user's productivity.
In contrast, \emph{polyhedral approaches}, such as
Pluto~\cite{bondhugula2008pluto},
PPCG~\cite{10.1145/2400682.2400713}, and
Facebook's TC~\cite{10.1145/3355606},
are often fully automatic and thus productive, but usually specifically designed toward a particular architecture
(e.g., only GPU as TC or only CPU as Pluto)
and thus not portable.
\emph{Functional approaches}, e.g., Lift~\cite{10.1145/2784731.2784754}, are productive for functional programmers (e.g., with experience in \emph{Haskell}~\cite{haskell} programming, which relies on small, functional building blocks for expressing computations), but the approaches often have difficulties in automatically achieving the full performance potential of architectures~\cite{8891668}.
Furthermore, many of the existing approaches are specifically designed toward a particular subclass of data-parallel computations only, e.g., only tensor operations (as LoopStack and TC) or only matrix multiplication (as Fireiron), or they require significant extensions for new subclasses (as Lift for matrix multiplication~\cite{10.1145/2884045.2884046} and stencil computations~\cite{10.1145/3168824}), which further hinders the productivity of the user.

\interfootnotelinepenalty=10000
In this paper, we formally introduce a systematic (de/re)-composition approach for data-parallel computations targeting state-of-the-art parallel architectures.
We express computations via \emph{high-level functional expressions} (specifying \emph{what} to compute), in the form of easy-to-use higher-order functions, based on the algebraic formalism of \emph{Multi-Dimensional Homomorphisms~(MDHs)}~\cite{rasch2018multi}\footnote{
    We thoroughly compare to the existing MDH work in Section~\ref{sec_rw_mdh}.
}.
Our higher-order functions are capable of expressing various kinds of data-parallel computations (linear algebra, stencils, etc), in the same formalism and on a high level of abstraction, independently of hardware and optimization details, thereby contributing to user's productivity\footnote{
We consider as main users of our approach compiler engineers and library designers.
\citet{mdpoly} show that our approach can also take straightforward, sequential code as input,
which makes our approach attractive also to end users.
}.
As target for our high-level expressions, we introduce \emph{functional low-level expressions} (specifying \emph{how} to compute) to formally reason about (de/re)-compositions of data-parallel computations;~our low-level expressions are designed such that they can be straightforwardly transformed to executable program code
(e.g., in OpenMP, CUDA, and OpenCL).
To systematically lower our high-level expressions to low-level expressions, we introduce a formally sound, parameterized \emph{lowering process}.
The parameters of our lowering process enable automatically computing low-level expressions that are optimized (auto-tuned~\cite{8423171}) for the particular target architecture
and characteristics of the input/output data, thereby achieving fully automatically high, portable performance.
For example, we formally introduce parameters for flexibly choosing the target memory regions for de-composed and re-composed computations, and also parameters for flexibly setting an optimized data access pattern.

We show that our high-level representation is capable of expressing various kinds of data-parallel computations, including computations that recently gained high attention due to their relevance for
deep learning~\cite{10.1145/3317550.3321441}.
For our low-level representation, we show that
it can express the cache blocking and parallelization strategies of state-of-the-art parallel implementations%
~--~as generated by scheduling approach TVM and polyhedral compilers PPCG and Pluto~--~in one uniform formalism.
Moreover, we present experimental results to confirm that based on our parameterized lowering process in combination with auto-tuning, we are able to achieve higher performance than the state of the art, including hand-optimized implementations provided by vendors (e.g., NVIDIA cuBLAS and Intel oneMKL for linear algebra routines, and NVIDIA cuDNN and Intel oneDNN for deep learning computations).

\vspace*{0.7em}

Summarized, we make the following three major contributions (illustrated in Figure~\ref{contributions_overall}):

\vspace*{0.7em}

\begin{enumerate}
  \setlength\itemsep{1em}
  \item We introduce a functional \emph{High-Level Representation (HL-REP)}, based on the algebraic formalism of Multi-Dimensional Homomorphisms (MDHs), that enables uniformly expressing data-parallel computations on a high level of abstraction.

  \item We introduce a functional \emph{Low-Level Representation (LL-REP)} that enables formally expressing and reasoning about (de/re)-compositions of data-parallel computations;~our low-level representation is designed such that it can be straightforwardly transformed to executable program code in state-of-practice parallel programming models, including OpenMP, CUDA, and OpenCL.

  \item We introduce a systematic \emph{Lowering} process to fully automatically lower an expression in our high-level representation to a device- and data-optimized expression in our low-level representation, in a formally sound manner, based on auto-tuning.
\end{enumerate}

\begin{figure}[h!]
    \centering
    \includegraphics[width=0.95\textwidth]{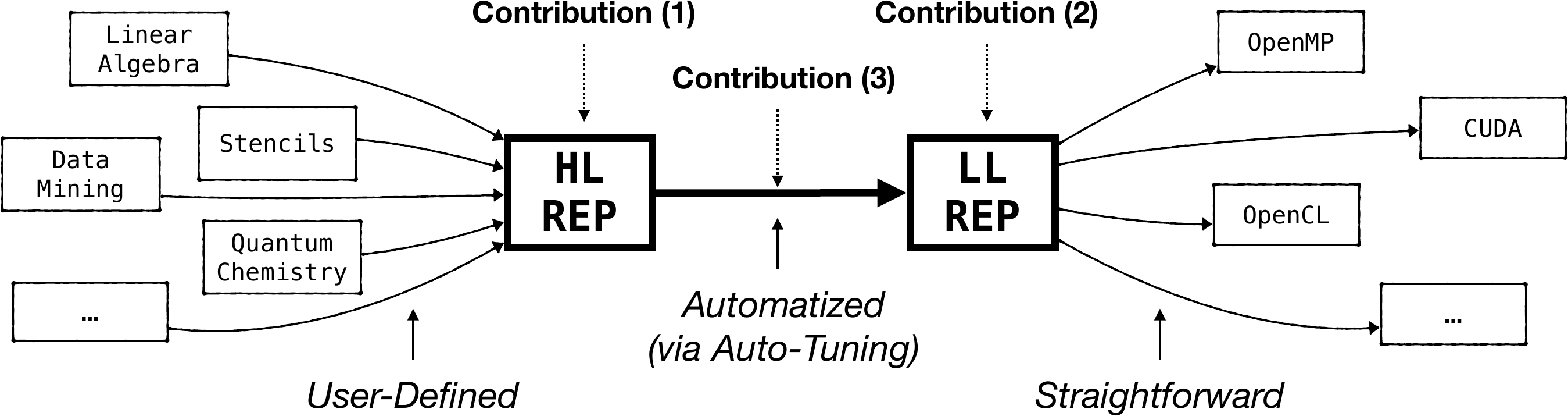}
    \caption{
        Overall structure of our approach (contributions highlighted in bold)
    }
    \label{contributions_overall}
\end{figure}

\newpage

Our three contributions aim to answer the following questions:

\vspace*{0.5em}

\begin{enumerate}
  \setlength\itemsep{0.5em}
  \item \emph{How can data parallelism be formally defined, and how can data-parallel computations be uniformly expressed via higher-order functions that are agonistic from of hardware and optimization details while still capturing all information relevant for generating high-performing, executable program code?}~(Contribution 1);

  \item \emph{How can optimizations for the memory and core hierarchies of state-of-the-art parallel architectures be formally expressed and generalized such that they apply to arbitrary data-parallel computations?}~(Contribution 2);

  \item \emph{How can optimizations for data-parallel computations be expressed and structured so that they can be automatically identified (auto-tuned) for a particular target architecture and characteristics of the input and output data?}~(Contribution 3).
\end{enumerate}

\vspace*{0.7em}

The rest of the paper is structured as follows.
We introduce our high-level functional representation (Contribution 1) in Section~\ref{ch:high_level}, and we show how this representation is used for expressing various kinds of popular data-parallel computations.
In Section~\ref{ch:low_level}, we discuss our low-level functional representation (Contribution 2) which is powerful enough to express
the optimization decisions of
state-of-practice approaches (e.g., scheduling approach TVM and polyhedral compilers PPCG and Pluto) and beyond.
Section~\ref{ch:lowering} shows how we systematically lower a computation expressed in our high-level representation to an expression in our low-level representation, in a formally sound, auto-tunable manner (Contribution 3).
We present experimental results in Section~\ref{ch:eval}, discuss related work in Section~\ref{sec_rw} (including a thorough comparison to previous work on MDHs), conclude in Section~\ref{ch:conclusion}, and we present our ideas for future work in Section~\ref{ch:fw}.
Our Appendix, in Sections~\ref{sec_math_foundation}-\ref{app_sec_code_generation}, provides details for the interested reader that should not be required for understanding the basic ideas and concepts introduced in this paper.

\section{High-Level Representation for Data-Parallel Computations}
\label{ch:high_level}

We introduce functional building blocks, in the form of higher-order functions, that express data-parallel computations on a high abstraction level.
The goal of our high-level abstraction is to express computations agnostic from hardware and optimization details, and thus in a user-productive manner, while still capturing
all information relevant for generating high-performance program code.
The building blocks of our abstraction are based on the algebraic formalism of \emph{Multi-Dimensional Homomorphisms (MDHs)} which is an approach toward formalizing data parallelism~%
(we compare in detail to the existing work on MDHs in Section~\ref{sec_rw_mdh}).

\newpage

Figure~\ref{hl_overview} shows a basic overview of our high-level representation.
We express data-parallel computations using exactly three higher-order functions only (a.k.a. \emph{patterns} or \emph{skeletons}~\cite{gorlatch2011parallel} in programming terminology):~%
1)~\texttt{inp\_view} transforms the domain-specific input data (e.g., a matrix and a vector in the case of matrix-vector multiplication) to a \emph{Multi-Dimensional Array~(MDA)} which is our internal data representation and defined later in this section;
2)~\texttt{md\_hom} expresses the data-parallel computation;
3)~\texttt{out\_view} transforms the computed MDA back to the domain-specific data representation.

In the following, after informally discussing an introductory example in Section~\ref{ch:hl_intro_example}, we formally define and discuss each higher-order function in detail in Section~\ref{sec_mdhom} (function $\mdh$) and Section~\ref{sec_views} (functions $\iv$ and $\ov$).
Note that Section~\ref{sec_mdhom} and Section~\ref{sec_views} introduce and present the internals and formal details of our approach, which are not relevant for the end user of our system~--~the user only needs to operate on the abstraction level discussed in Section~\ref{ch:hl_intro_example}.

\subsection{Introductory Example}
\label{ch:hl_intro_example}

Figure~\ref{fig:intro_example} shows how our high-level representation is used for expressing the example of matrix-vector multiplication \texttt{MatVec}\footnote{
The expression in Figure~\ref{fig:intro_example} can also be extracted from straightforward, annotated sequential code~\cite{mdpoly,mdpoly_src}.
} (Figure~\ref{example_intro_1}).
Computation \texttt{MatVec} takes as input a matrix $M\in T^{I\times K}$ and vector $v\in T^{K}$ of arbitrary scalar type%
\footnote{
We consider as \emph{scalar types}
integers $\IZ$ (a.k.a. \texttt{int} in programming), floating point numbers $\IQ$ (a.k.a. \texttt{float} or \texttt{double}), any fixed collection of types (a.k.a. \emph{record} or \emph{struct}),~etc.~We denote the set of scalar types as $\type$ in the following. Details on scalar types are provided in the Appendix, Section~\ref{app_scalar_type}, for the interested reader.
}
$T$ and sizes $I\times K$ (matrix) and~$K$~(vector), for arbitrary but fixed positive natural numbers $I,K\in\IN$\footnote{
    We denote by $\IN$ the set of positive natural number $\{1,2,\dotsc\}$, and we use $\INz$ for the set of natural numbers including~$0$.
}.
In the figure, based on index function $(i,k)\to(i,k)$ and $(i,k)\to(k)$, high-level function $\iv$ computes a function that takes $M$ and $v$ as input and maps them to a $2$-dimensional
array of size $I\times K$ (referred to as \emph{input MDA} in the following and defined formally in the next subsection).
The MDA contains at each point $(i,k)$ the pair $(M_{i,k},v_k)\in T\times T$ comprising element~$M_{i,k}$ within matrix~$M$ (first component) and element~$v_k$ within vector~$v$~(second component).
The input MDA is then mapped via function $\mdh$
to an output MDA of size $I\times 1$, by applying multiplication $*$ to each pair $(M_{i,k},v_k)$ within the input MDA, and combining the obtained intermediate results within the MDA's first~dimension via~$\dplus$~(concatenation~--~also defined formally in the next subsection) and in second dimension~via~$+$~(point-wise addition).
Finally, function $\ov$ computes a function that straightforwardly maps the output MDA, of size $I\times 1$, to \texttt{MatVec}'s result vector $w\in T^I$, which has scalar type $T$ and is of size~$I$.
For the example of \texttt{MatVec}, the output view is trivial, but it can be used in other computations (such as matrix multiplication) to conveniently express more advanced variants of computations (e.g., computing the result matrix of matrix multiplication as transposed, as we demonstrate later).
\begin{figure}[t!]
    \centering
    \includegraphics[scale=0.25]{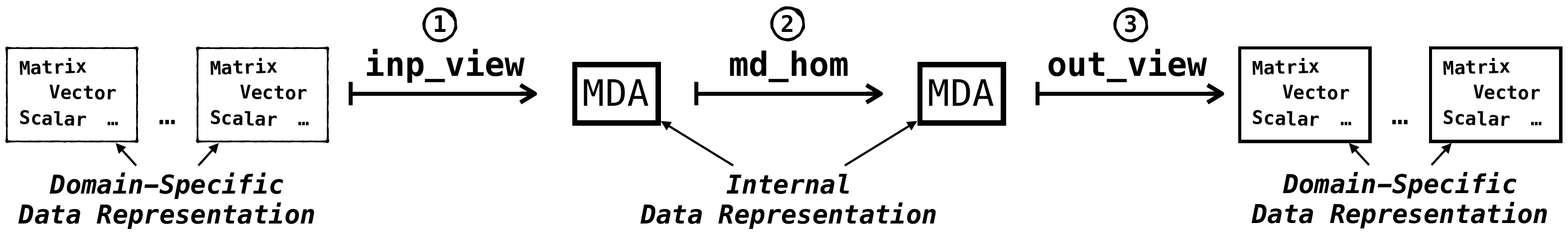}
    \caption{High-level representation (overview)}
    \label{hl_overview}
\end{figure}

\begin{figure}[h!]
    \begin{align*}
    \texttt{MatVec}^{<\texttt{T}\in\type\,|\,\texttt{I,K}\in\IN>} \ :=
    \ \ \
    &\ov\texttt{<T>( w:(i,k)$\mapsto$(i) )} \ \circ \\[2pt]
    &\hspace*{20px}
    \mdh\texttt{<I,K>(\,*,\,(\dplus,+)\,)} \ \circ \\[2pt]
    &\hspace*{40px}
    \iv\texttt{<T,T>( M:(i,k)$\mapsto$(i,k)\,,\,v:(i,k)$\mapsto$(k) )}
    \end{align*}
    \caption{
    High-level expression for Matrix-Vector Multiplication (\texttt{MatVec})\protect\footnotemark
    }
    \label{fig:intro_example}
\end{figure}

\newpage

\subsection{Function \texttt{md\_hom}}
\label{sec_mdhom}

Higher-order function \texttt{md\_hom} is introduced by~\citet{rasch2018multi} to express \emph{Multi-Dimensional Homomorphisms (MDHs)}~--~a formal representation of data-parallel computations~--~in a convenient and structured way.
In the following, we recapitulate the definition of MDHs and function \texttt{md\_hom}, but in a more general and formally more precise setting than done in the original MDH work.

To define MDH functions, we first need to introduce two central building blocks used in the definition of MDHs:~%
i)~\emph{Multi-Dimensional Arrays (MDAs)}~--~the data type on which MDHs operate and which uniformly represent domain-specific input and output data (scalar, vectors, matrices, $\dotsc$), and~%
ii)~\emph{Combine Operators (COs)} which we use to combine elements within a particular dimension of an MDA.

\footnotetext{Our technical implementation takes as input a representation that is equivalent to Figure~\ref{fig:intro_example}, expressed via straightforward program code~(see Appendix, Section~\ref{matvec_in_dsl}).}

\subsection*{Multi-Dimensional Arrays}

\begin{definition}[Multi-Dimensional Array]
\label{def_mda}
Let $\IDXs:=\{ I\subset\INz \ | \ |I|<\infty \}$ be the set of all finite subsets of natural numbers; in the context of MDAs, we refer to the subsets also as \emph{MDA index sets}.
Let further $T\in\type$ be an arbitrary scalar type, $D\in\IN$ a natural number, $I:=(I_1 \, ,\dotsc, \, I_D)
\in\IDXs^D$ a tuple of $D$-many MDA index sets, and $N:=(N_1 \, ,\dotsc, \, N_D):=(|I_1| \, ,\dotsc, \, |I_D|)$ the tuple of index sets' sizes.

A \emph{Multi-Dimensional Array~(MDA)} $\MDA$ that has \emph{dimensionality}~$D$, \emph{size} $N$, \emph{index sets} $I$, and \emph{scalar type} $T$ is a function with the following signature:
\[
\MDA:I_1\,\times\dotsc\times\,I_D \, \to \, T
\]
We refer to $I_1\,\times\dotsc\times\,I_D \, \to \, T$ as the \emph{type} of MDA $\MDA$.
\end{definition}

\begin{notation}
\label{not_mda}
For better readability, we denote MDAs' types and accesses to them using a notation close to programming.
We often write:
\begin{itemize}
\setlength\itemsep{2px}
  \item
  $\MDA\in T[ \, I_1 \, ,\dotsc, \, I_D \, ]$ instead of \ $\MDA:I_1\,\times\dotsc\times\,I_D \, \to \, T$ to denote the type of MDA $\MDA$;

  \item
  $\MDA\in T[ \, N_1 \, ,\dotsc, \, N_D \, ]$ instead of \ $\MDA:[0,N_1)_{\INz}\times\cdots\times[0,N_D)_{\INz}\to T$;\,\footnote{
    We denote by $[L,U)_{\INz}:=\{ \ n\in\INz \ | \ L\le n<U \ \}$ the half-open interval of natural numbers (including $0$) between $L$~(incl.) and $U$~(excl.).
      }

  \item
  $\MDA[ \, i_1 \, ,\dotsc, \, i_D \, ]$ instead of \ $a(i_1 \, ,\dotsc, \, i_D )$ to access MDA $\MDA$ at position $(i_1 \, ,\dotsc, \,i_D)$.
\end{itemize}
\end{notation}

\begin{figure}[t!]
    \centering
    \includegraphics[width=\textwidth]{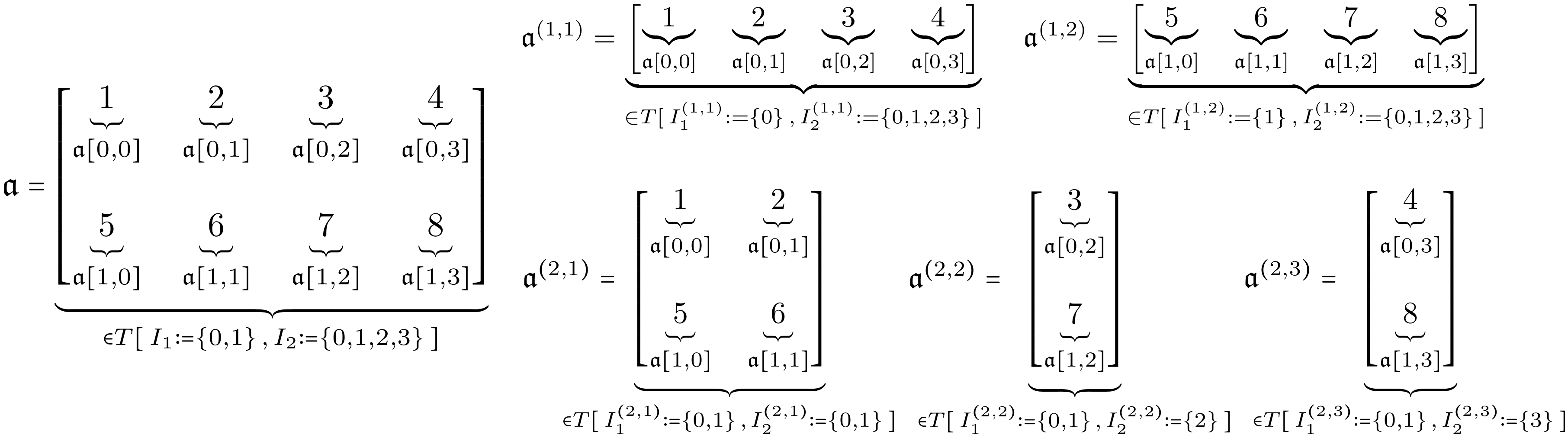}
    \caption{
    MDA examples}
    \label{mda_examples}
\end{figure}

Figure~\ref{mda_examples} shows six MDAs for illustration.
The left part of the figure shows MDA $\MDA$ which is of type $\MDA:I_1\,\times I_2 \, \to \, T$, for $I_1=\{0,1\}$, $I_2=\{0,1,2,3\}$, and $T=\IZ$ (integer numbers).
On the right side, five MDAs are shown, named $\MDA^{(1,1)}$, $\MDA^{(1,2)}$, $\MDA^{(2,1)}$, $\MDA^{(2,2)}$, $\MDA^{(2,3)}$~--~the superscripts are part of the names and represent a two-dimensional numbering of the five MDAs.
The MDAs $\MDA^{(1,1)}$ and $\MDA^{(1,2)}$ are of types
$\MDA^{(1,1)}:I^{(1,1)}_1\,\times I^{(1,1)}_2 \, \to \, T$ and
$\MDA^{(1,2)}:I^{(1,2)}_1\,\times I^{(1,2)}_2 \, \to \, T$,
for $I^{(1,1)}_1=\{0\}$ and $I^{(1,2)}_1=\{1\}$, and coinciding
second dimensions $I^{(1,1)}_2=I^{(1,2)}_2=\{0,1,2,3\}$.
The MDAs $\MDA^{(2,1)}$, $\MDA^{(2,2)}$, and $\MDA^{(2,3)}$ are of types
$\MDA^{(2,1)}:I^{(2,1)}_1\,\times I^{(2,1)}_2 \, \to \, T$,
$\MDA^{(2,2)}:I^{(2,2)}_1\,\times I^{(2,2)}_2 \, \to \, T$, and
$\MDA^{(2,3)}:I^{(2,3)}_1\,\times I^{(2,3)}_2 \, \to \, T$, and they
coincide in their first dimensions $I^{(2,1)}_1=I^{(2,2)}_1=I^{(2,3)}_1=\{0,1\}$;~their second dimensions are $I^{(2,1)}_2=\{0,1\}$, $I^{(2,2)}_2=\{2\}$, and $I^{(2,2)}_2=\{3\}$.
Note that MDAs $\MDA^{(1,1)},\MDA^{(1,2)},\MDA^{(2,1)},\MDA^{(2,2)},\MDA^{(2,3)}$ can be considered as \emph{parts} (a.k.a. \emph{tiles} in programming) of MDA $\MDA$.
We formally define and use \emph{partitionings} of MDAs in Section~\ref{ch:low_level}.

\subsection*{Combine Operators}

A central building block in our definition of MDHs is a \emph{Combine Operator (CO)}.
Intuitively, we use a combine operator to combine all elements within a particular dimension of an MDA.
For example, in Figure~\ref{example_intro_1} (matrix-vector multiplication), we combine elements of the $2$-dimensional MDA via combine operator \emph{concatenation} in MDA's first dimension and via operator \emph{point-wise addition} in the second dimension.

Technically, combine operators are functions that take as input two MDAs and yield a single MDA as their output (formal definition follows soon).
By definition, we require that the index sets of the two input MDAs coincide in all dimensions except in the dimension to combine;~thereby, we catch undefined cases already at the type level, e.g., trying to concatenate improperly sized MDAs:

\vspace*{10px}

\begin{center}
\resizebox{.85\hsize}{!}{
$
\underbrace{
\underbrace{
\begin{bmatrix}
1 & 2 & 3 \\
4 & 5 & 6 \\
7 & 8 & 9 \\
\end{bmatrix}
}_{\substack{3\times3\text{-many} \\ \text{elements}}}
\dplus
\underbrace{
\begin{bmatrix}
11 & 12 \\
13 & 14 \\
15 & 16 \\
\end{bmatrix}
}_{\substack{3\times2\text{-many} \\ \text{elements}}}
=
\underbrace{
\begin{bmatrix}
1 & 2 & 3 & 11 & 12 \\
4 & 5 & 6 & 13 & 14 \\
7 & 8 & 9 & 15 & 16 \\
\end{bmatrix}
}_{\substack{3\times5\text{-many} \\ \text{elements}}}
}_{\text{well defined}}
\hspace{40px}
\underbrace{
\underbrace{
\begin{bmatrix}
1 & 2 & 3 \\
4 & 5 & 6 \\
7 & 8 & 9 \\
\end{bmatrix}
}_{\substack{3\times3\text{-many} \\ \text{elements}}}
\dplus
\underbrace{
\begin{bmatrix}
11 & 12 \\
13 & 14 \\
\end{bmatrix}
}_{\substack{2\times2\text{-many} \\ \text{elements}}}
= \ ?
}_{\lightning\text{undefined}}
$
}
\end{center}

\vspace*{5px}

\noindent
Here, on the left, we can reasonably define the concatenation of MDAs that contain $3\times3$-many elements and $3\times2$ elements.
However, as indicated in the right part of the figure,
it is not possible to intuitively concatenate MDAs of sizes $3\times3$ and $2\times2$,
as the MDAs do not match in their number of elements in any of the two dimensions.

\begin{figure}[b!]
    \centering
    \includegraphics[width=\textwidth]{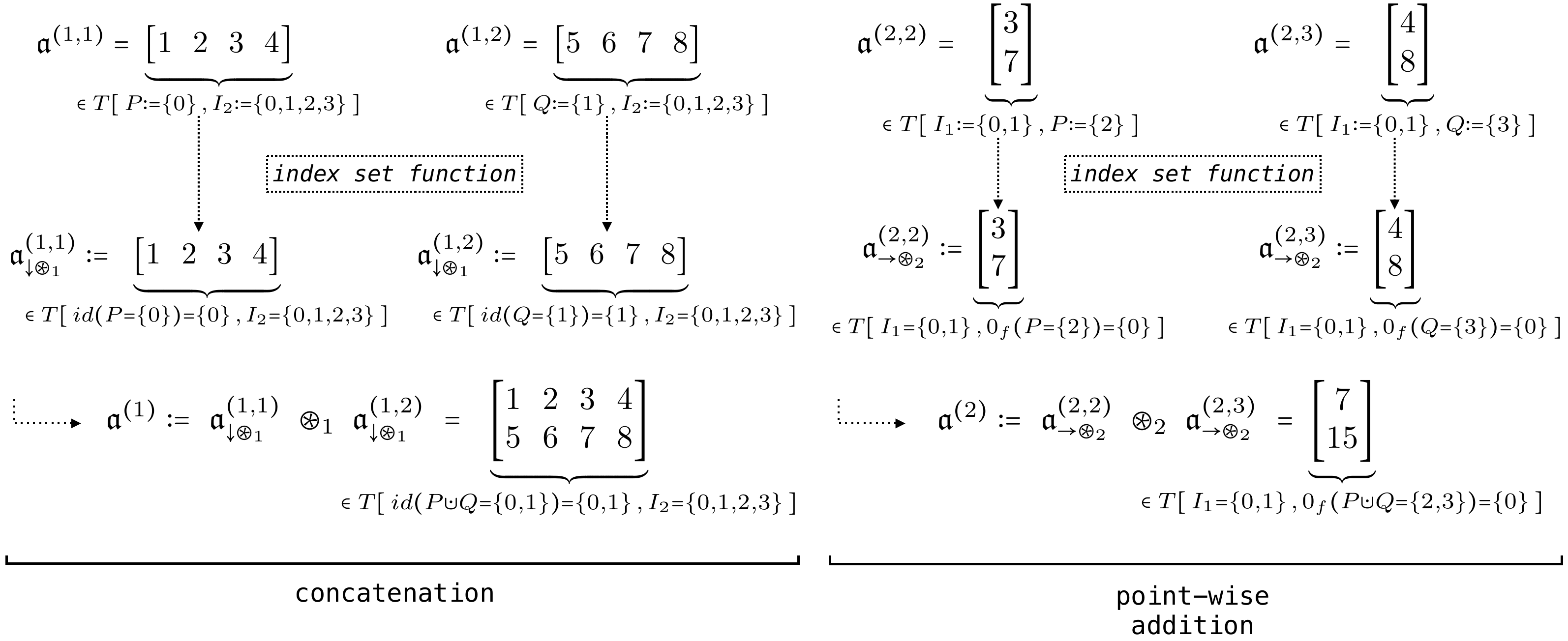}
    \caption{
    Illustration of \emph{combine operators} using the examples \emph{concatenation} (left) and \emph{point-wise addition} (right)}
    \label{co_illustration}
\end{figure}

Figure~\ref{co_illustration} illustrates combine operators informally using the example operators \emph{concatenation} (left part of the figure) and \emph{point-wise addition} (right part).
We illustrate concatenation using the example MDAs $\MDA^{(1,1)}$ and $\MDA^{(1,2)}$ from Figure~\ref{mda_examples};~for point-wise addition, we use MDAs $\MDA^{(2,2)}$ and $\MDA^{(2,3)}$ from Figure~\ref{mda_examples} (all MDAs are chosen arbitrarily, and the example works the same for other MDAs).

\newpage

In the case of concatenation (left part of Figure~\ref{co_illustration}), MDAs $\MDA^{(1,1)}$ and $\MDA^{(1,2)}$ coincide in their second dimension $I_2:=\{0,1,2,3\}$, which is important, because we concatenate in the first dimension, thus requiring coinciding index sets in all other dimensions (as motivated above).
In the case of the point-wise addition example (right part of Figure~\ref{co_illustration}), the example MDAs $\MDA^{(2,2)}$ and $\MDA^{(2,3)}$ coincide in their first dimension $I_1:=\{0,1\}$, as required for combining the MDAs in the second dimension.
The varying index sets of
the four MDAs are denoted as $P$ and $Q$ in the figure, which are in the case of the concatenation example, index sets in the first dimension of MDAs
$\MDA^{(1,1)}$
and
$\MDA^{(1,2)}$;~%
in the case of the point-wise addition example, the varying index sets of MDAs
$\MDA^{(2,1)}$
and
$\MDA^{(2,2)}$
belong to the second dimensions.

In the following, we assume w.l.o.g. that the varying index sets $P$ and $Q$ of MDAs to combine are disjoint.
Our assumption will not be a limitation for our approach:~we will apply combine operators always to parts of MDAs that belong to the same MDA, causing that the index sets of the parts are disjoint by construction.
For example, in the case of the concatenation example in Figure~\ref{co_illustration}, the parts $\MDA^{(1,1)}$ and $\MDA^{(1,2)}$ of MDA $\MDA$ correspond to the first and second row of the same MDA $\MDA$ in Figure~\ref{mda_examples} and thus have different index sets in their first dimension, and in the case of the point-wise addition example in Figure~\ref{co_illustration}, the parts $\MDA^{(2,2)}$ and $\MDA^{(2,3)}$ represent the third and fourth column of MDA $\MDA$ and thus have different index sets in their second dimension.

We define combine operators based on \emph{index set functions} (also defined formally soon).
Index set functions precisely describe, on the type level, the index set of the combined output MDA and thus how an MDA's index set evolves during combination.
For this, an index set function takes as input the input MDA's index set in the dimension to combine, and the function yields as its output the index set of the output MDA which is combined in this dimension.
In the case of the concatenation example in Figure~\ref{co_illustration}, the index set function is identity $id$ and thus trivial.
However, in the case of point-wise addition, the corresponding index set function is the constant function $0_f$ which maps any index set to the singleton set $\{0\}$ containing index $0$ only.
This is because when combining via point-wise addition, the MDA shrinks in the combined dimension to only one element which we aim to uniformly access via MDA index $0$.
In Figure~\ref{co_illustration}, we denote MDAs
$\MDA^{(1,1)}$,
$\MDA^{(1,2)}$,
$\MDA^{(2,2)}$,
$\MDA^{(2,3)}$
after applying the corresponding index set function as:
$\MDA^{(1,1)}_{\downarrow{\co{1}}}$,
$\MDA^{(1,2)}_{\downarrow{\co{1}}}$,
$\MDA^{(2,2)}_{\rightarrow{\co{2}}}$,
$\MDA^{(2,3)}_{\rightarrow{\co{2}}}$;~the combined MDAs are denoted as $\MDA^{(1)}$ and $\MDA^{(2)}$ in the figure.
The concatenation operator is denoted in the figure generically as $\co{1}$, and point-wise addition is denoted as $\co{2}$, correspondingly.

\vspace{7px}

We now define \emph{combine operators} formally, and we illustrate this formal definition afterward using the example operators \emph{concatenation} and \emph{point-wise combination}.
For the interested reader, details on some technical design decisions of combine operators are outlined in the Appendix, Section~\ref{app_design_dec_co}.

\vspace*{5px}

\begin{definition}[Combine Operator]
  \label{def_combine_op}
  Let $\IDXsxIDXs:=\{ \, (P,Q)\in\IDXs$ $\times\IDXs \ | \ P\cap Q=\emptyset \, \}$ denote the set of all pairs of MDA index sets that are disjoint.
  Let further $\size{}{MDA}{MDA}:\IDXs\to\IDXs$ be a function on MDA index sets, $T\in\type$ a scalar type, $D\in\IN$ an MDA dimensionality, and $d\in[1,D]_\IN$ an MDA dimension.

  We refer to any binary
  function $\circledast$ of type (parameters in angle brackets are type parameters)
  \begin{align*}
  &
  \circledast^{<(I_1,\dotsc,I_{d-1},I_{d+1},\dotsc,I_D)\in\IDXs^{D-1} \,,\, (P,Q)\in\IDXsxIDXs>}: \\[0pt]
  &\hspace*{20px}
  T[ \, I_1 \, ,\dotsc, \, \underset{\underset{d}{\uparrow}}{\underbrace{\size{}{MDA}{MDA}(P)}} \, ,\dotsc, \, I_D \, ] \ \times  \
  T[ \, I_1 \, ,\dotsc, \, \underset{\underset{d}{\uparrow}}{\underbrace{\size{}{MDA}{MDA}(Q)}} \, ,\dotsc, \, I_D \, ]
  \\[-25pt]
  &\hspace*{250px}
  \to  T[ \, I_1 \, ,\dotsc, \, \underset{\underset{d}{\uparrow}}{\underbrace{\size{}{MDA}{MDA}(P\cupdot Q)}} \, ,\dotsc, \, I_D \, ]
  \end{align*}
  as \emph{Combine Operator (CO)} that has \emph{index set function} $\size{}{MDA}{MDA}$, \emph{scalar type} $T$, \emph{dimensionality} $D$, and \emph{operating dimension} $d$.
  We denote CO's type concisely as ${\CO}^{<\ssize{}{MDA}{MDA}\,|\,T\,|\,D\,|\,d>}$.
\end{definition}

\vspace*{5px}

Since function $\circledast$'s ordinary function type $T[\dotsc]\times T[\dotsc]\to T[\dotsc]$ is generic in parameters $(I_1,\dotsc,I_{d-1},$ $I_{d+1},\dotsc,I_D)$ and~$(P,Q)$ (these type parameters are denoted in angle brackets in Definition~\ref{def_combine_op}), we refer to function $\circledast$ as \emph{meta-function}, to the type parameters
in angle brackets
as \emph{meta-parameters}, and we say \emph{meta-types} to
$T[ \, I_1 \, ,\dotsc, \, \size{}{MDA}{MDA}(P) \, ,\dotsc, \, I_D \, ]$ (first input MDA),
$T[ \, I_1 \, ,\dotsc, \, \size{}{MDA}{MDA}(Q) \, ,$ $\dotsc, \, I_D \, ]$ (second input MDA), and
$T[ \, I_1 \, ,\dotsc, \, \size{}{MDA}{MDA}(P\cupdot Q) \, ,\dotsc, \, I_D \, ]$ (output MDA) as these types are generic in meta-parameters.
Formal definitions and details about our meta-parameter concept
are provided in Section~\ref{sec_math_foundation} of our Appendix for the interested reader.

We use meta-functions as an analogous concept to \emph{metaprogramming} in programming language theory to achieve high generality.
For example, by defining combine operators as meta-functions, we
can use the operators on input MDAs
that operate on arbitrary index sets while still guaranteeing correctness,
e.g., that index sets of the two input MDAs match in all dimensions except in the dimension to combine (as discussed above).
For simplicity, we often refrain from explicitly stating meta-parameters when they are clear from the context;~for example, when they can be deduced from the types of their particular inputs (a.k.a. \emph{type deduction} in programming).

\vspace*{10px}

We now formally discuss the example operators \emph{concatenation} and \emph{point-wise combination}.
For high flexibility, we define both operators generically in the scalar type $T$ of their input and output MDAs, the MDAs' dimensionality $D$, as well as in the dimension $d$ to combine.

\begin{example}[Concatenation]
\label{def:mda_concat}

We define \emph{concatenation} as function $\dplus$ of type
\begin{align*}
&
\dplus^{<T\in\type\,|\,D\in\IN\,|\,d\in[1,D]_\IN\,|\,(I_1,\dotsc,I_{d-1},I_{d+1},\dotsc,I_D)\in\IDXs^{D-1},(P,Q)\in\IDXsxIDXs>}: \\
&\hspace*{20px}
T[ \, I_1 \, ,\dotsc, \, \underset{\underset{d}{\uparrow}}{\underbrace{id(P)}} \, ,\dotsc, \, I_D \, ]\ \times \ T[ \,
I_1 \, ,\dotsc, \, \underset{\underset{d}{\uparrow}}{\underbrace{id(Q)}} \, ,\dotsc, \, I_D \, ]
\ \to \
T[ \, I_1 \, ,\dotsc, \, \underset{\underset{d}{\uparrow}}{\underbrace{id(P\cupdot Q)}} \, ,\dotsc, \, I_D \, ]
\end{align*}
where $id:\IDXs\to\IDXs$ is the identity function on MDA index sets.

\newpage
\noindent
The function is computed as:

\begin{align*}
&
\dplus^{<T\,|\,D\,|\,d\,|\,(I_1,\dotsc,I_{d-1},I_{d+1},\dotsc,I_D),(P,Q)>}( \, \MDA_1, \MDA_2 \ )[ \, i_1 \, ,\dotsc, \ \ i_d \ \ ,\dotsc, \, i_D \, ] \\[5pt]
&\hspace*{200px}
:= \
\begin{cases}
\ \MDA_1[ \, i_1 \, ,\dotsc, \ \ i_d \ \ ,\dotsc, \, i_D \, ] &, \ \ i_d\in P \\
\ \MDA_2[ \, i_1 \, ,\dotsc, \ \ i_d \ \ ,\dotsc, \, i_D \, ] &, \ \ i_d\in Q \\
\end{cases}
\end{align*}
The function is well defined, because $P$ and $Q$ are disjoint.
We usually use an infix notation for $\dplus{}^{<\dotsc>}$ (meta-parameters omitted via ellipsis), i.e., we write $\MDA_1\,\dplus{}^{<\dotsc>} \,\MDA_2$ instead of $\dplus{}^{<\dotsc>}(\MDA_1,\MDA_2)$.

\end{example}

\vspace*{10px}

The vertical bar in the superscript of $\dplus$ denotes that function $\dplus$ can be partially evaluated (a.k.a. \emph{Currying}~\cite{CURRY198085} in math and \emph{multi staging}~\cite{10.1145/258993.259019} in programming) for particular values of meta-parameters:~$T\in\type$ (first stage), $D\in\IN$ (second stage), etc.
Partial evaluation (formally defined in the Appendix, Definition~\ref{app_multi_stage_meta_function}) enables both:~%
1)~expressive typing and thus better error elimination:~for example, parameter $(I_1,\dotsc,I_{d-1},I_{d+1},\dotsc,I_D)\in\IDXs^{D-1}$ can depend on meta-parameter $D\in\IN$, because $D$ is defined in an earlier stage, which allows precisely limiting the length of the tuple $(I_1,\dotsc,I_{d-1},I_{d+1},\dotsc,I_D)$ to ${D-1}$ index sets;~%
2)~generality:~for example, we can instantiate $\dplus$ to $\dplus^{<T>}$ which is specific for a particular scalar type $T\in\type$, but still generic in meta-parameters $D\in\IN$, $d\in[1,D]_\IN$, $\dotsc$, as these meta-parameters are defined in later stages.
We specify stages and their order according to the recommendations in~\citet{haskell_parameter_order}, e.g., using earlier stages for meta-parameters that are expected to change less frequently than other meta-parameters.

\vspace*{5px}

It is easy to see that $\dplus^{<T\,|\,D\,|\,d>}$ is a combine operator of type ${\CO}^{<id\,|\,T\,|\,D\,|\,d>}$ for any particular choice of meta-parameters $T\in\type$, $D\in\IN$, and $d\in[1,D]_\IN$.

\begin{example}[Point-Wise Combination]
\label{def:mda_pw}
We define \emph{point-wise combination},
according to a binary function $\oplus:T\times T\to T$ (e.g. addition),
as function $\overrightarrow{\bullet}$ of type
\begin{align*}
&
\overrightarrow{\bullet}^{<T\in\type\,|\,D\in\IN\,|\,d\in[1,D]_\IN\,|\,(I_1,\dotsc,I_{d-1},I_{d+1},\dotsc,I_D)\in\IDXs^{D-1},(P,Q)\in\IDXsxIDXs>}: \\[5pt]
&\hspace*{0px}
\underbrace{T\times T\to T
}_{\oplus}
\ \to \
\underbrace{
T[I_1,\dotsc,\underset{\underset{d}{\uparrow}}{\underbrace{0_f(P)}},\dotsc,I_D]
\times
T[I_1,\dotsc,\underset{\underset{d}{\uparrow}}{\underbrace{0_f(Q)}},\dotsc,I_D]
\to T[I_1,\dotsc,\underset{\underset{d}{\uparrow}}{\underbrace{0_f(P\cupdot Q)}},\dotsc,I_D]
}_{\substack{\texttt{point-wise combination (according to $\oplus$)} }}
\end{align*}
where $0_f:\IDXs\to\IDXs$, $I\mapsto\{0\}$ is the constant MDA index set function that maps any index set $I$ to the index set containing MDA index $0$ only.
The function is computed as:
\begin{align*}
&
\overrightarrow{\bullet}^{<T\,|\,D\,|\,d\,|\,(I_1,\dotsc,I_{d-1},I_{d+1},\dotsc,I_D),(P,Q)>}(\,\oplus\,)(\, \MDA_1,\MDA_2 \,)[i_1,\dotsc,\underset{\underset{d}{\uparrow}}{0},\dotsc,i_D] := \\
&\hspace*{224px}
\MDA_1[i_1,\dotsc,\underset{\underset{d}{\uparrow}}{0},\dotsc,i_D] \oplus \MDA_2[i_1,\dotsc,\underset{\underset{d}{\uparrow}}{0},\dotsc,i_D]
\end{align*}
We often write $\oplus$ only, instead of $\overrightarrow{\bullet}(\oplus)$, and we usually use an infix notation for $\oplus$.

\end{example}

Function $\overrightarrow{\bullet}^{<T\,|\,D\,|\,d>}(\oplus)$ (meaning:~$\overrightarrow{\bullet}$ is partially applied to ordinary function parameter $\oplus$ and thus still generic in parameters $(I_1,\dotsc,I_{d-1},I_{d+1},\dotsc,I_D)$ and $(P,Q)$~--~formal details provided in the Appendix, Definition~\ref{app_partial_meta_func_app}) is a combine operator of type ${\CO}^{<0_f\,|\,T\,|\,D\,|\,d>}$ for any binary operator $\oplus:T\times T\to T$.

\subsection*{Multi-Dimensional Homomorphisms}

Now that we have defined MDAs (Definition~\ref{def_mda}) and COs (Definition~\ref{def_combine_op}), we can define \emph{Multi-Dimensional Homomorphisms (MDHs)}.
Intuitively, a function $h$ operating on MDAs is
an MDH
iff we can apply the function independently to parts of its input MDA and combine the obtained intermediate results to the final result using combine operators;~this can be imagined as a typical divide-and-conquer pattern.
Compared to classical approaches, e.g., \emph{list homomorphisms}~\cite{10.1007/978-3-642-74884-4_5,doi:10.1142/S0129626495000175,GORLATCH19991}, a major characteristic of MDH functions is that they allow
(de/re)-composing computations in multiple dimensions (e.g., in Figure~\ref{example_intro_1}, in both the concatenation dimension as well as in the point-wise addition dimensions), rather than being limited to a particular dimension only (e.g., only the concatenation dimension or only point-wise addition dimension, respectively).
We will see later in this paper that a multi-dimensional (de/re)-composition approach is essential to efficiently exploit the hardware of modern architectures which require fine-grained cache blocking and parallelization strategies to achieve their full performance potential.

Figure~\ref{mdh_examples} illustrates the MDH property informally on a simple, two-dimensional input MDA.
In the left part of the figure, we split the input MDA in dimension $1$ (i.e., horizontally) into two parts $\MDA_1$ and $\MDA_2$, apply the MDH function $h$ independently to each part, and combine the obtained intermediate results to the final result using the MDH function $h$'s combine operator $\co{1}$.
Similarly, in the right part of Figure~\ref{mdh_examples}, we split the input MDA in dimension $2$ (i.e., vertically) into parts and combine the results via MDH function $h$'s second combine operator $\co{2}$.

\vspace*{-6px}

\begin{figure}[h!]
    \centering
    \includegraphics[scale=0.25]{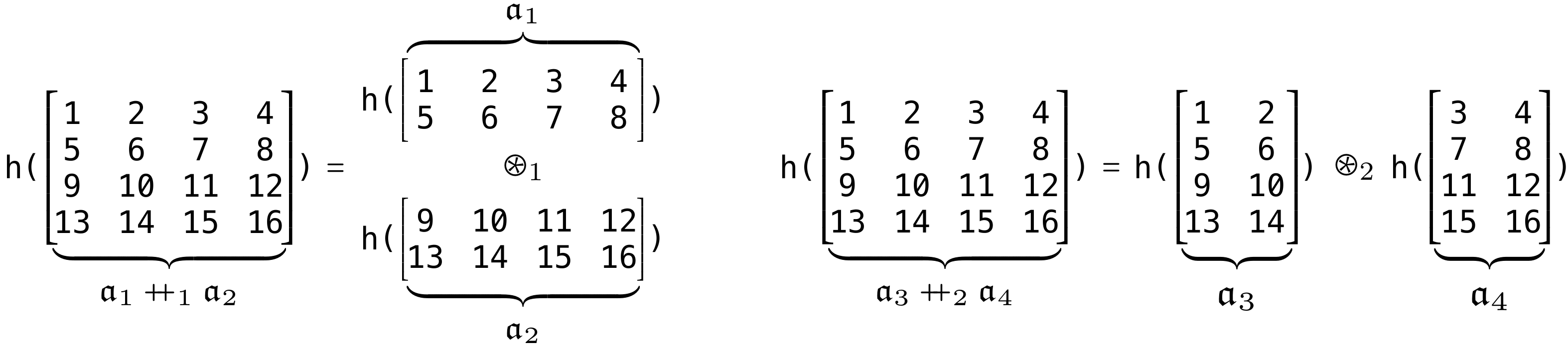}
    \caption{
    MDH property illustrated on a two-dimensional example computation
    }
    \label{mdh_examples}
\end{figure}

\vspace*{-4px}

\begin{figure}[p]
    \centering
    \includegraphics[width=0.9\textwidth]{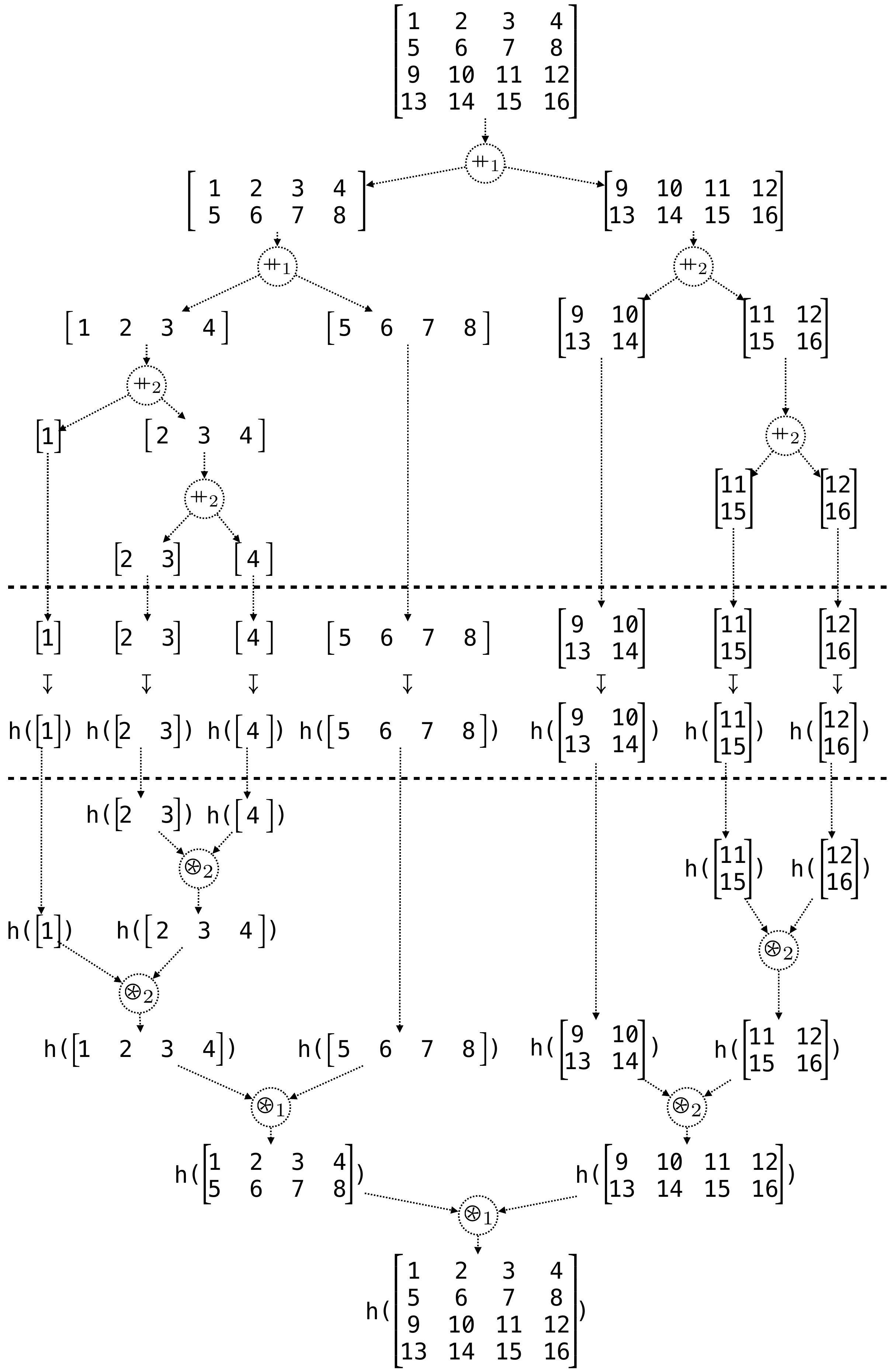}
    \caption{
    MDH property recursively applied to a two-dimensional example computation
    }
    \label{mdh_example_large}
\end{figure}

Figure~\ref{mdh_example_large} shows an artificial example in which we apply the MDH property
(illustrated in Figure~\ref{mdh_examples})
recursively.
We refer in Figure~\ref{mdh_example_large} to the part above the horizontal dashed lines as
\emph{de-composition phase}
and to the part below dashed lines as \emph{re-composition} phase.

\vspace{5px}

\begin{definition}[Multi-Dimensional Homomorphism]
\label{def_mdh}
    Let $T^\texttt{INP},T^\texttt{OUT}\in\type$ be two arbitrary scalar types, $D\in\IN$ a natural number, and $\size{1}{MDA}{MDA},\dotsc,\size{D}{MDA}{MDA}:\IDXs\to\IDXs$ functions on MDA index sets.
    Let further $\dplus_d:={\dplus}^{<T^\texttt{INP}\,|\,D\,|\,d>}\in{\CO}^{<id\,|\,T^\texttt{INP}\,|\,D\,|\,d>}$ denote concatenation (Definition~\ref{def:mda_concat}) in dimension $d\in[1,D]_\IN$ on $D$-dimensional MDAs that have scalar type $T^\texttt{INP}$.

A function
\begin{align*}
&
h^{<I_1,\dotsc,I_D\in\IDXs>} : \
    T^\texttt{INP}[ \ I_1  \ , \, \dotsc \, , \ I_D \ ] \ \to \ T^\texttt{OUT}[ \ \size{1}{MDA}{MDA}(I_1) \ ,\dotsc, \ \size{D}{MDA}{MDA}(I_D) \ ]
\end{align*}
is a \emph{Multi-Dimensional Homomorphism (MDH)} that has \emph{input scalar type} $T^\texttt{INP}$, \emph{output scalar type} $T^\texttt{OUT}$, \emph{dimensionality} $D$, and \emph{index set functions} $\size{1}{MDA}{MDA},\dotsc,\size{D}{MDA}{MDA}$,  iff for each $d\in[1,D]_\IN$, there exists a combine operator $\co{d}\in{\CO}^{<\ssize{d}{MDA}{MDA}\,|\,T^\texttt{OUT}\,|\,D\,|\,d>}$ (Definition~\ref{def_combine_op}),
such that for any
concatenated input MDA $\MDA_1 \dplusn{d} \ \MDA_2$ in dimension $d$,
the \emph{homomorphic property} is satisfied:
\[h( \ \MDA_1 \dplusn{d} \ \MDA_2 \ ) \ = \ h(\MDA_1)  \, \circledast_d \, h(\MDA_2)\]
We denote the type of MDHs concisely as~%
$\MDH^{<T^\texttt{INP},T^\texttt{OUT}\,|\,D\,|\,(\ssize{d}{MDA}{MDA})_{d\in[1,D]_\IN}>}$.

\end{definition}

MDHs are defined such that applying them to a concatenated MDA in dimension $d$ can be computed by applying the MDH $h$ independently to the MDA's parts $\MDA_1$ and $\MDA_2$ and combining the intermediate results afterward by using its combine operator $\circledast_d$, as also informally discussed above.
Note that by definition of MDHs, their combine operators are associative and commutative (which follows from the associativity and commutativity of $\dplus_d$).
Note further that for simplicity, Definition~\ref{def_mdh} is specialized to MDHs whose input algebraic structure relies on concatenation, as such kinds of MDHs already cover the currently practice-relevant data-parallel computations (as we will see later).
We provide a generalized definition of MDHs in Section~\ref{app_gen_notion_mdh} of our Appendix, for the algebraically interested reader.

\begin{example}[Function Mapping]
\label{mdh_map}
    A simple example MDH is \emph{function mapping}~\cite{https://doi.org/10.1002/spe.1026}, expressed by higher-order function $\texttt{map}(f)(\MDA)$, which applies a user-defined scalar function $f:T^\texttt{INP}\to T^\texttt{OUT}$ to each element within a $D$-dimensional MDA $\MDA$.
    Function $\texttt{map}(f)$ is an MDH of type $\MDH^{<T^\texttt{INP},T^\texttt{OUT}\,|\,D\,|\,id,\dotsc,id>}$ whose combine operators are concatenation $\dplus$ in all of its $D$ dimensions (Example~\ref{def:mda_concat}).
    Function $id$ is the index set function of $\dplus$ (see Example~\ref{def:mda_concat}) and consequently also of MDH $\texttt{map}(f)$.
    Formal details and definitions for \emph{function mapping} can be found in the Appendix, Section~\ref{sec_mdh_examples}.
\end{example}
\begin{example}[Reduction]
\label{mdh_reduce}
    A further MDH function is \emph{reduction}~\cite{https://doi.org/10.1002/spe.1026}, expressed by higher-order function $\texttt{red}(\oplus)(\MDA)$, which combines all elements within a $D$-dimensional MDA $\MDA$ using a user-defined binary function $\oplus:T\times T\to T$.
    Reduction is an MDH of type $\MDH^{<T,T\,|\,D\,|\,0_f,\dotsc,0_f>}$, and its combine operators are point-wise combination $\overrightarrow{\bullet}(\,\oplus\,)$ in all dimensions (Example~\ref{def:mda_pw}), which have $0_f$ as index set function.
    Formal details and definitions for \emph{reduction} can be found in the Appendix, Section~\ref{sec_mdh_examples}.
\end{example}

We show how Examples~\ref{mdh_map} and~\ref{mdh_reduce}
(and particularly also more advanced examples)
are expressed in our high-level representation in Section~\ref{ch_hl_examples}, based on higher-order functions $\mdh$, $\iv$, and $\ov$ (Figure~\ref{hl_overview}) which we introduce in the following.

\subsection*{Higher-Order Function $\mdh$}

We define higher-order function
$\mdh$ which conveniently expresses MDH functions in a uniform and structured manner.
For this, we exploit that any MDH function is uniquely determined by its combine operators and its behavior on singleton MDAs, as informally illustrated in the following figure:

\begin{center}
    \includegraphics[width=0.85\textwidth]{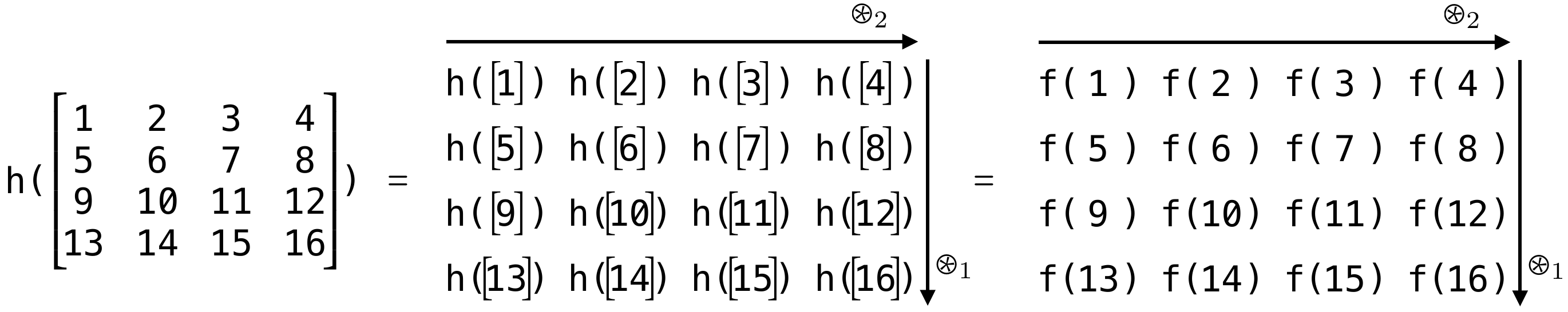}
\end{center}

\vspace*{6px}
\noindent
Here,
$f$
is the function on scalar values that behaves the same as $h$ when restricted to singleton MDAs: $f(\MDA[i_1,\dotsc,i_D]) := h(\MDA)$, for any MDA $\MDA\in T[\{i_1\},\dotsc,\{i_D\}]$ consisting of only one element that is accessed by (arbitrary) indices $i_1,\dotsc,i_D\in\INz$.
For singleton MDAs, we usually use~$f$ instead of~$h$, because $f$ can be defined more conveniently by the user as $h$ (which needs to handle MDAs of arbitrary sizes, and not only singleton MDAs as $f$).
Also, since $f$ takes as input a scalar value~(rather than a singleton MDA, as $h$), the type of $f$ also becomes simpler, which further contributes to simplicity.

\newpage

We now formally introduce function $\mdh$ which uniformly expresses any MDH function, by using only the MDH's behavior $f$ on scalar values~--~which we call \emph{Scalar Function (SF)} and define also in the following~--~and the MDH's combine operators.

\begin{definition}[Scalar Function]
\label{def_sf}
Let $T^\texttt{INP}$ and $T^\texttt{OUT}$ be two arbitrary scalar types.

We refer to any function $f$ of type
\[
f:T^\texttt{INP}\to T^\texttt{OUT}
\]
as \emph{Scalar Function (SF)} from $T^\texttt{INP}$ to $T^\texttt{OUT}$.
We denote scalar function's type concisely as $\mathrm{SF}^{<T^\texttt{INP},T^\texttt{OUT}>}$.
\end{definition}

\begin{definition}[Higher-Order Function $\mdh$]
\label{def_md_hom}
The higher-order function $\mdh$ is of type
\begin{align*}
&
\texttt{md\_hom}^{<T^\texttt{INP},T^\texttt{OUT}\in\type \ | \ D\in\IN \ | \ (\ssize{d}{MDA}{MDA}:\IDXs\to\IDXs)_{d\in[1,D]_\IN}>}: \\[2pt]
&\hspace*{20px}
\underbrace{\mathrm{SF}^{<T^\texttt{INP},T^\texttt{OUT}>}}_{f} \ \ \times \ \
\underbrace{( \ \CO^{<\ssize{1}{MDA}{MDA}\,|\,T^\texttt{OUT}\,|\,D\,|\,1>}\times\dotsc\times\CO^{<\ssize{D}{MDA}{MDA}\,|\,T^\texttt{OUT}\,|\,D\,|\,D>} \ )}_{\co{1} \ ,\,\dotsc\,, \ \co{D}}
\\[-10px]
&
\hspace*{240px}
\to_p \ \
\underbrace{\MDH^{<T^\texttt{INP},T^\texttt{OUT}\,|\,D\,|\,(\ssize{d}{MDA}{MDA})_{d\in[1,D]_\IN}>}}_{
\texttt{md\_hom}( \ f \ , \ (\circledast_1,\dotsc, \circledast_D) \ )
}
\end{align*}
Function \texttt{md\_hom} is partial (indicated by $\to_p$ instead of $\rightarrow$), which we motivate after this definition.
The function takes as input a scalar function $f$ and a tuple of $D$-many combine operators $(\co{1},\dotsc,\co{D})$,
and it yields a function $\texttt{md\_hom}( \ f \ , \ (\circledast_1,\dotsc, \circledast_D) \ )$ which is defined as
\begin{align*}
&
\texttt{md\_hom}( \ f \ , \ (\circledast_1,\dotsc, \circledast_D) \ )( \, \MDA \, )
\ := \
\underset{i_1\in I_1}{\circledast_1}\dotsc \underset{i_D\in I_D}{\circledast_D} \
\vec{f}( \ \MDA|_{\{i_1\}\times\dotsc\times\{i_D\}} \ )
\end{align*}
The combine operators' underset notation denotes straightforward iteration (explained formally in the Appendix, Notation~\ref{app_not_iterative_func_app}), and the MDA $\MDA|_{\{i_1\}\times\dotsc\times\{i_D\}}$ is the restriction of $\MDA$ to the MDA containing the single element accessed via MDA indices $(i_1,\dotsc,i_D)$.
Function $\vec{f}$ behaves like scalar function~$f$, but~$\vec{f}$ operates on singleton MDAs (rather than scalars).
Function $\vec{f}$ is of type
\[
\vec{f}^{<i_1,\dotsc,i_D\in\INz>}:T^\texttt{INP}[ \ \{i_1\},\dotsc,\{i_D\} \ ]\to T^\texttt{OUT}[ \ \size{1}{MDA}{MDA}(\,\{i_1\}\,) \, ,\dotsc, \, \size{D}{MDA}{MDA}\,(\{i_D\}\,) \ ]
\]
and defined as~%
\[
\vec{f}(x)[\,j_1,\dotsc,j_D\,]:=f(\,x[\,i_1,\dotsc,i_D\,]\,)\footnote{
    We assume that MDH functions, when applied to singleton MDAs, return a singleton MDA, as such MDHs already cover all real-world cases we currently are aware of.
    }
\]
\vspace*{2px}
For $\texttt{md\_hom}( \, f , \, (\circledast_1,\dotsc, \circledast_D) \, )$, we require by definition the homomorphic property (Definition~\ref{def_mdh}),
i.e., for each $d\in[1,D]_\IN$, it must hold:
\begin{align*}
&
\texttt{md\_hom}( \ f \ , \ (\circledast_1,\dotsc, \circledast_D) \ ) (\ \MDA_1 \dplus_d \ \MDA_2 \ ) \ = \\[5pt]
&\hspace*{80px}
\texttt{md\_hom}( \ f \ , \ (\circledast_1,\dotsc, \circledast_D) \ ) ( \, \MDA_1 \, )
\ \ \circledast_d \ \
\texttt{md\_hom}( \ f \ , \ (\circledast_1,\dotsc, \circledast_D) \ ) ( \, \MDA_2 \, )
\end{align*}
\end{definition}

Using Definition~\ref{def_md_hom}, we express any MDH function uniformly via higher-order function $\mdh$ using only the MDH's behavior $f$ on scalar values\footnote{
     For simplicity, we ignore that the scalar functions of some MDHs (such as Mandelbrot) also take as input MDA indices, which requires slight, straightforward extension of function $\mdh$, as outlined in the Appendix, Section~\ref{app_design_dec_mdh}.
}
and its combine operators $\circledast_1,\dotsc,\circledast_D$.
The other direction also holds: each function expressed via $\mdh$ is an MDH function, because we require the homomorphic property for $\mdh$.

Note that we can potentially allow in Definition~\ref{def_md_hom} the case $D=0$ in which we would define the $\mdh$ instance equal to the scalar function $f$:
\[
    \mdh( \ f \ , \ () \ \, ) \  := \ f
\]

Note further that function $\mdh$ is defined as partial function,
because the homomorphic property is not met
for all potential combinations of combine operators, e.g., $\co{1}=+$~(point-wise addition) \mbox{and $\co{2}=*$~(point-wise multiplication)}.
However, in many real-world examples, an MDH's combine operators are a mix of concatenations and point-wise combinations according to the same binary function.
The following lemma proves that any instance of the $\mdh$ higher-order function for such a mix of combine operators is a well-defined MDH function.

\begin{lemma}
\label{lemma_mdhom}
    Let $\oplus:T\to T$ be an arbitrary but fixed associative and commutative binary function on scalar type $T\in\type$.
    Let further $\circledast_1,\dotsc,\circledast_D$ be combine operators of which any is either concatenation (Example~\ref{def:mda_concat}) or point-wise combination according to binary function $\oplus$ (Example~\ref{def:mda_pw}).

    It holds that $\texttt{md\_hom}( \, f , \, (\circledast_1,\dotsc, \circledast_D) \, )$ is well defined.
\end{lemma}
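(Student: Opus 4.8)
The plan is to unfold the partiality of $\mdh$. By Definition~\ref{def_md_hom}, writing $h:=\mdh(\,f\,,(\co{1},\dotsc,\co{D})\,)$ for short, the instance $h$ is well defined precisely when it satisfies the homomorphic property of Definition~\ref{def_mdh} in every dimension $d\in[1,D]_\IN$. So I would fix an arbitrary $d$ and an arbitrary concatenated input $\MDA_1\dplusn{d}\MDA_2$, whose varying index set in dimension $d$ splits as a disjoint union $I_d=P\cupdot Q$ (with $\MDA_1$ carrying $P$ and $\MDA_2$ carrying $Q$), and verify $h(\MDA_1\dplusn{d}\MDA_2)=h(\MDA_1)\,\co{d}\,h(\MDA_2)$. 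Unfolding the left-hand side via the defining formula $\coIt{1}{i_1\in I_1}\dotsc\coIt{D}{i_D\in I_D}\vec f(\MDA|_{\{i_1\}\times\dotsc\times\{i_D\}})$, the definition of concatenation (Example~\ref{def:mda_concat}) shows that the singleton restriction reduces to the corresponding restriction of $\MDA_1$ when $i_d\in P$ and of $\MDA_2$ when $i_d\in Q$. Since every $\co{d}$ is associative and commutative (as noted after Definition~\ref{def_mdh}: concatenation is so as a combine operator, and point-wise combination inherits this from $\oplus$), the iteration over $i_d\in P\cupdot Q$ factors as $\coIt{d}{i_d\in P}\,\co{d}\,\coIt{d}{i_d\in Q}$.

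The heart of the argument is an interchange lemma that moves this $\co{d}$ outward past the remaining $D-1$ iterations. Concretely, for any two distinct dimensions $a\neq b$ and combine operators $\co{a},\co{b}$ each being either concatenation or point-wise combination according to $\oplus$, I would prove
\[
\coIt{b}{j\in J}\,(\,x_j\ \co{a}\ y_j\,)\ =\ \bigl(\,\coIt{b}{j\in J} x_j\,\bigr)\ \co{a}\ \bigl(\,\coIt{b}{j\in J} y_j\,\bigr).
\]
This is checked by a routine element-wise computation in four cases. When both operators are concatenation, the two sides place each element at the same multi-index, since concatenation in distinct dimensions is independent; when both are point-wise combination, the identity is exactly $\bigoplus_j(x_j\oplus y_j)=(\bigoplus_j x_j)\oplus(\bigoplus_j y_j)$, which holds because $\oplus$ is associative and commutative; the two mixed cases hold because a concatenation in one dimension and a reduction in a distinct dimension touch disjoint axes and hence commute. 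I expect the mixed case, together with the bookkeeping needed to check that the meta-parameters (i.e., the index sets) on the two sides of the lemma match, to be the part that is most delicate to write out in full.

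With the interchange lemma in hand, I would finish by iterating it once for each of the $D-1$ dimensions $d'\neq d$, thereby pulling the split $\co{d}$ all the way outside the nested iteration and obtaining $h(\MDA_1)\,\co{d}\,h(\MDA_2)$ exactly. As this holds for every $d$ and every concatenated input, the homomorphic property is met in all dimensions and $h$ is well defined. A cleaner alternative packaging, which I might adopt instead, is to use the same interchange lemma to first collapse the nested iteration into the closed map--reduce form in which one sums via $\oplus$ over all point-wise dimensions and indexes identically over the concatenation dimensions; the homomorphic property then reduces either to splitting a concatenation range (immediate) or to splitting a reduction sum, which is immediate from associativity and commutativity of $\oplus$. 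Either way, the only nontrivial ingredient is the cross-dimensional interchange of combine operators, and the essential hypothesis is that all point-wise dimensions share one associative-commutative operator $\oplus$~--~without it the mixed point-wise/point-wise interchange fails, exactly as the $+$/$*$ counterexample preceding the lemma shows.
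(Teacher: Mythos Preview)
Your proposal is correct and hinges on the same core ingredient as the paper's proof: an interchange property between combine operators acting in distinct dimensions, which is exactly where the hypothesis that all point-wise dimensions share a single associative-commutative $\oplus$ is used. The packaging differs slightly. The paper proves an adjacent-swap lemma for full iterations, i.e., that $\coIt{k}{i_k}\coIt{k+1}{i_{k+1}}$ can be exchanged to $\coIt{k+1}{i_{k+1}}\coIt{k}{i_k}$ (handled in the two cases $\co{k}=\co{k+1}$ and $\co{k}\neq\co{k+1}$, the latter being trivial because one operator is concatenation), and uses it to permute the nested iteration so that w.l.o.g.\ $d=1$; once $\co{d}$ is outermost, the split into $P$ and $Q$ is immediate. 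You instead keep $d$ arbitrary, split the $d$-th iteration in place, and then use your binary interchange lemma $\coIt{b}{j}(x_j\,\co{a}\,y_j)=(\coIt{b}{j}x_j)\,\co{a}\,(\coIt{b}{j}y_j)$ to push the single $\co{d}$ outward past each of the $d-1$ outer iterations. Both routes are equivalent in strength; the paper's reduction to $d=1$ makes the final computation a one-liner, while your formulation avoids the reduction step at the cost of explicitly iterating the interchange.
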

\begin{proof}
    See Appendix, Section~\ref{app_proof_mdhhom_lemma}.
\end{proof}

\vspace{5px}

MDH functions are defined (Definition~\ref{def_mdh}) such that they uniformly operate on MDAs (Figure~\ref{hl_overview}).
We introduce higher-order function $\iv$ to prepare domain-specific inputs (e.g., a matrix and a vector in the case of matrix-vector multiplication) as an MDA, and we use function $\ov$ to transform the output MDA back to the domain-specific data requirements (like storing it as a transposed matrix in the case of matrix multiplication, or splitting it into multiple outputs as we will see later with examples).
We introduce both higher-order functions in the following.

\vspace*{20px}

\subsection{View Functions}
\label{sec_views}

We start, in Section~\ref{sec_views_prep}, by formally introducing \emph{Buffers (BUF)} and \emph{Index Functions}~--~both concepts are central building blocks in our definition of higher-order functions $\iv$ and $\ov$.
In our approach, we use BUFs to represent domain-specific input and output data (scalars, vectors, matrices, etc), and \emph{index functions} are used by the user to conveniently instantiate higher-order functions $\iv$ and $\ov$, e.g., index function $(i,k)\mapsto(i,k)$ and $(i,k)\mapsto(k)$ used in Figure~\ref{fig:intro_example} to instantiate function $\iv$, and $(i,k)\mapsto(i)$ is used for $\ov$.

Sections~\ref{sec_views_iv} and~\ref{sec_views_ov} introduce \emph{input views} and \emph{output views} which are central concepts in our approach.
We define \emph{input views} as arbitrary functions that map a collection of BUFs to an MDA~(Figure~\ref{hl_overview});~higher-order function $\iv$ is then defined to conveniently compute an important class of input view functions that are relevant for expressing real-world computations.
Correspondingly, Section~\ref{sec_views_ov} defines \emph{output views} as functions that transform an MDA to a collection of BUFs, and higher-order function $\ov$ is defined to conveniently compute important output views.

Finally, we discuss in Section~\ref{sec_views_iv_and_ov} the relationship between higher-order function $\iv$ and $\ov$:~we prove that both functions are inversely related to each other, allowing arbitrarily switching between our internal MDA representation and our domain-specific BUF representation~(as required for our code generation process discussed later).

\newpage

\subsubsection{Preparation}
\label{sec_views_prep}

We formally introduce \emph{Buffers (BUF)} and \emph{Index Functions} in the following.

\begin{definition}[Buffer]
\label{def_buffer}
Let $T\in\type$ be an arbitrary scalar type, $D\in\INz$ a natural number\footnote{
    We use the case $D=0$ to represent scalar values (formal details provided in the Appendix, Section~\ref{app_scalar_values}).
}, and $N:=(N_1 \, ,\dotsc, \, N_D)\in\IN^D$ a tuple of natural numbers.

A \emph{Buffer (BUF)} $\Buf$ that has \emph{dimensionality}~$D$, \emph{size}~$N$, and \emph{scalar type}~$T$ is a function with the following signature:
\[
  \Buf:[0,N_1)_{\INz}\,\times\dotsc\times\,[0,N_D)_{\INz} \, \to \, T\cup\{\bot\}
\]
Here, we use $\bot$ to denote the \emph{undefined value}.
We refer to $[0,N_1)_{\INz}\,\times\dotsc\times\,[0,N_D)_{\INz} \, \to \, T\cup\{\bot\}$ as the \emph{type} of BUF $\Buf$, which we also denote as $T^{N_1\times\dotsc\times N_D}$, and we refer to the set $\BUFIDXs:=\{ \, [0,N)_{\INz} \ | \ N\in\IN \}$
as \emph{BUF index sets}.
Analogously to Notation~\ref{not_mda}, we write $\Buf[\,i_1,\dotsc,i_D\,]$ instead of $\Buf(i_1,\dotsc,i_D)$ to avoid a too heavy usage of parentheses.
\end{definition}

In contrast to MDAs (Definition~\ref{def_mda}), a BUF always operates on a contiguous range of natural numbers starting from~$0$, and a BUF may contain undefined values.
These two differences allow straightforwardly transforming BUFs to data structures provided by low-level programming languages (e.g., \emph{C arrays}, as used in OpenMP, CUDA, and OpenCL).

Note that in our generated program code~(discussed later in Section~\ref{ch:low_level}), we implement MDAs on top of BUFs, as straightforward aliases that access BUFs, so that we do not need to transform MDAs to low-level data structures and/or store them otherwise physically in memory.

\begin{definition}[Index Function]
    \label{def_idx_func}
    Let $D\in\IN$ be a natural number (representing an MDA's dimensionality) and $D_b\in\INz$ (representing a BUF's dimensionality).

    An \emph{index function} $\idx$ from $D$-dimensional MDA indices to $D_b$-dimensional BUF indices is any meta-function of type
    \[
    \idx^{<I^\texttt{MDA}_1,\dotsc,I^\texttt{MDA}_D\in\IDXs^D>}:
    I^\texttt{MDA}_1\times\dotsc\times I^\texttt{MDA}_D
    \ \to \
    I^\texttt{BUF}_1\times\dotsc\times I^\texttt{BUF}_{D_b}
    \]
    for
    {$
    (I^\texttt{BUF}_1,\dotsc,I^\texttt{BUF}_{D_b}) := \bsize{}{MDA}{BUF}(I^\texttt{MDA}_1,\dotsc,I^\texttt{MDA}_D)
    $} where
    $\bsize{}{MDA}{BUF}:{\IDXs}^D\to{\BUFIDXs}^{D_b}$
    is an arbitrary but fixed function
    that maps $D$-many MDA index sets to $D_b$-many BUF index sets.
    We denote the type of index functions as $\IDXFCT^{<D,D_b \ | \ \sbsize{}{MDA}{BUF}>}$.

    In words:
    Index functions have to be capable of operating on any potential MDA index set.
    This generality will be required later for using index functions also on parts of MDAs whose index sets are subsets of the original MDA's index sets.
\end{definition}

We will use index functions to access BUFs.
For example, in the case of \texttt{MatVec} (Figure~\ref{example_intro_1}), we access its input matrix using index function
$(i,k)\mapsto(i,k)$
which is of type
\[
\IDXFCT^{<D:=2,D_b:=2 \ | \
\sbsize{}{MDA}{BUF}(I^\texttt{MDA}_1,I^\texttt{MDA}_2)\ := \ [0,\max(I^\texttt{MDA}_1)]_{\INz},[0,\max(I^\texttt{MDA}_2)]_{\INz}
>}
\]
and we use index function
$(i,k)\mapsto(k)$
to access \texttt{MatVec}'s input vector, which is of type
\[
\IDXFCT^{<D:=2,D_b:=1 \ | \
\sbsize{}{MDA}{BUF}(I^\texttt{MDA}_1,I^\texttt{MDA}_2) \ := \
[0,\max(I^\texttt{MDA}_2)]_{\INz}
>}
\]
Further examples of index function, e.g., for stencil computation \texttt{Jacobi1D}, are presented in the Appendix, Section~\ref{app_examples_idx_funcs}, for the interested reader.

\subsubsection{Input Views}
\label{sec_views_iv}

We define \emph{input views} as any function that compute an MDA from a collection of (user-defined) BUFs.
For example, in the case of \texttt{MatVec}, its input view takes as input two BUFs~--~a matrix and a vector~--~and it yields a two-dimensional MDA containing pairs of matrix and vector elements (illustrated in Figure~\ref{example_intro_1}).
In contrast, the input view of \texttt{Jacobi1D} takes as input a single BUF (representing a vector) only, and it computes an MDA containing triples of BUF elements~(Figure~\ref{example_intro_2}).

\begin{definition}[Input View]
    \label{def_input_view}

    An \emph{Input View (IV)} from $B$-many BUFs, $B\in\IN$, of arbitrary but fixed types $T^{N^b_1\times\dotsc\times N^b_{D_b}}_b$, $b\in[1,B]_\IN$,
    to an MDA of arbitrary but fixed type $T[I_1,\dotsc,I_D]$
    is any function $\mathfrak{iv}$ of type:
    \vspace*{-3px}
    \begin{align*}
    &
    {\mathfrak{iv}}: \
        \underbrace{
        \cart_{b=1}^{B}
            T^{N^b_1\times\dotsc\times N^b_{D_b}}_b
        }_\text{BUFs}
        \hspace{3px}
        \to_p
        \hspace{3px}
        \underbrace{
        T[ \ I_1 \, ,\dotsc, \, I_D \ ]
        }_\text{MDA}
    \end{align*}
    We denote the type of $\mathfrak{iv}$ as
    $
    \texttt{IV}^{<
    \overbrace{\scriptstyle
    B \ | \
    (D_b\,)_{b\in[1,B]_\IN} \ | \
    (\,N^b_1,\dotsc,N^b_{D_b}\,)_{b\in[1,B]_\IN} \ | \
    (T_b\,)_{b\in[1,B]_\IN}
    }^\text{BUFs' Meta-Parameters}
    |
    \overbrace{\scriptstyle
    D
    \ | \ I_1,\dotsc,I_D
    \ | \ T
    }^\text{MDA's Meta-Parameters}
    \: >}
    $.
\end{definition}

\begin{example}[Input View~--~\texttt{MatVec}]
    The input view
    of \texttt{MatVec} on a $1024\times512$ matrix and $512$-sized vector (sizes chosen arbitrarily), both of integers $\IZ$, is of type
    \[
        \texttt{IV}^{<
        \overbrace{\scriptstyle
        B=2\,|\,D_1=2,D_2=1\,|\,(N^1_1=1024,N^1_2=512),(N^2_1=512)\,|\,T_1=\IZ,T_2=\IZ
        }^\text{BUFs' meta-parameters}
        |
        \overbrace{\scriptstyle
        D=2\,|\,I_1=[0,1024)_{\IN_0},I_2=[0,512)_{\IN_0}\,|\,T=\IZ\times\IZ
        }^\text{MDA's meta-parameters}
        >}
        \]
        and defined as
        \[
        \underbrace{[\,M(i,k)\,]_{i\in [0,1024)_{\IN_0}, k\in [0,512)_{\IN_0}}}_{\text{BUF (Matrix)}},
        \underbrace{[\,v(k)\,]_{k\in [0,512)_{\IN_0}}}_{\text{BUF (Vector)}}
        \mapsto
        \underbrace{[\,M(i,k), v(k)\,]_{i\in [0,1024)_{\IN_0}, k\in [0,512)_{\IN_0}}}_{\text{MDA}}
        \]
        Here, the BUFs' meta-parameters are as follows:
        $B=2$ is the number of BUFs (matrix and vector);
        $D_1=2$ is dimensionality of the matrix and $D_2=1$ the vector's dimensionality;
        $(N^1_1,N^1_2)=(1024,512)$ is the matrix size and
        $N^2_1=512$ the vector's size;
        $T_1,T_2=\IZ$ are the scalar types of the matrix and vector.
        The MDA's meta-parameters are:
        $D=2$ is the computed MDA's dimensionality;
        $I_1,I_2$ are the MDA's index sets;
        parameter $T=\IZ\times\IZ$ is MDA's scalar type (pairs of matrix/vector elements~--~see Figure~\ref{example_intro_1}).
\end{example}
\begin{example}[Input View~--~\texttt{Jacobi1D}]
        The input view of \texttt{Jacobi1D} on a $512$-sized vector of integers is of type
        \[
            \texttt{IV}^{<
            \overbrace{\scriptstyle
            B=1\,|\,D_1=1\,|\,(N^1_1=512)\,|\,T_1=\IZ
            }^\text{BUFs' meta-parameters}
            |
            \overbrace{\scriptstyle
            D=1\,|\,I_1=[0,512-2)_{\IN_0}\,|\,T=\IZ\times\IZ\times\IZ
            }^\text{MDA's meta-parameters}
            >}
            \]
            and defined as
            \[
            \underbrace{[\,v(i)\,]_{i\in [0,512)_{\IN_0}}}_{\text{BUF (Vector)}}
            \mapsto
            \underbrace{[\,v(i+0),v(i+1),v(i+2)\,]_{i\in [0,512-2)_{\IN_0}}}_{\text{MDA}}
            \]
\end{example}

\vspace*{7px}

We introduce higher-order function $\iv$:~it computes~--~in a structured way~--~important input views from user-defined index functions~%
$(\idx_{b,a})_{b\in[1,B]_\IN,a\in[1,A_b]_\IN}$~(Definition~\ref{def_idx_func}).
Here, $B\in\IN$ represents the number of BUFs that the computed input view will take as input, and $A_b$ represents the number of accesses to the $b$-th BUF  required for computing an individual MDA element.

In the case of \texttt{MatVec} (Figure~\ref{example_intro_1}), we use
$B:=2$ because \texttt{MatVec} has two input BUFs:~a matrix~$M$ (the first input of \texttt{MatVec} and thus identified by $b=1$)
and
a
vector~$v$ (identified by $b=2$).
For the number of accesses, we use for the matrix $A_1:=1$, as one element is accessed within matrix~$M$ to compute
an individual MDA element~--~matrix element $M[i,k]$ for computing MDA element at position $(i,k)$.
For the vector, we use $A_2:=1$, as the single element $v[k]$ is accessed within the vector.
The index functions of \texttt{MatVec} are:~$\idx_{1,1}(i,k):=(i,k)$ (used to access the matrix) and $\idx_{2,1}(i,k):=(k)$ (used for the vector).

In contrast, for \texttt{Jacobi1D}~(Figure~\ref{example_intro_2}), we use $B:=1$ because \texttt{Jacobi1D} has vector $v$ as its only input, and we use $A_1:=3$ because the vector is accessed three times to compute an individual MDA element at arbitrary position~$i$:~first access $v[i+0]$, second access $v[i+1]$, and third access $v[i+2]$.
The index functions of \texttt{Jacobi1D} are:~%
$\idx_{1,1}(i):=(i+0)$, $\idx_{1,2}(i):=(i+1)$, and $\idx_{1,3}(i):=(i+2)$.

More generally, higher-order function $\iv$ uses the index functions
$\idx_{b,a}$
to compute an input view that maps BUFs $\Buf_1,\dotsc,\Buf_B$ to an MDA $\MDA$ that contains at position $i_1,\dotsc,i_D$ the following element:
{
\begin{align*}
\MDA[i_1,\dotsc,i_D] \ := \ \ \
&
( \,
\underbrace{( \
\underbrace{\Buf_1[ \, \idx_{1,1}(i_1,\dotsc,i_D) \, ]\in T_1}_{a=1}  \ , \, \dotsc \, , \  \underbrace{\Buf_1[ \, \idx_{1,A_1}(i_1,\dotsc,i_D) \, ]\in T_1}_{a=A_1} \ )}_{b=1}
, \\[2pt]
&\hspace*{132px} \vdots  \\[2pt]
&\hspace*{-3px}
\
\underbrace{( \
\underbrace{\Buf_B[ \, \idx_{B,1}(i_1,\dotsc,i_D) \, ]\in T_B}_{a=1} \ , \, \dotsc \, , \ \underbrace{\Buf_B[ \, \idx_{B,A_B}(i_1,\dotsc,i_D) \, ]\in T_B}_{a=A_B} \ )}_{b=B}
\,)
\end{align*}
}
The element consists of $B$-many tuples~--~one per BUF~--~and each such tuple contains $A_b$-many elements~--~one element per access to the $b$-th BUF.
For \texttt{MatVec}, the element is of the form
{
\begin{align*}
\MDA[i_1,i_2] \ := \ \ \
&
( \,
\underbrace{( \
\underbrace{\Buf_1[ \, \idx_{1,1}(i_1,i_2):=(i_1,i_2) \, ]\in T_1}_{a=1} \ )}_{b=1}
\ ,
\
\underbrace{( \
\underbrace{\Buf_2[ \, \idx_{2,1}(i_1,i_2):=(i_2) \, ]\in T_2}_{a=1} \ )}_{b=2}
\,)
\end{align*}
}
and for \texttt{Jacobi1D}, the element is
{
\begin{align*}
&
\MDA[i_1] \ :=  \\[3px]
& \ \ \
( \,
\underbrace{( \
\underbrace{\Buf_1[ \, \idx_{1,1}(i_1):=(i_1+0) \, ]\in T_1}_{a=1},
\underbrace{\Buf_1[ \, \idx_{1,2}(i_1):=(i_1+1) \, ]\in T_1}_{a=2},
\underbrace{\Buf_1[ \, \idx_{1,3}(i_1):=(i_1+2) \, ]\in T_1}_{a=3}
\ )}_{b=1}
\,)
\end{align*}
}

\begin{figure}[p!]
    \centering

    \vspace{60px}

    \includegraphics[scale=0.32]{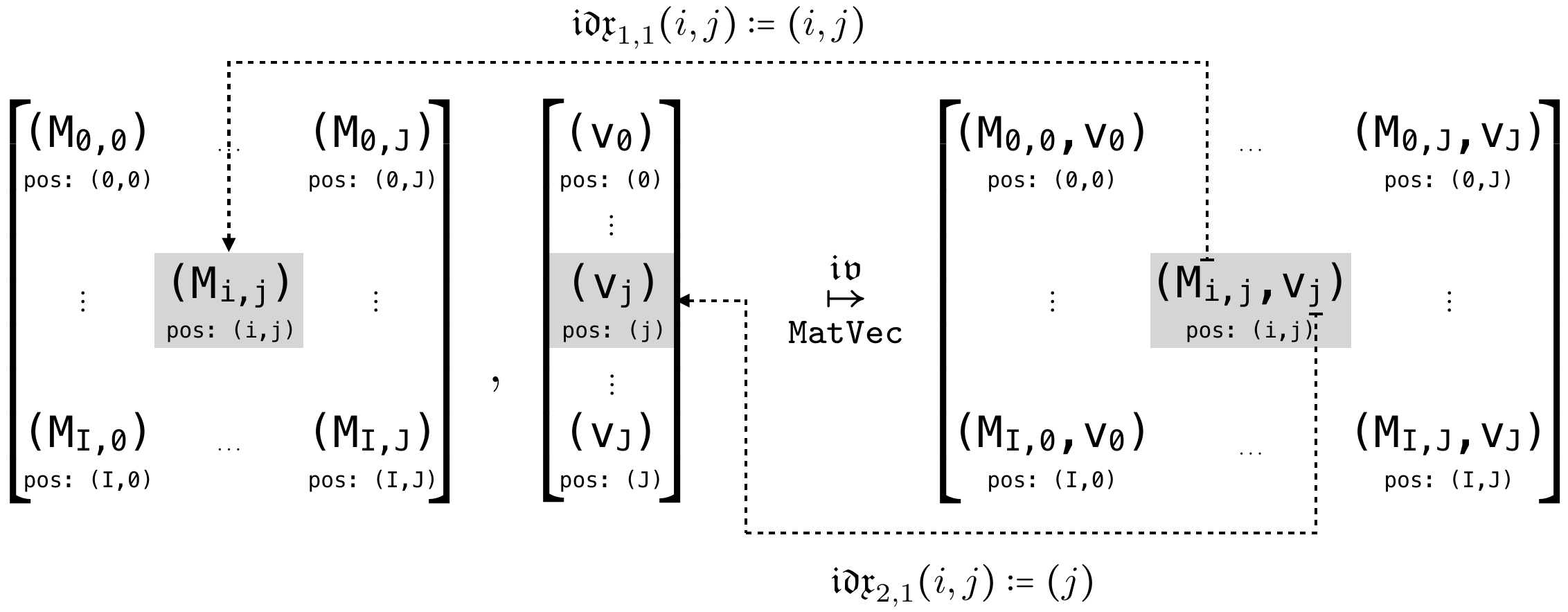}
    \caption{
    Input view illustrated using the example \texttt{MatVec}
    }
    \label{img_example_inp_view_matvec}

    \vspace{65px}

    \includegraphics[scale=0.32]{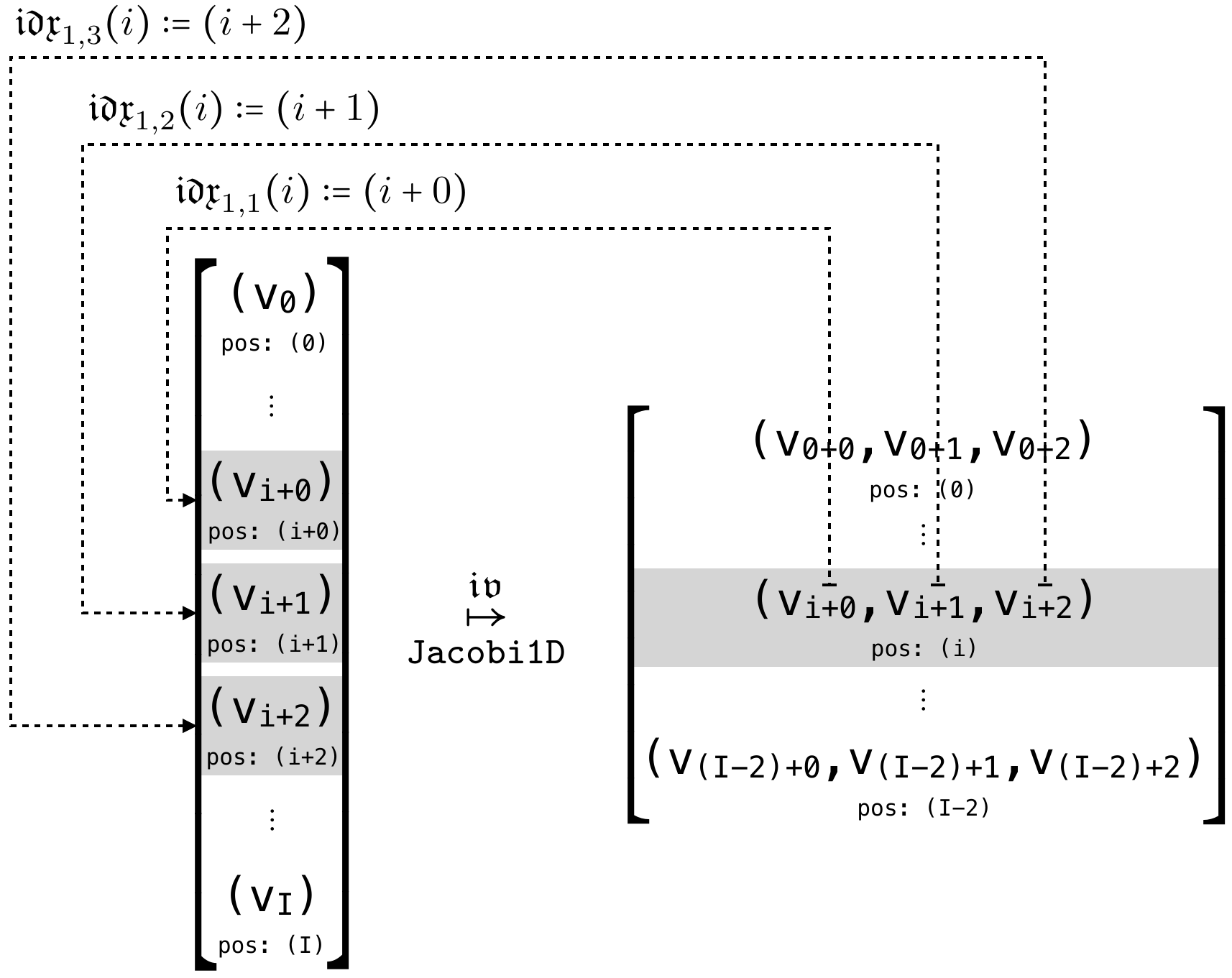}
    \caption{
    Input view illustrated using the example \texttt{Jacobi1D}
    }
    \label{img_example_inp_view_jacobi1d}

    \vspace{70px}

\end{figure}

In the following, we introduce higher-order function $\iv$ which computes important input views conveniently and in a uniform, structured manner.
Function $\iv$ takes as input a collection of index functions (Definition~\ref{def_idx_func}), and it uses these index functions to compute a corresponding input view (Definition~\ref{def_input_view}), as outlined above and described in detail in the following.

Figures~\ref{img_example_inp_view_matvec} and~\ref{img_example_inp_view_jacobi1d} use the examples \texttt{MatVec} and \texttt{Jacobi1D} to informally illustrate how function $\iv$ uses index functions to compute input views.
In the two figures,
we use domain-specific identifiers for better clarity:~in the case of \texttt{MatVec}, we use for its two input BUFs the identifiers $M$ and $v$ instead of $\Buf_1$ and $\Buf_2$, as well as identifiers $i$ and $j$ instead of $i_1$ and $i_2$ for index variables;~for \texttt{Jacobi1D}, we use identifier $v$ instead of $\Buf_1$ and $i$ instead of $i_1$.

We now formally define higher-order function $\iv$.
For high flexibility and formal correctness, function $\iv$ relies on a type that involves many meta-parameters.
The high number of meta-parameters, and the resulting complex type of function $\iv$, might appear daunting to the user.
However, Notation~\ref{notation_iv} confirms that despite the complex type of function $\iv$, the function can be conveniently expressed by the user (as also illustrated in Figure~\ref{fig:intro_example}), because meta-parameters can be automatically deduced from $\iv$'s input parameters.

\newpage

\begin{definition}[Higher-Order Function $\iv$]
\label{def_iv}
Function $\iv$ is of type

\vspace*{10px}

\hspace*{6px}
\texttt{inp\_view}$^\texttt{<}\,$%
  $\underbrace{^{B\in\IN}}_\texttt{Number BUFs}\,^|\,$
  $\underbrace{^{A_1,\dotsc,A_B\in\IN}}_\texttt{Number BUFs' Accesses}\,^|\,$
  $\underbrace{^{D_1,\dotsc,D_B\in\INz}}_\texttt{BUFs' Dimensionalities}\,^|\,$
  $\underbrace{^{D\in\IN}}_\texttt{MDA Dimensionality}\,^|\,$
\\[7pt]
\hspace*{63px}
    $\underbrace{^{(\sbsize{}{MDA}{BUF}^{b,a}:{\IDXs}^D\to{\BUFIDXs}^{D_b})_{b\in[1,B]_\IN, a\in[1,A_{b}]_\IN}}}_\texttt{Index Set Functions (MDA indices to BUF indices)}$
$^\texttt{>}:$\\[0pt]

\noindent
$
\underbrace{
\underbrace{\cart_{b=1}^{B}}_\text{Buffer}
\underbrace{\cart_{a=1}^{A_b}}_\text{Access}
\ \underbrace{
\IDXFCT^{<D,D_b\,|\,\sbsize{}{MDA}{BUF}^{b,a}>}%
}_\text{Index Function: $\idx_{b,a}$}
}_\text{Index Functions: $\idx_{1,1},\dotsc,\idx_{B,A_B}$}
\ \to \
\underbrace{
\texttt{IV}{
^{\tiny
<
\overbrace{
B \,|\,
D_1,\dotsc,D_B \,|\,
\rightarrow \,|\,
T_1,\dotsc,T_B\in\type
}^{\text{BUFs' Meta-Parameters}}
\,|\,
}
_{\hspace{4px}\tiny
\underbrace{
D
\,|\, I_1,\dotsc,I_D\in\IDXs
\,|\,
\rightarrow \,
}_{\text{MDA's Meta-Parameters}}
><
\underbrace{
 N^1_1,\dotsc,N^B_{D_B}
\, | \,
T
}_{\text{Postponed Parameters}}
>}}
}_{\text{Input View: $\mathfrak{iv}$}}
$

\vspace*{10px}
\noindent
for
$
    N^b_d:=1+\max( \ \bigcup_{a\in[1,A_b]_\IN}\bsize{d}{MDA}{BUF}{^{b,a}}(I_1,\dotsc,I_D) \ )
$ and $
    T:=\cart_{b=1}^{B}\cart_{a=1}^{A_b}T_b
$,
and it is defined as:

\begin{align*}
&
\underbrace{ \,(\idx_{b,a})_{b\in[1,B]_\IN,a\in[1,A_b]_\IN}\, }_{\text{Index Functions}}
\ \mapsto \
\underbrace{
(\,\underbrace{\Buf_1,\dotsc,\Buf_B}_{\text{BUFs}}\,)
\overset{\mathfrak{iv}}{\mapsto}
\underbrace{
\MDA
}_\text{MDA}
}_\text{Input View}
\end{align*}
for
\begin{align*}
\MDA[i_1,\dotsc,i_D] := ( \, \MDA_{b,a}[i_1,\dotsc,i_D] \, )_{b\in[1,B]_\IN,a\in[1,A_b]_\IN}
\end{align*}
and
\begin{align*}
    \MDA_{b,a}[i_1,\dotsc,i_D] \ \ := \ \ \Buf_b[ \ \idx_{b,a}(i_1,\dotsc,i_D) \ ]
\end{align*}
\end{definition}

\noindent
Higher-order function $\iv$ takes as input a collection of index functions that are of types $\IDXFCT$ (Definition~\ref{def_idx_func}), and it computes an input view of type \texttt{IV} (Definition~\ref{def_input_view}) based on the index functions, as illustrated in Figures~\ref{img_example_inp_view_matvec} and~\ref{img_example_inp_view_jacobi1d}.

As concrete meta-parameter values of type $\IDXFCT$ (listed in angle brackets), we use straightforwardly the values of meta-parameters passed to function $\iv$.
Similarly, we use the particular meta-parameter values of function $\iv$ also for type \texttt{IV}'s meta-parameters~$B$,~$D_1,\dotsc,D_B$, and~$D$

To be able using the computed input view on arbitrarily typed input buffers and letting the input view compute MDAs that have arbitrary index sets, we keep \texttt{IV}'s meta-parameters $T_1,\dotsc,T_B$ (scalar types of the computed view's input buffers) and $I_1,\dotsc,I_D$ (index sets of the view's returned MDA) flexible.
Being flexible in the BUFs' scalar types and MDA's index sets is important for convenience:~for example, in the case of \texttt{MatVec}, this flexibility allows using the computed input view generically for matrices and vectors that have arbitrary scalar types (e.g., either \texttt{int} or \texttt{float}) and sizes $(I,J)$ (matrix) and $J$ (vector), for arbitrary $I,J\in\IN$, without needing to re-compute a new input view every time again when BUFs' scalar types and/or sizes change.

We automatically
compute
the sizes $N^b_d$ of BUFs
in \texttt{IV}'s meta-parameter list
(e.g., in the case of \texttt{MatVec}, the size of the input matrix~$(I,J)$ and vector size~$J$), according to the formula in Definition~\ref{def_iv}, based on the flexible MDA's index sets (e.g., sets~$[0,I)_{\INz}$ and~$[0,J)_{\INz}$ for \texttt{MatVec}).
By computing BUF sizes from MDA index sets (rather than requesting the sizes explicitly from the user), we achieve strong error checking:~for example, for \texttt{MatVec}, we can ensure~--~already on the type level~--~that the number of columns of its input matrix and the size of its input vector match.
To compute the BUF sizes, we \emph{postpone} via $\rightarrow$ (defined formally in the Appendix, Definition~\ref{app_del_meta_func}) the sizes in \texttt{IV}'s meta-parameter list to later meta-parameter stages;~this is because the sizes are defined in early stages and thus have no access to the MDA's index sets which are defined in later stages.
Our formula in Definition~\ref{def_iv} then
works
as follows:~%
for each BUF~$b$, its size~$N^b_d$ has to be well-defined in each of its dimensions $d$,
for all accesses $a$, which is checked by using the BUF's index functions on all indices within the MDA index sets $I_1,\dotsc,I_D$.
Here, in the computation of~$N^b_d$, function
$\bsize{d}{MDA}{BUF}{^{b,a}}$ computes the~$d$-th component of the $D_b$-sized output tuple of~$\bsize{}{MDA}{BUF}{^{b,a}}$ (the computed component is the index set of BUF~$b$ in dimension~$d$ for the~$a$-th index function used to access the BUF).

We automatically compute also MDA's scalar type~$T$ using the formula presented in Definition~\ref{def_iv}.
The formula computes $T$ as a tuple that consists of the BUFs' scalar types, as each MDA element consist of BUF elements (illustrated in Figures~\ref{img_example_inp_view_matvec} and~\ref{img_example_inp_view_jacobi1d}).
Postponing $T$ in \texttt{IV}'s meta-parameter list is done (analogously as for $N^b_d$), but is actually not required, because the BUFs' scalar types $T_1,\dotsc,T_B$ are already defined in earlier meta-parameter stages than $T$.
However, we will see that postponing $T$ is required later in the Definition~\ref{def_ov} of higher-order function $\ov$;~therefore, we postpone $T$ also in our definition of $\iv$ to increase consistency between our
definitions of $\iv$ (Definition~\ref{def_iv}) and $\ov$ (Definition~\ref{def_ov}).

\vspace{5px}

Note that function $\iv$ is not capable of computing every kind of input view function (Definition~\ref{def_input_view}).
For example, $\iv$ cannot be used for computing MDAs that are required for expressing computations on sparse data formats~\cite{9407251}, because such MDAs need dynamically accessing BUFs.
This limitation of $\iv$ can be relaxed by generalizing our index functions toward taking additional, dynamic input arguments, which we consider as future work (as outlined in Section~\ref{ch:fw}).

\begin{notation}[Input Views]
\label{notation_iv}
Let
$
\iv^{<\dotsc>}( \
(\,\idx_{b,a}\,)_{b\in[1,B]_\IN,a\in[1,A_b]_\IN}
\ )
$
be a particular instance of higher-order function $\iv$ (meta-parameters omitted via ellipsis for simplicity) for an arbitrary but fixed choice of index functions.
Let further $\texttt{ID}_1,\dotsc,\texttt{ID}_B\in\Sigma^*$ be arbitrary, user-defined BUF identifiers (e.g., $\texttt{ID}_1=\texttt{"M"}$ and $\texttt{ID}_2=\texttt{"v"}$ in the case of \texttt{MatVec}), for an arbitrary, fixed collection of letters $\Sigma=\{A,B,C,\dotsc,a,b,c,\dotsc,1,2,3,\dotsc\}$.

For better readability, we use the following notation for the 2-dimensional structure of index functions taken as input by function $\iv$, inspired by \citet{9370308}:
\begin{align*}
\iv
( \
\texttt{ID}_1:\,\idx_{1,1}\,,\dotsc,\,\idx_{1,A_1}
\ \ \ , \,\dotsc\, , \ \ \
\texttt{ID}_B:\,\idx_{B,1}\,,\dotsc,\,\idx_{B,A_B}
\ )
\end{align*}
We refrain from stating $\iv$'s meta-parameters in our notation, as the parameters can be automatically deduced from the number and types of index functions.
\end{notation}

\begin{example}[Notation Input Views]
Function $\iv$ is used for $\texttt{MatVec}$ and $\texttt{Jacobi1D}$ (in Notation~\ref{notation_iv}) as follows:

\begin{align*}
&
\texttt{\underline{MatVec:}} \ \ \ \
&&\iv( \
\underbrace{\texttt{M:}\underbrace{(i,k)\mapsto(i,k)}_\texttt{a=1}}_\texttt{b=1}
\,,\,
\underbrace{\texttt{v:}\underbrace{(i,k)\mapsto(k)}_\texttt{a=1}}_\texttt{b=2}
\ )
\\
&
\texttt{\underline{Jacobi1D:}} \ \ \ \
&&\iv
( \
\underbrace{
\texttt{v:}
  \underbrace{(i)\mapsto (i+0)}_{a=1}
  \,,\,
  \underbrace{(i)\mapsto (i+1)}_{a=2}
  \,,\,
  \underbrace{(i)\mapsto (i+2)}_{a=3}
}_{b=1}
\ )
\end{align*}
\end{example}

\vspace*{20px}

\subsubsection{Output Views}
\label{sec_views_ov}

An \emph{output view}
is the counterpart
of an input view:~in contrast to an input view which maps BUFs to an MDA, an output view maps an MDA to a collection of BUFs.
In the following, we define output views, and we introduce higher-order function $\ov$ which computes output views in a structured manner (analogously to function $\iv$ for input views).

\begin{figure}[h!]
    \centering
    \includegraphics[scale=0.32]{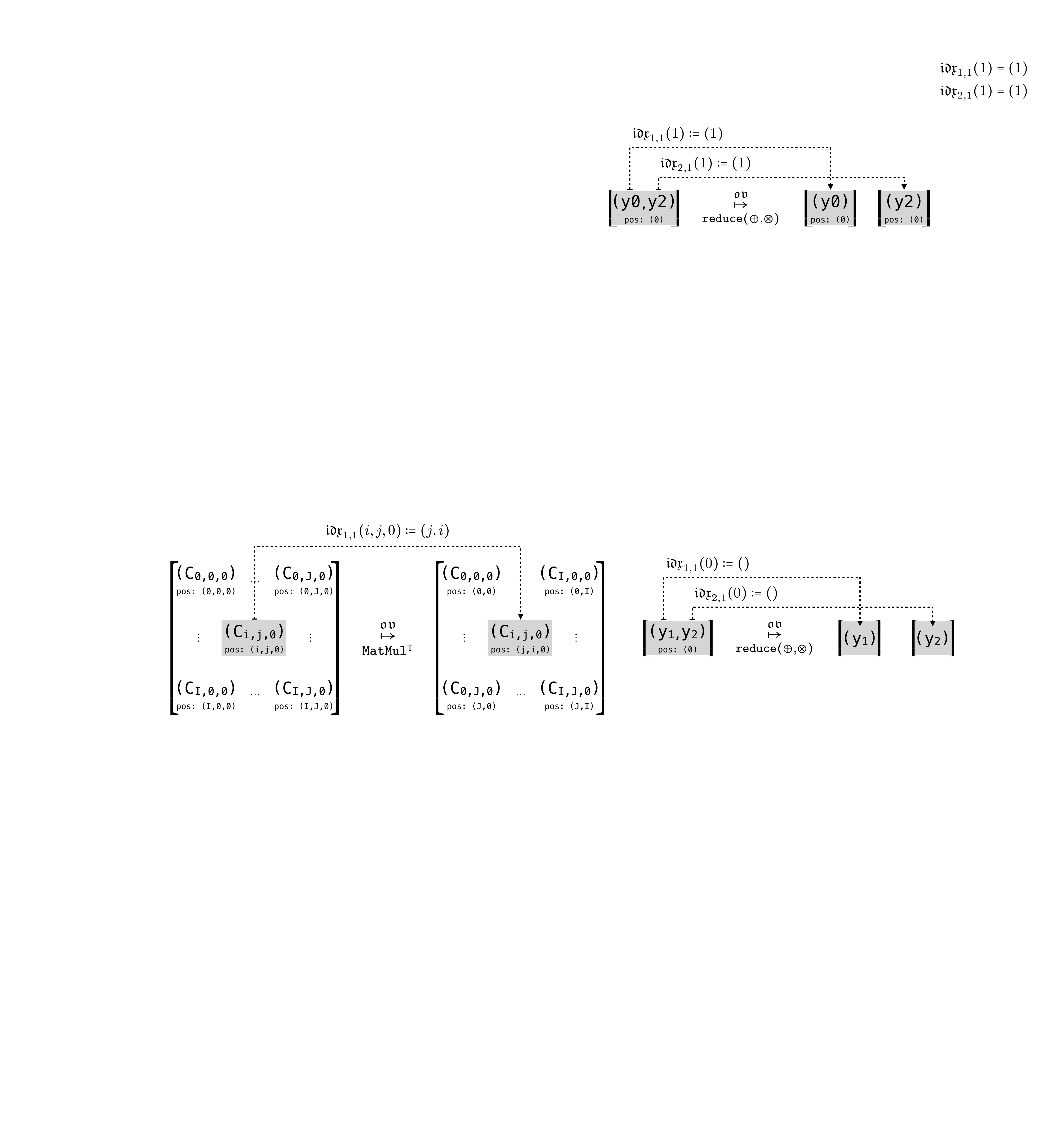}
    \caption{
    Output view illustrated using the example \emph{transposed Matrix Multiplication}
    }
    \label{fig_ov_examples_trans_matmul}
\end{figure}
\begin{figure}[h!]
    \centering
    \includegraphics[scale=0.32]{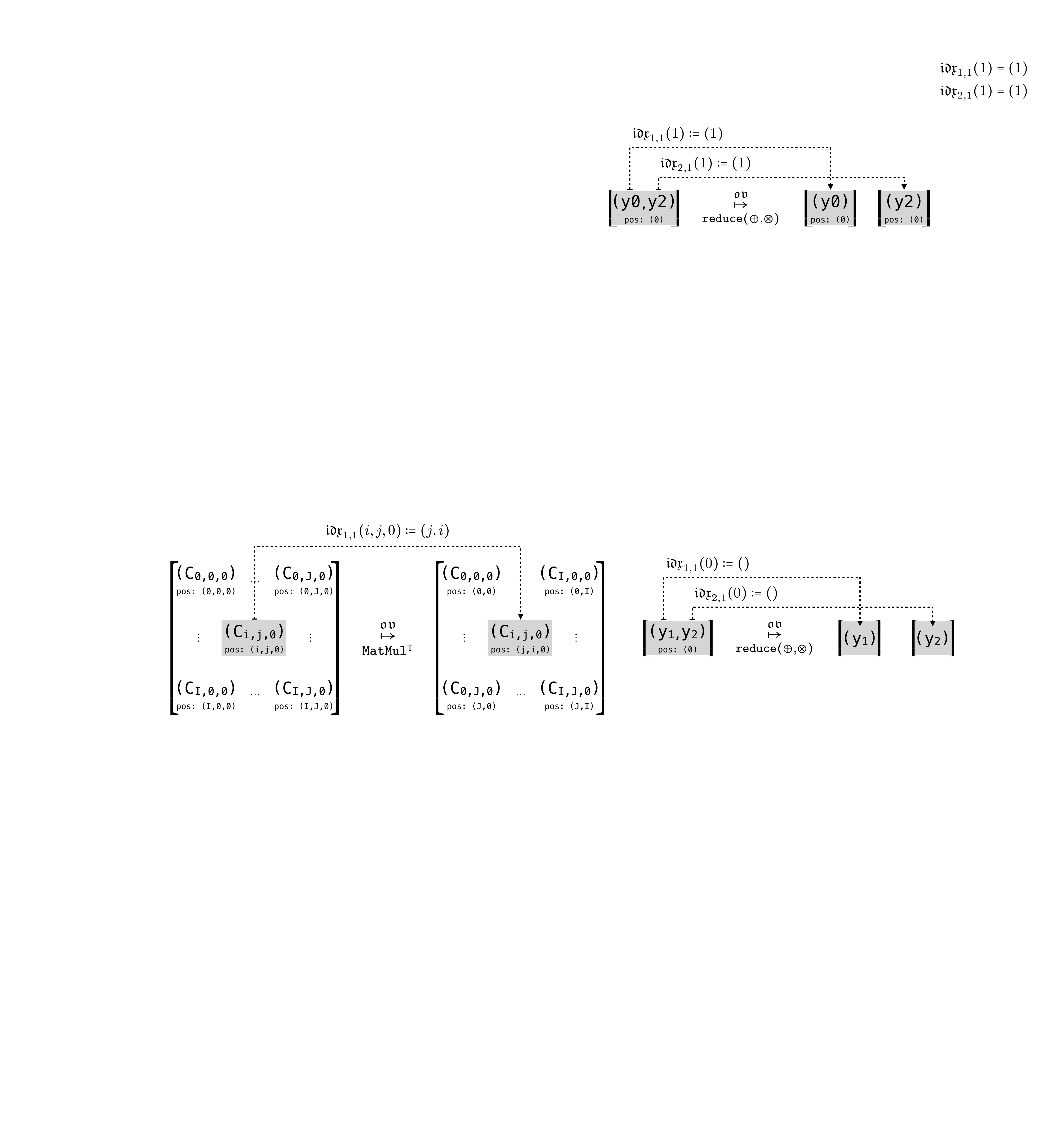}
    \caption{
    Output view illustrated using the example \emph{Double Reduction}
    }
    \label{fig_ov_examples_double_reduce}
\end{figure}

Figures~\ref{fig_ov_examples_trans_matmul} and~\ref{fig_ov_examples_double_reduce} illustrate output views informally using the examples \emph{transposed Matrix Multiplication} and \emph{Double Reduction}.

In the case of transposed matrix multiplication (Figure~\ref{fig_ov_examples_trans_matmul}), the computed output MDA (the computation of matrix multiplication is presented later and not relevant for our following considerations) is stored via an output view as a matrix in a transposed fashion, using
index function $(i,j,0)\mapsto(j,i)$.
Here, the MDA's third dimension (accessed via index $0$) represents the so-called reduction dimension
of matrix multiplication, and it contains only one element after the computation, as all elements in this dimension are combined via addition.

For double reduction (Figures~\ref{fig_ov_examples_double_reduce}), we combine the elements within the vector twice~--~once using operator~$\oplus$ (e.g.,~$\oplus=+$~addition) and once using operator~$\otimes$ (e.g,~$\otimes=*$~multiplication).
The final outcome of double reduction is a singleton MDA containing a pair of two elements that represent the combined vector elements (e.g., the elements' sum and product).
We store this MDA via an output view as two individual scalar values, using index functions $(0)\mapsto()$\footnote{
    The empty braces denote accessing a scalar value (formal details provided in the Appendix, Section~\ref{app_scalar_values}).
} for both pair elements.

\begin{definition}[Output View]
    \label{def_output_view}

    An \emph{Output View (OV)} from an MDA of arbitrary but fixed type $T[I_1,\dotsc,I_D]$ to $B$-many BUFs, $B\in\IN$, of arbitrary but fixed types $T^{N^b_1\times\dotsc\times N^b_{D_b}}_b$, $b\in[1,B]_\IN$, is any function $\mathfrak{ov}$ of type:
    \begin{align*}
    &
    {\mathfrak{ov}}%
    : \
    \underbrace{
        T[ \ I_1 \, ,\dotsc, \, I_D \ ]
        }_\text{MDA}
        \hspace{3px}
        \to_p
        \hspace{3px}
        \underbrace{
        \cart_{b=1}^{B}
            T^{N^b_1\times\dotsc\times N^b_{D_b}}_b
        }_\text{BUFs}
    \end{align*}
    We denote the type of $\mathfrak{ov}$ as
    $
    \texttt{OV}^{<
    \overbrace{\scriptstyle
    D
    \ | \ I_1,\dotsc,I_D
    \ | \ T
    }^\text{MDA's Meta-Parameters}
    |
    \overbrace{\scriptstyle
    B \ | \
    (D_b\,)_{b\in[1,B]_\IN} \ | \
    (\,N^b_1,\dotsc,N^b_{D_b}\,)_{b\in[1,B]_\IN} \ | \
    (T_b\,)_{b\in[1,B]_\IN}
    }^\text{BUFs' Meta-Parameters}
    \: >}
    $.
\end{definition}

\begin{example}[Output View~--~\texttt{MatVec}]
    The output view of \texttt{MatVec} computing a $1024$-sized vector (size is chosen arbitrarily), of integers $\IZ$, is of type
    \[
        \texttt{OV}^{<
        \overbrace{\scriptstyle
        D=2\,|\,I_1=[0,1024)_{\IN_0},I_2=\{0\}\,|\,T=\IZ
        }^\text{MDA's meta-parameters}
        |
        \overbrace{\scriptstyle
        B=1\,|\,D_1=1\,|\,(N^1_1=1024)\,|\,T_1=\IZ
        }^\text{BUFs' meta-parameters}
        >}
        \]
        and defined as
        \[
        \underbrace{[\,w(i)\,]_{i\in [0,1024)_{\IN_0}, k\in\{0\}}}_{\text{MDA}}
        \mapsto
        \underbrace{[\,w(i)\,]_{i\in [0,1024)_{\IN_0}}}_{\text{BUF (Vector)}}
        \]
\end{example}
\begin{example}[Output View~--~\texttt{Jacobi1D}]
        The output view of \texttt{Jacobi1D} computing a $(512-2)$-sized vector of integers is of type
        \[
        \texttt{OV}^{<
        \overbrace{\scriptstyle
        D=1\,|\,I_1=[0,512-2)_{\IN_0}\,|\,T=\IZ
        }^\text{MDA's meta-parameters}
        |
        \overbrace{\scriptstyle
        B=1\,|\,D_1=1\,|\,(N^1_1=(512-2))\,|\,T_1=\IZ
        }^\text{BUFs' meta-parameters}
        >}
        \]
        and defined as
        \[
        \underbrace{[\,w(i)\,]_{i\in [0,512-2)_{\IN_0}, k\in\{0\}}}_{\text{MDA}}
        \mapsto
        \underbrace{[\,w(i)\,]_{i\in [0,512-2)_{\IN_0}}}_{\text{BUF (Vector)}}
        \]
\end{example}

We define higher-order function $\ov$ formally as follows.

\begin{definition}[Higher-Order Function $\ov$]
\label{def_ov}
Function $\ov$ is of type

\vspace*{5px}

\hspace*{6px}
\texttt{out\_view}$^\texttt{<}\,$%
  $\underbrace{^{B\in\IN}}_\texttt{Number of BUFs}\,^|\,$
  $\underbrace{^{A_1,\dotsc,A_B\in\IN}}_\texttt{Number BUFs' Accesses}\,^|\,$
  $\underbrace{^{D_1,\dotsc,D_B\in\INz}}_\texttt{BUFs' Dimensionalities}\,^|\,$
  $\underbrace{^{D\in\IN}}_\texttt{MDA Dimensionality}\,^|\,$
\\[7pt]
\hspace*{63px}
    $\underbrace{^{(\sbsize{}{MDA}{BUF}^{b,a}:{\IDXs}^D\to{\BUFIDXs}^{D_b})_{b\in[1,B]_\IN, a\in[1,A_b]_\IN}}}_\texttt{Index Set Functions (MDA indices to BUF indices)}$
$^\texttt{>}:$\\[0pt]

\noindent
$
\underbrace{
\underbrace{\cart_{b=1}^{B}}_\text{Buffer}
\underbrace{\cart_{a=1}^{A_b}}_\text{Access}
\ \underbrace{
\IDXFCT^{<D,D_b\,|\,\sbsize{}{MDA}{BUF}^{b,a}>}%
}_\text{Index Function: $\idx_{b,a}$}
}_\text{Index Functions: $\idx_{1,1},\dotsc,\idx_{B,A_B}$}
\, \to \,
\underbrace{
\texttt{OV}{
^{\tiny
<
\overbrace{
D
\,|\, I_1,\dotsc,I_D\in\IDXs
\,|\, \rightarrow
}^{\text{MDA's Meta-Parameters}}
\,|\,
}
_{\hspace{4px}\tiny
\underbrace{
B \,|\,
D_1,\dotsc,D_B \,|\,
\rightarrow \,|\,
T_1,\dotsc,T_B\in\type
}_{\text{BUFs' Meta-Parameters}}
><
\underbrace{
N^1_1,\dotsc,N^B_{D_B}
\, | \,
T
}_{\text{Postponed Parameters}}
>}}
}_{\text{Output View: $\mathfrak{ov}$}}
$

\vspace*{10px}
\noindent
which differs from $\iv$'s type only in mapping index functions to \texttt{OV} (Definition~\ref{def_output_view}), rather than \texttt{IV} (Definition~\ref{def_input_view}).
Function $\ov$ is defined as:
\begin{align*}
    &
    \underbrace{ \,(\idx_{b,a})_{b\in[1,B]_\IN,a\in[1,A_b]_\IN}\, }_{\text{Index Functions}}
    \ \mapsto \
    \underbrace{
    \underbrace{\MDA}_\text{MDA}
    \overset{\mathfrak{ov}}{\mapsto}
    (\,\underbrace{\Buf_1,\dotsc,\Buf_B}_{\text{BUFs}}\,)
    }_\text{Output View}
\end{align*}
for
\begin{align*}
    \Buf_b[ \ \idx_{b,a}(i_1,\dotsc,i_D) \ ] \ \ := \ \ \MDA_{b,a}[i_1,\dotsc,i_D]
\end{align*}
and
\begin{align*}
( \, \MDA_{b,a}[i_1,\dotsc,i_D] \, )_{b\in[1,B]_\IN,a\in[1,A_b]_\IN} := \MDA[i_1,\dotsc,i_D]
\end{align*}
i.e., $\MDA_{b,a}[i_1,\dotsc,i_D]$ is the element at point $i_1,\dotsc,i_D$ within MDA $\MDA$ that belongs to the $a$-th access of the $b$-th BUF.
We set $\Buf_b[ \, j_1,\dotsc,j_{D_b} \, ] := \bot$ (symbol $\bot$ denotes the undefined value) for all BUF indices
$(j_1,\dotsc,j_{D_b})\in[0,N^b_1)_{\IN_0}\times\dotsc\times[0,N^b_D)_{\IN_0}\setminus
\bigcup_{a\in[1,A_b]_\IN}\bsize{d}{MDA}{BUF}{^{b,a}}(I_1,\dotsc,I_D)
$
which are not in the function range
of the index functions.

Note that the computed output view $\mathfrak{ov}$ is partial (indicated by $\rightarrow_p$ in Definition~\ref{def_output_view}), because for non-injective index functions, it must hold
$\idx_{b,a}(i_1,\dotsc,i_D)=\idx_{b,a'}(i'_1,\dotsc,i'_D) \Rightarrow \MDA_{b,a}[i_1,\dotsc,i_D]=\MDA_{b,a'}[i'_1,\dotsc,i'_D]$
which may not be satisfied for each potential input MDA of the computed view.
\end{definition}

\begin{notation}[Output Views]
\label{notation_ov}
Analogously to Notation~\ref{notation_iv}, we denote $\ov$ for a particular choice of index functions as:
\begin{align*}
\ov
( \
\texttt{ID}_1:\,\idx_{1,1}\,,\dotsc,\,\idx_{1,A_1}
\ \ \ , \,\dotsc\, , \ \ \
\texttt{ID}_B:\,\idx_{B,1}\,,\dotsc,\,\idx_{B,A_B}
\ )
\end{align*}
\end{notation}

\begin{example}[Notation Output Views]
Function $\ov$ is used for $\texttt{MatVec}$ and $\texttt{Jacobi1D}$ (in Notation~\ref{notation_ov}) as follows:
\[
\texttt{\underline{MatVec:}} \ \ \ \
\ov( \
\underbrace{
\texttt{w:}
  \underbrace{(i,k)\mapsto(i)}_\texttt{a=1}
}_\texttt{b=1}
\ )
\hspace*{30px}
\texttt{\underline{Jacobi1D:}} \ \ \ \
\ov
( \
\underbrace{
\texttt{w:}
  \underbrace{(i)\mapsto (i)}_{a=1}
}_{b=1}
\ )
\]
\end{example}

\subsubsection{Relation between View Functions}
\label{sec_views_iv_and_ov}

We use view functions to transform data from their domain-specific representation (represented in our formalism as BUFs, Definition~\ref{def_buffer}) to our internal, MDA-based representation (via input views) and back (via output views), as also illustrated in Figure~\ref{hl_overview}.
In our implementation presented later, we aim to access data uniformly in the form of MDAs, thereby being independent of domain-specific data representations.
However, we aim to store the data physically in the domain-specific format, as such format is usually the more efficient representation.
For example, we aim to store the input data of \texttt{MatVec} in the domain-specific matrix and vector format, rather than as an MDA, because the input MDA of \texttt{MatVec} contains many redundancies~--~each vector element once per row of the input matrix (as illustrated in Figure~\ref{img_example_inp_view_matvec}).

\newpage

The following lemma proves that functions $\iv$ and $\ov$ are invertible and that they are each others inverses.
Consequently, the lemma shows how we can arbitrarily switch between the domain-specific and our MDA-based representation, and consequently also that
we can implicitly identify MDAs with the domain-specific data representation.
For example, for computing \texttt{MatVec}, we will specify the computations via pattern $\mdh$ which operates on MDAs (see Figure~\ref{hl_overview}), but we use the view functions in our implementation to implicitly forward the MDA accesses to the physically stored BUF representation.

\begin{lemma}
\label{theorem_views}
Let
\begin{align*}
    \iv
    ( \
    \texttt{ID}_1:\,\idx_{1,1}\,,\dotsc,\,\idx_{1,A_1}
    \ \ \ , \,\dotsc\, , \ \ \
    \texttt{ID}_B:\,\idx_{B,1}\,,\dotsc,\,\idx_{B,A_B}
    \ )
\end{align*}
and
\begin{align*}
    \ov
    ( \
    \texttt{ID}_1:\,\idx_{1,1}\,,\dotsc,\,\idx_{1,A_1}
    \ \ \ , \,\dotsc\, , \ \ \
    \texttt{ID}_B:\,\idx_{B,1}\,,\dotsc,\,\idx_{B,A_B}
    \ )
\end{align*}
be two arbitrary instances of functions $\iv$ and $\ov$ (in Notations~\ref{notation_iv} and~\ref{notation_ov}), both using the same index functions $\idx_{1,1},\dotsc,\idx_{B,A_B}$.

It holds (index functions omitted via ellipsis for brevity):
\begin{align*}
    &\iv
    ( \
    \dotsc
    \ )
    \ \circ \
    \ov
    ( \
    \dotsc
    \ )
    \ \ = \ \
    \ov
    ( \
    \dotsc
    \ )
    \ \circ \
    \iv
    ( \
    \dotsc
    \ )
    \ \ = \ \
    id
\end{align*}
\end{lemma}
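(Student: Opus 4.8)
The plan is to prove the two equalities separately, reading $g\circ h$ as ``apply $h$ first, then $g$'': then $\iv(\dotsc)\circ\ov(\dotsc)=id$ is an identity on MDAs (we go MDA $\to$ BUFs $\to$ MDA), whereas $\ov(\dotsc)\circ\iv(\dotsc)=id$ is an identity on the collection of BUFs (BUFs $\to$ MDA $\to$ BUFs). Before computing anything, I would first record that, since both view instances are built from the \emph{same} index functions $\idx_{1,1},\dotsc,\idx_{B,A_B}$, all derived meta-parameters coincide~--~the BUF dimensionalities $D_b$, the MDA dimensionality $D$, the scalar type $T=\cart_{b=1}^{B}\cart_{a=1}^{A_b}T_b$, and, once the MDA index sets $I_1,\dotsc,I_D$ are fixed, the BUF sizes $N^b_d=1+\max(\bigcup_{a}\bsize{d}{MDA}{BUF}{^{b,a}}(I_1,\dotsc,I_D))$~--~so that both compositions are well typed and return to the type they started from. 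The heart of both directions is the single relation shared by Definitions~\ref{def_iv} and~\ref{def_ov}: $\Buf_b[\idx_{b,a}(i_1,\dotsc,i_D)]=\MDA_{b,a}[i_1,\dotsc,i_D]$ together with $\MDA[i_1,\dotsc,i_D]=(\MDA_{b,a}[i_1,\dotsc,i_D])_{b,a}$. Function $\iv$ uses this relation to read BUFs into the MDA, and $\ov$ uses the very same relation to write the MDA back into the BUFs; the proof is essentially the observation that the two definitions are inverse readings of one relation.

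For $\iv(\dotsc)\circ\ov(\dotsc)=id$ I would take an arbitrary MDA $\MDA$ in the domain of $\ov(\dotsc)$, apply $\ov(\dotsc)$ to obtain BUFs $\Buf_1,\dotsc,\Buf_B$ with $\Buf_b[\idx_{b,a}(i_1,\dotsc,i_D)]=\MDA_{b,a}[i_1,\dotsc,i_D]$, and then apply $\iv(\dotsc)$ to obtain an MDA $\MDA'$. Unfolding Definition~\ref{def_iv} at each position gives $\MDA'[i_1,\dotsc,i_D]=(\Buf_b[\idx_{b,a}(i_1,\dotsc,i_D)])_{b,a}=(\MDA_{b,a}[i_1,\dotsc,i_D])_{b,a}=\MDA[i_1,\dotsc,i_D]$, i.e. $\MDA'=\MDA$. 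The only thing to verify is that the intermediate $\Buf_b$ are well defined, which is exactly the partiality side condition on $\ov$ in Definition~\ref{def_ov} (non-injective index functions must send colliding MDA positions to equal entries); restricting to the domain of $\ov$ makes this direction a clean pointwise identity.

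For $\ov(\dotsc)\circ\iv(\dotsc)=id$ I would take BUFs $\Buf_1,\dotsc,\Buf_B$, apply $\iv(\dotsc)$ to build $\MDA$ with $\MDA_{b,a}[i_1,\dotsc,i_D]=\Buf_b[\idx_{b,a}(i_1,\dotsc,i_D)]$, and then apply $\ov(\dotsc)$ to recover BUFs $\Buf'_1,\dotsc,\Buf'_B$. Unfolding Definition~\ref{def_ov}, for every index of the form $\idx_{b,a}(i_1,\dotsc,i_D)$ we get $\Buf'_b[\idx_{b,a}(i_1,\dotsc,i_D)]=\MDA_{b,a}[i_1,\dotsc,i_D]=\Buf_b[\idx_{b,a}(i_1,\dotsc,i_D)]$, so $\Buf'_b$ agrees with $\Buf_b$ on every index lying in the range of the index functions.

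The main obstacle, and the step I would treat most carefully, is the behaviour of this second direction \emph{off} the range of the index functions. By Definition~\ref{def_ov}, $\ov$ sets $\Buf'_b[\,j_1,\dotsc,j_{D_b}\,]:=\bot$ at every BUF index not hit by any $\idx_{b,a}$, whereas the original $\Buf_b$ may carry arbitrary values there (e.g. strided index functions such as $i\mapsto 2i$ leave genuine holes, even though the sizes $N^b_d$ tightly bound the range). Hence $\ov\circ\iv$ is literally the identity only on the sub-domain of BUFs that already equal $\bot$ outside the index-function range~--~which is precisely the image of $\ov(\dotsc)$, since $\ov$ produces exactly such $\bot$-padded BUFs and $\iv$ never reads those entries. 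I would therefore phrase the first equality on this restricted domain (equivalently, as equality of BUFs up to their defined entries), and note that this is exactly how the lemma is later used: the views are only ever round-tripped on data produced by one another, so the $\bot$-padding never discards information. With the domains pinned down this way, both directions close by pure unfolding of Definitions~\ref{def_iv} and~\ref{def_ov}, the remaining meta-parameter bookkeeping being routine.
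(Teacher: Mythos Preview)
Your proposal is correct and follows the same approach as the paper, which dispatches the lemma in a single line: ``Follows immediately from Definitions~\ref{def_iv} and~\ref{def_ov}.'' Your argument is precisely this unfolding, carried out explicitly.

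That said, you are being more careful than the paper. Your observation about the $\bot$-padding in the direction $\ov(\dotsc)\circ\iv(\dotsc)=id$ is a genuine technical point that the paper's one-line proof does not address: Definition~\ref{def_ov} indeed sets $\Buf'_b[j_1,\dotsc,j_{D_b}]:=\bot$ off the range of the index functions, so literal equality of BUFs requires the restricted domain you identify (BUFs already $\bot$-padded outside the index-function range, i.e., the image of $\ov$). The paper implicitly treats this as immaterial, presumably because the lemma is only ever invoked on data that round-trips through the views; your explicit handling of this is a refinement rather than a departure.
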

\begin{proof}
    Follows immediately from Definitions~\ref{def_iv} and~\ref{def_ov}.
\end{proof}

\vspace{5px}

\noindent
The following figure illustrates the lemma using as example the inverse of \texttt{MatVec}'s input view (shown in Figure~\ref{img_example_inp_view_matvec}):

\vspace{5px}
\begin{center}
\includegraphics[scale=0.32]{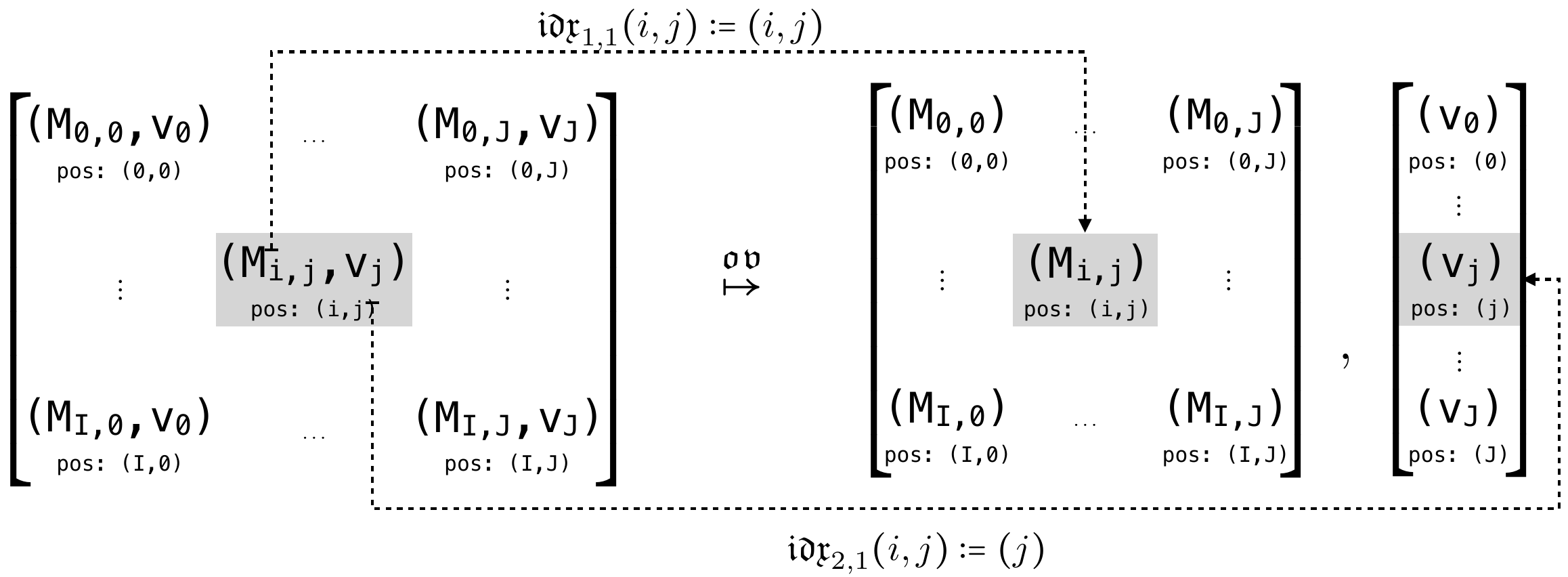}
\end{center}
\vspace{5px}

\subsection{Generic High-Level Expression}

Figure~\ref{fig_generic_hl} shows an expression in our high-level representation~--~consisting of higher-order functions $\iv$, $\mdh$, and $\ov$ (Figure~\ref{hl_overview})~--~that is generic in an arbitrary but fixed choice of index functions, scalar function, and combine operators.
We express data-parallel computations using a particular instance of this generic expression in Figure~\ref{fig_generic_hl}.

Note that meta-parameters of higher-order function $\iv$, $\ov$, and $\mdh$ are omitted in Figure~\ref{fig_generic_hl}, because all parameters can be automatically deduced from the particular numbers and types of their inputs
(index functions in the case of $\iv$ and $\ov$, and scalar function and combine operators
for $\mdh$).

\newpage

The concrete instance of $\mdh(\dotsc)$ (i.e., the MDH function returned by $\mdh$ for the particular input scalar function and combine operators) has as meta-parameter the MDH function's index sets (see Definition~\ref{def_mdh}) for high flexibility.
We use as index sets straightforwardly the input size $(N_1,\dotsc,N_D)\in\IN^D$ (which abbreviates $([0,N_1)_{\INz},\dotsc,[0,N_D)_{\INz})$~--~see Notation~\ref{not_mda})\footnote{
    Our formalism allows dynamic shapes, by using symbol $*$ instead of a particular natural number for $N_i$ (formal details provided in the Appendix, Definition~\ref{app_gen_meta_func}),
    which we aim to discuss thoroughly in future work.
}.
Instances of $\iv(\dotsc)$ and $\ov(\dotsc)$ (i.e., the input and output view
returned by $\iv$ and $\ov$ for concrete index functions) have as meta-parameters the MDA's index sets and the scalar types of BUFs.
We explicitly state only the meta-parameter for the BUFs' scalar types in our generic high-level expression (Figure~\ref{fig_generic_hl}), and we avoid explicitly stating the MDA's index sets for simplicity and to avoid redundancies, because the sets can be taken from the $\mdh$'s meta-parameter list.

Note that for better readability of our high-level expressions, we list meta-parameters before parentheses, i.e., instead of writing
$\iv(\dotsc)^{\texttt{<}\dotsc\texttt{>}}$, $\ov(\dotsc)^{\texttt{<}\dotsc\texttt{>}}$, and $\mdh(\dotsc)^{\texttt{<}\dotsc\texttt{>}}$ for the particular instances of higher-order functions, where meta-parameters are listed at the end, we write
$\iv\texttt{<}\dotsc\texttt{>}(\dotsc)$,
$\ov\texttt{<}\dotsc\texttt{>}(\dotsc)$,
and $\mdh\texttt{<}\dotsc\texttt{>}(\dotsc)$.

\begin{figure}[h!]
\begin{align*}
&
\ov\texttt{<}
T^\texttt{OB}_1,\dotsc,T^\texttt{OB}_{B^\texttt{OB}}\texttt{>}
( \
\texttt{OB}_1\texttt{:}\,
\idx^\texttt{OUT}_{1,1} ,\dotsc, \idx^\texttt{OUT}_{1,A^\texttt{OB}_1}
\ \: ,\dotsc, \ \:
\texttt{OB}_{B^\texttt{OB}}\texttt{:}\,
\idx_{B^\texttt{OB},1}^\texttt{OUT} ,\dotsc, \idx_{{B^\texttt{OB}},A^\texttt{OB}_{B^\texttt{OB}}}^\texttt{OUT}
\ ) \ \ \circ \\[5px]
&\hspace*{10px}
\mdh\texttt{<}N_1,\dotsc,N_D\texttt{>}(\,f,\,(\co{1},\dotsc,\co{D})\,) \ \ \circ \\[5px]
&\hspace*{20px}
\iv\texttt{<}
T^\texttt{IB}_1,\dotsc,T^\texttt{IB}_{B^\texttt{IB}}\texttt{>}
( \
\texttt{IB}_1\texttt{:}\,
\idx^\texttt{INP}_{1,1} ,\dotsc, \idx^\texttt{INP}_{1,A^\texttt{IB}_1}
\ \: ,\dotsc, \ \:
\texttt{IB}_{B^\texttt{IB}}\texttt{:}\,
\idx_{B^\texttt{IB},1}^\texttt{INP} ,\dotsc, \idx_{{B^\texttt{IB}},A^\texttt{IB}_{B^\texttt{IB}}}^\texttt{INP}
\ )
\end{align*}
\caption{
Generic high-level expression for data-parallel computations
}
\label{fig_generic_hl}
\end{figure}

\subsection{Examples}
\label{ch_hl_examples}

\begin{figure}[p]
\centering
\includegraphics[width=0.95\textwidth]{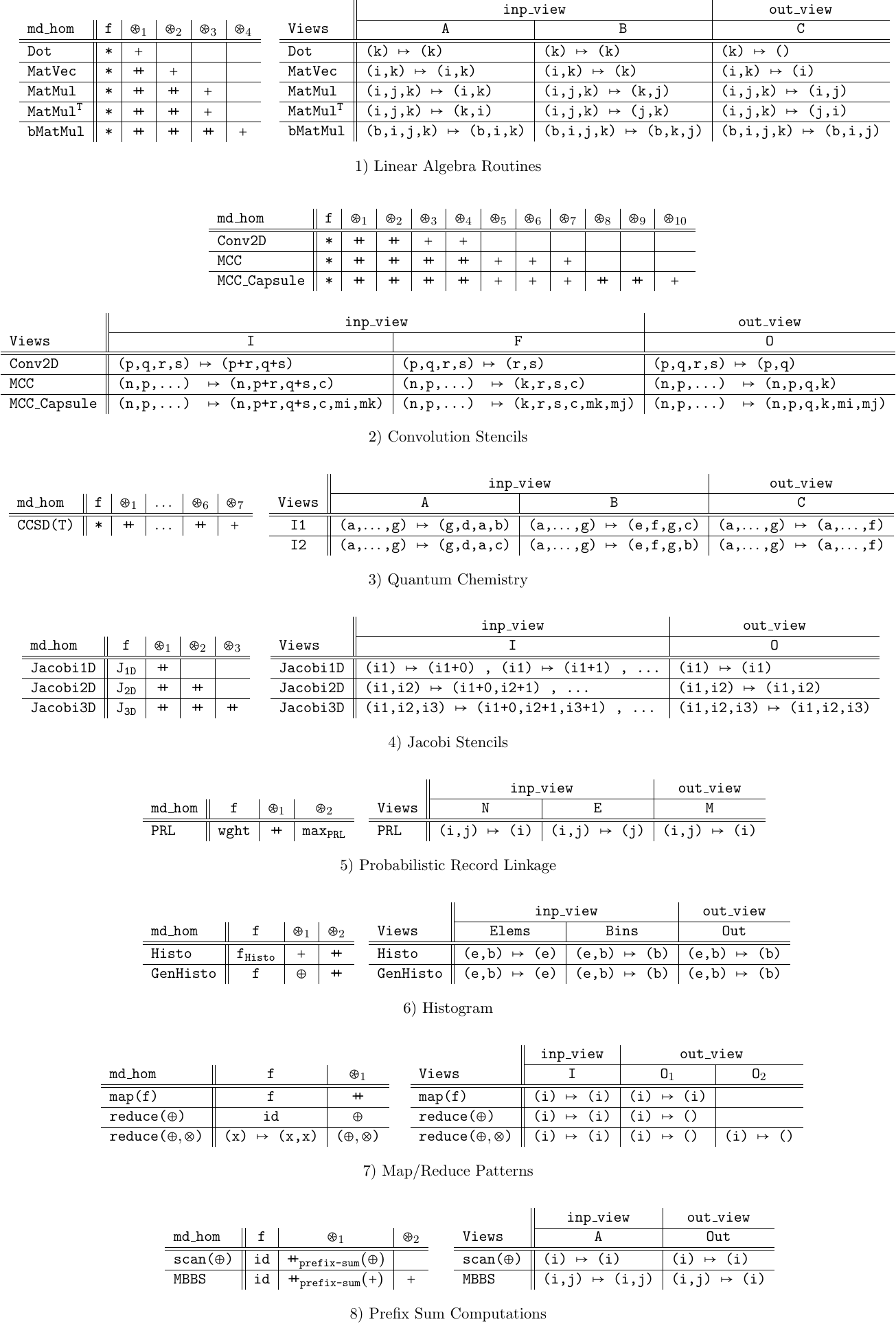}
\caption{
Data-parallel computations expressed in our high-level representation
}
\label{fig_hl_examples}
\end{figure}

Figure~\ref{fig_hl_examples} shows how our high-level representation is used for expressing different kinds of popular data-parallel computations.
For brevity, we state only the index functions, scalar function, and combine operators of the higher-order functions%
;~the expression in Figure~\ref{fig_generic_hl} is then obtained by straightforwardly inserting these building blocks into the higher-order functions.

\paragraph*{Subfigure~1}
We show how our high-level representation is used for expressing linear algebra routines:
1)~\texttt{Dot} (\emph{Dot Product});
2)~\texttt{MatVec} (\emph{Matrix-Vector Multiplication});
3)~\texttt{MatMul} (\emph{Matrix Multiplication});
4)~$\texttt{MatMul}^\texttt{T}$ (\emph{Transposed Matrix Multiplication}) which computes matrix multiplication on transposed input and output matrices;
5)~\texttt{bMatMul} (\emph{batched Matrix Multiplication}) where multiple matrix multiplications are computed using matrices of the same sizes.

We can observe from the subfigure that our high-level expressions for the routines naturally evolve from each other.
For example, the $\mdh$ instance for \texttt{MatVec} differs from the $\mdh$ instance for \texttt{Dot} by only containing a further concatenation dimension $\dplus$ for its $i$ dimension.
We consider this close relation between the high-level expressions of \texttt{MatVec} and \texttt{Dot} in our approach as natural and favorable, as \texttt{MatVec} can be considered as computing multiple times \texttt{Dot}~--~one computation of \texttt{Dot} for each value of \texttt{MatVec's} $i$ dimension.
Similarly, the $\mdh$ instance for \texttt{MatMul} is very similar to the expression of \texttt{MatVec}, by containing the further concatenation dimension $j$ for \texttt{MatMul}'s $j$ dimension.
The same applies to \texttt{bMatMul}:~its $\mdh$ instance is the expression of \texttt{MatMul} augmented with one further concatenation dimension.

Regarding $\texttt{MatMul}^\texttt{T}$, the basic computation part of $\texttt{MatMul}^\texttt{T}$ and \texttt{MatMul} are the same, which is exactly reflected in our formalisms:~both $\texttt{MatMul}^\texttt{T}$ and $\texttt{MatMul}$ are expressed using exactly the same $\mdh$ instances.
The differences between $\texttt{MatMul}^\texttt{T}$ and \texttt{MatMul} lies only in the data accesses~--~transposed accesses in the case of $\texttt{MatMul}^\texttt{T}$ and non-transposed accesses in the case of \texttt{MatMul}.
Data accesses are expressed in our formalism, in a structured way, via view functions (as discussed in Section~\ref{sec_views}):~for example, for $\texttt{MatMul}^\texttt{T}$, we use for its first input matrix $A$ the index function $(i,j,k)\mapsto(k,i)$ for transposed access, instead of using index function $(i,j,k)\mapsto(i,k)$ as for \texttt{MatMul}'s non-transposed accesses.

Note that all $\mdh$ instances in the subfigure are well defined according to Lemma~\ref{lemma_mdhom}.

\paragraph*{Subfigure~2}
We show how convolution-style stencil computations are expressed in our high-level representation:~%
1)~\texttt{Conv2D} expresses a standard convolution that uses a $2$D sliding window~\cite{podlozhnyuk2007image};~%
2)~\texttt{MCC} expresses a so-called \emph{Multi-Channel Convolution}~\cite{dumoulin2018guide}~--~a generalization of \texttt{Conv2D} that is heavily used in the area of deep learning;~%
3)~\texttt{MCC\_Capsule} is a recent generalization of \texttt{MCC}~\cite{e2018matrix} which attracted high attention due to its relevance for advanced deep learning neural networks~\cite{10.1145/3317550.3321441}.

While our $\mdh$ instances for convolutions are quite similar to those of linear algebra routines (they all use multiplication $*$ as scalar function, and a mix of concatenations $\dplus$ and point-wise additions $+$ as combine operators), the index functions used for the view functions of convolutions are notably different from those used for linear algebra routines: the index functions of convolutions contain arithmetic expressions (e.g., \texttt{p+r} and \texttt{q+s}) and thus access neighboring elements in their input~--~%
a typical access pattern in stencil computations that requires special optimizations~\cite{10.1145/3168824}.
Moreover, convolution-style computations are often high-dimensional (e.g., $10$ dimensions in the case of \texttt{MCC\_Capsule}), whereas linear algebra routines usually rely on less dimensions.
Our experiments in Section~\ref{ch:eval} confirm that respecting the data access patterns and the high dimensionality of convolutions in the optimization process (as in our approach, which we discuss later) often achieves significantly higher performance than using optimizations chosen toward linear algebra routines, as in vendor libraries provided by NVIDIA and Intel for convolutions~\cite{7573804}.

\paragraph*{Subfigure~3}
We show how quantum chemistry computation \emph{Coupled Cluster} (\texttt{CCSD(T)})~\cite{8661182} is expressed in our high-level representation.
The computation of \texttt{CCSD(T)} notably differs from those of linear algebra routines and convolution-style stencils, by accessing its high-dimensional input data in sophisticated transposed fashions:~for example, the view function of \texttt{CCSD(T)}'s \emph{instance one} (denoted as \texttt{I1} in the subfigure)
uses indices \texttt{a} and \texttt{b} to access the last two dimensions of its $A$ input tensor (rather than the first two dimensions of the tensor, as would be the case for non-transposed accesses).

For brevity, the subfigure presents only two \texttt{CCSD(T)} instances~--~in our experiments in Section~\ref{ch:eval}, we present experimental results for nine different real-world \texttt{CCSD(T)} instances.

\paragraph*{Subfigures~4-6}
The subfigures present computations whose scalar functions and combine operators are different from those used in Subfigures~1-3 (which are in Subfigures~1-3 straightforward multiplications $*$, concatenation $\dplus$, and point-wise additions $+$ only).
For example, Jacobi stencils (Subfigure~4) use as scalar function the Jacobi-specific computation $\texttt{J}_\texttt{nD}$~\cite{10.1007/978-3-642-28151-8_17},
and \emph{Probabilistic Record Linkage~(PRL)}~\cite{10.5555/2344108}, which is heavily used in data mining to identify duplicate entries in a data base, uses a PRL-specific both scalar function \texttt{wght} and combine operator $\texttt{max}_\texttt{PRL}$ (point-wise combination via the PRL-specific binary operator $\texttt{max}_\texttt{PRL}$)~\cite{10.1145/3297280.3297330}.
Histograms, in their generalized version~\cite{9355244} (denoted as \texttt{GenHisto} in Subfigure~6), use an arbitrary, user-defined scalar function $f$ and a user-defined associative and commutative combine operator $\oplus$;~the standard histogram variant \texttt{Histo} is then a particular instance of \texttt{GenHist}, for $\oplus=+$ (point-wise addition) and $f=f_\texttt{Histo}$, where $f_\texttt{Histo}(e,b)=1$ iff $e=b$ and $f_\texttt{Histo}(e,b)=0$ otherwise.
Histogram's are often analyzed regarding their runtime complexity~\cite{9355244}; we provide such a discussion for our MDH-based Histogram implementation in the Appendix, Section~\ref{histo_complexity}, for the interested reader.

\paragraph*{Subfigure~7}
We show how typical \texttt{map} and \texttt{reduce} patterns~\cite{https://doi.org/10.1002/spe.1026}
are implemented in our high-level representation.
Examples \texttt{map(f)} and \texttt{reduce($\oplus$)} (discussed in Examples~\ref{mdh_map} and~\ref{mdh_reduce}) are simple and thus straightforwardly expressed in our representation.
In contrast, example \texttt{reduce($\oplus,\otimes$)} is more complex and shows how \texttt{reduce($\oplus$)} is extended to combine the input vector simultaneously twice~--~once combining vector elements via operator $\oplus$ and once using operator $\otimes$.
The outcome of \texttt{reduce($\oplus,\otimes$)} are two scalars~--~one representing the result of combination via $\oplus$ and the other of combination via $\otimes$~--~which we map via the output view to output elements $\texttt{O}_\texttt{1}$ (result of~$\oplus$) and $\texttt{O}_\texttt{2}$ (result of~$\otimes$), correspondingly;~this is also illustrated in Figure~\ref{fig_ov_examples_double_reduce}.

\paragraph*{Subfigure~8} We present \emph{prefix-sum computations}~\cite{BlellochTR90} which differ from the computations in Subfigures~1-7 in terms of their combine operators:~the operator used for expressing computations in Subfigure~8 is different from concatenation (Example~\ref{def:mda_concat}) and point-wise combinations (Example~\ref{def:mda_pw}).
Computation \texttt{scan($\oplus$)} uses as combine operator
$\dplus_\texttt{prefix-sum}(\oplus)$ which computes prefix-sum~\cite{715958} (formally defined in the Appendix, Section~\ref{app_co_scan}) according to binary operator $\oplus$, and \texttt{MBBS} (Maximum Bottom Box Sum)~\cite{10.1145/3314221.3314612} uses a
particular instance of prefix-sum
for $\oplus=+$ (addition).

\section{Low-Level Representation for Data-Parallel Computations}
\label{ch:low_level}

We introduce our low-level representation for expressing data-parallel computations.
In contrast to our high-level representation, our low-level representation explicitly expresses the de-composition and re-composition of computations (informally illustrated in Figure~\ref{fig_decomp}).
Moreover, our low-level representation is designed such that it can be straightforwardly transformed to executable program code, because it explicitly captures and expresses the optimizations for the memory and core hierarchy of the target architecture.

In the following, after briefly discussing an introductory example in Section~\ref{ch:ll_intro_example}, we introduce in Section~\ref{sec_amm} our formal representation of
computer systems, to which we refer to as \emph{Abstract System Model~(ASM)}.
Based on this model, we define \emph{low-level MDAs}, \emph{low-level BUFs}, and \emph{low-level COs} in Section~\ref{sec_bbb}, which are basic building blocks of our low-level representation.

\vspace*{3px}

Note that all details and concepts discussed in this section are not exposed to the end users of our system and therefore transparent for them:~expressions in our low-level representation are generated fully automatically for the user, from expressions in our high-level representation (Figure~\ref{contributions_overall}), according to the methodologies presented later in Section~\ref{ch:lowering} and auto-tuning~\cite{10.1145/3427093}.

\subsection{Introductory Example}
\label{ch:ll_intro_example}

\begin{figure}[hbtp]
\centering
\includegraphics[width=0.95\textwidth]{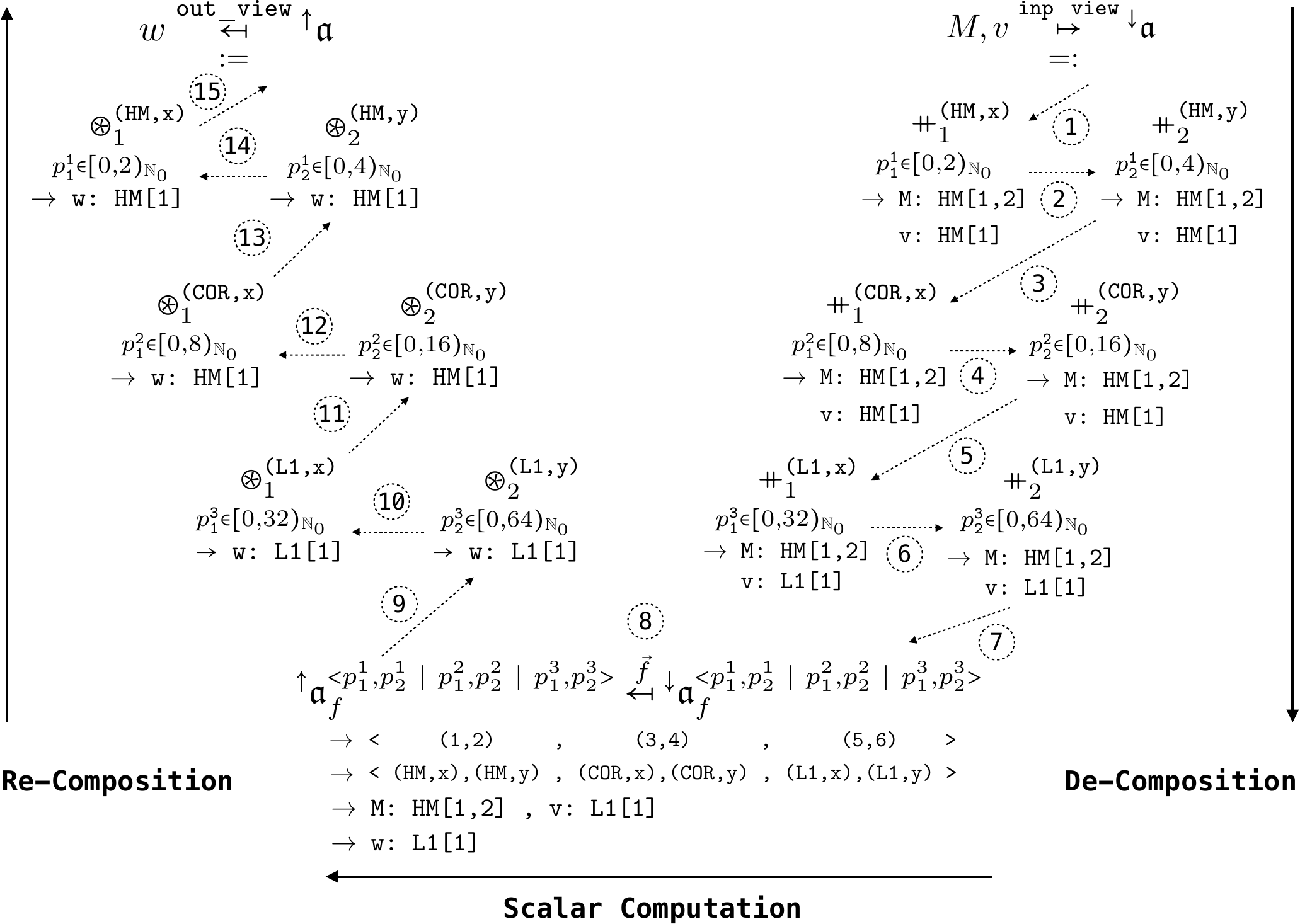}
\caption{
Low-level expression for straightforwardly computing Matrix-Vector Multiplication (\texttt{MatVec}) on a simple, artificial architecture with two memory layers (\texttt{HM} and \texttt{L1}) and one core layer (\texttt{COR}).
Dotted lines
indicate data flow.
}
\label{fig_ll_example}
\end{figure}

\begin{figure}[hbtp]
    \centering
    \includegraphics[width=\textwidth]{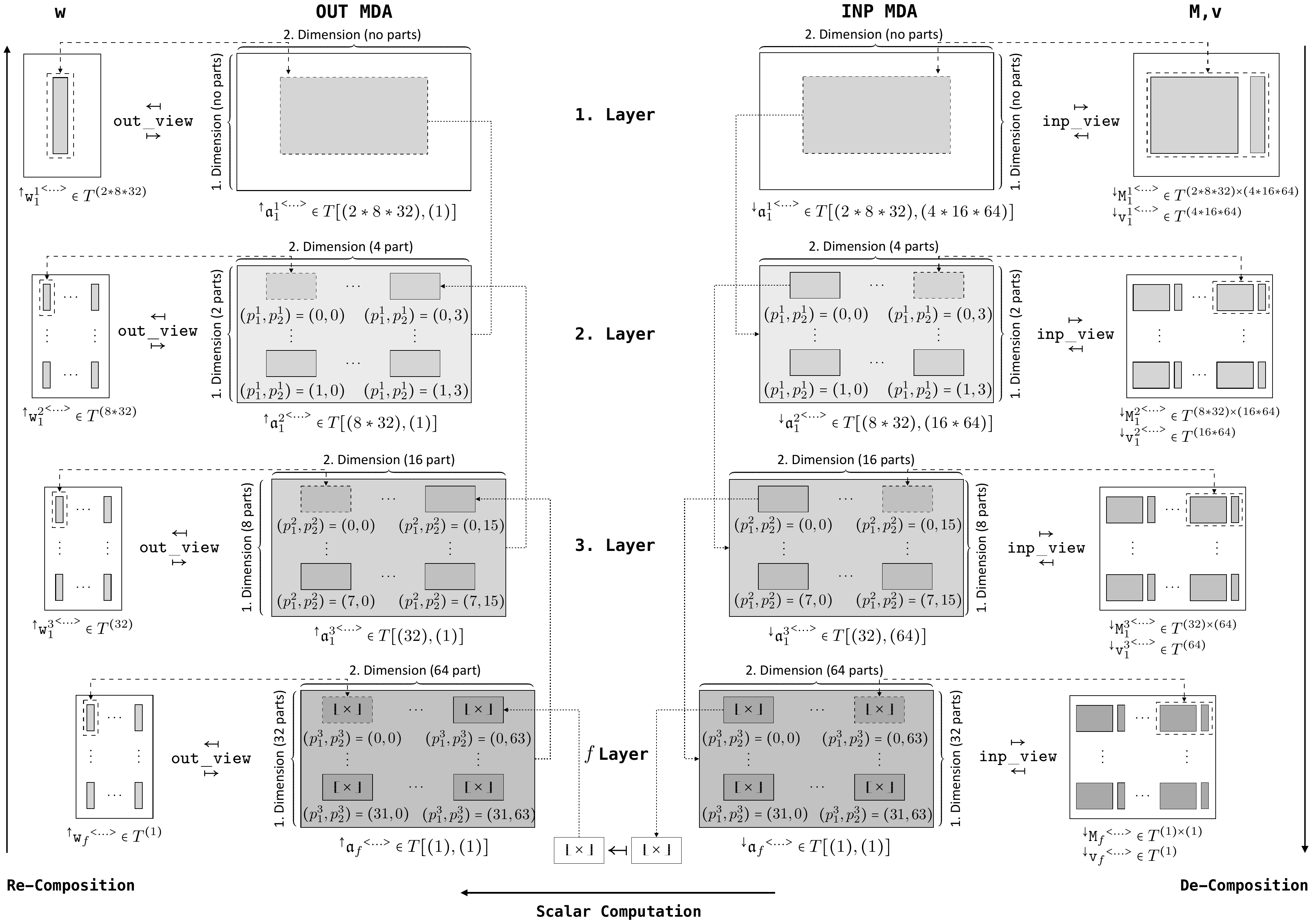}
    \caption{
    Illustration of multi-layered, multi-dimensional MDA partitioning using the example MDA from Figure~\ref{fig_ll_example}. In this example, we use three layers
    and two dimensions, according to Figure~\ref{fig_ll_example}.
    }
    \label{mda_part_example_informal}
\end{figure}

Figure~\ref{fig_ll_example} illustrates our low-level representation by showing how
\texttt{MatVec} (Matrix-Vector Multiplication) is expressed in our representation.
In our example, we use an input matrix
$M\in T^{512\times4096}$
of size $512\times4096$ (size chosen arbitrarily)
that has an arbitrary but fixed scalar type $T\in\type$;~the input vector
$v\in T^{4096}$ is of size $4096$, correspondingly.

For better illustration, we consider for this introductory example a straightforward, artificial target architecture that has only two memory layers~--~\emph{Host Memory~(\texttt{HM})} and \emph{Cache Memory~(\texttt{L1})}~--~and one \emph{Core Layer}~(\texttt{COR}) only;~our examples presented and discussed later in this section target real-world architectures (e.g., CUDA-capable NVIDIA GPUs).
The particular values of tuning parameters (discussed in detail later in this section), such as the number of threads and the order of combine operators, are chosen by hand for this example and as straightforward
for simplicity.

\vspace*{5px}

Our low-level representation works in three phases:~%
1)~\emph{de-composition} (steps 1-7, in the right part of Figure~\ref{fig_ll_example}),~%
2)~\emph{scalar} (step 8, bottom part of the figure),~%
3)~\emph{re-composition} (steps 9-15, left part).
Steps are arranged from right to left,
inspired by the application order of function composition.

\paragraph{1. De-Composition Phase:}
The de-composition phase (steps $1$-$7$ in Figure~\ref{fig_ll_example}) partitions input MDA ${^\downarrow}{\MDA}$ (in the top right of Figure~\ref{fig_ll_example}) to the structure
${^\downarrow}\MDA_f^{<\,\dotsc\,>}$
(bottom right) to which we refer to as \emph{low-level MDA} and define formally in the next subsection.
The low-level MDA
represents a partitioning of MDA ${^\downarrow}{\MDA}$ (a.k.a \emph{hierarchical, multi-dimensional tiling} in programming), where each particular choice of
indices
$p^1_1\in[0,2)_{\IN_0}$, $p^1_2\in[0,4)_{\IN_0}$, $p^2_1\in[0,8)_{\IN_0}$, $p^2_2\in[0,16)_{\IN_0}$, $p^3_1\in[0,32)_{\IN_0}$, $p^3_2\in[0,64)_{\IN_0}$
refers to an MDA that represents an individual part of MDA ${^\downarrow}{\MDA}$ (a.k.a. \emph{tile} in programming~--~informally illustrated in Figure~\ref{mda_examples}).
The partitions are arranged on multiple layers (indicated by the $p$'s superscripts) and in multiple dimensions (indicated by subscripts)~--~as illustrated in Figure~\ref{mda_part_example_informal}~--~according to the memory/core layers of the target architecture and dimensions of the MDH computation:~we partition for each of the target architecture's three layers (\texttt{HM}, \texttt{L1}, \texttt{COR}) and in each of the two dimensions of the MDH (dimensions $1$ and~$2$, as we use example \texttt{MatVec} in Figure~\ref{fig_ll_example}, which represents a two-dimensional MDH computation).
Consequently, our partitioning approach allows efficiently exploiting each particular layer of the target architecture (both memory and core layers), and also optimizing for both dimensions of the target computation (in the case of \texttt{MatVec}, the $i$-dimension and also the $k$-dimension~--~see Figure~\ref{example_intro_1}), allowing fine-grained optimizations.

We compute the partitionings of MDAs by applying the concatenation operator (Example~\ref{def:mda_concat}) inversely\footnote{It is easy to see that operator \emph{concatenation} (Example~\ref{def:mda_concat}) is invertible for any particular choice of meta-parameters~(formally proved in the Appendix, Section~\ref{app_inverse_concat}).}~(indicated by using $=:$ instead of $:=$ in the top right part of Figure~\ref{fig_ll_example}).
For example, we partition in Figure~\ref{fig_ll_example} MDA ${^\downarrow}{\MDA}$ first via the inverse of $\dplus^\texttt{(HM,x)}_1$ in dimension $1$ (indicated by the subscript~$1$ of $\dplus^\texttt{(HM,x)}_1$;~the superscript \texttt{(HM,x)} is explained later)
into $2$ parts, as $p^1_1$ iterates over interval $[0,2)_{\INz}=\{0,1\}$ which consists of two elements ($0$ and $1$)~--~the interval is chosen arbitrarily for this example.
Afterward, each of the obtained parts is further partitioned, in the second dimension, via $\dplus^\texttt{(HM,y)}_2$ into $4$ parts ($p^1_2$ iterates over $[0,4)_{\INz}=\{0,1,2,3\}$ which consists of four elements).
The ${(2*4)}$-many \texttt{HM} parts are then each further partitioned in both dimensions for the \texttt{COR} layer into $(8*16)$ parts, and each individual \texttt{COR} part is again partitioned for the \texttt{L1} layer into $(32*64)$ parts, resulting in $(2*8*32)*(4*16*64)=512*4096$ parts in total.

We always use a \emph{full partitioning} in our low-level expressions\footnote{
Our future work (outlined in Section~\ref{ch:fw}) aims to additionally allow coarser-grained partitioning schemas, e.g., to target domain-specific hardware extensions (such as \emph{NVIDIA Tensor Cores}~\cite{cuda_tensor_cores} which compute $4\times 4$ matrices immediately in hardware, rather than $1\times1$ matrices as obtained in the case of a full partitioning).
}, i.e., each
particular choice of indices
$p^1_1$, $p^1_2$, $p^2_1$, $p^2_2$, $p^3_1$, $p^3_2$
points to an MDA that contains a single element only
(in Figure~\ref{mda_part_example_informal}, the individual elements are denoted via symbol $\times$, in the bottom part of the figure).
By relying on a full partitioning, we can apply scalar function $f$ to the fully partitioned MDAs later in the scalar phase (described in the next paragraph).
This is because function $f$ is defined on scalar values (Definition~\ref{def_md_hom}) to make defining scalar functions more convenient for the user (as discussed in Section~\ref{sec_mdhom}).

The superscript of combine operators, e.g., \texttt{(COR,x)} of operator $\dplus^\texttt{(COR,x)}_1$, is a so-called \emph{operator tag}
(formal definition given in the next subsection).
Such a tag indicates to our code generator whether its combine operator is assigned to a memory layer (and thus computed sequentially in our generated code) or to a core layer (and thus computed in parallel).
For example,
tag \texttt{(COR,x)} indicates that parts processed by operator $\dplus^\texttt{(COR,x)}_1$ should be computed by cores \texttt{COR}, and thus in parallel;~the dimension tag \texttt{x} indicates that \texttt{COR} layer's \texttt{x} dimension should be used for computing the operator (we use dimension \texttt{x} for our example architecture as an analogous concept to CUDA's thread/block dimensions \texttt{x},\texttt{y},\texttt{z} for GPU architectures~\cite{cuda-specification}), as we also discuss in the next subsection.
In contrast, tag $\texttt{(HM,x)}$ refers to a memory layer (host memory \texttt{HM}) and thus, operator $\dplus^\texttt{(HM,x)}_1$ is computed sequentially.
Since the current state-of-practice programming approaches (OpenMP, CUDA, OpenCL, $\dotsc$) have no explicit notion of memory tiles (e.g., by offering the potential variables \texttt{tileIdx.x}/\texttt{tileIdx.y}/\texttt{tileIdx.z}, as analogous concepts to CUDA variables \texttt{threadIdx.x}/\texttt{threadIdx.y}/\texttt{threadIdx.z}), the dimensions tag \texttt{x} in \texttt{(HM,x)} is currently ignored by our code generator,
because \texttt{HM} refers to a memory layer.

Note that the number of parts (e.g., $2$~parts on layer~$1$ in dimension~$1$, and~$4$~parts on layer~$1$ in dimension~$2$), the combine operators' tags, and our partition order~(e.g. first partitioning in MDA's dimension $1$ and afterward in dimension $2$) are chosen arbitrarily for this example.
These choices are critical for performance and should be optimized\footnote{
    We currently rely on auto-tuning~\cite{10.1145/3427093} for choosing optimized values of performance-critical parameters, as we discuss in Section~\ref{ch:eval}.
} for a particular target architecture and characteristics of the input and output data (size, memory layouts, etc.), as we discuss in detail later in this section.

\paragraph{2. Scalar Phase:}
In the scalar phase (step $8$ in Figure~\ref{fig_ll_example}), we apply MDH's scalar function $f$ to the individual MDA elements
\[
{^\downarrow}\MDA_f^{
<\, p^1_1,p^1_2 \ | \ p^2_1,p^2_2 \ | \ p^3_1,p^3_2 \,>
}
\]
for each particular choice of indices $p^1_1$, $p^1_2$, $p^2_1$, $p^2_2$, $p^3_1$, $p^3_2$,
which results in
\[
{^\uparrow}\MDA_f^{
<\, p^1_1,p^1_2 \ | \ p^2_1,p^3_2 \ | \ p^3_1,p^3_2 \,>
}
\]
In the figure, $\vec{f}$ (introduced in Definition~\ref{def_md_hom}) is the slight adaption of function $f$ that operates on a singleton MDA, rather than a scalar.

Annotation \texttt{$\rightarrow$  < (1,2) , $\dotsc$ >} indicates the application order of applying scalar function (in this example, first iterating over $p^1_1$, then over $p^1_2$, etc), and we use annotation \mbox{\texttt{$\rightarrow$  < (HM,x) , $\dotsc$ >}} to indicate how the scalar computation is assigned to the target architecture (this is described in detail later in this section).
Annotations \texttt{$\rightarrow$ M:\,HM , v:\,L1} and \texttt{$\rightarrow$ w:\,L1} \ (in the bottom part of Figure~\ref{fig_ll_example}) indicate the memory regions to be used for reading and writing the input scalar of function $f$ (also described later in detail).

\paragraph{3. Re-Composition Phase:}
Finally, the re-composition phase (steps $9$-$15$ in Figure~\ref{fig_ll_example}) combines the computed parts ${^\uparrow}\MDA_f^{<\, p^1_1,p^1_2 \ | \ p^2_1,p^2_2 \ | \ p^3_1,p^3_2 \,>}$ (bottom left in the figure) to the final result ${^\uparrow}{\MDA}$~(top left) via MDH's combine operators, which are in the case of matrix-vector multiplication $\co{1}:=\dplus$~(concatenation) and $\co{2}:=+$~(point-wise addition).
In this example, we first combine the \texttt{L1} parts in dimension $2$ and then in dimension $1$;~afterward, we combine the \texttt{COR} parts in both dimensions, and finally the \texttt{HM} parts.
Analogously to before, this order of combine operators and their tags are chosen arbitrarily for this example and should be auto-tuned for high performance.

\vspace*{5px}

In the de- and re-composition phases, the arrow notation below combine operators allow efficiently exploiting architecture's memory hierarchy, by indicating
the memory region to read from (de-composition phase) or to write to (re-composition phase);~the annotations also indicate the memory layouts to use.
We exploit these memory and layout information in both:
i) our code generation process to assign combine operators' input and output data to memory regions and to chose memory layouts for the data (row major, column major, etc);~%
ii) our formalism to specify constraints of programming models, e.g., that in CUDA, results of GPU cores can only be combined in designated memory regions~\cite{cuda-programming-guide}.
For example, annotation \texttt{\small$\rightarrow$ M:\:HM[1,2]\:,\:v:\:L1[1]} \ below an operator in the de-composition phase indicates to our code generator that the parts (a.k.a tiles) of matrix $M$
used
for this computation step should be read from host memory \texttt{HM} and that parts of vector $v$ should be copied to and accessed from fast \texttt{L1} memory.
The annotation also indicates that \texttt{M} should be stored using a row-major memory layout (as we use \texttt{[1,2]} and not \texttt{[2,1]}).
The memory regions and layouts are chosen arbitrarily for this example and should be chosen as optimized (auto-tuned) for the particular target architecture and characteristics of the input and output data.
Formally, the arrow notation of combination operators is a concise notation to hide MDAs and BUFs for intermediate results (discussed
in the Appendix, Section~\ref{app_ll_example}, for the interested reader).

\subsection*{Excursion: Code Generation\protect\footnote{Our implementation of MDH is open source: \url{https://mdh-lang.org}}}
Our low-level expressions can be straightforwardly transformed to executable program code in imperative-style programming languages (such as OpenMP, CUDA, and OpenCL).
As code generation is not the focus of this work, we outline our code generation process briefly using the example of Figure~\ref{fig_ll_example}.
Details about our code generation
are provided in Section~\ref{app_sec_code_generation} of our Appendix, and will be presented and illustrated in detail in our future work.

We implement combine operators as sequential or parallel loops.
For example, the operator $\dplus^\texttt{(HM,x)}_1$
is assigned to memory layer \texttt{HM} and thus implemented as a sequential loop
(loop range indicated by $[0,2)_{\INz}$), and operator $\dplus^\texttt{(COR,x)}_1$
is assigned to core layer \texttt{COR} and thus implemented as a parallel loop (e.g., a loop
annotated with \texttt{\small\#pragma omp parallel for} in OpenMP~\cite{openmp-specification}, or
variable \texttt{\small threadIdx.x}
in CUDA~\cite{cuda-specification}%
).
Correspondingly, our three phases
(de-composition, scalar, and re-composition)
each correspond to an individual loop nest;~we generate the nests as fused when the tags of combine operators have the same order in phases, as in Figure~\ref{fig_ll_example}.
Note that our currently targeted programming models (OpenMP, CUDA, and OpenCL) have no explicit notion of \emph{tiles}, e.g., by offering the potential variable \texttt{tileIdx.x} for managing tiles automatically in the programming model (similarly as variable \texttt{threadIdx.x} automatically manages threads in CUDA).
Consequently, when the operator tag refers to a memory layer, the dimension information within tags are currently ignored by our code generator (such as dimension~\texttt{x} in tag~\texttt{(HM,x)} which refers to memory layer \texttt{HM}).

Operators' memory regions correspond to straightforward allocations (e.g., in CUDA's \texttt{device}, \texttt{shared}, or \texttt{register} memory~\cite{cuda-specification}, according to the arrow annotations in our low-level expression).
Memory layouts are implemented as aliases, e.g., \emph{preprocessor directives} such as \texttt{\small\#define M(i,k) M[k][i]} for storing \texttt{MatVec}'s input matrix $M$ as transposed.

We implement MDAs also as aliases (according to Definition~\ref{def_iv}), e.g., \texttt{\small\#define inp\_mda(i,k) M[i][k],v[k]} for \texttt{MatVec}'s input MDA.

Code optimizations that
are
applied on
a lower abstraction level than proposed by our representation in Example~\ref{fig_ll_example} are beyond the scope of this work
and
outlined
in Section~\ref{ch_ll_opts} of our Appendix
e.g., loop fusion and loop unrolling which are applied on the loop-based abstraction level.

We provide an open source \emph{MDH compiler} for code generation~\cite{mdh_in_python}.
Our compiler takes as input a high-level MDH expression (as in Figure~\ref{fig:intro_example}), in the form of a Python program (see Appendix, Section~\ref{matvec_in_dsl}), and it fully automatically generates auto-tuned program code from it.

\vspace{10px}

In the following, we introduce in Section~\ref{sec_amm} our formal representation of a computer system~(which can be a single device, but also a multi-device or a multi-node system, as we discuss soon), and we illustrate our formal system representation using the example architectures targeted by programming models OpenMP, CUDA, and OpenCL.
Afterward, in Section~\ref{sec_bbb}, we formally define the basic building blocks of our low-level representation~--~\emph{low-level MDAs}, \emph{low-level BUFs}, and \emph{low-level COs}~--~based on our formal system representation.

\subsection{Abstract System Model (ASM)}
\label{sec_amm}

\begin{definition}[Abstract System Model]
\label{def_asm}
An \emph{$L$-Layered Abstract System Model (ASM)}, $L\in\IN$, is any pair of two positive natural numbers
\[
(\, \texttt{NUM\_MEM\_LYRs} \, , \, \texttt{NUM\_COR\_LYRs} \,)\in\IN\times\IN
\]
for which $\texttt{NUM\_MEM\_LYRs}+\texttt{NUM\_COR\_LYRs} \ = \ L$.
\end{definition}
Our ASM representation is capable of modeling architectures with arbitrarily deep memory and core hierarchies\footnote{
    We deliberately do not model into our ASM representation an architecture's particular number of cores and/or sizes of memory regions, because our
    optimization process
    is designed to be generic in these numbers and sizes, for high flexibility.
}:~\texttt{NUM\_MEM\_LYRs}
denotes the target architecture's number of memory layers and \texttt{NUM\_COR\_LYRs} the architecture's number of core layers, correspondingly.
For example, the artificial architecture we use in Figure~\ref{fig_ll_example} is represented as an ASM instance as follows (bar symbols denote set cardinality):
\[
    \texttt{ASM}_\texttt{artif.}:=( \ \ |\{\texttt{HM},\texttt{L1}\}| \ \ , \ |\{\texttt{COR}\}| \ \ ) = (2,1)
\]
The instance is a pair consisting of the numbers $2$ and $1$ which represent the artificial architecture's two memory layers (\texttt{HM} and \texttt{L1}) and its single core layer (\texttt{COR}).

\begin{example}[ASM Instances]
\label{example_asm}
We show particular ASM instances that represent the device models of the state-of-practice approaches OpenMP, CUDA, and OpenCL:
\begin{align*}
&
\texttt{ASM}_\texttt{OpenMP} &:=& &(& \ &|\{\texttt{MM},\texttt{L2},\texttt{L1}\}|& \, &,& \, &|\{\texttt{COR}\}|& \ &)& &=& &(3,1)\\
&
\texttt{ASM}_\texttt{OpenMP+L3} &:=& &(& \ &|\{\texttt{MM},\texttt{L3},\texttt{L2},\texttt{L1}\}|& \, &,& \, &|\{\texttt{COR}\}|& \ &)& &=& &(4,1)\\
&
\texttt{ASM}_\texttt{OpenMP+L3+SIMD} &:=& &(& \ &|\{\texttt{MM},\texttt{L3},\texttt{L2},\texttt{L1}\}|& \, &,& \, &|\{\texttt{COR},\texttt{SIMD}\}|& \ &)& &=& &(4,2)\\[5pt]
&
\texttt{ASM}_\texttt{CUDA} &:=& &(& \ &|\{\texttt{DM},\texttt{SM},\texttt{RM}\}|& \, &,& \, &|\{\texttt{SMX},\texttt{CC}\}|& \ &)& &=& &(3,2)\\
&
\texttt{ASM}_\texttt{CUDA+WRP} &:=& &(& \ &|\{\texttt{DM},\texttt{SM},\texttt{RM}\}|& \, &,& \, &|\{\texttt{SMX},\texttt{WRP},\texttt{CC}\}|& \ &)& &=& &(3,3)\\[5pt]
&
\texttt{ASM}_\texttt{OpenCL} &:=& &(& \ &|\{\texttt{GM},\texttt{LM},\texttt{PM}\}|& \, &,& \, &|\{\texttt{CU},\texttt{PE}\}|& \ &)& &=& &(3,2)\\
\end{align*}
\end{example}

OpenMP is often used to target $(3+1)$-layered architectures which rely on $3$~memory regions~(main memory \texttt{MM}, and caches \texttt{L2} and \texttt{L1}) and $1$~core layer~(\texttt{COR}).
OpenMP-compatible architectures sometimes also contain the \texttt{L3} memory region, and they may allow exploiting SIMD parallelization (a.k.a. \emph{vectorization}~\cite{10.1007/978-3-642-30961-8_5}), which are expressed in our ASM representation as a further memory or core layer, respectively.

CUDA's target architectures are $(3+2)$-layered:~they consist of \emph{Device Memory~(\texttt{DM})}, \emph{Shared Memory~(\texttt{SM})}, and \emph{Register Memory~(\texttt{RM})}, and they offer as cores so-called \emph{Streaming Multiprocessors~(\texttt{SMX})} which themselves consist of
\emph{Cuda Cores~(\texttt{CC})}.
CUDA also has an implicit notion of so-called \emph{Warps~(\texttt{WRP})} which are not explicitly represented in the CUDA programming model~\cite{cuda-specification}, but often exploited by programmers~--~via special intrinsics (e.g., \emph{shuffle} and \emph{tensor core intrinsics}~\cite{cuda_shuffles,cuda_tensor_cores})~--~to achieve highest performance.

OpenCL-compatible architectures are designed analogously to those targeted by the CUDA programming model;~consequently, both OpenCL- and CUDA-compatible architectures are represented by the same ASM instance in our formalism.
Apart from straightforward syntactical differences between OpenCL and CUDA~\cite{opencl-vs-cuda-syntax}, we see as the main differences between the two programming models (from our ASM-based abstraction level) that OpenCL has no notion of warps, and it uses a different terminology~--~%
\emph{Global/Local/Private Memory (\texttt{GM}/\texttt{LM}/\texttt{PM})} instead of device/shared/register memory, and \emph{Compute Unit~(\texttt{CU})} and \emph{Processing Element~(\texttt{PE})}, rather than~\texttt{SMX} and~\texttt{CC}.

In the following, we consider memory regions and cores of ASM-represented architectures as arrangeable in an arbitrary number of dimensions.
Programming models for such architectures often have native support for such arrangements.
For example, in the CUDA model, memory is accessed via arrays which can be arbitrary-dimensional (a.k.a \emph{multi-dimensional C arrays}), and cores are programmed in CUDA via threads which are arranged in CUDA's so-called dimensions \texttt{x}, \texttt{y}, \texttt{z};~further thread dimensions can be explicitly programmed in CUDA, e.g., by embedding them in the last dimension \texttt{z}.
Details on our arrangements of memory and cores are provided in the Appendix, Section~\ref{app_md_arrangement_mem_cor}.

We express constraints of programming models~--~for example, that in CUDA, \texttt{SMX} can combine their results in \texttt{DM} only~\cite{cuda-programming-guide}~--~via so-called \emph{tuning
parameter constraints}, which we discuss later in this section.

Note that we call our abstraction \emph{Abstract System Model} (rather than \emph{Abstract Architecture Model}, or the like), because it can also represent systems consisting of multiple devices and/or nodes, etc.
For example, our ASM representation of a multi-GPU system is:
\[
\texttt{ASM}_\texttt{Multi-GPU} \ := \ ( \ |\{\texttt{HM},\texttt{DM},\texttt{SM},\texttt{RM}\}| \ , \, |\{\texttt{GPU},\texttt{SMX},\texttt{CC}\}| \ ) \ = \  (4,3)
\]
It extends our ASM-based representation of CUDA devices (Example~\ref{example_asm}) by \emph{Host Memory~(\texttt{HM})} which represents the memory region of the system containing the GPUs (and in which the intermediate results of different GPUs are combined), and it introduces the further core layer~\texttt{GPU} representing the system's GPUs.
Analogously, our ASM representation of a multi-node, multi-GPU system is:
\[
\texttt{ASM}_\texttt{Multi-Node-Multi-GPU} \ := \ ( \ |\{\texttt{NM},\texttt{HM},\texttt{DM},\texttt{SM},\texttt{RM}\}| \ , \, |\{\texttt{NOD},\texttt{GPU},\texttt{SMX},\texttt{CC}\}| \ ) \ = \  (5,4)
\]
It adds to $\texttt{ASM}_\texttt{Multi-GPU}$ the memory layer \emph{Node Memory~ (\texttt{NM})} which represents the memory region of the host node, and it adds core layer \emph{Node}~(\texttt{NOD}) which represents the compute nodes.
Our approach
is currently designed for \emph{homogeneous systems}, i.e., all devices/nodes/$\dots$ are assumed to be identical.
We aim to extend our approach to \emph{heterogeneous systems} (which may consist of different devices/nodes/$\dotsc$) as future work, inspired by dynamic load balancing approaches~\cite{5470413}.

\subsection{Basic Building Blocks}
\label{sec_bbb}

We introduce the three main basic building blocks of our low-level representation:
1)~\emph{low-level MDAs} which we use as partitioning of MDAs and that
represent multi-layered, multi-dimensionally arranged collection of ordinary MDAs (Definition~\ref{def_mda})~--~one ordinary MDA per memory/core layer of their target~ASM and for each dimension of the MDH computation (as illustrated in Figure~\ref{mda_part_example_informal});~%
2)~\emph{low-level BUFs} which
are a collection of ordinary BUFs (Definition~\ref{def_buffer}) and that are augmented with a \emph{memory region}
and a
\emph{memory layout};~%
3)~\emph{low-level COs} which represent combine operators (Definition~\ref{def_combine_op}) to which the layer and dimension of their target ASM is assigned to be used in our generated code to compute the operator (e.g., a core layer to compute the operator in~parallel).

\begin{definition}[Low-Level MDA]
\label{def_ll_mda}
Let be $L\in\IN$
(representing an ASM's number of layers)
and $D\in\IN$ (representing an MDH's number of dimensions).
Let further be
$
P=
(\,
(P^1_1,\dotsc,P^1_D)
\ ,\,\dotsc\,,$ $
(P^L_1,\dotsc,P^L_D)
\,)\in\IN^{L\times D}
$
an arbitrary tuple of $L$-many $D$-tuples of positive natural numbers,
$T\in\type$ a scalar type, and
$
I
:=
( \ (I_d^{^{<p^1_1,\dotsc,p^1_D\,|\,\dotsc\,|\,p^L_1,\dotsc,p^L_D>}}\in\IDXs)_{d\in[1,D]_\IN} \ )^{<\, (p^1_1,\dotsc,p^1_D)\in P^1_1\times\dotsc\times P^1_D \ | \
}$ $^{
\dotsc \ | \ (p^L_1,\dotsc,p^L_D)\in P^L_1\times\dotsc\times P^L_D \,>}
$
an arbitrary collection of $D$-many MDA index sets (Definition~\ref{def_mda}) for each particular choice of indices
$(p^1_1,\dotsc,p^1_D)\in P^1_1\times\dotsc\times P^1_D \ \ , \ \dotsc \ , \ \ (p^L_D,\dotsc,p^L_D)\in P^L_1\times\dotsc\times P^L_D$\footnote{
    Analogously to Notation~\ref{not_mda}, we identify each $P^l_d\in\IN$ implicitly also with the interval $[0,P^l_d)_{\IN_0}$ (inspired by set theory).
} (illustrated in Figure~\ref{mda_part_example_informal}).

An $L$-layered, $D$-dimensional, $P$-partitioned
\emph{low-level Multi-Dimensional Array (low-level MDA)} that has scalar type~$T$ and index sets~$I$ is any function $\MDA_{ll}$ of type:
\vspace*{-2px}
\begin{align*}
  &
  \MDA_{ll}^{<\,
  \overbrace{\scriptstyle(p^1_1,\dotsc,p^1_D)\in P^1_1\times\dotsc\times P^1_D}^\text{\tiny Partitioning: \ Layer $1$}
  \ | \ \dotsc \ | \
  \overbrace{\scriptstyle(p^L_1,\dotsc,p^L_D)\in P^L_1\times\dotsc\times P^L_D}^\text{\tiny Partitioning: \ Layer $L$}
  \,>}: \\[-2pt]
  &\hspace*{145px}
  I^{<p^1_1,\dotsc,p^1_D\,|\,\dotsc\,|\,p^L_1,\dotsc,p^L_D>}_1
  \,\times\dotsc\times\,
  I^{<p^1_1,\dotsc,p^1_D\,|\,\dotsc\,|\,p^L_1,\dotsc,p^L_D>}_D
  \, \to \, T
\end{align*}
\end{definition}

We use low-level MDAs in the following to represent partitionings of MDAs (as illustrated soon and formally discussed the Appendix, Section~\ref{app_sec_mda_part}).

\vspace*{5px}

Next, we introduce \emph{low-level BUFs} which work similarly as BUFs (Definition~\ref{def_buffer}), but are tagged with a memory region
and a
memory layout.
While these tags have no effect on the operators' semantics, they indicate later to our code generator in which memory region the BUF should be stored and accessed, and which memory layout to chose for storing the BUF.
Moreover, we use these tags to formally define constraints of programming models, e.g., that according to the CUDA specification~\cite{cuda-programming-guide}, \texttt{SMX} cores can combine their results in memory region \texttt{DM} only.

\begin{definition}[Low-Level BUF]
\label{ll_buffers}
Let be $L\in\IN$
(representing an ASM's number of layers)
and $D\in\IN$ (representing an MDH's number of dimensions).
Let further
$
P=(\,
(P^1_1,\dotsc,P^1_D)
\ ,\,\dotsc\,,$ $
(P^L_1,\dotsc,P^L_D)\,)\in\IN^{L\times D}$
be an arbitrary tuple of $L$-many $D$-tuples of positive natural numbers,
$T\in\type$ a scalar type, and
$
N
:=
( \ (N_d^{^{<p^1_1,\dotsc,p^1_D\,|\,\dotsc\,|\,p^L_1,\dotsc,p^L_D>}}\in\IN)_{d\in[1,D]_\IN} \ )^{<\, (p^1_1,\dotsc,p^1_D)\in P^1_1\times\dotsc\times P^1_D \ | \ \dotsc \ | \ (p^L_1,
}$ $^{
\dotsc,p^L_D)\in P^L_1\times\dotsc\times P^L_D \,>}
$
be a BUF's size (Definition~\ref{def_buffer}) for each particular choice of $p^1_1,\dotsc,p^L_D$.

An $L$-layered, $D$-dimensional, $P$-partitioned
\emph{low-level Buffer (low-level BUF)} that has scalar type $T$ and size $N$ is any function $\Buf_{ll}$ of type ($\bij$ denotes bijection):
\vspace*{-2px}
\begin{align*}
&
\Buf_{ll}^{<
\overbrace{\scriptstyle \texttt{MEM}\in[1,\texttt{NUM\_MEM\_LYRs}]_\IN}^\text{Memory Region}
\, | \,
\overbrace{\scriptstyle \sigma:[1,D]_\IN\bij [1,D]_\IN}^\text{Memory Layout}
><
\overbrace{\scriptstyle(p^1_1,\dotsc,p^1_D)\in P^1_1\times\dotsc\times P^1_D}^\text{\tiny Partitioning: \ Layer $1$}
\ | \ \dotsc \ | \
\overbrace{\scriptstyle(p^L_1,\dotsc,p^L_D)\in P^L_1\times\dotsc\times P^L_D}^\text{\tiny Partitioning: \ Layer $L$}
>}: \\[-1pt]
&\hspace*{80px}
[0,N_1^{<\, p^1_1,\dotsc,p^1_D \ | \ \dotsc \ | \ p^L_1,\dotsc,p^L_D \,>})_{\IN_0}
\,\times\dotsc\times\,
[0,N_D^{<\, p^1_1,\dotsc,p^1_D \ | \ \dotsc \ | \ p^L_1,\dotsc,p^L_D \,>})_{\IN_0}
 \, \to \, T
\end{align*}
We refer to $\texttt{MEM}$ as low-level BUF's \emph{memory region} and to $\sigma$ as its \emph{memory layout}, and we refer to the function
\begin{align*}
    &
    {\Buf^\texttt{trans}_{ll}}^{<
    \overbrace{\scriptstyle \texttt{MEM}\in[1,\texttt{NUM\_MEM\_LYRs}]_\IN}^\text{Memory Region}
    \, | \,
    \overbrace{\scriptstyle \sigma:[1,D]_\IN\bij [1,D]_\IN}^\text{Memory Layout}
    ><
    \overbrace{\scriptstyle(p^1_1,\dotsc,p^1_D)\in P^1_1\times\dotsc\times P^1_D}^\text{\tiny Partitioning: \ Layer $1$}
    \ | \ \dotsc \ | \
    \overbrace{\scriptstyle(p^L_1,\dotsc,p^L_D)\in P^L_1\times\dotsc\times P^L_D}^\text{\tiny Partitioning: \ Layer $L$}
    >}: \\
    &\hspace*{80px}
    [0,N^{<\, p^1_1,\dotsc,p^1_D \ | \ \dotsc \ | \ p^L_1,\dotsc,p^L_D \,>}_{\sigma(1)})_{\IN_0}
    \,\times\dotsc\times\,
    [0,N^{<\, p^1_1,\dotsc,p^1_D \ | \ \dotsc \ | \ p^L_1,\dotsc,p^L_D \,>}_{\sigma(D)})_{\IN_0}
     \, \to \, T
    \end{align*}
    that is defined as
    \[
        {\Buf^\texttt{trans}_{ll}}^{<\, \texttt{MEM} \, | \, \sigma><p^1_1,\dotsc,p^1_D\,|\,\dotsc\,|\,p^L_1,\dotsc,p^L_D\,>}(i_{\sigma(1)},\dotsc,i_{\sigma(D)})
        \ := \
        \Buf_{ll}^{<\, \texttt{MEM} \, | \, \sigma><p^1_1,\dotsc,p^1_D\,|\,\dotsc\,|\,p^L_1,\dotsc,p^L_D\,>}(i_1,\dotsc,i_D)
    \]
as $\Buf_{ll}$'s \emph{transposed function representation} (which we use to store the buffer in our generated code).
\end{definition}

Finally, we introduce \emph{low-level COs}.
We define such operators to behave the same as ordinary combine operators (Definition~\ref{def_combine_op}), but we additionally tag them with a layer of their target ASM.
Similarly as for low-level BUFs, the tag has no effect on semantics, but it is used in our code generation process to assign the computation to the hardware (e.g., indicating that the operator is computed
by either an \texttt{SMX}, \texttt{WRP}, or \texttt{CC} when targeting CUDA~--~see Example~\ref{example_asm}).
Also, we use the tags to define model-specific constraints in our formalism (as also discussed for low-level BUFs).
We also tag the combine operator with a dimension of the ASM layer, enabling later in our optimization process to express advanced data access patterns (a.k.a. \emph{swizzles}~\cite{10.1145/3297858.3304059}).
For example, when targeting CUDA, flexibly mapping ASM dimensions on \texttt{CC} layer (in CUDA terminology, the dimensions are called \texttt{threadIdx.x}, \texttt{threadIdx.y}, and \texttt{threadIdx.z}) to array dimensions enables the well-performing \emph{coalesced global memory accesses}~\cite{cuda-programming-guide} for both transposed and non-transposed data layouts, by only using different dimension tags.

\begin{definition}[ASM Level]
\label{def_asm_lvl}
We refer to pairs $(l_\texttt{ASM},d_\texttt{ASM})$~--~consisting of an ASM layer $l_\texttt{ASM}\in[1,L]_\IN$ and an ASM dimension $d_\texttt{ASM}\in[1,D]_\IN$~--~as \emph{ASM Levels}  (\texttt{ASM-LVL})\footnote{
For simplicity, we refrain from annotating identifier \texttt{ASM-LVL} with values $L$ and $D$ (e.g., $\texttt{ASM-LVL}^{<L,D>}$ ), because both values will usually be clear from the context.
} (terminology motivated in the Appendix, Section~\ref{app_asm_levels}):
\[
    \ASMLVL:=\{\ (l_\texttt{ASM},d_\texttt{ASM}) \ | \ l_\texttt{ASM}\in[1,L]_\IN, d_\texttt{ASM}\in[1,D]_\IN\}
\]
\end{definition}

\begin{definition}[Low-Level CO]
\label{def_ll_comb}

Let be $L\in\IN$
(representing an ASM's number of layers)
and $D\in\IN$ (representing an MDH's number of dimensions).

A \emph{low-level Combine Operator (low-level CO)}
\[\co{}^{<(l_\texttt{ASM},d_\texttt{ASM})\in\ASMLVL=\{ \ (l,d) \ | \ l\in[1,L]_\IN \, , \ d\in[1,D]_\IN \, \}>}\]
is a function for which $\co{}^{<(l_\texttt{ASM},d_\texttt{ASM})>}$ is an ordinary combine operator (Definition~\ref{def_combine_op}), for each $(l_\texttt{ASM},d_\texttt{ASM})\in\ASMLVL$.
\end{definition}

\vspace*{5px}

Note that in Figure~\ref{fig_ll_example}, for better readability, we use domain-specific identifiers for ASM layers:~$\texttt{HM:=1}$ as an alias for the ASM layer that has id $1$, $\texttt{L1:=2}$ for the layer with id $2$, and $\texttt{COR:=3}$ for the layer with id $3$.
For dimensions, we use aliases $x:=1$ for ASM dimension $1$ and $y:=2$ for ASM dimension $2$, correspondingly.

\subsection{Generic Low-Level Expression}
\label{ch_gen_ll_exp}

\begin{figure}[p]
\begin{subfigure}{\textwidth}
\center
\includegraphics[width=0.9\textwidth]{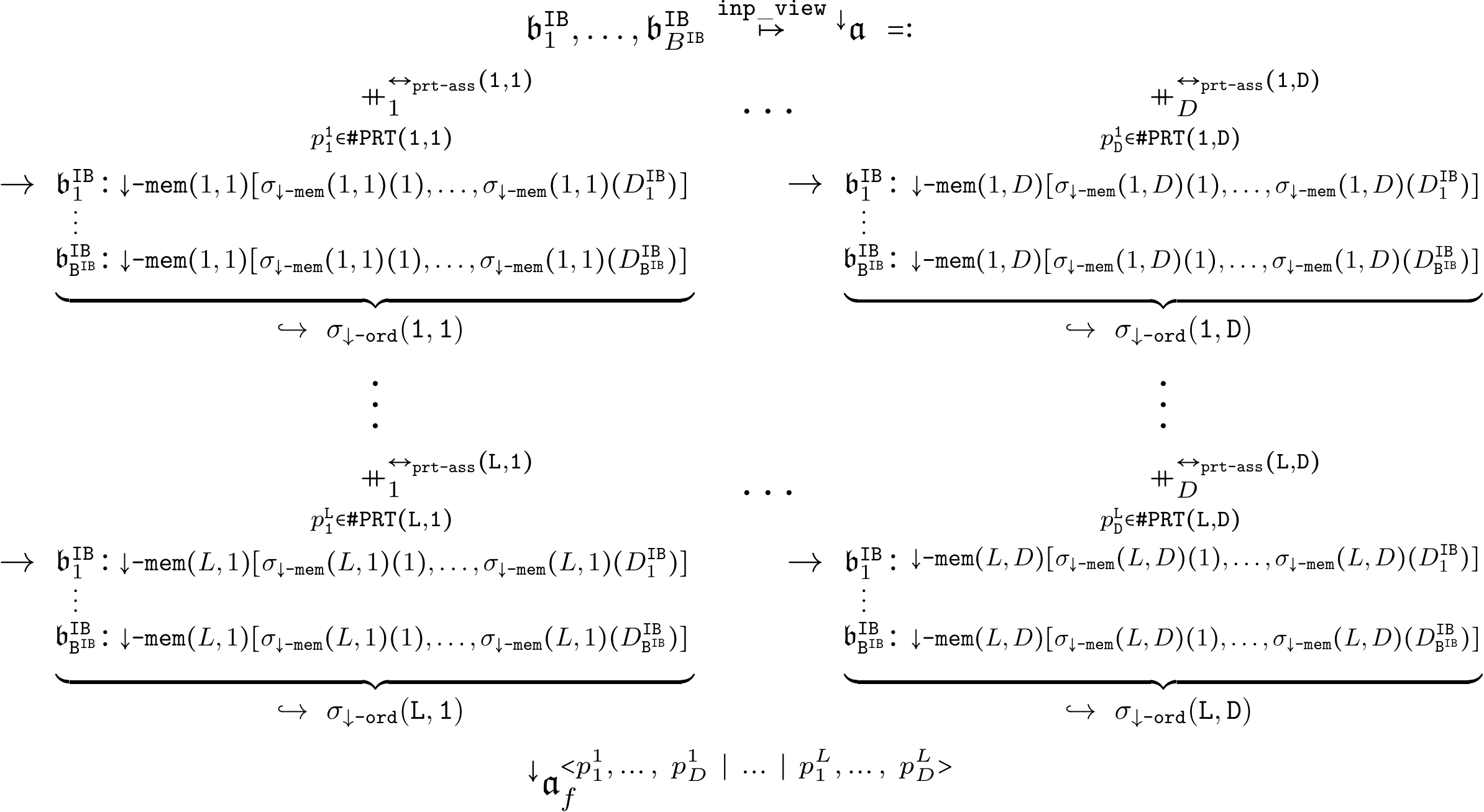}
\caption{
De-Composition Phase
}
\end{subfigure}

\vspace*{15px}

\begin{subfigure}{\textwidth}
\center
\includegraphics[width=0.65\textwidth]{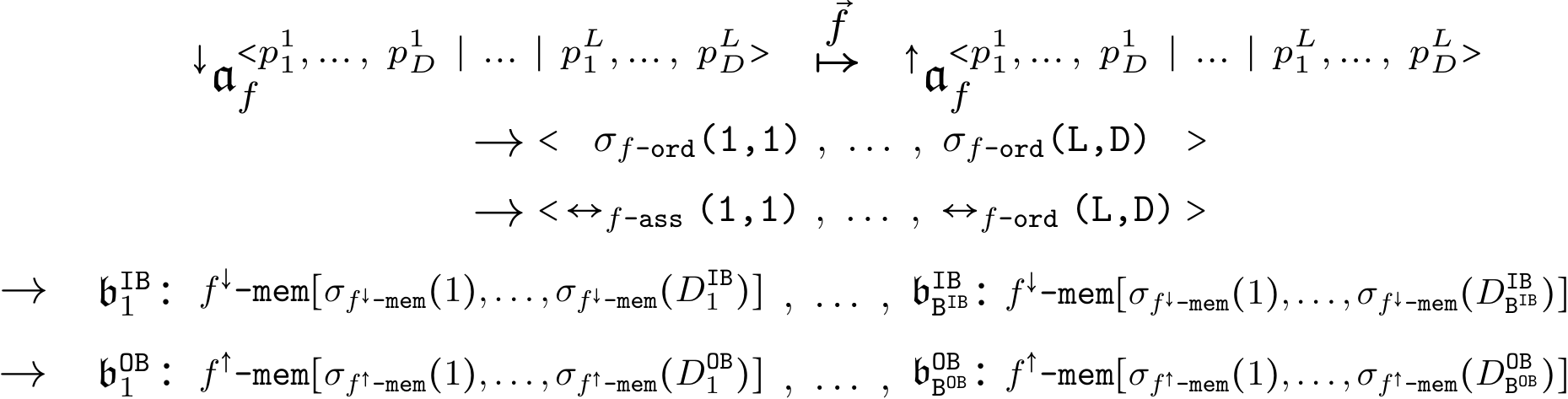}
\caption{
Scalar Phase
}
\end{subfigure}

\vspace*{15px}

\begin{subfigure}{\textwidth}
\center
\includegraphics[width=0.9\textwidth]{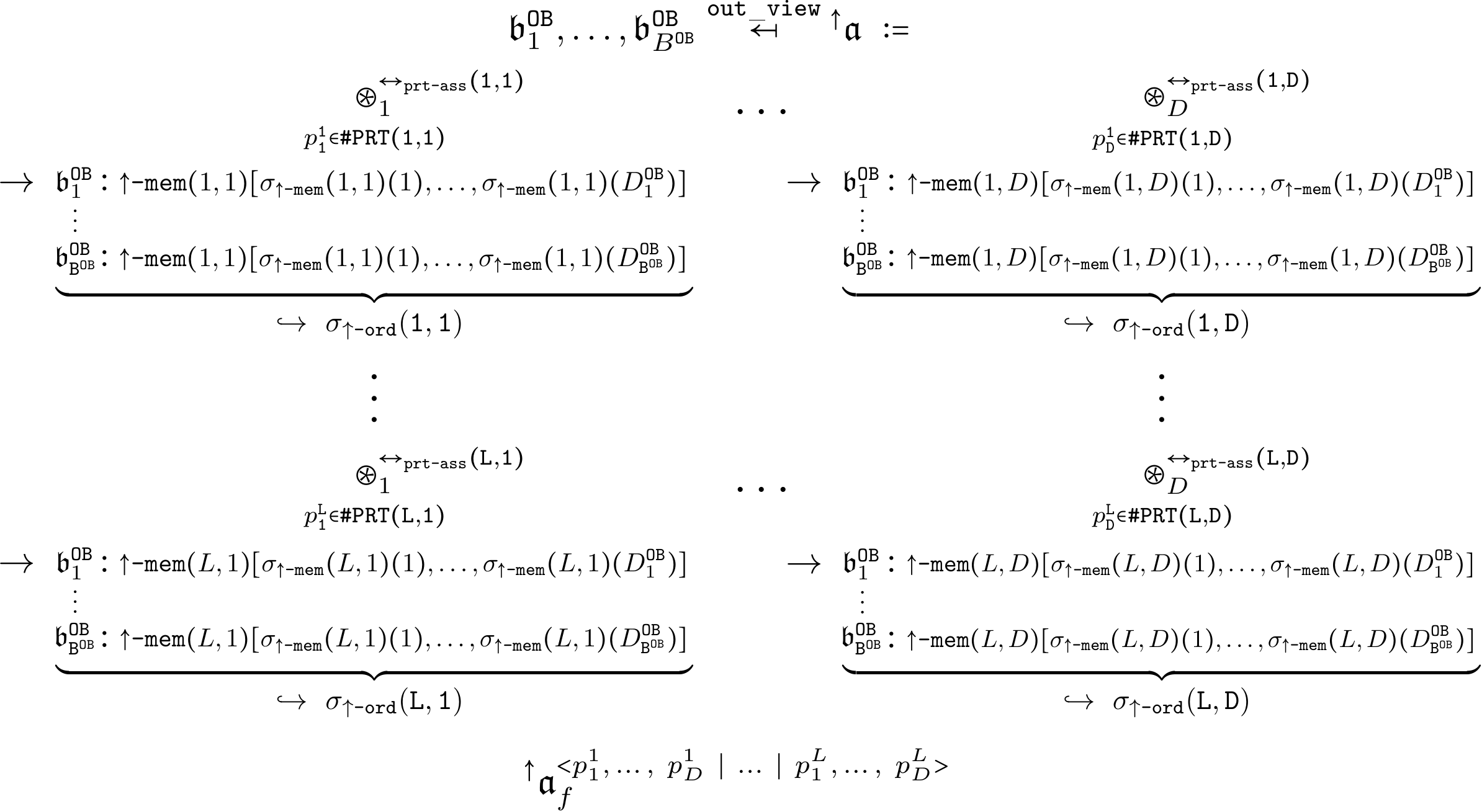}
\caption{
Re-Composition Phase
}
\end{subfigure}

\vspace*{-5px}

\caption{
Generic low-level expression for data-parallel computations
}
\label{fig_gen_ll}
\end{figure}

\setlength\dashlinedash{0.5pt}
\setlength\dashlinegap{3pt}
\setlength\arrayrulewidth{0.3pt}
\begin{table}[t!]
\center
\begin{tabular}{c|l|l|l}
No. & Name & Range & Description\\
\hline &&& \\
\texttt{0} & $\#\texttt{PRT}$ \hspace*{25px} & $\MDHLVL\to\IN$ & \texttt{number of parts} \\
&&& \\
\hline &&& \\
\texttt{D1} & $\sigma_\texttt{$\downarrow$-ord}$ &
$\MDHLVL\bij\MDHLVL$
& \texttt{de-composition order} \\[4pt]
\texttt{D2} & $\leftrightarrow_\texttt{$\downarrow$-ass}$ &
$\MDHLVL\bij\ASMLVL$
& \texttt{ASM assignment (de-composition)}     \\
&&& \\
\texttt{D3} & $\texttt{$\downarrow$-mem}^\texttt{<ib>}$ & $\MDHLVL\to\texttt{MR}$ & \texttt{memory regions of input BUFs (\texttt{ib})} \\[4pt]
\texttt{D4} & $\sigma_{\texttt{$\downarrow$-mem}}^\texttt{<ib>}$ &
$\MDHLVL\to[1,\dotsc,D^\texttt{IB}_\texttt{ib}]_\mathcal{S}$
& \texttt{memory layouts of input BUFs (\texttt{ib})} \\
&&& \\
\hdashline &&& \\
\texttt{S1} & $\sigma_\texttt{$f$-ord}$ &
$\MDHLVL\bij\MDHLVL$
& \texttt{scalar function order} \\[4pt]
\texttt{S2} & $\leftrightarrow_\texttt{$f$-ass}$ &
$\MDHLVL\bij\ASMLVL$ & \texttt{ASM assignment (scalar function)} \\
&&& \\
\texttt{S3} & $f^\downarrow\texttt{-mem}^\texttt{<ib>}$ & $\texttt{MR}$ & \texttt{memory region of input BUF (\texttt{ib})} \\[4pt]
\texttt{S4} & $\sigma_{\texttt{$f^\downarrow$-mem}}^\texttt{<ib>}$ &
$[1,\dotsc,D^\texttt{IB}_\texttt{ib}]_\mathcal{S}$
& \texttt{memory layout of input BUF (\texttt{ib})} \\
&&& \\
\texttt{S5} & $f^\uparrow\texttt{-mem}^\texttt{<ob>}$ & $\texttt{MR}$ & \texttt{memory region of output BUF (\texttt{ob})} \\[4pt]
\texttt{S6} & $\sigma_{\texttt{$f^\uparrow$-mem}}^\texttt{<ob>}$ &
$[1,\dotsc,D^\texttt{OB}_\texttt{ob}]_\mathcal{S}$
& \texttt{memory layout of output BUF (\texttt{ob})} \\
&&& \\
\hdashline &&& \\
\texttt{R1} & $\sigma_\texttt{$\uparrow$-ord}$ &
$\MDHLVL\bij\MDHLVL$
& \texttt{re-composition order} \\[4pt]
\texttt{R2} & $\leftrightarrow_\texttt{$\uparrow$-ass}$ &
$\MDHLVL\bij\ASMLVL$
& \texttt{ASM assignment (re-composition)}     \\
&&& \\
\texttt{R3} & $\texttt{$\uparrow$-mem}^\texttt{<ob>}$ & $\MDHLVL\to\texttt{MR}$ & \texttt{memory regions of output BUFs (\texttt{\texttt{ob}})} \\[4pt]
\texttt{R4} & $\sigma_{\texttt{$\uparrow$-mem}}^\texttt{<ob>}$ &
$\MDHLVL\to[1,\dotsc,D^\texttt{OB}_\texttt{ob}]_\mathcal{S}$
& \texttt{memory layouts of output BUFs (\texttt{ob})} \\
&&& \\
\end{tabular}
\caption{
Tuning parameters of our low-level expressions
}
\label{tab_tps}
\end{table}

Figure~\ref{fig_gen_ll} shows a generic expression in our low-level representation:~%
it targets an arbitrary but fixed $L$-layered ASM instance,
and it implements~--~on low level~--~the generic instance of our high-level expression in Figure~\ref{fig_generic_hl}.
Inserting into the low-level expression a particular value for ASM's numbers of layer $L$, as well as particular values for the generic parameters of the high-level expression in Figure~\ref{fig_generic_hl} (dimensionality $D$, combine operators $\co{1},\dotsc,\co{d}$, and input/output views) results in an instance of the expression in Figure~\ref{fig_gen_ll} that remains generic in tuning parameters only;~this auto-tunable instance
will be the focus of our discussion in the remainder of this section.

In Section~\ref{ch:lowering}, we show that we fully automatically compute the auto-tunable low-level expression for a concrete ASM instance and high-level expression, and we automatically optimize this tunable expression for a particular target architecture and characteristics of the input and output data using auto-tuning~\cite{10.1145/3427093}.
Our final outcome is a concrete (non-generic) low-level expression (as in Figure~\ref{fig_ll_example}) that is auto-tuned for the particular target architecture (represented via an ASM instance, e.g., ASM instance $\texttt{ASM}_\texttt{CUDA}$ when targeting an NVIDIA Ampere GPU) and high-level MDH expression.
From this auto-tuned low-level expression, we can straightforwardly generate executable program code, because all the major optimization decisions have already been made in the previous auto-tuning step.
Our overall approach is illustrated in Figure~\ref{contributions_overall}.

\subsubsection*{Auto-Tunable Parameters}

Table~\ref{tab_tps} lists the tuning parameters of our auto-tunable low-level expressions~--~different values of tuning parameters lead to semantically equal variants of the auto-tunable low-level expression (which we prove formally in Section~\ref{ch:lowering}), but the variants will be translated to differently optimized code variants.

In the following, we explain the $15$ tuning parameters in Table~\ref{tab_tps}.
We give our explanations in a general, formal setting that is independent of a particular computation and programming model;~the parameters are discussed afterward for the concrete example computation \emph{matrix multiplication} in the models OpenMP, CUDA, and OpenCL.

Our tuning parameters in Table~\ref{tab_tps} have constraints:~%
1)~\emph{algorithmic constraints} which have to be satisfied by all target programming models, and
2)~\emph{model constraints} which are specific for particular programming models only (CUDA-specific constraints, OpenCL-specific constraints, etc), e.g., that the results of CUDA's thread blocks can be combined in designated memory regions only~\cite{cuda-programming-guide}.
We discuss algorithmic constraints in the following, together with our tuning parameters;~model constraints are discussed
in our Appendix, Section~\ref{app_constraints}, for the interested reader.

In the following, we present our $15$ tuning parameters in Table~\ref{tab_tps}.
Dotted lines separate parameters for different phases:~parameters \texttt{D1}-\texttt{D4} customize the de-composition phase, parameters \texttt{S1}-\texttt{S6} the scalar phase, and parameters \texttt{R1}-\texttt{R4} the re-composition phase, correspondingly;~the parameter \texttt{0} impacts all three phases (separated by a straight line in the table).

Note that our parameters do not aim to introduce novel optimization techniques, but to unify, generalize, and combine together well-proven optimizations, based on a formal foundation, toward an efficient, overall optimization process that applies to various combinations of data-parallel computations, architectures, and characteristics of input and output data (e.g., their size and memory layout).

In Table~\ref{tab_tps}, we point to combine operators in Figure~\ref{fig_ll_example} using pairs $(l,d)$ to which we refer as \emph{MDH Levels} (terminology motivated in the Appendix, Section~\ref{app_mdh_levels}).
We use the pairs as enumeration for operators in the de-composition and re-composition phases.

\begin{definition}[MDH Level]
\label{def_mdh_lvl}
We refer to pairs $(l_\texttt{MDH},d_\texttt{MDH})$~--~consisting of a layer $l_\texttt{MDH}\in[1,L]_\IN$ and dimension $d_\texttt{MDH}\in[1,D]_\IN$~--~as~\emph{MDH Levels} ($\MDHLVL$):
\[
    \MDHLVL:=\{\ (l_\texttt{MDH},d_\texttt{MDH}) \ | \ l_\texttt{MDH}\in[1,L]_\IN, d_\texttt{MDH}\in[1,D]_\IN\}\footnote{
        The same as for identifier $\ASMLVL$ (Definition~\ref{def_asm_lvl}), we refrain from annotating identifier $\MDHLVL$ with values $L$ and $D$.
        Note that $\MDHLVL$ and $\ASMLVL$ both refer to the same set of pairs, but we use identifier $\MDHLVL$ when referring to MDH levels and identifier $\ASMLVL$ when referring to ASM levels, correspondingly, for better clarity.
    }
\]
\end{definition}
For example, in the de-composition phase of Figure~\ref{fig_ll_example} (right part of the figure), pair $(1,1)\in\MDHLVL$ points to the first combine operator, as the operator operates on the first layer $l=1$ and in the first dimension $d=1$ (discussed in Section~\ref{ch:ll_intro_example}).
Analogously, pairs $(1,2),(2,1)\in\MDHLVL$ point to the second and third operator, etc.
An operator's enumeration can be easily deduced from its corresponding $p$ variable:~the variable's superscript indicates the operator's corresponding layer~$l$ and the variable's subscript indicates its dimension~$d$.

\paragraph{Parameter \texttt{0}:}

Parameter $\#\texttt{PRT}$ is a function that maps pairs in
$\MDHLVL$
to natural numbers;~the parameter determines \emph{how much} data are grouped together into parts in our low-level expression in Figure~\ref{fig_gen_ll} (and consequently also in our generated code later), by
setting the particular number of parts (a.k.a. \emph{tiles}) used in our expression.
For example, in Figure~\ref{fig_ll_example}, we use
$\#\texttt{PRT}(1,1):=2$
which causes combine operators
$\dplus^\texttt{(HM,x)}_1$ and $\circledast^\texttt{(HM,x)}_1$
to iterate over interval $[0,2)_{\IN_0}$ (and thus partitioning the MDH computation on level $(1,1)$ into two parts), and we use
$\#\texttt{PRT}(1,2):=4$
to let operators
$\dplus^\texttt{(HM,y)}_2$ and $\circledast^\texttt{(HM,x)}_2$
iterate over interval $[0,4)_{\IN_0}$ (partitioning on level $(1,2)$ into four parts), etc.

To ensure a full partitioning (so that we obtain singleton MDAs to which scalar function $f$ can be applied in the scalar phase, as discussed above), we require the following algorithmic constraint for the parameter ($N_d$ denotes the input size in dimension $d$, see Figure~\ref{fig_generic_hl}):

\[
\prod_{l\in[1,L]_\IN} \#\texttt{PRT}(l,d)\,=\,N_d \text{, \ for all } d\in[1,D]_\IN
\]

In our generated code, the number of parts directly translates to the number of \emph{tiles} which are computed either sequentially (a.k.a. \emph{cache blocking}~\cite{10.1145/106972.106981}) or in parallel,
depending on the combine operators' tags (which are chosen via Parameters~\texttt{D2,S2,R2}, as discussed soon).
In our example from Figure~\ref{fig_ll_example}, we process parts belonging to combine operators tagged with \texttt{HM} and \texttt{L1} sequentially, via \texttt{for}-loops, because \texttt{HM} and \texttt{L1} correspond to ASM's memory layers (note that Parameter~\texttt{0} only chooses the number of tiles;~the parameter has no effect on explicitly copying data into fast memory resources, which is the purpose of Parameters~\texttt{D3,R3,S1,S2}).
The \texttt{COR} parts are computed in parallel in our generated code, because \texttt{COR} corresponds to ASM's core layer, and thus, the number of \texttt{COR} parts determines the number of threads used in our code.

An optimized number of tiles is essential for achieving high performance~\cite{10.1145/197405.197406}, e.g., due to its impact for locality-aware data accesses (number of sequentially computed tiles) and efficiently exploiting parallelism (number of tiles computed in parallel, which corresponds to the number of threads in our generated code).

\paragraph{Parameters \texttt{D1,S1,R1}:}
These three parameters are permutations on \MDHLVL{} (indicated by symbol~$\bij$ in Table~\ref{tab_tps}), determining \emph{when} data are accessed and combined.
The parameters specify the order (indicated by symbol $\hookrightarrow$ in Figure~\ref{fig_gen_ll}) of combine operators in the de-composition and re-composition phases (parameters \texttt{D1} and \texttt{R1}), and the order of applying scalar function $f$ to parts (parameter \texttt{S1}).
Thereby, the parameters specify when parts are processed during the computation.

In our generated code, combine operators are implemented as sequential/parallel loops such that the parameters enable optimizing loop orders (a.k.a. \emph{loop permutation}~\cite{10.1145/233561.233564}).
For combine operators assigned (via parameter \texttt{R2}) to ASM's core layer and thus computed in parallel, parameter \texttt{R1} particularly determines when the computed results of threads are combined:~%
if we used in the re-composition phase of Figure~\ref{fig_ll_example} combine operators tagged with \texttt{(COR,x)} and \texttt{(COR,y)} immediately after applying scalar function $f$ (i.e., in steps \circled{10} and \circled{11}, rather than steps \circled{12} and \circled{13}), we would combine the computed intermediate results of threads multiple times, repeatedly after each individual computation step of threads, and using the two operators at the end of the re-composition phase (in steps \circled{14} and \circled{15}) would combine the result of threads only once, at the end of the re-composition phase.
Combining the results of threads early in the computation usually has the advantages of reduced memory footprint, because memory needs to be allocated for one thread only, but at the cost of more computations, because the results of threads need to be combined multiple times.
In contrast, combing the results of threads late in the computation reduces the amount of computations, but at the cost of higher memory footprint.
Our parameters make this trade-off decision generic in our approach such that the decision can be left to an auto-tuning system, for example.

Note that each phase corresponds to an individual loop nest which we fuse together when parameters \texttt{D1,S1,R1} (as well as parameters \texttt{D2,S2,R2}) coincide (as also outlined in our Appendix, Section~\ref{ch_ll_opts}).

\paragraph{Parameters \texttt{D2,S2,R2}:}
These parameters (symbol $\bij$ in the table denotes bijection) assign MDH levels to ASM levels, by setting the tags of low-level combine operators (Definition~\ref{def_ll_comb}).
Thereby, the parameters determine \emph{by whom} data are processed (e.g., threads or \texttt{for}-loops),
similar to the concept of \texttt{bind} in scheduling languages~\cite{tvm_bind}.
Consequently, the parameters determine which parts should be computed sequentially in our generated code and which parts in parallel.
For example, in Figure~\ref{fig_ll_example}, we use $\leftrightarrow_\texttt{$\downarrow$-ass}(2,1):=(\texttt{COR},\texttt{x})$ and $\leftrightarrow_\texttt{$\downarrow$-ass}(2,2):=(\texttt{COR},\texttt{y})$, thereby assigning the computation of MDA parts on layer $2$ in both dimensions
to ASM's \texttt{COR} layer in the de-composition phase,
which causes processing the parts in parallel in our generated code.
For multi-layered core architectures, the parameters particularly determine the thread layer to be used for the parallel computation (e.g., \texttt{block} or \texttt{thread} in CUDA).

Using these parameters, we are able to flexibly set data access patterns in our generated code.
In Figure~\ref{fig_ll_example}, we assign parts on layer $2$ to \texttt{COR} layers, which results in a so-called \emph{block access} pattern of cores:~we start $8*16$ threads, according to the $8*16$ core parts, and each thread processes a part of the input MDA representing a block of $32\times64$ MDA elements within the input data.
If we had assigned in the figure the first computation layer to ASM's \texttt{COR} layer (in the figure, this layer is assigned to ASM's \texttt{HM} layer), we would start $2*4$ threads and each thread would process MDA parts of size $(8*32)\times(16*64)$;~assigning the last MDH layer to \texttt{COR}s would result in $(2*8*32)\times(4*16*64)$ threads, each processing a singleton MDA~%
(a.k.a. \emph{strided access}).

The parameters also enable expressing so-called \emph{swizzle} access patterns~\cite{10.1145/3297858.3304059}.
For example, in CUDA, processing consecutive data elements in data dimension $1$ by threads that are consecutive in thread dimension $2$ (a.k.a \texttt{threadIdx.y} dimension in CUDA) can achieve higher performance due to the hardware design of fast memory resources in NVIDIA GPUs.
Such swizzle patterns can be easily expressed and auto-tuned in our approach;~for example, by interchanging in Figure~\ref{fig_ll_example} tags \texttt{(COR,x)} and \texttt{(COR,y)}.
For memory layers (such as \texttt{HM} and \texttt{L1}), the dimension tags \texttt{x} and \texttt{y} currently have no effect on our generated code, as the programming models we target at the moment (OpenMP, CUDA, and OpenCL) have no explicit notion of tiles.
However, this might change in the future when
targeting new kinds of programming models, e.g., for upcoming architectures.

\paragraph{Parameters \texttt{D3,R3} and \texttt{S3,S5}:}
Parameters \texttt{D3} and \texttt{R3} set for each BUF the memory region to be used, thereby determining \emph{where} data are read from or written to, respectively.
In the table, we use $\texttt{ib}\in[1,B^\texttt{IB}]_\IN$~to refer to a particular input BUF (e.g., \texttt{ib=1} to refer to the input matrix of matrix-vector multiplication, and \texttt{ib=2} to refer to the input vector), and $\texttt{ob}\in[1,B^\texttt{OB}]_\IN$ refers to an output BUF, correspondingly.
Parameter \texttt{D3} specifies the memory region to read from, and parameter \texttt{R3} the region to write to.
The set $\texttt{MR}:=[1,\texttt{NUM\_MEM\_LYRs}]_\IN$ denotes the ASM's memory regions.

Similarly to parameters \texttt{D3} and \texttt{R3}, parameters \texttt{S3} and \texttt{S5} set the memory regions for the input and output of scalar function $f$.

Exploiting fast memory resources of architectures is a fundamental optimization~\cite{10.1145/509593.509638,10.1007/978-3-662-44917-2_13,10.1145/3579990.3580024,bondhugula2020high},
particularly due to the performance gap between processors' cores and their memory systems~\cite{wilkes2001memory,9530719}.

\paragraph{Parameters \texttt{D4,R4} and \texttt{S4,S6}:}
These parameters set the memory layouts of BUFs, thereby determining \emph{how} data are accessed in memory;~for brevity in Table~\ref{tab_tps}, we denote the set of all BUF permutations $[1,D]_\IN\bij [1,D]_\IN$ (Definition~\ref{ll_buffers})
as $[1,\dotsc,D]_\mathcal{S}$ (symbol~$\mathcal{S}$ is taken from the notation of \emph{symmetric groups}~\cite{sagan2001symmetric}).
In the case of our matrix-vector multiplication example in Figure~\ref{fig_ll_example}, we use a standard memory layout for all matrices, which we express via the parameters by setting them to the identity function, e.g., $\sigma_{\texttt{$\downarrow$-mem}}^\texttt{<M>}(1,1):=id$ (Parameter~\texttt{D4}) for the matrix read by operator $\dplus^\texttt{(HM,x)}_1$.

An optimized memory layout is important to access data in a locality-aware and thus efficient manner.

\subsection{Examples}
\label{ll_examples_matmul_resnet}

Figures~\ref{tp_ll_tab_tvm_gpu}-\ref{tp_ll_tab_pluto_cpu} show how our low-level representation is used for expressing the (de/re)-compositions of concrete, state-of-the-art implementations.
For this, we use the popular example of matrix multiplication~(\texttt{MatMul}), on a real-world input size taken from the \texttt{ResNet-50}~\cite{DBLP:journals/corr/HeZRS15} deep learning neural network (training phase).

To challenge our formalism:
i) we express implementations generated and optimized according to notably different approaches:~scheduling approach TVM using its recent Ansor~\cite{258858} optimization engine which is specifically designed and optimized toward optimizing deep learning computations (e.g., \texttt{MatMul});~polyhedral compilers PPCG and Pluto with auto-tuned tile sizes;~%
ii) we consider optimizations for two fundamentally different kinds of architectures: NVIDIA Ampere GPU and Intel Skylake CPU.
We consider our study as challenging for our formalism, because it needs to express~--~in the same formal framework~--~the (de/re)-compositions of implementations generated and optimized according to notably different approaches (scheduling-based and polyhedral-based) and for significantly different kinds of architectures (GPU and CPU).
Experimental results for TVM, PPCG, and Pluto (including the \texttt{MatMul} study used in this section)
are presented and discussed in Section~\ref{ch:eval}, as the focus of this section is on analyzing and discussing the expressivity of our low-level representation, rather than on its performance potential~(which is often higher than that of TVM, PPCG, and Pluto, as we will see in Section~\ref{ch:eval}).

In Figures~\ref{tp_ll_tab_tvm_gpu}-\ref{tp_ll_tab_pluto_cpu}, we list our low-level representation's particular tuning parameter values for expressing the TVM- and PPCG/Pluto-generated implementations.
The parameters concisely describe the concrete (de/re)-composition strategies used by TVM, PPCG and Pluto for \texttt{MatMul} on GPU or CPU using the ResNet-50's input size. ~%
Inserting these tuning parameter values into our generic low-level expression in Figure~\ref{fig_gen_ll} results in the concrete formal representation of the (de/re)-composition strategies used by TVM, PPCG and Pluto (similarly as in~Figure~\ref{fig_ll_example}).

In the following, we describe the columns of the tables in Figures~\ref{tp_ll_tab_tvm_gpu}-\ref{tp_ll_tab_pluto_cpu}, each of which listing particular values of tuning parameters in Table~\ref{tab_tps}:~column~\texttt{0} lists values of tuning parameter~\texttt{0} in Table~\ref{tab_tps}, column~\texttt{D1} of tuning parameter~\texttt{D1}, etc.
As all four tables follow the same structure, we focus on describing the particular example table in Figure~\ref{tp_ll_tab_tvm_gpu} (example chosen arbitrarily), which shows the (de/re)-composition used in TVM's generated CUDA code for \texttt{MatMul} on NVIDIA Ampere GPU using input matrices of sizes $16\times2048$ and $2048\times1000$ taken from ResNet-50\footnote{
  For the interested reader, TVM's corresponding, Ansor-generated scheduling program is presented in our Appendix, Section~\ref{app_tvm_schedule_matmul_resnet}.
}.
Note that for clarity, we use in the figures domain-specific aliases, instead of numerical values, to better differentiate between different ASM layers and memory regions.
For example, we use in Figure~\ref{tp_ll_tab_tvm_gpu} as aliases
$\texttt{DEV}:=1$, $\texttt{SHR}:=2$, and $\texttt{REG}:=3$ to refer to CUDA's three memory layers (device memory layer \texttt{DEV}, shared memory layer \texttt{SHR}, and register memory layer \texttt{REG}), and we use $\texttt{DM}:=1$, $\texttt{SM}:=2$, and $\texttt{RM}:=3$ to refer to CUDA's memory regions device \texttt{DM}, shared \texttt{SM}, and register \texttt{RM};~aliases $\texttt{BLK}:=4$ and $\texttt{THR}:=5$ refer to CUDA's two core layers which are programmed via blocks
and threads
in CUDA.

We differentiate between memory layers and memory regions for the following reason:~for example, using tuning parameter~\texttt{0} in Table~\ref{tab_tps}, we partition input data hierarchically for each particular memory layer of the target architecture (sometime possibly into one part only, which is equivalent to not partitioning).
However, depending on the value of tuning parameter~\texttt{D3}, we do not necessarily copy the input's parts always into the corresponding memory regions (e.g., a part on \texttt{SHR} layer is not necessarily copied into shared memory \texttt{SM}), for example, when the architecture provides automatically managed memory regions (as caches in CPUs) or when only some parts of the input are accessed multiple times (e.g., the input vector in the case of matrix-vector multiplication, but not the input matrix), etc.

\paragraph*{Column~\texttt{0}}
The column lists the particular number of parts (a.k.a. \emph{tiles}) used in TVM's multi-layered, multi-dimensional partitioning strategy for \texttt{MatMul} on the ResNet-50's input matrices which have the sizes $(I,K)=16\times2048$ and $(K,J)=2048\times1000$.
We can observe from this column that the input MDA, which is initially of size $(I,J,K)=(16,1000,2048)$ for the ResNet-50's input matrices,
is partitioned into $(2*50*1)$-many parts (indicated by the first three rows in column $1$)~--~$2$ parts in the first dimension, $50$ parts in the second dimension, and $1$ part in the third dimension.
Each of these parts is then further partitioned into $(2*1*8)$-many parts (rows 4,5,6), and these parts are again partitioned into $(4*20*1)$-many further parts (rows 7,8,9), etc.

\paragraph*{Columns~\texttt{D1}, \texttt{S1}, \texttt{R1}}
These 3 columns describe the order in which parts are processed in the different phases:~de-composition (column~\texttt{D1}), scalar phase (column~\texttt{S1}), and re-composition (column~\texttt{R1}).
For example, we can observe from column~\texttt{R1} that TVM's generated CUDA code first starts to combine parts on layer $1$ in dimensions $1$,~$2$,~$3$ (indicated by $(1,1)$,~$(1,2)$,~$(1,3)$ in rows~$1$,~$2$,~$3$ of column~\texttt{R1});~afterward, the code combines parts on layer $3$ in dimensions $1$,~$2$,~$3$ (indicated by~$(2,1)$,~$(2,2)$,~$(2,3)$ in rows $7$,~$8$,~$9$ of column~\texttt{R1}), etc.
~%

Note that TVM uses the same order in the three phases (i.e., columns~\texttt{D1},~\texttt{S1},~\texttt{R1} coincide).
Most likely this is because in CUDA, iteration over memory tiles are programmed via \texttt{for}-loops such that columns~\texttt{D1},~\texttt{S1},~\texttt{R1} represent loop orders;~using the same order of loops in columns~\texttt{D1},~\texttt{S1},~\texttt{R1} thus allows TVM to generate the loop nests as fused for  the three different phases (rather than generating three individual nests), which usually achieves high performance in~CUDA.

Note further that the order of parts that are processed in parallel (columns~\texttt{D2}, \texttt{S2}, \texttt{R2} determine if parts are processed in parallel or not, as described in the next paragraph)
effects when results of blocks and threads are combined (a.k.a. \emph{parallel reduction}~\cite{harris2007optimizing}), e.g., early in the computation and thus often (but thereby often requiring less memory) or late and thus only once~(but at the cost of higher more memory consumption), etc.

\paragraph*{Columns~\texttt{D2}, \texttt{S2}, \texttt{R2}}
The columns determine how computations are assigned to the target architecture.
In our example in Figure~\ref{tp_ll_tab_tvm_gpu}, we have $(2*50*1)$-many parts in the MDA's first partitioning layer, and each of these parts is assigned to be computed by an individual CUDA block (\texttt{BLK}) in the de-composition phase (rows $1$-$3$ in column \texttt{D2}), i.e., TVM uses a so-called \emph{grid size} of \texttt{2,50,1} in its generated CUDA code for \texttt{MatMul}.
The $(4*20*1)$-many parts in the third partitioning layer (rows $7$-$9$) are processed by CUDA threads (\texttt{THR}), i.e., the CUDA \emph{block size} in the TVM-generated code is \texttt{4,20,1}.
All other parts, e.g, those belonging to the $(2*1*8)$-many parts in the second partitioning layer (rows $4$-$6$), are assigned to CUDA's memory layers (denoted as \texttt{DEV}, \texttt{SHR}, \texttt{REG} in Figure~\ref{tp_ll_tab_tvm_gpu}) and thus
processed sequentially, via \texttt{for}-loops.

\paragraph*{Columns~\texttt{D3},~\texttt{S3},~\texttt{S5},~\texttt{R3}}
While column~\texttt{0} shows the multi-layered, multi-dimensional partitioning strategy used in TVM's CUDA code (according to the CUDA model's multi-layered memory and core hierarchies, shown in Example~\ref{example_asm}), column~\texttt{0} does not indicate how CUDA's fast memory regions are exploited in the TVM-generated CUDA code for \texttt{MatMul}~--~column~\texttt{0} only describes TVM's partitioning of the input/output computations such that
parts of the input and output data can potentially fit into fast memory resources.

The actual mapping of parts to memory regions is done via columns~\texttt{D3} (memory regions to be used for input data), columns~\texttt{S3} and~\texttt{S5} (memory regions to be used for the scalar computations), and column~\texttt{R3} (memory regions to be used for storing the computed output data).
For example, column~\texttt{D3} indicates that in TVM's CUDA code for \texttt{MatMul}, parts of the $A$ and $B$ input matrices are stored in CUDA's fast shared memory \texttt{SM} (column \texttt{D3}, rows $4$-$5$ and $10$-$15$), and column~\texttt{R3} indicates that each thread computes its results within CUDA's faster register memory \texttt{RM} (column \texttt{R3}, rows $4$-$6$ and $10$-$15$).

Our flexibility of separating the tiling strategy (Parameter~\texttt{0}) from the actual usage of memory regions (columns~\texttt{D3},~\texttt{S3},~\texttt{S5},~\texttt{R3}) allows us, for example, to store parts belonging to one input buffer into fast memory resources (e.g., the input vector of matrix-vector multiplication, whose values are accessed multiple times), but not parts of other buffers (e.g., the input matrix of matrix-vector multiplication, whose values are accessed only once) or only subparts of buffers, etc.

Note that in the case of Figure~\ref{tp_ll_tab_pluto_cpu} which shows Pluto's (de/re)-composition for OpenMP code, the memory tags in columns~\texttt{D3},~\texttt{S3},~\texttt{S5},~\texttt{R3} have no effect on the generated code:~OpenMP relies on its target architecture's implicit memory system (e.g., CPU caches), rather than exposing the memory hierarchy explicitly to the programmer.
Consequently, the memory tags are ignored by our OpenMP code generator and only emphasize the implementer's intention, e.g., that in Figure~\ref{tp_ll_tab_pluto_cpu}, each of the $(2*962*218)$-many tiles in the MDA's third partitioning layer are intended by the implementer to be processed in \texttt{L2} memory (rows $7$-$9$ in columns \texttt{D3} and~\texttt{R3}), even though this decision is eventually made by the automatic cache engine of the~CPU.

\paragraph*{Columns~\texttt{D4},~\texttt{S4},~\texttt{S6},~\texttt{R4}}
These columns set the memory layout to be used for memory allocations in the CUDA code.
TVM choses in all cases CUDA's standard transpositions layout (indicated by \texttt{[1,2]} which is called \emph{row-major} layout, instead of \texttt{[2,1]} which is known as \emph{column-major} layout).
Since the same layout and memory region is used on consecutive layers, the same memory allocation is re-used in the CUDA code.
For example, parameters \texttt{D3} and \texttt{D4} contain the same values in rows $4$-$5$ and $10$-$15$, and thus only one memory buffer is allocated in shared memory for input buffer \texttt{A};~the buffer is accessed in the computations of all \texttt{SHR} and \texttt{REG} tiles, as well as \texttt{DEV} tiles in dimensions $x$ and $z$.
Similarly, only one buffer is allocated in register memory for computing the results of \texttt{SHR}, \texttt{REG}, and \texttt{DEV} tiles, because the rows $4$-$6$ and $10$-$15$ in columns \texttt{R3} and \texttt{R4} coincide.

\vspace*{20px}

\begin{figure}[hbpt]
    \centering
    \includegraphics[scale=0.7]{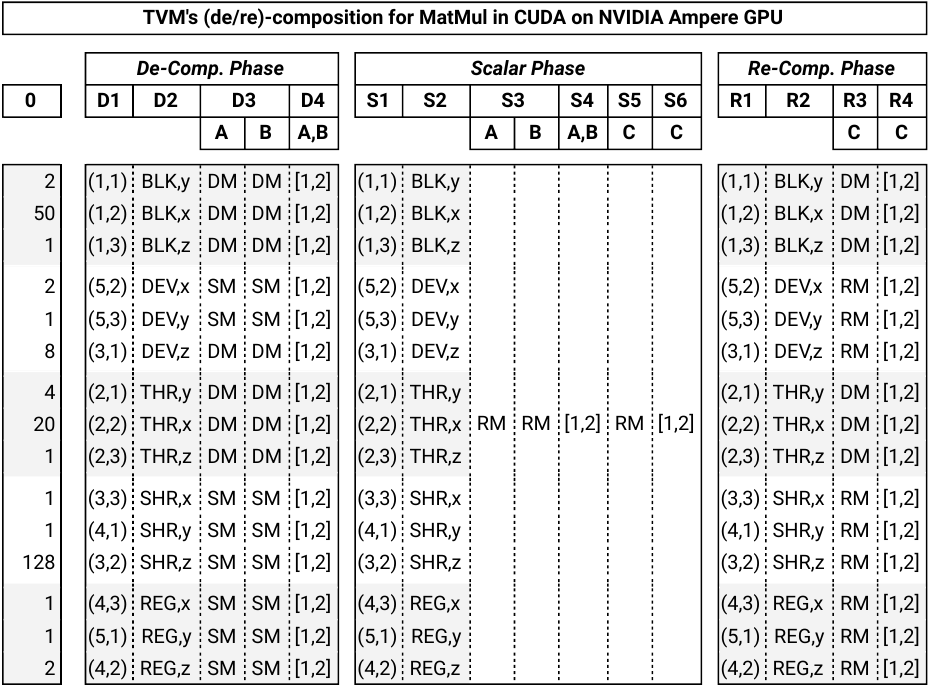}
    \caption{TVM's (de/re)-composition for \texttt{MatMul} in CUDA on GPU expressed in our low-level representation
    }
    \label{tp_ll_tab_tvm_gpu}
\end{figure}
\begin{figure}[hbpt]
    \centering
    \includegraphics[scale=0.7]{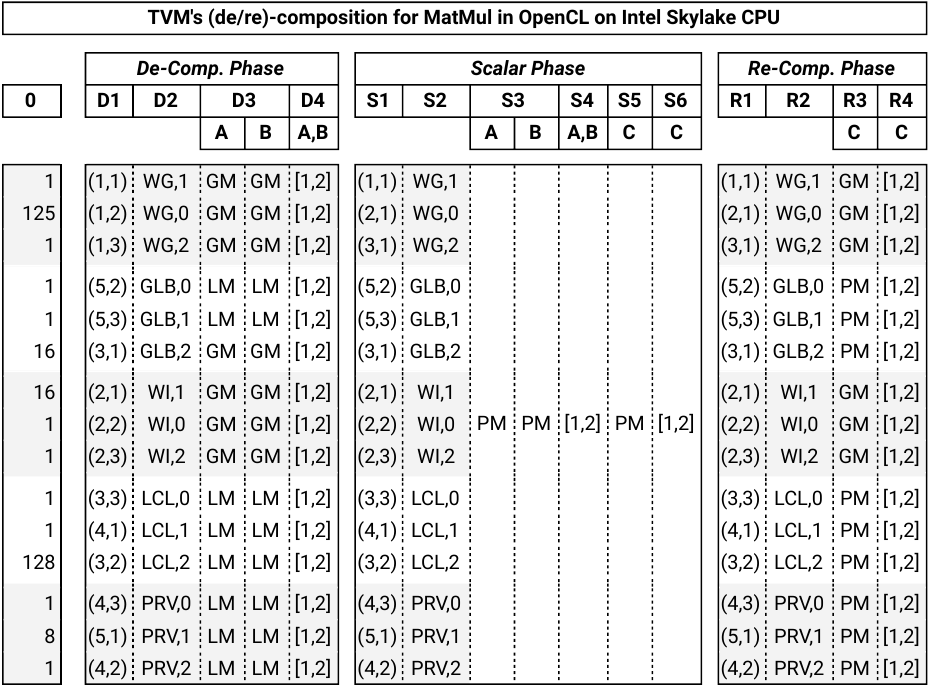}
    \caption{TVM's (de/re)-composition for \texttt{MatMul} in OpenCL on CPU expressed in our low-level representation
    }
    \label{tp_ll_tab_tvm_cpu}
\end{figure}

\begin{figure}[hbpt]
    \centering
    \includegraphics[scale=0.7]{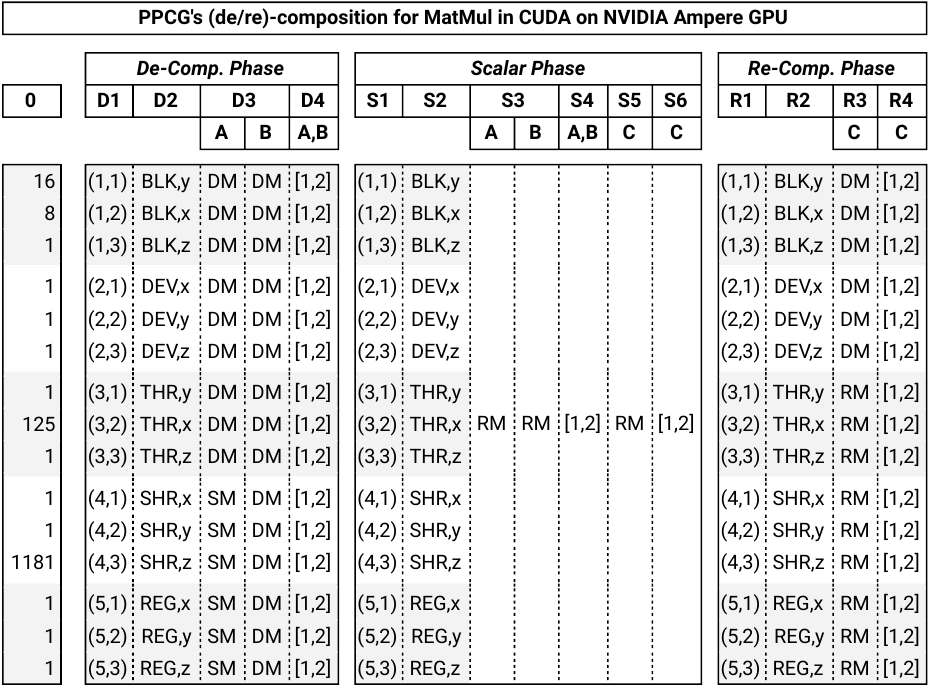}
    \caption{PPCG's (de/re)-composition for \texttt{MatMul} in CUDA on GPU expressed in our low-level representation
    }
    \label{tp_ll_tab_ppcg_gpu}
\end{figure}
\begin{figure}[hbpt]
    \centering
    \includegraphics[scale=0.7]{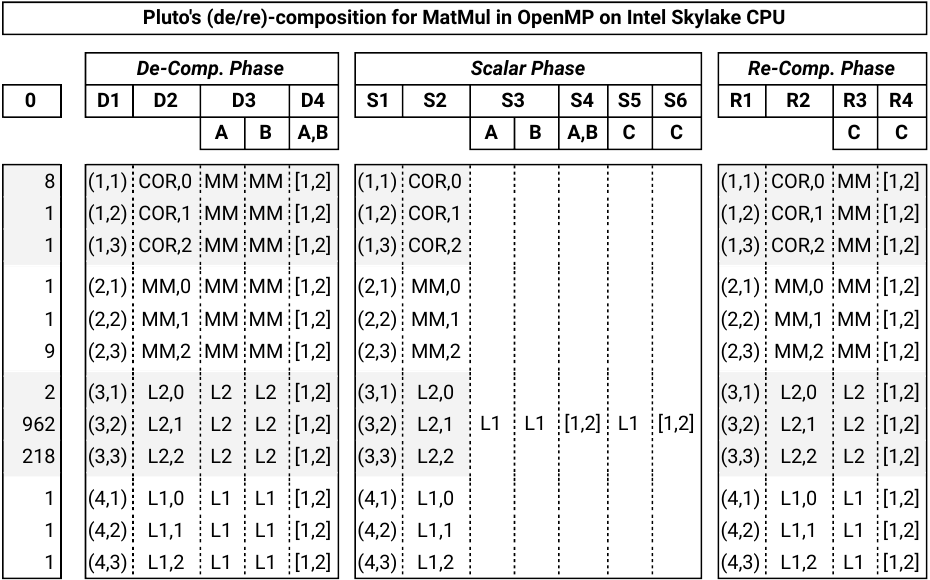}
    \caption{Pluto's (de/re)-composition for \texttt{MatMul} in OpenMP on CPU expressed in our low-level representation
    }
    \label{tp_ll_tab_pluto_cpu}
\end{figure}

\newpage
\phantom{xxx}
\newpage

\section{Lowering: From High Level to Low Level}
\label{ch:lowering}

We have designed our formalism such that an expression in our high-level representation (such as in Figure~\ref{fig:intro_example})
can be \emph{systematically lowered} to an expression in our low-level representation (as in Figure~\ref{fig_ll_example}).
We confirm this by parameterizing the generic high-level expression in Figure~\ref{fig_generic_hl}~--~step-by-step~--~in the tuning parameters listed in Table~\ref{tab_tps}, in a formally sound manner, which results exactly in the generic low-level expression in Figure~\ref{fig_gen_ll}.

Note that the tuning parameters in Table~\ref{tab_tps} can also be interpreted as parameters of the lowering process (instead of the low-level representation).
This is because in practice, our lowering process takes as input a particular configuration of the tuning parameter in Table~\ref{tab_tps} (automatically chosen via auto-tuning), such that it lowers a particular instance in our high-level representation (i.e., for a concrete choice of: scalar function, combine operator, etc) straight to a particular instance in our low-level representation (instead of lowering first to the generic low-level instance in Figure~\ref{fig_gen_ll} and then inserting tuning parameters in this generic instance).

\begin{proof}[Parameter~\texttt{0}]
Let ${^\downarrow}\MDA$ be the input MDA.
Let further be ${^\downarrow}\MDA_f$ the $L$-layered, $D$-dimensional, $P$-partitioned low-level MDA (according to Definition~\ref{def_ll_mda}), for
\[
  P :=
( \
  \underbrace{(\underbrace{\#\texttt{PRT}(1,1)}_\text{Dimension 1},\dotsc,\underbrace{\#\texttt{PRT}(1,D)}_\text{Dimension $D$})}_\text{Layer $1$}
    \ \ ,\,\dotsc\,, \ \
  \underbrace{(\underbrace{\#\texttt{PRT}(L,1)}_\text{Dimension 1},\dotsc,\underbrace{\#\texttt{PRT}(L,D)}_\text{Dimension $D$})}_\text{Layer $L$}
\ )
\]
where \#\texttt{PRT} denotes the number of partitions (Parameter~\texttt{0} in Table~\ref{tab_tps}), which is defined as:
\[
{^\downarrow}\MDA \ \ =: \ \ \ \
\underbrace{\underbrace{\concat{1}{p^1_1\in P^1_1}}_\text{Dimension $1$}\dotsc\underbrace{\concat{D}{p^1_D\in P^1_D}}_\text{Dimension $D$}}_\text{Layer $1$}
\ \ \ \ \dotsc \ \ \ \
\underbrace{\underbrace{\concat{1}{p^L_1\in P^L_1}}_\text{Dimension $1$}\dotsc\underbrace{\concat{D}{p^L_D\in P^L_D}}_\text{Dimension $D$}}_\text{Layer $L$}
\ \ \ \
{^\downarrow}\MDA_f^{<
\, p^1_1,\dotsc,p^1_D \ | \ \dotsc \ | \ p^L_1,\dotsc,p^L_D \,
>}
\]

Applying $L*D$ times the homomorphic property (Definition~\ref{def_md_hom}), we get:
\begin{align*}
&
{^\uparrow}\MDA \ \ := \ \
\texttt{md\_hom}( \ f \ , \ (\circledast_1,\dotsc, \circledast_D) \ )( \, {^\downarrow}\MDA \, )
\ = \\[7px]
& \hspace{60px}
\underbrace{\underbrace{\underset{p^1_1\in\#\texttt{PRT}(1,1)}{\circledast_1}}_\text{Dimension $1$}\dotsc\underbrace{\underset{p^1_D\in\#\texttt{PRT}(1,D)}{\circledast_D}}_\text{Dimension $D$}}_\text{Layer $1$}
\ \ \ \dotsc \ \ \
\underbrace{\underbrace{\underset{p^L_1\in\#\texttt{PRT}(L,1)}{\circledast_1}}_\text{Dimension $1$}\dotsc\underbrace{\underset{p^L_D\in\#\texttt{PRT}(L,D)}{\circledast_D}}_\text{Dimension $D$}}_\text{Layer $L$} \\[0px]
& \hspace*{150px}
\ \
\texttt{md\_hom}( \ f \ , \ (\circledast_1,\dotsc, \circledast_D) \ )( \
{^\downarrow}\MDA_f^{
<\, p^1_1,\dotsc,p^1_D \ | \ \dotsc \ | \ p^L_1,\dotsc,p^L_D \,>
}
\ )
\end{align*}
Since each part ${^\downarrow}\MDA_f^{<\, p^1_1,\dotsc,p^1_D \ | \ \dotsc \ | \ p^L_1,\dotsc,p^L_D \,>}$ contains a single scalar value only (according to the algorithmic constraint of Parameter~1, discussed in Section~\ref{ch_gen_ll_exp}), it holds
\[
  \texttt{md\_hom}( \ f \ , \ (\circledast_1,\dotsc, \circledast_D) \ )( \ {^\downarrow}\MDA_f^{<\, p^1_1,\dotsc,p^1_D \ | \ \dotsc \ | \ p^L_1,\dotsc,p^L_D \,>}\ ) = {^\uparrow}\MDA_f^{<\, p^1_1,\dotsc,p^1_D \ | \ \dotsc \ | \ p^L_1,\dotsc,p^L_D \,>}
\]
for
\[
    {^\uparrow}\MDA_f^{<\, p^1_1,\dotsc,p^1_D \ | \ \dotsc \ | \ p^L_1,\dotsc,p^L_D \,>} := \vec{f}( \ {^\downarrow}\MDA_f^{<\, p^1_1,\dotsc,p^1_D \ | \ \dotsc \ | \ p^L_1,\dotsc,p^L_D \,>} \ )
\]
and $\vec{f}$ defined as in Definition~\ref{def_md_hom}.
\end{proof}

\begin{proof}[Parameter~\texttt{D1}]
Parameter~\texttt{D1} reorders concatenation operators $\dplus_1,\dotsc,\dplus_D$ (Example~\ref{def:mda_concat}).
We prove our assumption w.l.o.g. for the case $D=2$; the general case $D\in\IN$ follows analogously.

Let $\dplus_{d_1}\in{\CO}^{<id\,|\,T\,|\,D\,|\,d_1>}$ and $\dplus_{d_2}\in{\CO}^{<id\,|\,T\,|\,D\,|\,d_2>}$ be two arbitrary concatenation operators that coincide in meta-parameters $T$ and $D$, but may differ in their operating dimensions $d_1$ and $d_2$.
We have to show
\[
    (a_1 \ \dplus_{d_1} \ a_2) \ \dplus_{d_2} \ (a_3 \ \dplus_{d_1} \ a_4)
    \ \ = \ \
    (a_1 \ \dplus_{d_2} \ a_3) \ \dplus_{d_1} \ (a_2 \ \dplus_{d_2} \ a_4)
\]
which follows from the definition of the concatenation operator $\dplus$ in Example~\ref{def:mda_concat}.
\end{proof}

\begin{proof}[Parameters~\texttt{D2},~\texttt{S2},~\texttt{R2}]
    These parameters replaces combine operators (Definition~\ref{def_combine_op}) by low-level combine operators (Definition~\ref{def_ll_comb}), which has no effect on semantics.
\end{proof}

\begin{proof}[Parameters~\texttt{D3},~\texttt{S3},~\texttt{S5},~\texttt{R3}]
    These parameters set the memory tags of low-level BUFs (Definition~\ref{ll_buffers}), which have no effect on semantics.
\end{proof}

\begin{proof}[Parameters~\texttt{D4},~\texttt{S4},~\texttt{S6},~\texttt{R4}]
    The parameters change the memory layout of low-level BUFs (Definition~\ref{ll_buffers}), which does not affect extensional equality.
\end{proof}

\begin{proof}[Parameters~\texttt{S1}]
    This parameter sets the order in which function $f$ is applied to parts, which is trivially sound for any order.
\end{proof}

\begin{proof}[Parameters~\texttt{R1}]
Similarly to parameter~\texttt{D1}, parameter~\texttt{R1} reorders combine operators $\co{1},\dotsc,\co{D}$, but
the operators
are not restricted to be concatenation.
We prove our assumption by exploiting the MDH property (Definition~\ref{def_mdh}) together with the proof of parameter~\texttt{D1}, as follows:
\begin{align*}
    & &                & & (  a_1   && \co{d_1}     &&  a_2  )  && \co{d_2}    &&   (  a_3  && \co{d_1}     &&  a_4  ) &&  \\
    &=&  \mdh(\dotsc)  &(& (  a_1   && \dplus_{d_1} &&  a_2  )  && \dplus_{d_2}&&   (  a_3  && \dplus_{d_1} &&  a_4  ) &&) \\
    &=&  \mdh(\dotsc)  &(& (  a_1   && \dplus_{d_2} &&  a_3  )  && \dplus_{d_1}&&   (  a_2  && \dplus_{d_2} &&  a_4  ) &&) \\
    &=&                & & (  a_1   && \co{d_2}     &&  a_3  )  && \co{d_1}    &&   (  a_2  && \co{d_2}     &&  a_4  ) &&  \checkmark
\end{align*}
\end{proof}

\section{Experimental Results}
\label{ch:eval}

We experimentally evaluate our approach by comparing it to popular representatives of four important classes:~%

\begin{enumerate}
\setlength\itemsep{0.3em}
    \item \emph{Scheduling Approach}: TVM~\cite{222575} which generates GPU and CPU code from programs expressed in TVM's own high-level program representation;

    \item \emph{Polyhedral Compilers}: PPCG~\cite{10.1145/2400682.2400713} for GPUs\footnote{
        We cannot compare to polyhedral compiler TC~\cite{10.1145/3355606} which is optimized toward deep learning computations on GPUs, because
        TC is not under active development anymore and thus is not working for newer CUDA architectures~\cite{tc_git}.
        \citet{8891668} show that our approach~--~already in its proof-of-concept version~--~achieves higher performance than TC for popular computations on real-world data sets.
    } and Pluto~\cite{bondhugula2008pluto} for CPUs, which automatically generate executable program code in CUDA (PPCG) or OpenMP (Pluto) from straightforward, unoptimized C programs;

    \item \emph{Functional Approach}: Lift~\cite{10.1145/2784731.2784754} which generates OpenCL code from a Lift-specific, functional program representation;

    \item \emph{Domain-Specific Libraries}: NVIDIA cuBLAS~\cite{cublas} and NVIDIA cuDNN~\cite{cudnn}, as well as Intel oneMKL~\cite{onemkl} and Intel oneDNN~\cite{onednn}, which offer the user easy-to-use, domain-specific building blocks for programming.
    The libraries internally rely on pre-implemented assembly code that is optimized by experts for their target application domains:~linear algebra (cuBLAS and oneMKL) or convolutions (cuDNN and oneDNN), respectively.
    To make comparison against the libraries challenging for us, we compare to all routines provided by the libraries.
    For example, the cuBLAS library offers three, semantically equal but differently optimized routines for computing \texttt{MatMul}:~\texttt{cublasSgemm} (the default \texttt{MatMul} implementation in cuBLAS), \texttt{cublasGemmEx} which is part of the \texttt{cuBLASEx} extension of cuBLAS~\cite{cublas_ex}, and the most recent \texttt{cublasLtMatmul} which is part of the \texttt{cuBLASLt} extension~\cite{cublas_lt};~each of these three routines may perform differently on different problem sizes (NVIDIA usually recommends to naively test which routine performs best for the particular target problem).
    To make comparison further challenging for us, we exhaustively test for each routine all of its so-called \texttt{cublasGemmAlgo\_t} variants, and report the routine's runtime for the best performing variant.
    In the case of oneMKL, we compare also to its \emph{JIT engine}~\cite{mkl_jit} which is specifically designed and optimized for small problem sizes.
    We also compare to library \emph{EKR}~\cite{hentschel2008krebsregister} which computes data mining example \texttt{PRL} (Figure~\ref{fig_hl_examples}) on CPUs~--~the library is implemented in the Java programming language and parallelized via \emph{Java Threads}, and the library is used in practice by the \emph{Epidemiological Cancer Registry} in North Rhine-Westphalia (Germany) which is the currently largest cancer registry in Europe.
\end{enumerate}

\vspace*{5px}

\noindent
We compare to the approaches experimentally in terms of:~%
\begin{itemize}
    \setlength\itemsep{0.3em}
    \item[i)] \emph{Performance}: via a runtime comparison of our generated code against code that is generated according to the related approaches;

    \item[ii)]\emph{Portability}: based on the \emph{Pennycook Metric}~\cite{PENNYCOOK2019947} which mathematically defines portability\footnote{
        Pennycook's metric is actually called \emph{Performance Portability (PP)}. Since performance portability particularly includes functional portability, we refer to Pennycook's PP also more generally as \emph{Portability} only.
    } as:
    \[
        \Phi(a,p,H) = \begin{cases}
            \frac{|H|}{\sum_{i\in H}\frac{1}{e_i(a,p)}} & \text{if $i$ is supported, $\forall i\in H$} \\
            0 & \text{otherwise}
        \end{cases}
    \]
    In words: "for a given set of platforms~$H$, the \emph{performance portability}~$\Phi$ of an application~$a$ solving problem~$p$ is defined as~$\Phi(a,p,H)$, where~$e_i(a,p)$ is the performance efficiency (i.e. a ratio of observed performance relative to some proven, achievable level of performance) of application~$a$ solving problem~$p$ on platform~$i$;~value $\Phi(a,p,H) $ is $0$, if any platform in $H$ is unsupported by $a$ running $p$."~\cite{PENNYCOOK2019947}.
    Consequently, Pennycook defines portability as
    a real value in the interval $[0,1]_\IR$ such that a value close to $1$ indicates \emph{high} portability and a value close to $0$ indicates \emph{low} portability.
    Here, platforms $H$ represents a set of devices (CPUs, GPUs, $\dotsc$), an application $a$ is in our context a framework (such as TVM, a polyhedral compiler, or our approach), problems $p$ are our case studies, and $e_i(a,p)$ is computed as the
    runtime $a^\text{best}_{p,i}$ of the application that achieves the best observed runtime for problem $p$ on platform $i$, divided by the
    runtime of application $a$ for problem $p$ running on platform $i$.

    \item[iii)] \emph{Productivity}: by intuitively arguing that our approach achieves the same/lower/higher productivity as the related approaches, using the representative example computation \emph{Matrix-Vector Multiplication (\texttt{MatVec})} (Figure~\ref{fig:intro_example}).
    Classical code metrics, such as \emph{Lines of Code (LOC)}, COCOMO~\cite{cocomo}, McCabe's cyclomatic complexity~\cite{1702388}, and Halstead development effort~\cite{halstead1977elements} are not meaningful for comparing the short and concise
    programs in high-level languages
    as proposed by
    the related work as well as our approach.

\end{itemize}

\vspace*{3px}

In the following, after discussing our application case studies, experimental setup, auto-tuning system, and code generator, we compare our approach to each of the four above mentioned classes of approaches (1)-(4) in Sections~\ref{ch:eval:scheduling}-\ref{ch:eval:domain}.

\subsection*{Application Case Studies}

We use for experiments in this section popular example computations from Figure~\ref{fig_hl_examples} that belong to different classes of computations:

\begin{itemize}
\setlength\itemsep{0.5em}
    \item {Linear Algebra Subroutines (BLAS)}: \emph{Matrix Multiplication} (\texttt{MatMul}) and \emph{Matrix-Vector Multiplication} (\texttt{MatVec});

    \item {Stencil Computations}: \emph{Jacobi Computation} (\texttt{Jacobi3D}) and \emph{Gaussian Convolution} (\texttt{Conv2D}) which differ from linear algebra routines by accessing neighboring elements in their input data;

    \item {Quantum Chemistry}: \emph{Coupled Cluster (\texttt{CCSD(T)})}
    computations which differ from linear algebra routines and stencil computations by accessing their high-dimensional input data in complex, transposed fashions;

    \item {Data Mining}: \emph{Probabilistic Record Linkage} (\texttt{PRL}) which differs from the previous computations by relying on a PRL-specific combine operator and scalar function (instead of straightforward additions or multiplications as the previous computations);

    \item {Deep Learning}: the most time-intensive computations within the popular neural networks \texttt{ResNet-50}~\cite{DBLP:journals/corr/HeZRS15}, \texttt{VGG-16}~\cite{https://doi.org/10.48550/arxiv.1409.1556}, and~\texttt{MobileNet}~\cite{DBLP:journals/corr/HowardZCKWWAA17}, according to their TensorFlow implementations~\cite{resnet_git,vgg_git,mobilenet_git}.
    Deep learning computations rely on advanced variants of linear algebra routines and stencil computations, e.g., \texttt{MCC} and \texttt{MCC\_Capsule} for computing convolution-like stencils, instead of the classical \texttt{Conv2D} variant of convolution (Figure~\ref{fig_hl_examples})~--~the deep learning variants are considered as significantly more challenging to optimize than their classical variants~\cite{10.1145/3317550.3321441}.
\end{itemize}
We use for experiments this subset of computations from Figure~\ref{fig_hl_examples} to make experimenting challenging for us:~%
the computations differ in major characteristics (as discussed in Section~\ref{ch_hl_examples}), e.g., accessing neighboring elements in their input data (as stencil computations) or not (as linear algebra routines), thus usually requiring fundamentally different kinds of optimizations.
Consequently, we consider it challenging for our approach to achieve high performance for our studies, because our approach relies on a generalized optimization process (discussed in Section~\ref{ch:lowering})
that uniformly applies to any kind of data-parallel computation and also parallel architecture.
In contrast, the optimization processes of the related approaches are often specially designed and tied to a particular application class and often also architecture.
For example, NVIDIA cuBLAS and Intel oneMKL are highly optimized specifically for linear algebra routines on either GPU or CPU, respectively, and TVM is specifically designed and optimized for deep learning computations.

To make experimenting further challenging for us, we consider data sizes and characteristics either taken from real-world computations (e.g., from the \emph{TCCG} benchmark suite~\cite{tccg2016a} for quantum chemistry computations) or sizes that are preferable for our competitors, e.g., powers of two for which
many competitors are highly optimized, e.g., vendor libraries.
For the deep learning case studies, we use data characteristics (sizes, strides, padding strategy, image/filter formats, etc.) taken from the particular implementations of the neural networks when computing the popular \emph{ImageNet}~\cite{NIPS2012_c399862d} data set (the particular characteristics are listed in our Appendix, Section~\ref{app_data_characteristics}, for the interested reader).
For all experiments, we use single precision floating point numbers (a.k.a. \texttt{float} or \texttt{fp32}), as such precision is the default in TensorFlow and many other frameworks.

\subsection*{Experimental Setup}

We run our experiments on a cluster containing two different kinds of GPUs and CPUs:
\begin{itemize}
\setlength\itemsep{0.2em}
    \item \texttt{NVIDIA Ampere GPU A100-PCIE-40GB}
    \item \texttt{NVIDIA Volta GPU V100-SXM2-16GB}
    \item \texttt{Intel Xeon Broadwell CPU E5-2683 v4 @ 2.10GHz}
    \item \texttt{Intel Xeon Skylake CPU Gold-6140 @ 2.30GHz}
\end{itemize}

\noindent
We represent the two CUDA GPUs in our formalism
using model $\texttt{ASM}_\texttt{CUDA+WRP}$ (Example~\ref{example_asm}).
We rely on model $\texttt{ASM}_\texttt{CUDA+WRP}$, rather than the CUDA's standard model $\texttt{ASM}_\texttt{CUDA}$ (also in Example~\ref{example_asm}), to exploit CUDA's (implicit) warp level for a fair comparison to the related approaches:~warp-level optimizations are exploited by the related approaches (such as TVM), e.g., for \emph{shuffle operations}~\cite{cuda_shuffles} which combine the results of threads within a warp with high performance.
To fairly compare our approach to TVM and PPCG, we avoid exploiting warps' \emph{tensor core intrinsics}~\cite{cuda_tensor_cores}, in all experiments, which compute the multiplication of small matrices with high performance~\cite{10.1145/3575693.3576933}, because these intrinsics are not used in the TVM- and PPCG-generated CUDA code.
For the two CPUs, we rely on model $\texttt{ASM}_\texttt{OpenCL}$ (Example~\ref{example_asm}) for generating OpenCL code.
The same as our approach, TVM also generates OpenCL code for CPUs; Pluto relies on the OpenMP approach to target CPUs.

For all experiments, we use the currently newest versions of frameworks, libraries, and compilers,
as follows.
We compile our generated GPU code using library \texttt{CUDA\;NVRTC}~\cite{nvrtc} from \texttt{CUDA\;Toolkit\;11.4}, and we use \texttt{Intel's OpenCL runtime} version \texttt{18.1.0.0920} for compiling CPU code.
For both compilers, we do not set any flags so that they run in their default modes.
For the related approaches, we use the following versions of frameworks, libraries, and compilers:

\newpage

\begin{itemize}
\setlength\itemsep{0.3em}
    \item \texttt{TVM}~\cite{tvm_git} version \texttt{0.8.0} which also uses our system's \texttt{CUDA\;Toolkit} version \texttt{11.4} for GPU computations and \texttt{Intel's runtime} version \texttt{18.1.0.0920} for computations on~CPU;

    \item \texttt{PPCG}~\cite{ppcg_git} version \texttt{0.08.04} using flag \texttt{--target=cuda} for generating CUDA code, rather than OpenCL, as CUDA is usually better performing than OpenCL on NVIDIA GPUs, and we use flag \texttt{--sizes} followed by auto-tuned tile sizes~--~we rely on the \emph{Auto-Tuning Framework (ATF)}~\cite{10.1145/3427093} for choosing optimized tile size values (as we discuss in the next subsection);

    \item \texttt{Pluto}~\cite{pluto_git} commit \texttt{12e075a} using flag \texttt{--parallel} for generating OpenMP-parallelized C code (rather than sequential C), as well as flag \texttt{--tile} to use ATF-tuned tile sizes for Pluto;~%
    the Pluto-generated OpenMP code is compiled via  \texttt{Intel's\;icx} compiler version \texttt{2022.0.0} using the Pluto-recommended optimization flags \texttt{-O3\;-qopenmp};

    \item \texttt{NVIDIA\;cuBLAS}~\cite{cublas} from \texttt{CUDA\;Toolkit\;11.4}, using the NVIDIA-recommended compiler flags \texttt{-fast\;-O3\;-DNDEBUG};

    \item \texttt{NVIDIA\;cuDNN}~\cite{cudnn} from \texttt{CUDA\;Toolkit\;11.4}, using the NVIDIA-recommended compiler flags \texttt{-fast\;-O3\;-DNDEBUG};

    \item \texttt{Intel\;oneMKL}~\cite{onemkl} compiled with \texttt{Intel's\;icpx} compiler version \texttt{2022.0.0}, using flags \texttt{-DMKL\_ILP64 -qmkl=parallel -L\$\{MKLROOT\}/lib/intel64 -liomp5 -lpthread -lm -ldl}, as recommended for \texttt{oneMKL} by Intel's \emph{Link Line Advisor} tool~\cite{link_line_advisor}, as well as standard flags \texttt{-O3\;-NDEBUG};

    \item \texttt{Intel\;oneDNN}~\cite{onednn} also compiled with \texttt{Intel's\;icpx} compiler version \texttt{2022.0.0}, using flags \texttt{-I\$\{DNNLROOT\}/include -L\$\{DNNLROOT\}/lib -ldnnl}, according to oneDNN's documentation, as well as standard flags \texttt{-03\;-NDEBUG};

    \item \texttt{EKR}~\cite{hentschel2008krebsregister} executed via \texttt{Java\;SE\;1.8.0\;Update\;281}.
\end{itemize}

We profile runtimes of CUDA and OpenCL programs
using the corresponding, event-based profiling APIs provided by CUDA and OpenCL.
For \texttt{Pluto} which generates OpenMP-annotated C code, we measure runtimes via system call \texttt{clock\_gettime}~\cite{c_profiling}.
In the case of C++ libraries \texttt{Intel\:oneMKL} and \texttt{Intel\:oneDNN}, we use the C++ \texttt{chrono} library~\cite{cpp_profiling} for profiling.
Libraries \texttt{NVIDIA\:cuBLAS} and \texttt{NVIDIA\:cuDNN} are also based on the CUDA programming model;~thus, we profile them also via CUDA events.
To measure the runtimes of the \texttt{EKR} Java library, we use Java function \texttt{System.currentTimeMillis()}.

All measurements of CUDA and OpenCL programs contain the pure program runtime only (a.k.a. \emph{kernel runtime}).
The runtime of \emph{host code}\footnote{
    \label{footnote_hostcode}
    \emph{Host code} is required in approaches CUDA and OpenCL for program execution:~it compiles the CUDA and OpenCL programs, performs data transfers between host and device, etc. We rely on the high-level library \emph{dOCAL}~\cite{rasch_docal,8644541} for host code programming in this work.
} is not included in the reported runtimes, as performance of host code is not relevant for this work and the same for all approaches.

In all experiments, we collect measurements until the 99\% confidence interval was within 5\% of our reported means, according to the guidelines for \emph{scientific benchmarking of parallel computing systems} by \citet{10.1145/2807591.2807644}.

\subsection*{Auto-Tuning}

The auto-tuning process of
our approach relies on the generic \emph{Auto Tuning Framework~(ATF)}~\cite{10.1145/3427093}.
The ATF framework has proven to be efficient for exploring
large search spaces of constrained tuning parameters (as our space introduced in Section~\ref{ch_gen_ll_exp}).
We use~ATF, out of the box, exactly as described by~\citet{10.1145/3427093}:~%
1)~we straightforwardly represent in ATF our search space (Table~\ref{tab_tps}) via \emph{tuning parameters} which express the parameters in the table and their constraints;
2)~we use ATF's pre-implemented cost functions for CUDA and OpenCL to measure the cost of our generated OpenCL and CUDA codes (in this paper, we consider as cost program's runtime, rather than its energy consumption or similar);
3)~we start the tuning process using ATF's default search technique (\emph{AUC bandit}~\cite{10.1145/2628071.2628092}).
ATF then fully automatically determines a well-performing tuning parameter configuration for the particular combination of a case study, architecture, and input/output characteristics (size, memory layout,~etc).

For scheduling approach \texttt{TVM}, we use its \emph{Ansor}~\cite{258858} optimization engine which is specifically designed and optimized toward generating optimized \texttt{TVM} schedules.
Polyhedral compilers \texttt{PPCG} and \texttt{Pluto} do not provide own auto-tuning systems;~thus, we use for them also ATF for auto-tuning, the same as for our approach.
For both compilers, we additionally also report their runtimes when relying on their internal heuristics, rather than on auto-tuning, \mbox{to fairly compare to them.}

To achieve the best possible performance results for \texttt{TVM}, \texttt{PPCG}, and \texttt{Pluto}, we auto-tune each of these frameworks individually, for each particular combination of case study, architecture, and input/output characteristics, the same as for our approach.
For example, we start for \texttt{TVM} one tuning run when auto-tuning case study \texttt{MatMul} for the \texttt{NVIDIA\:Ampere\:GPU} on one input size, and another, new tuning run for a new input size, etc.

Hand-optimized libraries \texttt{NVIDIA\:cuBLAS/cuDNN} and \texttt{Intel\:oneMKL/oneDNN} rely on heuristics provided by experts, rather than auto-tuning.
By relying on heuristics, the libraries avoid the time-intensive process of auto-tuning.
However, auto-tuning is well amortized in many application areas (e.g., deep learning), because the auto-tuned implementations are re-used in many program runs.
Moreover, auto-tuning avoids the complex and costly process of hand optimization by experts, and it often achieves higher performance than hand-optimized code (as we confirm later in our experiments), because well-performing optimizations are often not intuitive.

For a fair comparison,
we use for each tuning run uniformly the same tuning time of $12$h.
Even though for many computations well-performing tuning results could often be found in less than $12$h, for our approach as well as for other frameworks, we use such generous tuning time for all frameworks
to avoid auto-tuning issues in our reported results~--~analyzing, improving, and accelerating the auto-tuning process is beyond the scope of this work and intended for our future work (as also outlined in Section~\ref{ch:fw}).
In particular, TVM's Ansor optimizer was often able to find well performing optimizations in $6$h of tuning time or less.
This is because Ansor explores a small search space that is designed and optimized for deep learning computations~--~Ansor's space is a proper subset of our space, as our space aims to capture general optimizations that apply to arbitrary data-parallel computations.
However, the focus on deep learning causes Ansor to have difficulties with optimizing computations not taken from the deep learning area, \mbox{as we confirm in our experiments.}

To improve the auto-tuning efficiency for our implementations, we rely on a straightforward
cost model that shrinks our search space in Table~\ref{tab_tps} before starting our ATF-based auto-tuning process:~%
i)~we always use the same values for Parameters~\texttt{D1},~\texttt{S1},~\texttt{R1} as well as for Parameters~\texttt{D2},~\texttt{S2},~\texttt{R2}, thereby generating the same loop structure for all three phases (de-composition, scalar, and re-composition) such that the structures can be generated as a fused loop nest;~%
ii)~we restrict Parameters~\texttt{D2},~\texttt{S2},~\texttt{R2} to two values~--~one value that let threads process outer parts (a.k.a. \emph{blocked access} or \emph{outer parallelism}, respectively) and one to let threads process inner parts (\emph{strided access} or \emph{inner parallelism});~all other permutations are currently ignored for simplicity or because they have no effect on the generated code (e.g., permutations of Parameters~\texttt{D2},~\texttt{S2},~\texttt{R2} that only differ in dimension tags belonging to memory layers, as discussed in the previous section);~%
iii)~we restrict Parameters~\texttt{D3},~\texttt{S3},~\texttt{S5},~\texttt{R3} such that each parameter is invariant under different values of $d$ of its input pairs $(l,d)\in\MDHLVL$, i.e., we always copy full tiles in memory regions (and not a full tile of one input buffer and a half tile of another input buffer, which sometimes might achieve higher performance when memory is a limited resource).

\newpage

Our cost model is straightforward and might filter out configurations from our search space that achieve potentially higher performance than we report for our approach in Sections~\ref{ch:eval:scheduling}-\ref{ch:eval:domain}.
We aim to substantially improve our naive cost model in future work, based on \emph{operational semantics} for our low-level representation, to improve the auto-tuning quality and to reduce (or even avoid) tuning time.

\subsection*{Code Generation}

We provide an open source \emph{MDH compiler}~\cite{mdh_in_python} for generating executable program code from expressions in our high-level representation (as illustrated in Figure~\ref{contributions_overall}).
Our compiler takes as input the high-level representation of the target computation (Figure~\ref{fig_hl_examples}), in the form of a Python program,
and it fully automatically generates auto-tuned program code, based on the concepts and methodologies introduced and discussed in this paper and the ATF auto-tuning framework~\cite{10.1145/3427093}.

In our future work, we aim to integrate our code generation approach into the \emph{MLIR} compiler framework~\cite{9370308}, building on work-in-progress results~\cite{mlir_mdh_meeting}, thereby making our work better accessible to the community.
We consider approaches such as \emph{AnyDSL}~\cite{10.1145/3276489} and \emph{BuildIt}~\cite{9370333} as further, interesting frameworks in which our compiler could be implemented.

\subsection{Scheduling Approaches}
\label{ch:eval:scheduling}

\subsubsection*{Performance}

Figures~\ref{img_eval_la}-\ref{img_eval_dl_cpu} report the performance of the \texttt{TVM}-generated code, which is in CUDA for GPUs and in OpenCL for CPUs.
We observe that we usually achieve with our approach the high performance of \texttt{TVM} and often perform even better.
For example, in Figure~\ref{img_eval_dl_gpu}, we achieve a speedup $>2\times$ over \texttt{TVM} on \texttt{NVIDIA\:Ampere\:GPU} for matrix multiplications as used in the inference phase of the \texttt{ResNet-50} neural network~--~an actually favorable example for \texttt{TVM} which is designed and optimized toward deep learning computations executed on modern NVIDIA GPUs.
Our performance advantage over \texttt{TVM} is because we parallelize and optimize more efficiently
reduction-like computations~--~in the case of \texttt{MatMul} (Figure~\ref{fig_hl_examples}), its $3$rd-dimension (a.k.a. $k$-dimension).
The difficulties of TVM with reduction computations become particularly obvious when computing dot products (\texttt{Dot}) on GPUs (Figure~\ref{img_eval_la}):~the \texttt{Dot}'s main computation part is a reduction computation (via point-wise addition, see Figure~\ref{fig_hl_examples}), thus requiring reduction-focussed optimization, in particular when targeting the highly parallel architecture of GPUs:~%
in the case of \texttt{Dot} (Figure~\ref{img_eval_la}), our generated CUDA code exploits parallelization over CUDA blocks, whereas the Ansor-generated TVM code exploits parallelization over threads within in a single block only, because \texttt{TVM} currently cannot use blocks for parallelizing reduction computations~\cite{tvm_issue_sync_blocks}.
Furthermore, while \texttt{TVM}'s Ansor rigidly parallelizes outer dimensions~\cite{258858}, our ATF-based tuning process has auto-tuned our tuning parameters~\texttt{D2},~\texttt{S2},~\texttt{R2} in Table~\ref{tab_tps} to exploit parallelism for inner dimensions, which achieves higher performance for this particular \texttt{MatMul} example used in \texttt{ResNet-50}.
Also, for \texttt{MatMul}-like computations, Ansor always caches parts of the input in GPU's shared memory, and it computes these cached parts always in register memory.
In contrast, our caching strategy is auto-tunable (via parameters~\texttt{D3},~\texttt{S3}~\texttt{S5},~\texttt{R3} in Table~\ref{tab_tps}), and ATF has determined to not cache the input matrices into fast memory resources for the \texttt{MatMul} example in \texttt{ResNet-50}.
Surprisingly, Ansor does not exploit fast memory resources for Jacobi stencils (Figure~\ref{img_eval_stencils}), as required to achieve high performance for them:~our generated and auto-tuned CUDA kernel for Jacobi uses register memory for both inputs (image buffer and filter) when targeting \texttt{NVIDIA\:Ampere\:GPU} (small input size), thereby achieving a speedup over TVM+Ansor of $1.93\times$ for Jacobi.
Most likely, Ansor fails to foresee the potential of exploiting fast memory resources for Jacobi stencils, because Jacobi's index functions used for memory accesses (Figure~\ref{fig_hl_examples}) are injective.
For the \texttt{MatMul} example of \texttt{ResNet-50}'s training phase (Figure~\ref{img_eval_dl_gpu}), we achieve a speedup over \texttt{TVM} on \texttt{NVIDIA\:Ampere\:GPU} of $1.26\times$, because auto-tuning determined to store parts of input matrix $A$ as transposed into fast memory (via parameter~\texttt{D4} in Table~\ref{tab_tps}).
Storing parts of the input/output data as transposed is not considered by Ansor as optimization, perhaps because such optimization must be expressed in TVM's high-level language, rather than in its scheduling language~\cite{tvm_issue_layout}.
For \texttt{MatVec} on \texttt{NVIDIA\:Ampere\:GPU} (Figure~\ref{img_eval_la}), we achieve a speedup over \texttt{TVM} of $1.22\times$ for the small input size, by exploiting a so-called \emph{swizzle pattern}~\cite{10.1145/3297858.3304059}:~our ATF tuner has determined to assign threads that are consecutive in CUDA's \texttt{x}-dimension to the second MDA dimension (via parameters~\texttt{D2},~\texttt{S2},~\texttt{R2} in Table~\ref{tab_tps}), thereby accessing the input matrix in a GPU-efficient manner (a.k.a \emph{coalesced global memory accesses}~\cite{cuda-programming-guide}).
In contrast, for \texttt{MatVec} computations, Ansor assigns threads with consecutive \texttt{x}-ids always to the first data dimension, in a non-tunable manner, causing lower performance.

Our positive speedups over \texttt{TVM} on CPU are for the same reasons as discussed above for GPU.
For example, we achieve a speedup of $>3\times$ over \texttt{TVM} on \texttt{Intel\:Skylake\:CPU} for \texttt{MCC} (Figure~\ref{img_eval_dl_cpu}) as used in the training phase of the \texttt{MobileNet} neural network, because we exploit fast memory resources more efficiently than TVM:~our auto-tuning process has determined to use register memory for the \texttt{MCC}'s second input (the filter buffer \texttt{F}, see Table~\ref{fig_hl_examples}) and using no fast memory for the first input (image buffer \texttt{I}), whereas Ansor uses shared memory rigidly for both inputs of \texttt{MCC}.
Moreover, our auto-tuning process has determined to parallelize the inner dimensions of \texttt{MCC}, while Ansor always parallelizes outer dimensions.
We achieve the best speedup over \texttt{TVM} for \texttt{MCC} on an input size taken from \texttt{TVM}'s own tutorials~\cite{tvm_mcc_input_size} (Figure~\ref{img_eval_stencils}), rather than from neural networks (as in Figures~\ref{img_eval_dl_gpu} and~\ref{img_eval_dl_cpu}).
This is because \texttt{TVM}'s \texttt{MCC} size includes large reduction computations, which are not efficiently optimized by TVM (as discussed above).

The \texttt{TVM} compiler achieves higher performance than our approach for some examples in Figures~\ref{img_eval_la}-\ref{img_eval_dl_cpu}.
However, in most cases, this is for a technical reason only:~\texttt{TVM} uses the \texttt{NVCC} compiler for compiling CUDA code, whereas our proof-of-concept code generator currently relies on NVIDIA's \texttt{NVRTC} library which surprisingly generates less efficient CUDA assembly than \texttt{NVCC}.
In three cases, the higher performance of \texttt{TVM} over our approach is because our ATF auto-tuning framework was not able to find a better performing tuning configuration than \texttt{TVM}'s Ansor optimization engine during our $12$h tuning time;~the three cases are:~%
1)~\texttt{MCC}                   from \texttt{VGG-16}'s inference phase on \texttt{NVIDIA\:Ampere\:GPU} (Figure~\ref{img_eval_dl_gpu}),~%
2)~\texttt{MCC} (capsule variant) from \texttt{VGG-16}'s training phase on \texttt{NVIDIA\:Ampere\:GPU} (Figure~\ref{img_eval_dl_gpu}), and~%
3)~\texttt{MCC} (capsule variant) from \texttt{ResNet-50}'s training phase on \texttt{Intel\:Skylake\:CPU} (Figure~\ref{img_eval_dl_cpu}).
However, when we manually set the Ansor-found tuning configurations also for our approach (analogously as done in Section~\ref{ll_examples_matmul_resnet}), instead of using the ATF-found configurations, we achieve for these three cases exactly the same high performance as \texttt{TVM}+Ansor, i.e., the well-performing configurations are contained in our search space (Table~\ref{tab_tps}).
Most likely, Ansor was able to find this well-performing configuration within the $12$h tuning time, because it explores a significantly smaller search space that is particularly designed for deep learning computations.
To avoid such tuning issues in our approach, we aim to substantially improve our auto-tuning process in future work:~we plan to introduce an analytical cost model that assists (or even replaces) our auto-tuner, as we also outline in Section~\ref{ch:fw}.

Note that the \texttt{TVM} compiler crashes
for our data mining example \texttt{PRL}, because \texttt{TVM} has difficulties with computations relying on user-defined combine operators~\cite{tvm_issue_cos_3}.

\newpage

\subsubsection*{Portability}
Figure~\ref{img_eval_port_cpu_gpu} reports the portability of the \texttt{TVM} compiler.
Our portability measurements are based on the Pennycook metric where a value close to $1$ indicates high portability and a value close to $0$ indicates low portability, correspondingly.
We observe that except for the example of transposed matrix multiplication $\texttt{GEMM}^\texttt{T}$, we always achieve higher portability than \texttt{TVM}.
The higher portability of \texttt{TVM} for $\texttt{GEMM}^\texttt{T}$ is because \texttt{TVM} achieves for this example higher performance than our approach on \texttt{NVIDIA\:Volta\:GPU}.
However, the higher performance of \texttt{TVM} is only due to the fact that \texttt{TVM} uses NVIDIA's \texttt{NVCC} compiler for compiling CUDA code, while we currently rely on NVIDIA's \texttt{NVRTC} library which surprisingly generates less efficient CUDA assembly, as discussed above.

\subsubsection*{Productivity}

Listing~\ref{lst_matmul_tvm} shows how matrix-vector multiplication (\texttt{MatVec}) is implemented in TVM's high-level program representation which is embedded into the Python programming language.
In line~\ref{lst_matmul_tvm_1}, the input size $(I,K)\in\IN\times\IN$ of matrix $M\in T^{I\times K}$ (line~\ref{lst_matmul_tvm_2}) and vector $v\in T^K$ (line~\ref{lst_matmul_tvm_3}) are declared, in the form of function parameters;~the matrix and vector are named \texttt{M} and \texttt{v} and both are assumed to contain elements of scalar type $T=\texttt{float32}$ (floating point numbers).
Line~\ref{lst_matmul_tvm_4} defines a so-called \emph{reduction axis} in \texttt{TVM}, in which all values are combined in line~\ref{lst_matmul_tvm_7} via \texttt{te.sum} (addition).
The basic computation part of \texttt{MatVec}~--~multiplying matrix element \texttt{M[i,k]} with vector element \texttt{v[k]}~--~is also specified in line~\ref{lst_matmul_tvm_7}.

\begin{lstlisting}[
    language=C,
    morekeywords={__kernel__},
    mathescape=true,
    numbers=left,
    xleftmargin=1.5em,
    float=h!,
    frame=lines,
    caption={\texttt{TVM} program expressing Matrix-Vector Multiplication (\texttt{MatVec})},
    label={lst_matmul_tvm}
    ]
def MatVec(I, K):$\label{lst_matmul_tvm_1}$
    M = te.placeholder((I, K), name='M', dtype='float32')$\label{lst_matmul_tvm_2}$
    v = te.placeholder((K,), name='v', dtype='float32')$\label{lst_matmul_tvm_3}$

    k = te.reduce_axis((0, K), name='k')$\label{lst_matmul_tvm_4}$
    w = te.compute($\label{lst_matmul_tvm_5}$
        (I,),$\label{lst_matmul_tvm_6}$
        lambda i: te.sum(M[i, k] * v[k], axis=k)$\label{lst_matmul_tvm_7}$
    )
    return [M, v, w]$\label{lst_matmul_tvm_9}$
  \end{lstlisting}

While we consider the \texttt{MatVec} implementations of \texttt{TVM} (Listing~\ref{lst_matmul_tvm}) and our approach (Figure~\ref{fig:intro_example}) basically on the same level of abstraction, we consider our approach as more expressive in general.
This is because our approach supports multiple reduction dimensions that may rely on different combine operators, e.g., as required for expressing the \texttt{MBBS} example in Figure~\ref{fig_hl_examples}.
In contrast, \texttt{TVM} is struggling with different combine operators~--~adding support for multiple, different reduction dimensions is considered in the TVM community as a non-trivial extension of TVM~\cite{tvm_issue_cos_1,tvm_issue_cos_2}.
Also, we consider our approach as slightly less error-prone:~%
we automatically compute the expected sizes of matrix $M$ (as $I\times K$) and vector $v$ (as $K$), based on the user-defined input size $(I,K)$ in line~\ref{lst_matmul_tvm_1} and index functions $(i,k)\mapsto(i,k)$ for the matrix and $(i,k)\mapsto(k)$ for the vector in line~\ref{lst_matmul_tvm_7} (see Definition~\ref{def_iv}).
In contrast, \texttt{TVM} redundantly requests these matrix and vector sizes from the user:~once in lines~\ref{lst_matmul_tvm_2} and~\ref{lst_matmul_tvm_3} of Listing~\ref{lst_matmul_tvm}, and again in lines~\ref{lst_matmul_tvm_4} and~\ref{lst_matmul_tvm_6}.
\texttt{TVM} uses these sizes for generating the function specification of its generated \texttt{MatVec} code, which lets \texttt{TVM} generate incorrect low-level code~--~without issuing an error message~--~when the user sets non-matching sizes in lines~\ref{lst_matmul_tvm_2}/\ref{lst_matmul_tvm_3} and lines~\ref{lst_matmul_tvm_4}/\ref{lst_matmul_tvm_6}

\newpage

\subsection{Polyhedral Compilers}
\label{ch:eval:polyhedral}

\subsubsection*{Performance}

Figures~\ref{img_eval_la}-\ref{img_eval_dl_cpu} report
the performance achieved by the \texttt{PPCG}-generated CUDA code for GPUs and of the OpenMP-annotated C code generated by polyhedral compiler \texttt{Pluto} for CPUs.
For a fair comparison, we report for both polyhedral compilers their performance achieved for ATF-tuned tile sizes (denoted as \texttt{PPCG+ATF}/\texttt{Pluto+ATF} in the figures), as well as the performance of the two compilers when relying on their internal heuristics instead of auto-tuning (denoted as \texttt{PPCG} and \texttt{Pluto}).
In some cases, PPCG's heuristic crashed with error \texttt{"too\;many\;resources\;requested\;for launch"}, because the heuristic seems to not take into account device-specific constraints, e.g., limited availability of GPUs' fast memory resources.

We observe from Figures~\ref{img_eval_la}-\ref{img_eval_dl_cpu} that in all cases, our approach achieves better performance than \texttt{PPCG} and \texttt{Pluto}~--~sometimes by multiple orders of magnitude, in particular for deep learning computations (Figures~\ref{img_eval_dl_gpu} and~\ref{img_eval_dl_cpu}).
This is caused by the rigid optimization goals of \texttt{PPCG} and \texttt{Pluto}, e.g., always parallelizing outer dimensions, which causes severe performance losses.
For example, we achieve a speedup over \texttt{PPCG} of ${>13\times}$ on \texttt{NVIDIA\:Ampere\:GPU} and of ${>60\times}$ over \texttt{Pluto} on \texttt{Intel\:Skylake\:CPU} for \texttt{MCC} as used in the inference phase of the real-world \texttt{ResNet-50} neural network.
Compared to \texttt{PPCG}, our better performance for this \texttt{MCC} example is because \texttt{PPCG} has difficulties with efficiently parallelizing computations relying on more than $3$ dimension.
Most likely, this is because CUDA offers per default  $3$ dimensions for parallelization (called \texttt{x},~\texttt{y},~\texttt{z} dimension in CUDA).
However, \texttt{MCC} relies on $7$ parallelizable dimensions (as shown in Figure~\ref{fig_hl_examples}), and exploiting the parallelization opportunities of the $4$ further dimensions (as done in our generated CUDA code) is essential to achieve high performance for this
\texttt{MCC} example from \texttt{ResNet-50}.
Our performance advantage over \texttt{Pluto} for the \texttt{MCC} example is because \texttt{Pluto} parallelizes the outer dimensions of \texttt{MCC} only (whereas our approach has the potential to parallelize all dimensions);~however, the dimension has a size of only $1$ for this real-world example, resulting in starting only $1$ thread in the \texttt{Pluto}-generated OpenMP code.

For dot products \texttt{Dot} (Figure~\ref{img_eval_la}), we can observe that \texttt{PPCG} fails to generate parallel CUDA code, because \texttt{PPCG} cannot parallelize and optimize computations which rely solely on combine operators different from concatenation, as we also discuss in Section~\ref{sec_rw_poly}.
In Section~\ref{sec_rw_poly}, we particularly discuss that we do not consider the performance issues of \texttt{PPCG} and \texttt{Pluto} as weaknesses of the polyhedral approach in general, but of the particular polyhedral transformations chosen for \texttt{PPCG} and \texttt{Pluto}.

Note that \texttt{Pluto} crashes for our data mining example (Figure~\ref{img_eval_prl}), with \texttt{"Error\;extracting polyhedra\;from\;source\;file"}, because the scalar function of this example is too complex for \texttt{Pluto} (it contains \texttt{if}-statements).
Moreover, Intel's \texttt{icx} compiler struggles with compiling the \texttt{Pluto}-generated OpenMP code for quantum chemistry computations (Figure~\ref{img_eval_quantum}):~we aborted \texttt{icx}'s compilation process
after $24$h compilation time.
The \texttt{icx}'s issue with the \texttt{Pluto}-generated code
is most likely because of too aggressive loop unrolling of \texttt{Pluto}~--~the \texttt{Pluto}-generated OpenMP code has often a size $>50$MB for our real-world quantum chemistry examples.

\subsubsection*{Portability}

Since \texttt{PPCG} and \texttt{Pluto} are each designed for particular architectures only, they achieve the lowest portability of $0$ for all our studies in Figure~\ref{img_eval_port_cpu_gpu}, according to the Pennycook metric.
To simplify for \texttt{PPCG} and \texttt{Pluto} the portability comparison with our approach, we compute the Pennycook metric additionally also for two restricted sets of devices:~
only GPUs to make comparison against our approach easier for \texttt{PPCG}, and only CPUs to make comparison easier for~\texttt{Pluto}.

Figures~\ref{img_eval_port_la}-\ref{img_eval_port_dl} report the portability of \texttt{PPCG} when considering only GPUs, as well as the portability of \texttt{Pluto} for only CPUs.
We observe that we achieve higher portability for all our studies, as we constantly achieve higher performance than the two polyhedral compilers for the studies.

Note that even when restricting our set of devices to only GPUs for \texttt{PPCG} or only CPUs for \texttt{Pluto}, the two polyhedral compilers still achieve a portability of $0$ for some examples, because they fail to generate code for them (as discussed above).

\subsubsection*{Productivity}

Listing~\ref{lst_matmul_tc} shows the input program of polyhedral compilers \texttt{PPCG} and \texttt{Pluto} for \texttt{MatVec}.
Both take as input easy-to-implement, straightforward, sequential C code.
We consider these two polyhedral compilers as more productive than our approach (as well as scheduling and functional approaches, and also polyhedral compilers that take DSL programs as input, such as TC~\cite{10.1145/3355606}), because both compilers fully automatically generate optimized parallel code from unoptimized, sequential program code.

\citet{mdpoly,mdpoly_src} show that our approach can achieve the same high user productivity as polyhedral compilers, by using a polyhedral frontend for our approach:~we can alternatively take as input the same sequential program code as \texttt{PPCG} and \texttt{Pluto}, instead of programs implemented in our high-level program representation (as in Figure~\ref{fig:intro_example}).
The sequential input program is then transformed via polyhedral tool \emph{pet}~\cite{verdoolaege2012polyhedral} to its polyhedral representation which is then automatically transformed to our high-level program representation, according to the methodology presented by \citet{mdpoly,mdpoly_src}.

\begin{lstlisting}[
    language=C,
    morekeywords={__kernel__},
    mathescape=true,
    numbers=left,
    xleftmargin=1.5em,
    float=h!,
    frame=lines,
    caption={\texttt{PPCG}/\texttt{Pluto} program expressing Matrix-Vector Multiplication (\texttt{MatVec})},
    label={lst_matmul_tc}
    ]
for( int i = 0 ; i < I ; ++i )
  for( int k = 0 ; k < K ; ++k )
      w[i] += M[i][k] * v[k];
  \end{lstlisting}

\subsection{Functional Approaches}
\label{ch:eval:functional}

Our previous work~\cite{8891668} already shows that while functional approaches provide a solid formal foundation for computations, they typically suffer from performance and portability issues.
For this, our previous work compares our approach (in its original, proof-of-concept implementation~\cite{8891668}) to the state-of-the-art \texttt{Lift}~\cite{10.1145/2784731.2784754} framework which, to the best of our knowledge, has so far not been improved toward higher performance and/or better portability.
Consequently, we refrain from a further performance and portability evaluation of \texttt{Lift} and focus in the following on analyzing and discussing the productivity potentials of functional approaches, using again the state-of-the-art \texttt{Lift} approach as running example.
In Section~\ref{ch:rw:functional}, we discuss the performance and portability issues of functional approaches from a general perspective.

\paragraph*{Performance/Portability} Already experimentally evaluated in previous work~\cite{8891668} and discussed in general terms in Section~\ref{ch:rw:functional}.

\paragraph*{Productivity}

Listing~\ref{lst_matmul_lift} shows how \texttt{MatVec} is implemented in \texttt{Lift}.
In line~\ref{lst_matmul_lift_1}, type parameters \texttt{n} and \texttt{m} are declared, via the \texttt{Lift} building block \texttt{nFun}.
Line~\ref{lst_matmul_lift_2} declares a function \texttt{fun} that takes as input a matrix of size $\texttt{m}\times\texttt{n}$ and a vector of size \texttt{n}, both consisting of floating point numbers (\texttt{float}).
The computation of \texttt{MatVec} is specified in lines~\ref{lst_matmul_lift_3} and~\ref{lst_matmul_lift_4}.
In line~\ref{lst_matmul_lift_3}, \texttt{Lift}'s \texttt{map} pattern iterates over all rows of the matrix, and the \texttt{zip} pattern in line~\ref{lst_matmul_lift_4} combines each row pair-wise with the input vector.
Afterward, multiplication \texttt{*} is applied to each pair, using \texttt{Lift}'s \texttt{map} pattern again, and the obtained products are finally combined via addition \texttt{+} using \texttt{Lift}'s \texttt{reduce} pattern.

\begin{lstlisting}[
    language=C,
    morekeywords={nFun,fun,map,zip,reduce},
    mathescape=true,
    numbers=left,
    xleftmargin=1.5em,
    float=h!,
    frame=lines,
    caption={\texttt{Lift} program expressing Matrix-Vector Multiplication (\texttt{MatVec})},
    label={lst_matmul_lift}
    ]
nFun(n => nFun(m =>$\label{lst_matmul_lift_1}$
  fun(matrix: [[float]n]m => fun(xs: [float]n =>$\label{lst_matmul_lift_2}$
    matrix :>> map(fun(row =>$\label{lst_matmul_lift_3}$
      zip(xs, row) :>> map(*) :>> reduce(+, 0)$\label{lst_matmul_lift_4}$
    )) )) ))$\label{lst_matmul_lift_5}$
\end{lstlisting}

\newpage

Already for expressing \texttt{MatVec}, we can observe that \texttt{Lift} relies on a vast set of small, functional building blocks (five building blocks for \texttt{MatVec}:~\texttt{nFun}, \texttt{fun}, \texttt{map}, \texttt{zip}, and \texttt{reduce}), and the blocks have to be composed and nested in complex ways for expressing computations.
Consequently, we consider programming in \texttt{Lift} and \texttt{Lift}-like approaches as complex and their productivity for the user as limited.
Moreover, the approaches often need fundamental extension for targeting new kinds of computations, e.g., so-called \emph{macro-rules} which had to be added to \texttt{Lift} to efficiently target matrix multiplications~\cite{10.1145/2884045.2884046} and primitives \texttt{slide} and \texttt{pad} together with optimization \emph{overlapped tiling} for expressing stencil computations~\cite{10.1145/3168824}.
This need for extensions limits the expressivity of the \texttt{Lift} language and thus further hinders productivity.

In contrast to \texttt{Lift}, our approach relies on exactly three higher-order functions
(Figure~\ref{hl_overview})
to express various kinds of data-parallel computations (Figure~\ref{fig_hl_examples}):~%
1)~$\iv$~(Definition~\ref{def_iv}) which prepares the input data;~our $\iv$ function is designed as general enough to subsume, in a structured way,
the subset of all \texttt{Lift} patterns intended to change the view on input data,
including patterns \texttt{zip}, \texttt{pad}, and \texttt{slide};
2)~$\mdh$~(Definition~\ref{def_mdh}) expresses the actual computation part, and it and subsumes the \texttt{Lift} patterns performing actual computations (\texttt{fun}, \texttt{map}, \texttt{reduce}, $\dotsc$);~%
3)~$\ov$~(Definition~\ref{def_ov}) expresses the view on output data and is
designed
to work
similarly as
function $\iv$~(Lemma~\ref{theorem_views}).
Our three functions are always composed straightforwardly, in the same, fixed order (Figure~\ref{hl_overview}), and they do not rely on complex function nesting for expressing computations.

Note that even though our language is designed as minimalistic,
it
should cover the expressivity of the \texttt{Lift} language\footnote{
  This work is focussed on dense computations. \texttt{Lift} supports sparse computations~\cite{10.1145/3377555.3377896}
  which we consider as future work for our approach
  (as also outlined in Section~\ref{ch:fw}). We consider \texttt{Lift}'s approach, based on their so-called \emph{position dependent arrays}, as a great inspiration for our future goal.
}
and beyond:~for example, we are currently not aware of any \texttt{Lift} program being able to express the prefix-sum examples in Figure~\ref{fig_hl_examples}.
For the above reasons, we consider programming in our high-level language
as more productive for the user than programming in \texttt{Lift}-like, functional-style languages.
Furthermore, as discussed in Section~\ref{ch:eval:polyhedral}, our approach can take as input also straightforward, sequential program code, which further  contributes to the productivity of our approach.

\begin{figure}[p!]
    \centering
    \includegraphics[scale=0.27]{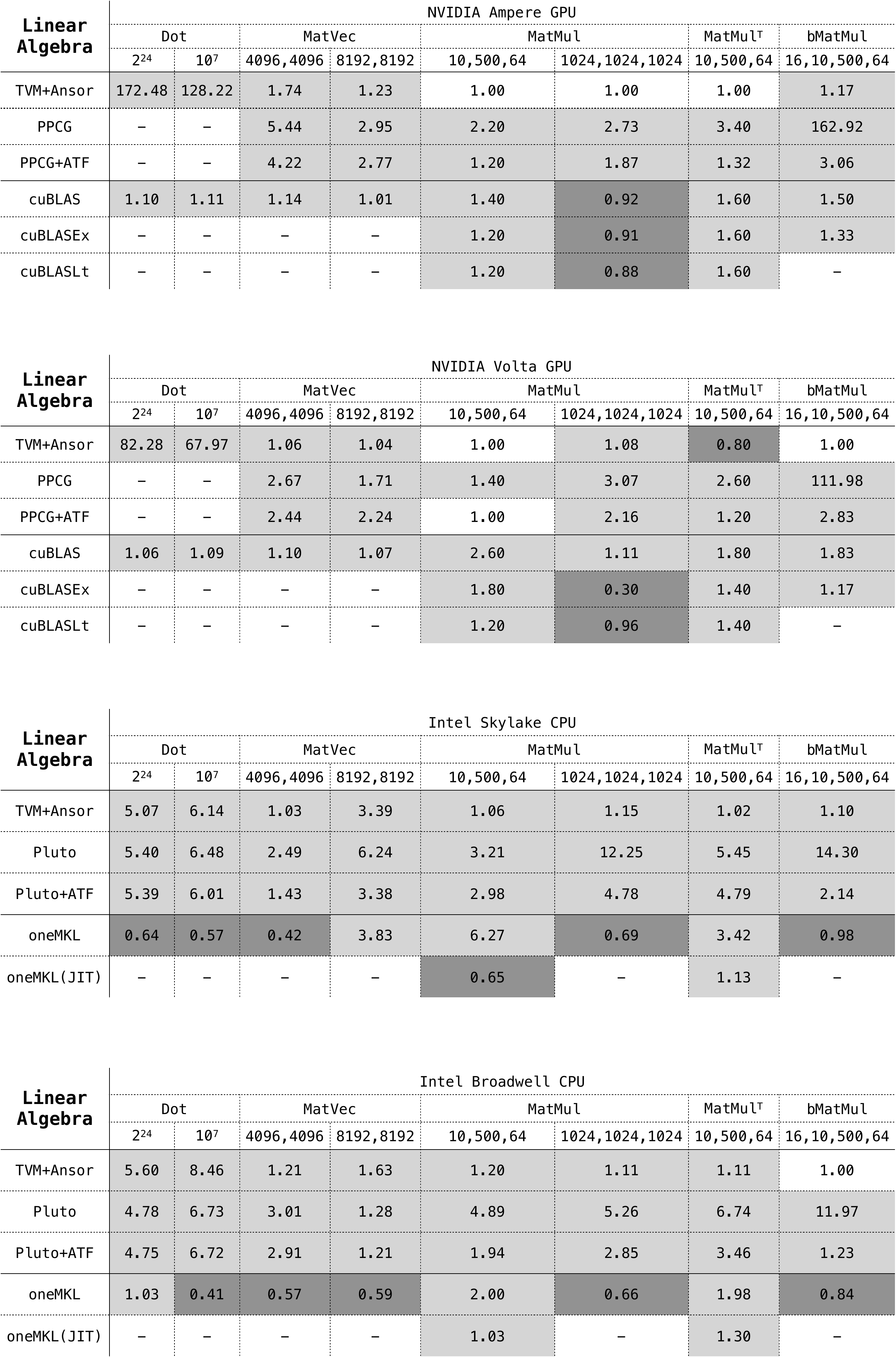}
    \caption{Speedup (higher is better) of our approach for linear algebra routines on GPUs and CPUs over:
    i)~scheduling approach \texttt{TVM},
    ii)~polyhedral compilers \texttt{PPCG} (GPU) and \texttt{Pluto} (CPU), as well as
    iii)~hand-optimized libraries provided by vendors.
    Dash symbol "-" means that this framework does not support this particular combination of architecture, computation, and data characteristic.
    }
    \label{img_eval_la}
\end{figure}
\clearpage

\begin{figure}[p!]
    \centering
    \includegraphics[scale=0.27]{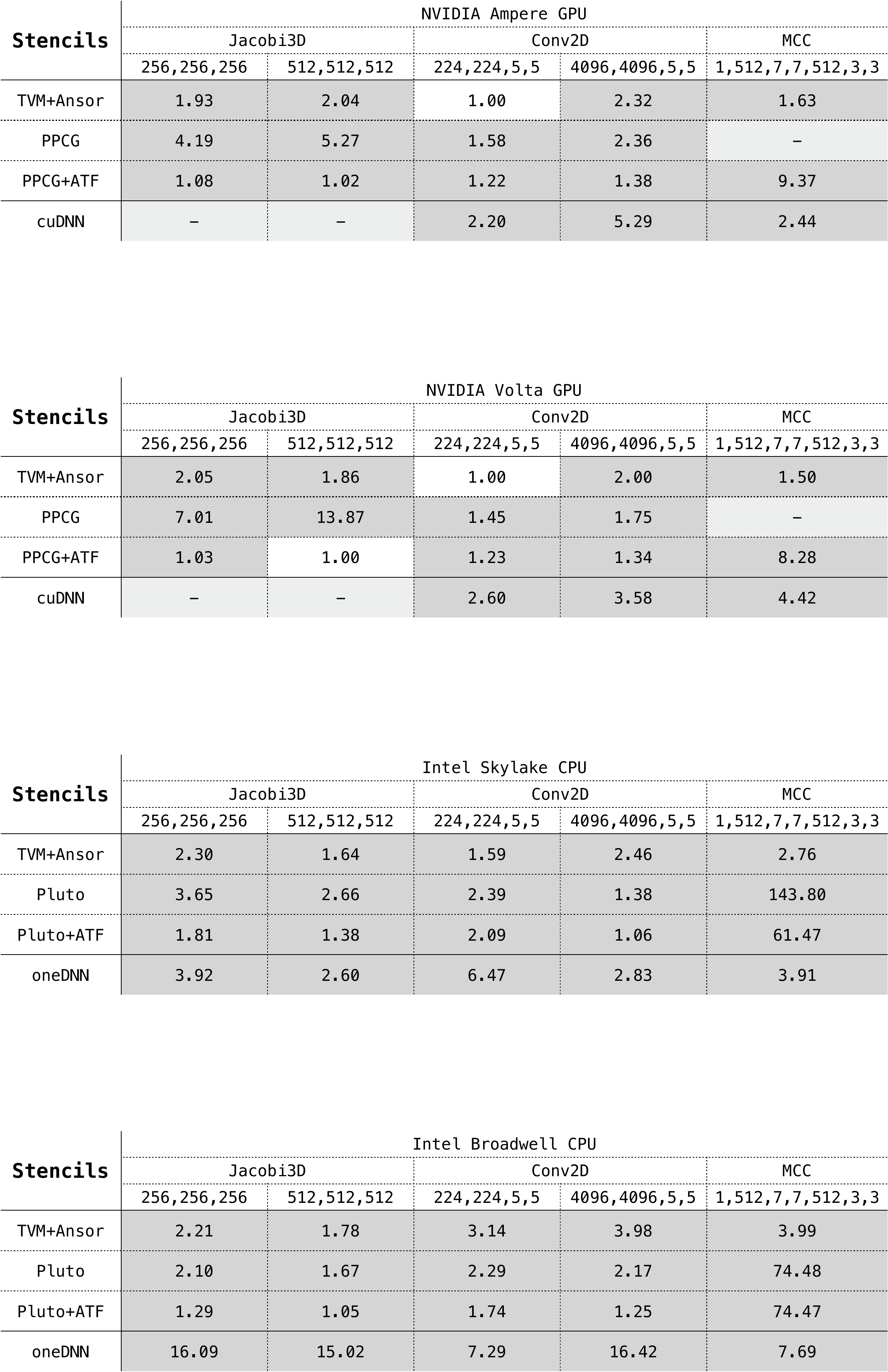}
    \caption{Speedup (higher is better) of our approach for stencil computations on GPUs and CPUs over:
    i)~scheduling approach \texttt{TVM},
    ii)~polyhedral compilers \texttt{PPCG} (GPU) and \texttt{Pluto} (CPU), as well as
    iii)~hand-optimized libraries provided by vendors.
    Dash symbol "-" means that this framework does not support this particular combination of architecture, computation, and data characteristic.
    }
    \label{img_eval_stencils}
\end{figure}
\clearpage

\begin{figure}[p!]
    \centering
    \includegraphics[scale=0.27]{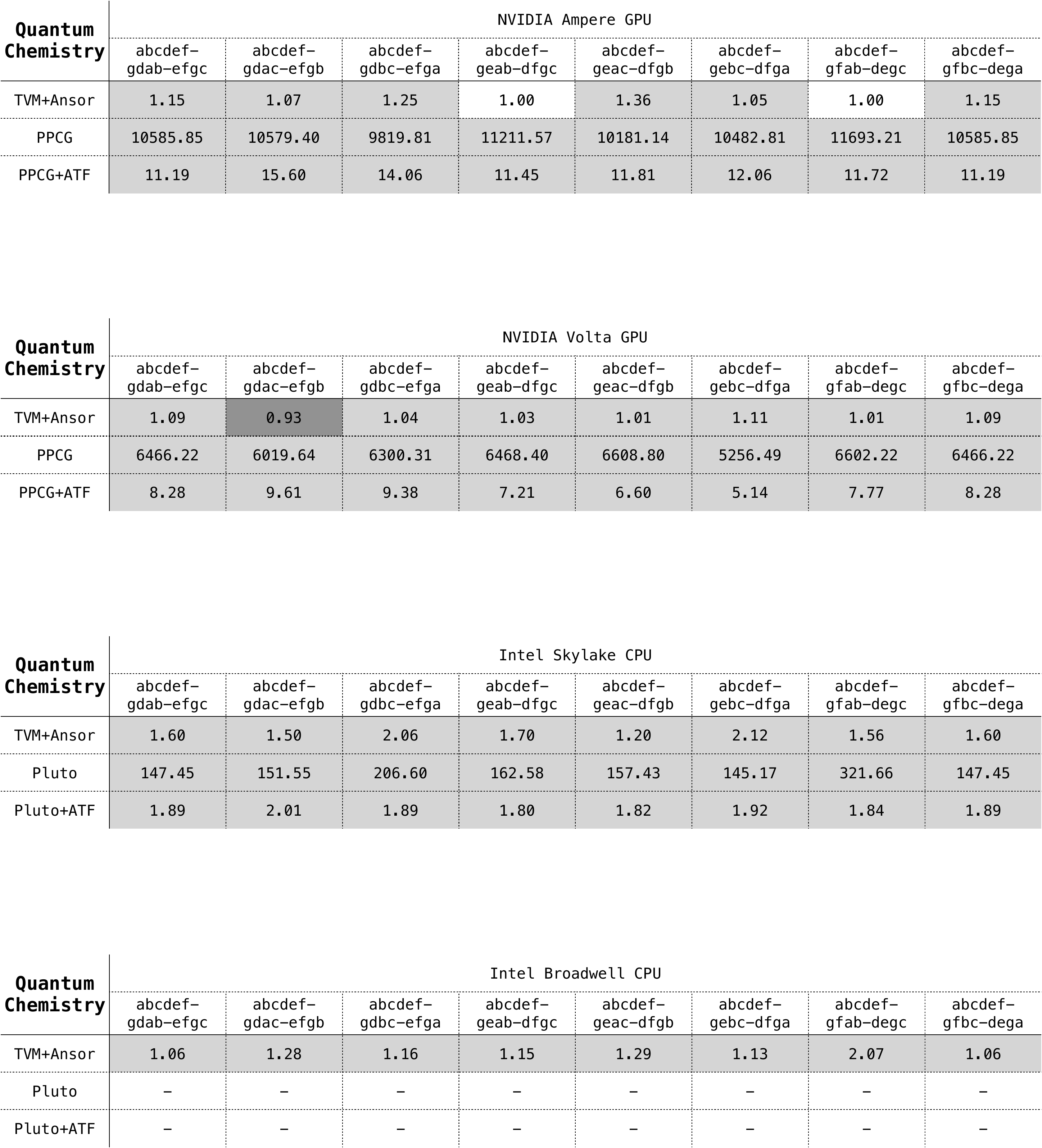}
    \caption{Speedup (higher is better) of our approach for quantum chemistry computations Coupled Cluster (\texttt{CCSD(T)}) on GPUs and CPUs over:
    i)~scheduling approach \texttt{TVM}, and
    ii)~polyhedral compilers \texttt{PPCG} (GPU) and \texttt{Pluto} (CPU).
    Dash symbol "-" means that this framework does not support this particular combination of architecture, computation, and data characteristic.
    }
    \label{img_eval_quantum}
\end{figure}
\clearpage

\begin{figure}[p!]
    \centering
    \includegraphics[scale=0.27]{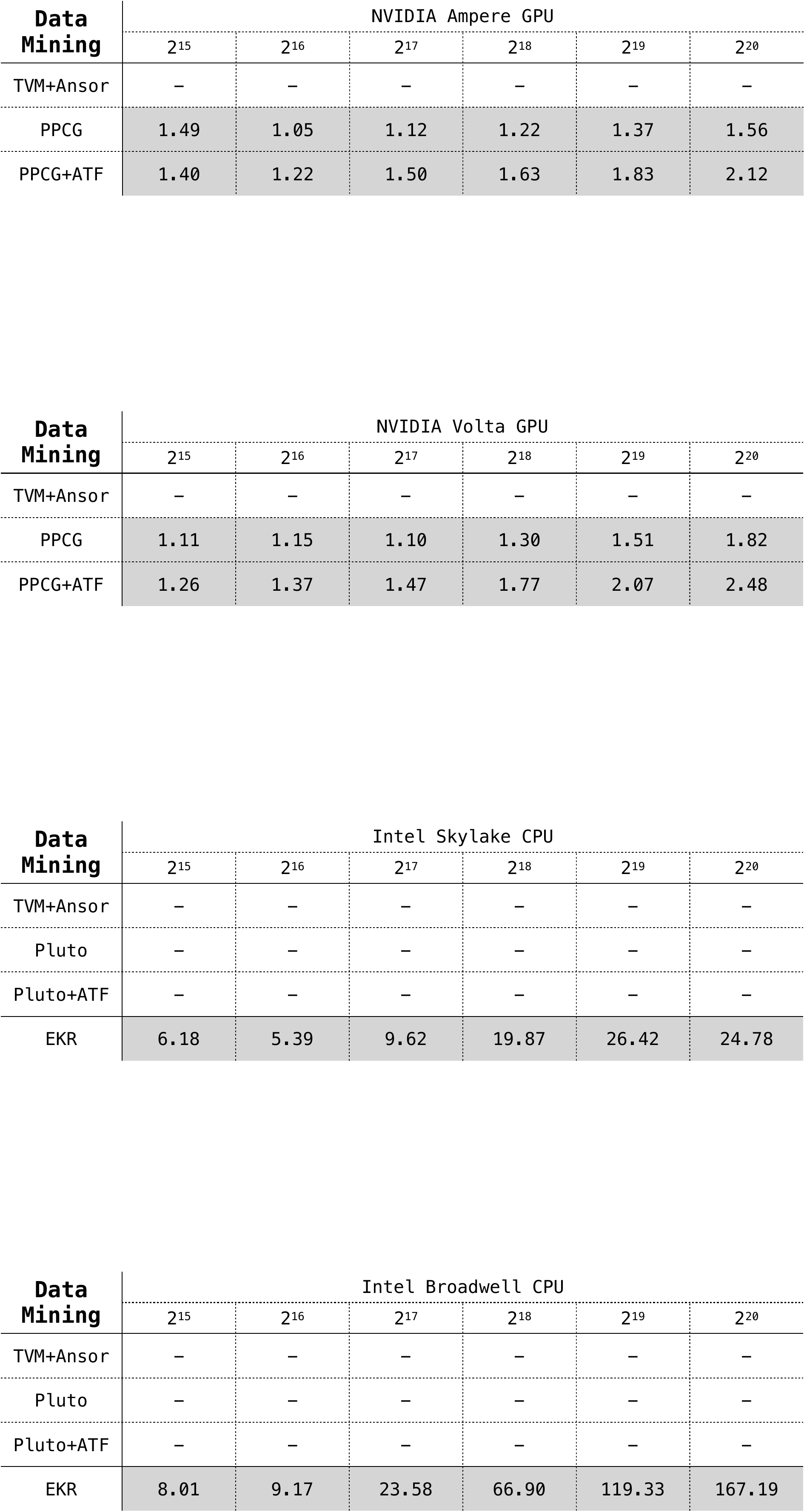}
    \caption{Speedup (higher is better) of our approach for data mining algorithm Probabilistic Record Linkage (\texttt{PRL}) on GPUs and CPUs over:
    i)~scheduling approach \texttt{TVM}, and
    ii)~polyhedral compilers \texttt{PPCG} (GPU) and \texttt{Pluto} (CPU), as well as the
    iii)~hand-implemented Java CPU implementation used by \emph{EKR}~--~the largest cancer registry in Europa.
    Dash symbol "-" means that this framework does not support this particular combination of architecture, computation, and data characteristic.
    }
    \label{img_eval_prl}
\end{figure}
\clearpage

\begin{figure}[p!]
    \centering
    \includegraphics[scale=0.27]{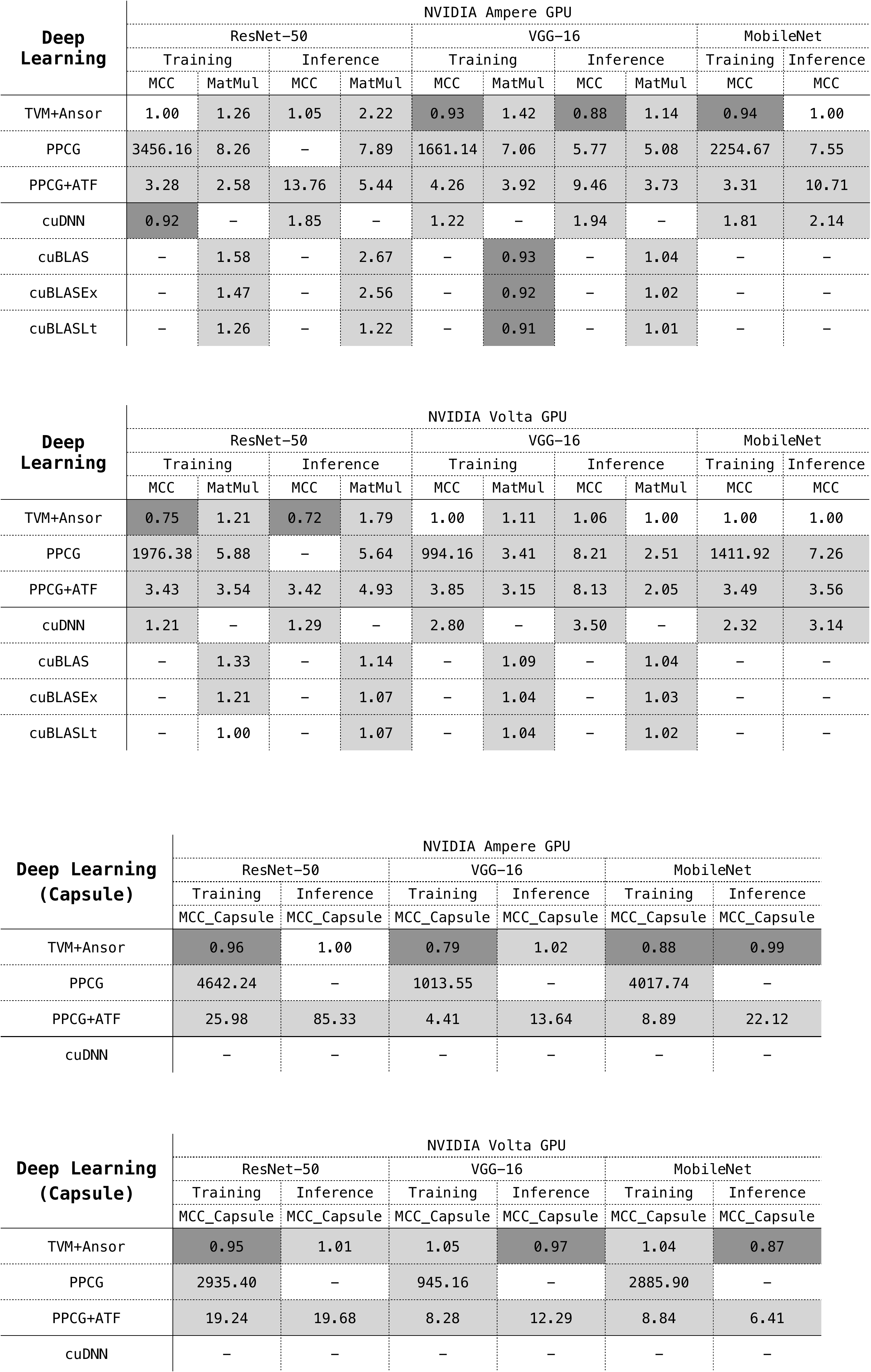}
    \caption{Speedup (higher is better) of our approach for the most time-intensive computations used in deep learning neural networks \texttt{ResNet-50}, \texttt{VGG-16}, and \texttt{MobileNet} on GPUs over:
    i)~scheduling approach \texttt{TVM},
    ii)~polyhedral compilers \texttt{PPCG} (GPU), as well as
    iii)~hand-optimized libraries provided by vendors.
    Dash symbol "-" means that this framework does not support this particular combination of architecture, computation, and data characteristic.
    }
    \label{img_eval_dl_gpu}
\end{figure}
\clearpage

\begin{figure}[p!]
    \centering
    \includegraphics[scale=0.27]{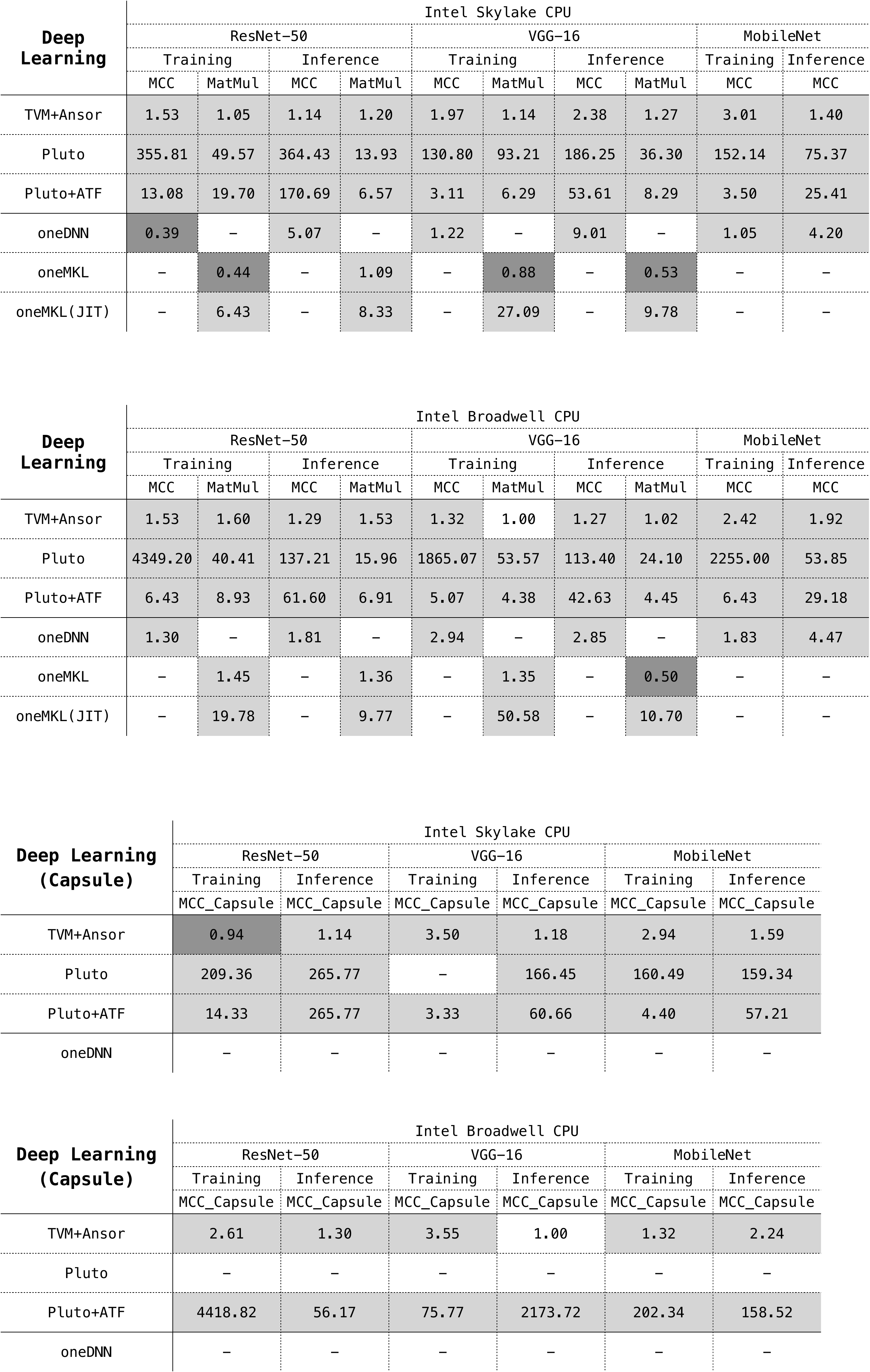}
    \caption{Speedup (higher is better) of our approach for the most time-intensive computations used in deep learning neural networks \texttt{ResNet-50}, \texttt{VGG-16}, and \texttt{MobileNet} on CPUs over:
    i)~scheduling approach \texttt{TVM},
    ii)~polyhedral compilers \texttt{Pluto} (CPU), as well as
    iii)~hand-optimized libraries provided by vendors.
    Dash symbol "-" means that this framework does not support this particular combination of architecture, computation, and data characteristic.
    }
    \label{img_eval_dl_cpu}
\end{figure}
\clearpage

\begin{figure}[p!]
    \centering
    \includegraphics[scale=0.27]{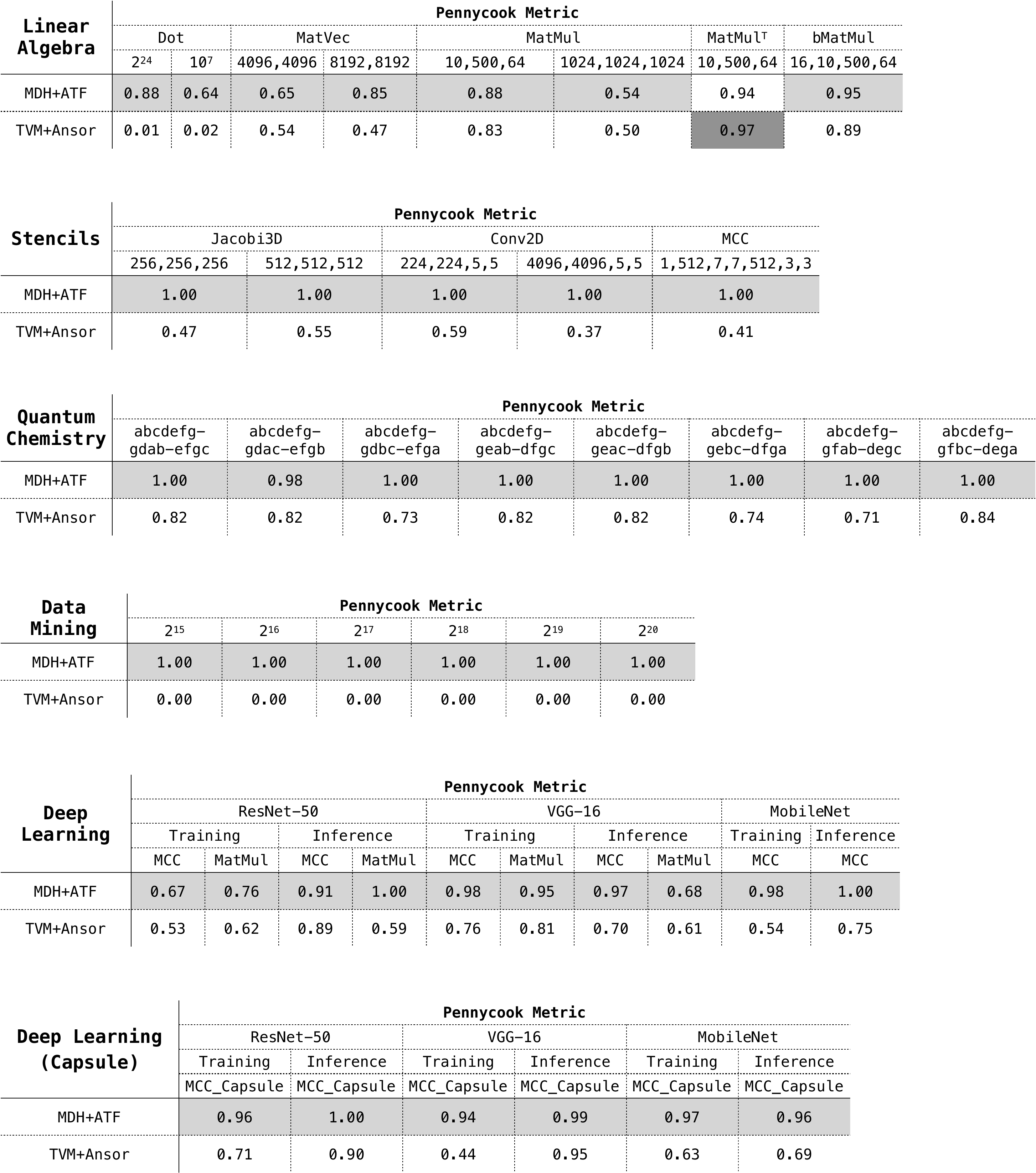}
    \caption{Portability (higher is better), according to Pennycook metric,
    of our approach and \texttt{TVM} over GPUs and CPUs for case studies.
    Polyhedral compilers \texttt{PPCG}/\texttt{Pluto} and vendor libraries by NVIDIA and Intel are not listed:~due to their limitation to certain architectures,
    all of them
    achieve the lowest portability of $0$ only.}
    \label{img_eval_port_cpu_gpu}
\end{figure}
\clearpage

\begin{figure}[h!]
    \centering
    \includegraphics[scale=0.27]{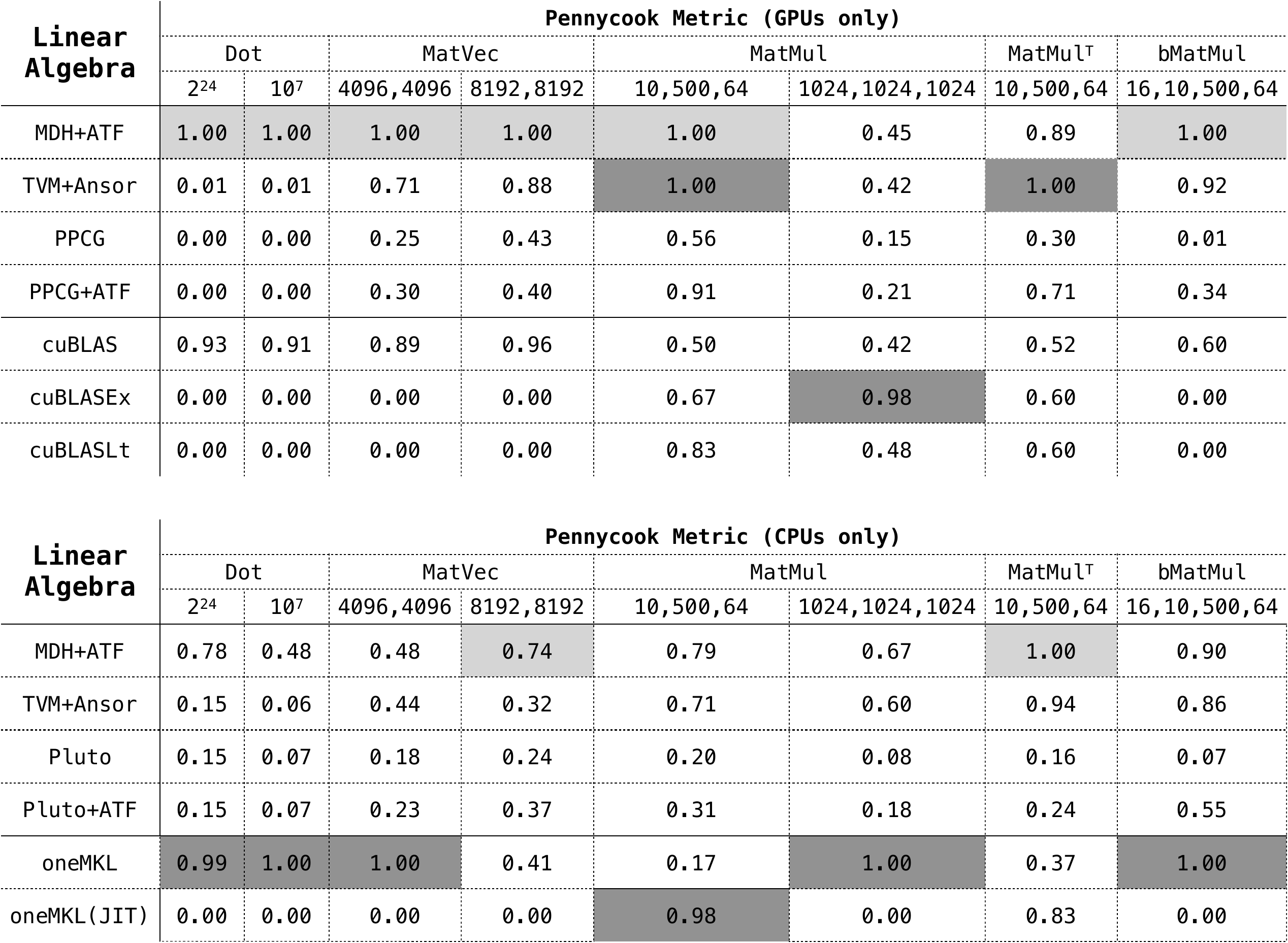}
    \caption{Portability (higher is better), according to Pennycook metric, for linear algebra routines computed on only GPUs or CPUs, respectively.
    The restriction simplifies for frameworks with limited architectural support (such as polyhedral compilers and vendor libraries) the portability comparisons against our approach.}
    \label{img_eval_port_la}
\end{figure}

\begin{figure}[h!]
    \centering
    \includegraphics[scale=0.27]{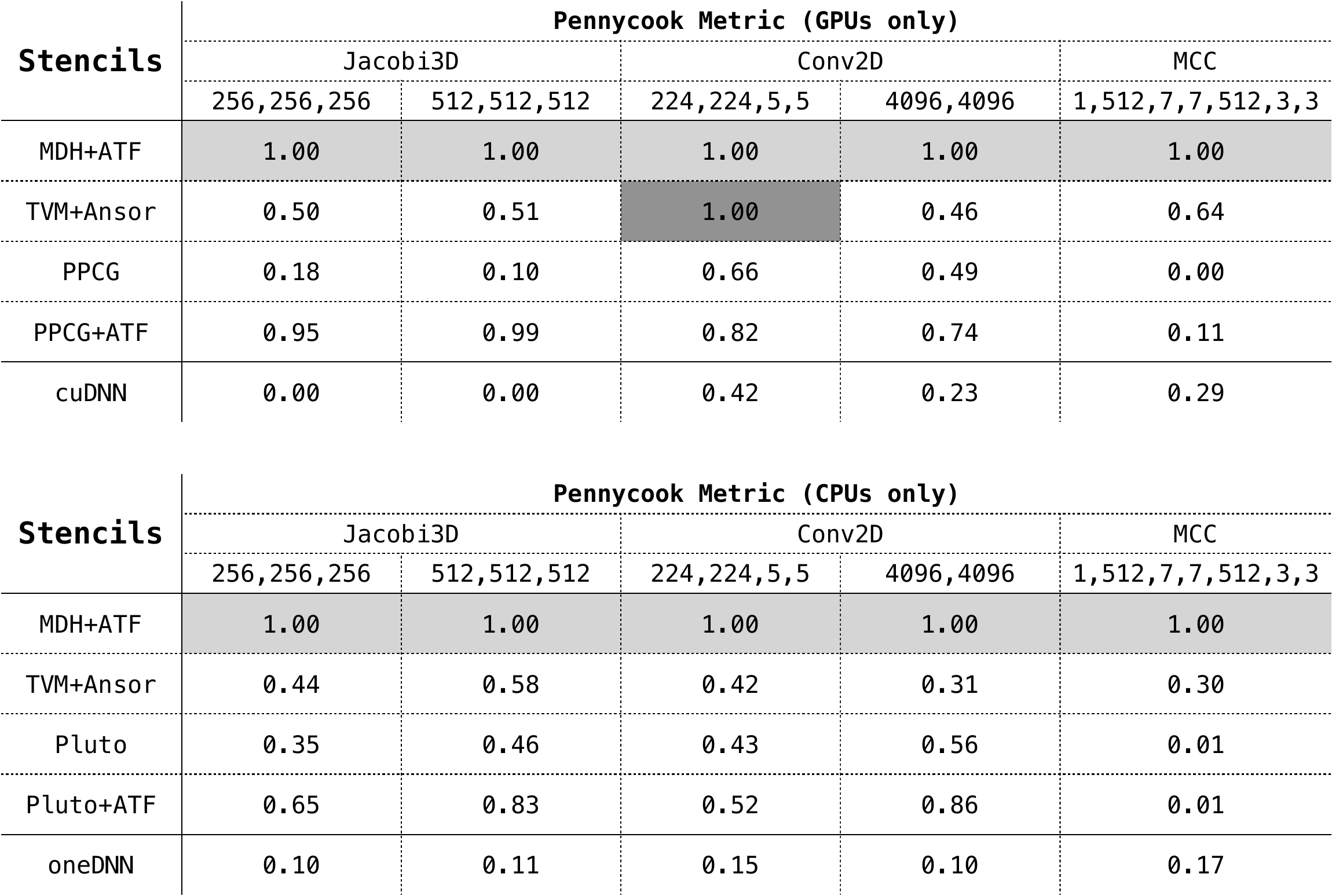}
    \caption{Portability (higher is better), according to Pennycook metric, for stencil computations computed on only GPUs or CPUs, respectively.
    The restriction simplifies for frameworks with limited architectural support (such as polyhedral compilers and vendor libraries) the portability comparisons against our approach.}
    \label{img_eval_port_stencils}
\end{figure}
\clearpage

\vspace{10px}

\begin{figure}[h!]
    \centering
    \includegraphics[scale=0.27]{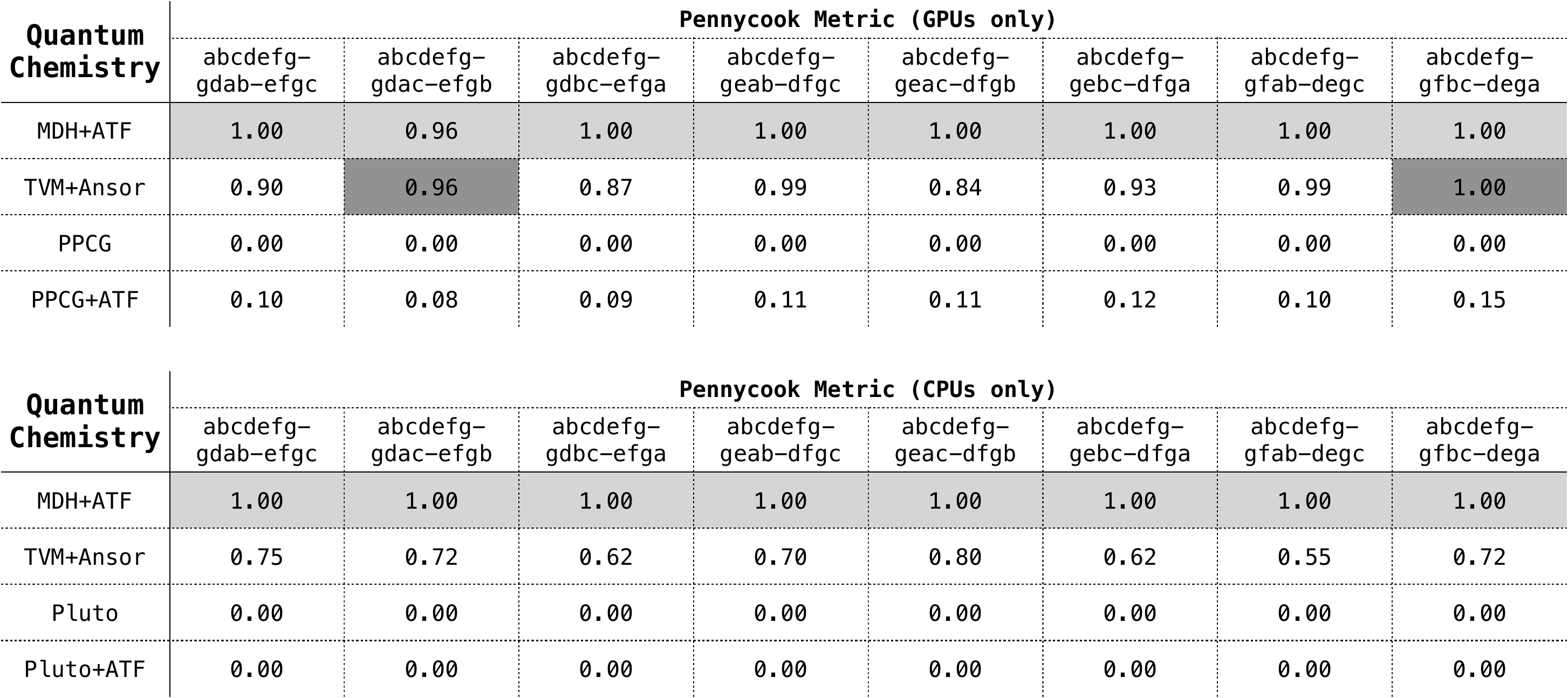}
    \caption{Portability (higher is better), according to Pennycook metric, for quantum chemistry computation Coupled Cluster (\texttt{CCSD(T)}) computed on only GPUs or CPUs, respectively.
    The restriction simplifies for frameworks with limited architectural support (such as polyhedral compilers and vendor libraries) the portability comparisons against our approach.}
    \label{img_eval_port_quantum}
\end{figure}

\vspace{30px}

\begin{figure}[h!]
    \centering
    \includegraphics[scale=0.27]{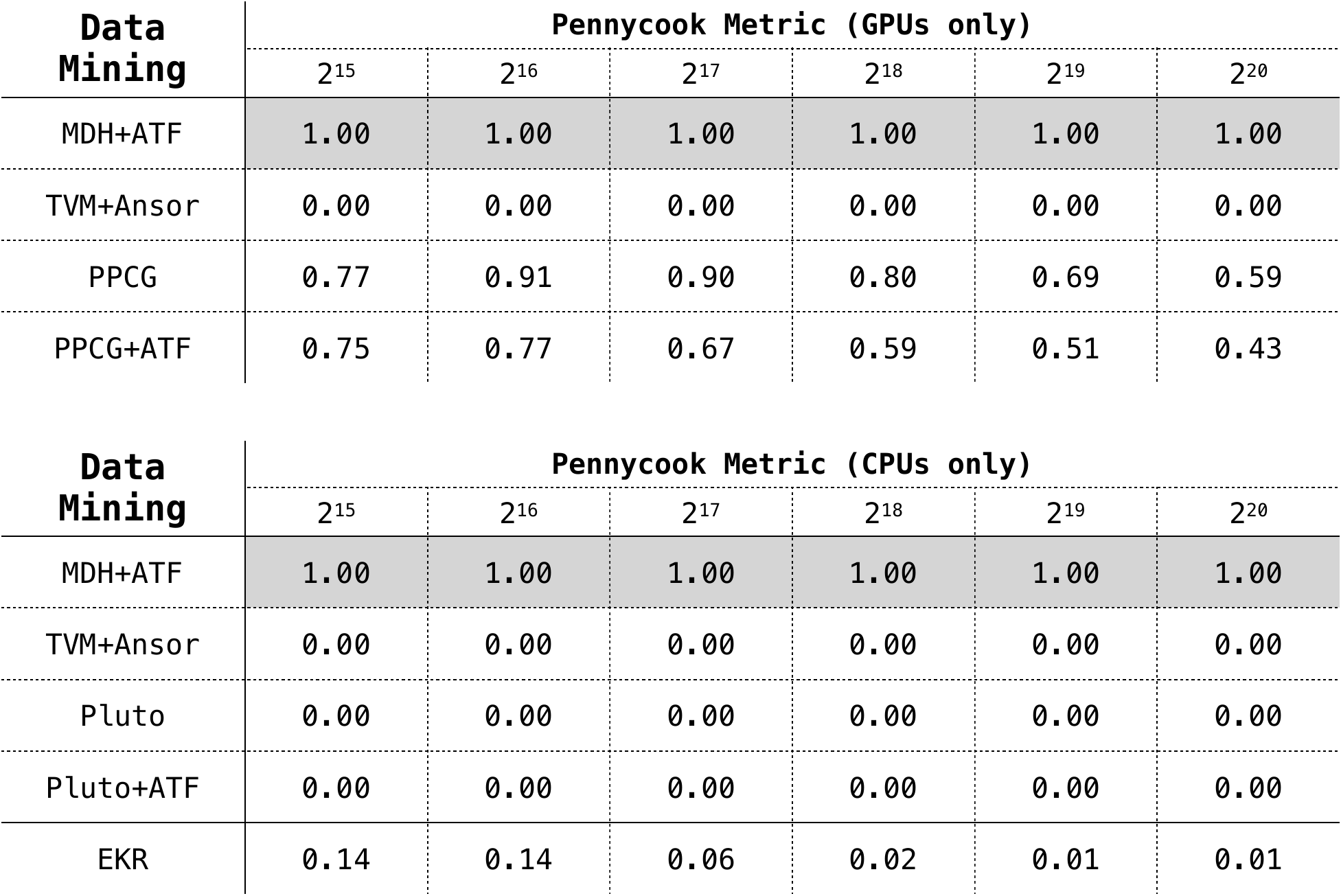}
    \caption{Portability (higher is better), according to Pennycook metric, for data mining algorithm Probabilistic Record Linkage (\texttt{PRL}) computed on only GPUs or CPUs, respectively.
    The restriction simplifies for frameworks with limited architectural support (such as polyhedral compilers and vendor libraries) the portability comparisons against our approach.}
    \label{img_eval_port_prl}
\end{figure}
\clearpage

\begin{figure}[h!]
    \centering
    \includegraphics[scale=0.27]{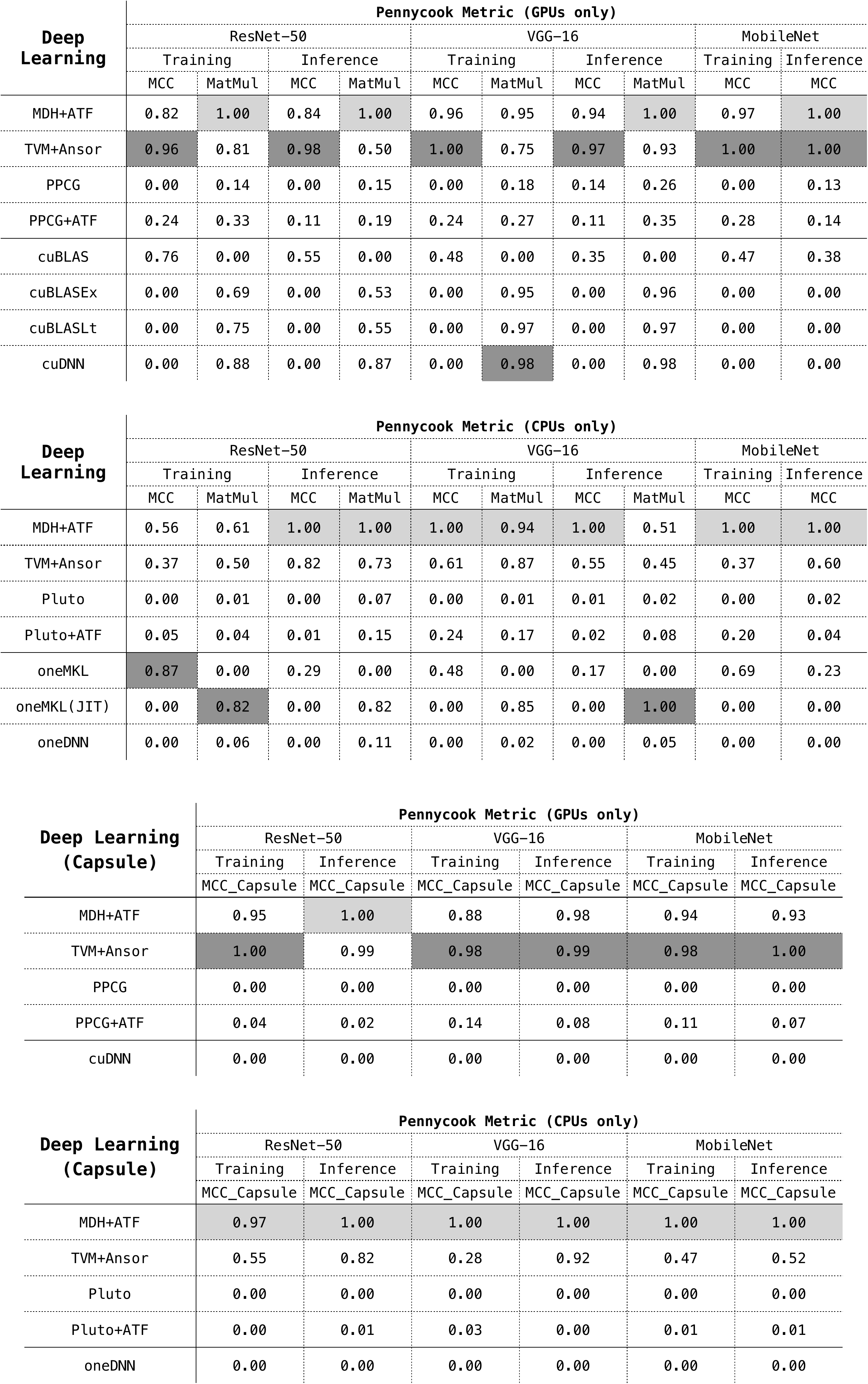}
    \caption{Portability (higher is better), according to Pennycook metric, for deep learning computations computed on only GPUs or CPUs, respectively.
    The restriction simplifies for frameworks with limited architectural support (such as polyhedral compilers and vendor libraries) the portability comparisons against our approach.}
    \label{img_eval_port_dl}
\end{figure}

\clearpage

\subsection{Domain-Specific Approaches}
\label{ch:eval:domain}

\subsubsection*{Performance}

Figures~\ref{img_eval_la}-\ref{img_eval_dl_cpu}
report for completeness also performance results achieved by domain-specific approaches.
Since domain-specific approaches are specifically designed and optimized for particular applications domains and often also architectures (e.g., only linear algebra routines on only GPU), we consider comparing to them as most challenging for us:~our approach is designed and optimized for data-parallel computations in general, from arbitrary application domains (the same as also polyhedral compilers and many functional approaches), and our approach is also designed as generic in the target parallel architecture.

We observe in Figures~\ref{img_eval_la}-\ref{img_eval_dl_cpu} that the
domain-specific libraries \texttt{NVIDIA\:cuBLAS/cuDNN} (for linear algebra routines and convolutions on GPUs) and \texttt{Intel\:oneMKL/oneDNN} (for linear algebra routines and convolutions on CPUs) sometimes perform better and sometimes worse than our approach.

The better performance of libraries over our approach is most likely\footnote{
Since the Intel and NVIDIA libraries are not open source, we cannot explain their performance behavior with certainty.
} because the libraries internally rely on assembly-level optimizations, while we currently focus on the higher \mbox{CUDA/OpenCL} level of abstraction which offers less optimization opportunities~\cite{6494986,10.1145/1356052.1356053}.
The \texttt{cuBLASEx} extension of \texttt{cuBLAS} achieves in one case~--~\texttt{MatMul} on \texttt{NVIDIA\:Volta\:GPU} for square $1024\times1024$ input matrices~--~significantly higher performance than our approach.
The high performance is achieved by \texttt{cuBLASEx} when using its \texttt{CUBLAS\_GEMM\_ALGO1\_TENSOR\_OP} algorithm variant, which casts the \texttt{float}-typed inputs implicitly to the half precision type (a.k.a. \texttt{half} or \texttt{fp16}), allowing \texttt{cuBLASEx} to exploit the GPU's tensor core extension~\cite{cuda_tensor_cores}.
Thereby, \texttt{cuBLASEx} achieves significantly higher performance than our approach, because tensor cores compute small matrix multiplication immediately in hardware;~however, at the cost of a significant precision loss:~%
the \texttt{half} scalar type achieves only half the accuracy achieved by scalar type \texttt{float}.
When using \texttt{cuBLASEx}'s default algorithm \texttt{CUBLAS\_GEMM\_DEFAULT} (rather than algorithm \texttt{CUBLAS\_GEMM\_ALGO1\_TENSOR\_OP}), which
retains the \texttt{float} type and thus meets the accuracy expected from the computation, we achieve a speedup of $1.11\times$ over \texttt{cuBLASEx}.\footnote{
  For the interested reader, we report in our Appendix, Section~\ref{app_cuBLASEx_runtime_accuracy}, the runtime of \texttt{cuBLASEx} for all its algorithm variants, including reports for the accuracy achieved by the different variants.
}

The reason for the better performance of our approach over NVIDIA and Intel libraries is most likely because our approach allows generating code that is also optimized (auto-tuned) for data characteristics, which is important for high performance~\cite{10.1145/3126908.3126939}.
In contrast, the vendor libraries usually rely on pre-implemented code that is optimized toward only average high performance for a range of data characteristics (size, memory layout, etc).
By relying on these fixed, pre-implemented code, the libraries avoid the auto-tuning overhead.
However, auto-tuning is often amortized, particularly for deep learning computations~--~the main target of libraries \texttt{NVIDIA\:cuDNN} und \texttt{Intel\:oneDNN}~--~because the auto-tuned implementations are re-used in many program runs.
Moreover, we achieve better performance for convolutions (Figure~\ref{img_eval_stencils}), because the libraries re-use optimizations for these computations originally intended for linear algebra routines~\cite{7573804}, whereas our optimization space (Table~\ref{tab_tps}) is designed for data-parallel computations in general and not as specifically oriented toward linear algebra.

Compared to the \texttt{EKR} library (Figure~\ref{img_eval_prl}), we achieve higher performance, because the \texttt{EKR}'s Java implementation inefficiently handles memory:~the library is implemented using Java's \texttt{ArrayList} data structure which is convenient to use for the Java programmer, but inefficient in terms of performance, because the structure internally performs costly memory re-allocations.

\subsubsection*{Portability}

Similar to polyhedral compilers \texttt{PPCG} and \texttt{Pluto}, the domain-specific approaches work for certain architectures only and thus achieve the lowest portability of $0$ only in Figure~\ref{img_eval_port_cpu_gpu} for our studies.
The domain-specific approaches are also restricted to a narrow set of studies, e.g., only linear algebra routines as \texttt{NVIDIA\:cuBLAS} and \texttt{Intel\:oneMKL} or only data mining example \texttt{PRL} as~\texttt{EKR}.
Consequently, the approaches achieve for these unsupported studies also a portability of only $0$ in Figures~\ref{img_eval_port_la}-\ref{img_eval_port_dl} in which our portability evaluation is limited to only GPUs or CPUs, respectively, to make comparison against our approach easier for the vendor libraries.

For their target studies, domain-specific approaches achieve high portability.
This is because the approaches are specifically designed and optimized toward these studies, e.g., via assembly-level optimizations which are currently beyond the scope of our work and considered as future work for our approach (see Section~\ref{ch:fw}).

\subsubsection*{Productivity}

Listing~\ref{lst_matmul_cublas} shows the implementation of \texttt{MatVec} in domain-specific approach \texttt{NVIDIA} \texttt{cuBLAS};~the implementation of \texttt{MatVec} in other domain-specific approaches, e.g., \texttt{Intel\:oneMKL}, is analogous to the implementation in Listing~\ref{lst_matmul_cublas}.

We consider domain-specific approaches as most productive for their target domain:~in the case of \texttt{MatVec}, the user simply calls the high-level function \texttt{cublasSgemv} and passes to it the input matrices (omitted via ellipsis in the listing) together with some meta information (memory layout of matrices, etc);~\texttt{cuBLAS} then automatically starts the GPU computation for
\texttt{MatVec}.

Besides the fact that domain-specific approaches typically target only particular target architectures, a further fundamental productivity issue of domain-specific approaches is that they can only be used for a narrow class of computations, e.g., only linear algebra routines as \texttt{NVIDIA\:cuBLAS} and \texttt{Intel\:oneMKL}.
Moreover, in the case of domain-specific libraries from NVIDIA and Intel, it is often up to the user to manually choose among different, semantically equal but differently performing implementations for high performance.
For example, the \texttt{cuBLAS} library offers three different routines for computing matrix multiplications:~%
1)~\texttt{cublasSgemm} (part of standard \texttt{cuBLAS}),~%
2)~\texttt{cublasGemmEx} (part of the \texttt{\texttt{cuBLASEx}} extension of \texttt{cuBLAS}), and~%
3)~routine \texttt{cublasLtMatmul} (part of the \texttt{cuBLASLt} extension).
These routines often also offer different, so-called \emph{algorithms} (e.g., $42$ algorithm variants in the case \texttt{\texttt{cuBLASEx}}) which
impact the internal optimization process.
When striving for the highest performance potentials of libraries, the user is in charge of naively testing each possible combination of routine and algorithm variant (as we have done in Figures~\ref{img_eval_la}-\ref{img_eval_dl_cpu} to make experimenting challenging for us).
In addition, the user must be aware that different combinations of routines and algorithms can produce results of reduced accuracy (as discussed above), which can be critical for accuracy-sensitive use cases.

\begin{lstlisting}[
    language=C,
    morekeywords={__kernel__},
    mathescape=true,
    numbers=left,
    xleftmargin=1.5em,
    float=h!,
    frame=lines,
    caption={\texttt{cuBLAS} program expressing Matrix-Vector Multiplication (\texttt{MatVec})},
    label={lst_matmul_cublas}
    ]
cublasSgemv( /* ... */ );
\end{lstlisting}

\section{Related Work}
\label{sec_rw}

Three major classes of approaches currently focus on code generation and optimization for data-parallel computations:~%
1)~scheduling,
2)~polyhedral, and
3)~functional.
In the following, we compare in Sections~\ref{sec_rw_scheduling}-\ref{ch:rw:functional} our approach to each of these three classes~--~in terms of \emph{performance}, \emph{portability}, and \emph{productivity}.
In contrast to Section~\ref{ch:eval}, which has compared our approach against these classes experimentally, this section is focussed on discussions in a more general, non-experimental context.
Afterward, we outline domain-specific approaches in Section~\ref{sec_rw_dsa}, which are specifically designed and optimized toward their target application domains.
In Section~\ref{sec_rw_hla}, we outline approaches focussing on optimizations that operate at the algorithmic level of abstraction (and thus at a higher abstraction level than our approach);~we consider these higher level approaches as greatly combinable with our work.
Finally, we discuss in Section~\ref{sec_rw_mdh} the differences between our approach introduced in this paper and the already existing work on MDHs.

\subsection{Scheduling Approaches}
\label{sec_rw_scheduling}

Popular examples of scheduling approaches include
\emph{UTF}~\cite{kelly1998framework},
\emph{URUK}~\cite{uruk},
\emph{CHill}~\cite{chen2008chill,10.1145/2400682.2400690},
\emph{Halide}~\cite{10.1145/2491956.2462176},
\emph{Clay}~\cite{10.1145/2854038.2854048},
\emph{TVM}~\cite{222575},
\emph{TeML}~\cite{10.1145/3393934.3278131},
\emph{Tiramisu}~\cite{8661197},
\emph{DaCe}~\cite{dace},
\emph{Fireiron}~\cite{10.1145/3410463.3414632},
\emph{Elevate}~\cite{10.1145/3408974},
\emph{DISTAL}~\cite{10.1145/3519939.3523437}, and
\emph{LoopStack}~\cite{https://doi.org/10.48550/arxiv.2205.00618}.
While scheduling approaches usually achieve high performance, they often have difficulties with achieving portability
and productivity, as we discuss in the following.\footnote{
    \citet{10.1145/3578360.3580269} introduce (optionally) a scheduling language for MDH to incorporate expert knowledge into MDH's optimization process, e.g., to achieve~%
    1)~better optimization, as an auto-tuning system might not always make the same high-quality optimization decisions as a human expert, and/or~%
    2)~faster auto-tuning, as some (or even all) optimization decisions might be made by the expert user and thus are not left to the costly auto-tuner.
}

\paragraph{Performance}
Scheduling approaches usually achieve high performance.
For this, the approaches incorporate human expert knowledge into their optimization process which is based on two major steps:~%
1)~a human expert implements an optimization program (a.k.a \emph{schedule}) in a so-called \emph{scheduling language}~--~the program specifies the basic optimizations to perform, such as tiling and parallelization;~%
2)~an auto-tuning system (or a human hardware expert) chooses values of performance-critical parameter of the optimizations implemented in the schedule, e.g., particular values of tile sizes and concrete numbers of threads.

Our experiments in Section~\ref{ch:eval} show that compared to the state-of-the-art scheduling approach TVM (using its recent Ansor optimizer~\cite{258858} for schedule generation), our approach achieves competitive and sometimes even better performance, e.g., speedups up to $2.22\times$ on GPU and $3.55\times$ on CPU over TVM+Ansor for
computations
taken from TVM's
favorable
application domain (deep learning).
Section~\ref{ch:eval} discusses
that our better performance is due to the design and structure of our general
optimization space (Table~\ref{tab_tps}) which can be efficiently explored~--~fully automatically~--~using state-of-the-art auto tuning techniques~\cite{10.1145/3427093}.
We focus on TVM in our experiments (rather than, e.g. Halide) to make experimenting challenging for us:~%
TVM+Ansor has proved to achieve higher performance on GPUs and CPUs than popular state-of-practice approaches~\cite{258858}, including Halide, pyTorch~\cite{NEURIPS2019_bdbca288}, and the recent FlexTensor optimizer~\cite{10.1145/3373376.3378508}.

The recent approach TensorIR~\cite{10.1145/3575693.3576933} is a compiler for deep learning computations that achieves higher performance than TVM on NVIDIA GPUs.
However, this performance gain over TVM is mainly achieved by exploiting the domain-specific \emph{tensor core}~\cite{cuda_tensor_cores} extensions of NVIDIA GPUs, which compute in hardware the multiplications of small, low-precision $4\times4$ matrices.
For this, TensorIR introduces the concept of \emph{blocks} which represent sub-computations, e.g., multiplying $4\times4$ matrices.
These blocks are than mapped by TensorIR to domain-specific hardware extensions, which often leads to high performance.

While domain-specific hardware extensions are not targeted in this paper, we can naturally exploit them in our approach, similar to TensorIR, as we plan for our future work:~the sub-computations targeted by the current hardware extensions, such as matrix multiplication on $4\times4$ matrices, can be straightforwardly expressed in our approach (Figure~\ref{fig_hl_examples}).
Thus, we can match these sub-expressions in our low-level representation and map them to hardware extensions in our generated code.
For this, instead of relying on a full partitioning in our low-level representation (as in Figure~\ref{fig_ll_example}) such that we can apply scalar function $f$ to the fully de-composed data (consisting of a single scalar value only in the case of a full partitioning), we plan to rely on a coarser-grained partitioning schema, e.g., down to only $4\times 4$ matrices (rather than $1\times1$ matrices, as in the case of a full partitioning).
This allows us replacing scalar function $f$ (which in the case of matrix multiplication is a simple scalar multiplication $*$) with the operation supported by the hardware extension, such as matrix multiplication on $4\times4$ matrices.
We expect for our future work to achieve the same advantages over TensorIR as over TVM, because apart from supporting domain-specific hardware extensions, TensorIR is very similar to TVM.

\paragraph{Portability}
While scheduling approaches achieve high performance, they tend to struggle with achieving portability.
This is because even though the approaches often offer different, pre-implemented backends (e.g., a CUDA backend to target NVIDIA GPUs and an OpenCL backend for CPUs), they do not propose
any structured methodology about how new backends can be added, e.g., for potentially upcoming architectures, with potentially deeper memory and core hierarchies than GPUs and CPUs.
This might be particularly critical (or requiring significant development effort) for the application area of deep learning which is the main target of many scheduling approaches, e.g., TVM and TensorIR, and for which new architectures are arising continuously~\cite{10.1145/3282307}.

In contrast, we introduce in this paper a formally precise recipe for correct-by-construction code generation in different backends (including OpenMP, CUDA, and OpenCL), generically in the target architecture:~we introduce an architecture-agnostic low-level representation (Section~\ref{ch:low_level}) as target for our high-level programs (Section~\ref{ch:high_level}), and we describe formally how our high-level programs are automatically lowered to our low-level representation (Section~\ref{ch:lowering}), based on the architecture-agnostic optimization space in Table~\ref{tab_tps}.
Our Appendix, Section~\ref{app_sec_code_generation}, outlines how executable, imperative-style program code is straightforwardly generated from low-level expressions, which we plan to discuss and illustrate in detail in our future work.

\paragraph{Productivity}
Scheduling approaches rely on a two-step optimization process, as discussed above:~implementing a schedule (first step) and choosing optimized values of performance-critical parameters within that schedule (second step).
While the second step often can be easily automatized, e.g., via auto-tuning~\cite{NEURIPS2018_8b570001}, the first step~--~implementing a schedule~--~usually has to be conducted manually by the user to achieve high performance, which requires expert knowledge and thus hinders productivity.
The lack of formal foundation of many scheduling approaches further complicates implementing schedules for the user, as implementation becomes error prone and hardly predictable.
For example, Fireiron's schedules can achieve high performance, close to GPUs' peak, but schedules in Fireiron can easily generate incorrect low-level code:~Fireiron cannot guarantee that optimizations expressed in its scheduling language are semantics preserving, e.g., based on a formal foundation as done in this work, making programming Fireiron's schedules error prone and complex for the user.
Similarly, TVM is sometimes unable to detect user errors in both its high-level language (as discussed in Section~\ref{ch:eval:scheduling}) as well as scheduling language%
~\cite{tvm_issue_double_bind}.
Safety in parallel programming is an ongoing major demand, in particular from industry~\cite{vulkan_safety}.

Auto schedulers, such as Halide's optimization engine~\cite{10.1145/2897824.2925952} and TVM's recent Ansor~\cite{258858}, aim to automatically generate well-performing, correct schedules
for the user.
However, a major flaw of the current auto schedulers is that even though they work well for some computations (e.g., from deep learning, as TVM's Ansor), they may perform worse for others.
For example, our approach achieves a speedup over TVM+Ansor of $>100\times$ already for straightforward dot products (Figure~\ref{img_eval_la}).
This is because Ansor does not exploit multiple thread blocks and uses only a small number of threads for reduction computations.
While such optimization decisions are often beneficial for reductions as used in deep learning (e.g., within the computations of convolutions and matrix multiplications on deep learning workloads, because parallelization can be better exploited for outer loops of these computations), these rigid optimization decisions of Ansor may perform worse in other contexts (e.g., for computing dot product).

To avoid the productivity issues of scheduling approaches, we have designed our optimization process as fully auto-tunable, thereby freeing the user from the burden and complexity of making complex optimization decisions.
Our optimization space (Table~\ref{tab_tps}) is designed as generic in the target application area and hardware architecture, thereby achieving high performance for various combinations of data-parallel computations and architectures (Section~\ref{ch:eval}).
Correctness of optimizations is ensured in our approach by introducing a formal foundation that enables mathematical reasoning about correctness (Section~\ref{ch:lowering}).
Particularly, our optimization process is designed as \emph{correct-by-construction}, meaning that any valid optimization decisions (i.e., a particular choice of tuning parameters in Table~\ref{tab_tps} that satisfy the constraints) leads to a correct expression in our low-level expression (as in Figure~\ref{fig_ll_example}).
In contrast, approaches such as introduced by~\citet{clement:hal-03653857} formally validate optimization decisions of scheduling approaches in already generated low-level code.
Thereby, such approaches work potentially for arbitrary scheduling approaches (Halide, TVM, $\dotsc$), but the approaches cannot save the user
at the high abstraction level
from implementing incorrect optimizations (e.g., via easy-to-understand, high-level error messages indicating that an invalid optimization decisions is made)
or restricting the optimization space otherwise to valid decisions only, e.g., for an
efficient
auto-tuning process, because the approaches check already generated program code.

Scheduling approaches often also suffer from expressivity issues.
For example,
Fireiron is limited to computing only matrix multiplications on only NVIDIA GPUs, and
TVM does not support computations that rely on multiple combine operators different from concatenation~\cite{tvm_issue_cos_1,tvm_issue_cos_2}, e.g., as required for expressing the \emph{Maximum Bottom Box Sum} example in Figure~\ref{fig_hl_examples}.
Also, TVM has difficulties with user-defined combine operators~\cite{tvm_issue_cos_3} and thus crashes for example~\emph{Probabilistic Record Linkage} in Figure~\ref{fig_hl_examples}.
In contrast to TVM, we introduce a formal methodology about of how to manage different kinds of arbitrary, user-defined combine operators (Section~\ref{ch:low_level}), which is considered challenging~\cite{tvm_issue_cos_1}.

\subsection{Polyhedral Approaches}
\label{sec_rw_poly}

Polyhedral approaches, as introduced by \citet{Feautrier1992}, as well as
\emph{Pluto}~\cite{bondhugula2008pluto},
\emph{Polly}~\cite{doi:10.1142/S0129626412500107},
\emph{PPCG}~\cite{10.1145/2400682.2400713},
\emph{Polyhedral Tensor Schedulers}~\cite{9188233},
\emph{TC}~\cite{10.1145/3355606}, and
\emph{AKG}~\cite{9741260} rely on a formal, geometrically inspired program representation, called \emph{polyhedral model}.
Polyhedral approaches often achieve high user productivity, e.g., by automatically parallelizing and optimizing straightforward sequential code.
However, the approaches tend to have difficulties with achieving high performance and portability when used for generating low-level program code,
as we outline in the following.
In Section~\ref{sec_rw_hla}, we revisit the polyhedral approach as a potential frontend for our approach, as polyhedral transformations have proven to be efficient when used for high-level code optimizations (e.g., \emph{loop skewing}~\cite{97902}), rather than low-level code generation.

\paragraph{Performance}
Polyhedral compilers tend to struggle with achieving their full performance potential.
We argue that this performance issue of polyhedral compilers is mainly caused by the following two major reasons.

\vspace*{0px}

While we consider the set of polyhedral transformation (so-called \emph{affine transformation}) as broad, expressive, and powerful, each polyhedral compiler implements a subset of expert-chosen transformations. ~%
This subset of transformations, as well as the application order of transformations, are usually fixed in a particular polyhedral compiler and chosen toward specific optimization goals only,
e.g., coarse-grained parallelization and locality-aware data accesses (a.k.a. \emph{Pluto algorithm}~\cite{10.1007/978-3-540-78791-4_9}), causing the search spaces of polyhedral compilers to be a proper subset of our space in Table~\ref{tab_tps}.
Consequently, computations that require for high performance other subsets of polyhedral transformations and/or application orders of transformations (e.g., transformations toward fine-grained parallelization) might not achieve their full performance potential when compiled with a particular polyhedral compiler~\cite{10444791}.

In contrast to the currently existing polyhedral compilers, we have designed our optimization process as generic in goals:~%
for example, our space is designed such that
the
degree of parallelization (coarse, fine, $\dotsc$) is fully auto-tunable for the particular combination of target architecture and computation to optimize.
We consider it as an interesting future work to
investigate the strength and weaknesses of the polyhedral model for expressing our generic optimization space.

\vspace*{0px}

We see the second reason for potential performance issues in polyhedral compilers in their difficulties with reduction-like computations.
This is mainly caused by the fact that the polyhedral model captures less semantic information than the high-level program representation introduced in Section~\ref{ch:high_level} of this paper:~combine operators which are used to combine the intermediate results of computations (e.g., operator $+$ from Example~\ref{def:mda_pw} for combining the intermediate results of the dot products within matrix multiplication) are not explicitly represented in the polyhedral model;~the polyhedral model is rather focussed on modeling memory accesses and their relative order only.
Most likely, this semantic information is missing in the polyhedral model, because polyhedral approaches were originally intended to fully automatically optimize loop-based, sequential code (such as Pluto and PPCG)~--~extracting combine operators automatically from sequential code is challenging and often even impossible (Rice's theorem).

In contrast, our proposed high-level representation explicitly captures combine operators (Figure~\ref{fig_hl_examples}), by requesting these operators explicitly from the user.
This is important, because the operators are often required for generating code that fully utilizes the highly parallel hardware of state-of-the-art architectures (GPUs, etc), as discussed in Section~\ref{ch:eval}.
Similarly to our approach, the polyhedral compiler TC also requests combine operators explicitly from the user.
However, TC is restricted to operators \texttt{+} (addition), \texttt{*} (multiplication), \texttt{min} (minimum), and \texttt{max} (maximum) only, thereby TC is not able to express important examples in Figure~\ref{fig_hl_examples}, e.g., PRL which is popular in data mining.
Moreover, TC outsources the computation of its combine operators to the NVIDIA CUB library~\cite{cub};~most likely as a workaround, because TC relies on the polyhedral model which is not designed to capture and exploit semantic information about combine operators for optimization.
Thereby, TC is dependent on external approaches for computing combine operators, which might not always be available (e.g., for upcoming architectures).

Workarounds have been proposed by the polyhedral community to target reduction-like computations~\cite{DBLP:journals/corr/DoerfertSHB15,10.1145/2967938.2967950}.
However, these approaches are limited to a subset of computations, e.g., by not supporting user-defined scalar types~\cite{DBLP:journals/corr/DoerfertSHB15} (as required for our PRL example in Figure~\ref{fig_hl_examples}), or by being limited to GPUs only~\cite{10.1145/2967938.2967950}.
Comparing the semantic information captured in the polyhedral model vs our MDH-based representation have been the focus of discussions between polyhedral experts and MDH developers~\cite{mlir_mdh_meeting}.

\paragraph{Portability}
The polyhedral approach, in its general form, is a framework offering transformation rules (affine transformations), and each individual polyhedral compiler implements a set of such transformations which are then instantiated (e.g., with particular tile sizes) and applied when compiling a particular application.
However, individual polyhedral compilers (e.g., PPCG and Pluto) apply a fixed set of affine transformations, thereby rigidly optimizing for a particular target architecture only, e.g., only GPU (as PPCG) or only CPU (as Pluto), and it remains open which affine transformations have to be used and how for other architectures, e.g., upcoming accelerators for deep learning computations~\cite{10.1145/3282307} with potentially
more complex
memory and core hierarchies than GPUs and CPUs.
Moreover, while we introduce an explicit low-level representation (Section~\ref{ch:low_level}), the polyhedral approach does not introduce representations on different abstraction levels:~the model relies on one representation that is transformed via affine transformations.
Apart from the ability of our low-level representation to handle combine operators (which we consider as complex and important), we see the advantages of our explicit low-level representation in, for example, explicitly representing memory regions, which allows formally defining important correctness constraints, e.g., that GPU architectures allow combining the results of threads in designated memory regions only.
Furthermore, our low-level representation also allows straightforwardly generating executable code from it (shown in Section~\ref{app_sec_code_generation} of our Appendix, and planned to be discussed thoroughly in future work).
In contrast, code generation from the polyhedral model has proven challenging~\mbox{\cite{9741260,vasilache2022composable,10.1145/2743016}.}

\paragraph{Productivity}
Most polyhedral compilers achieve high user productivity, by fully automatically parallelizing and optimizing straightforward sequential code (as Pluto and PPCG).
Our approach currently relies on a DSL (Domain-Specific Language) for expressing computations, as discussed in Section~\ref{ch:high_level};~thus, our approach can be considered as less productive than many polyhedral compilers.
However, \citet{mdpoly,mdpoly_src} show that DSL programs in our approach can be automatically generated from sequential code (optionally annotated with simple, OpenMP-like directives for expressing combine operators, enabling advanced optimizations), by using polyhedral tool \texttt{pet}~\cite{verdoolaege2012polyhedral} as a frontend for our approach.
Thereby, we are able to achieve the same, high user productivity as polyhedral compilers.
We consider this direction~--~combing the polyhedral model with our approach~--~as promising, as it enables benefitting from the advantages of both directions:~%
optimizing sequential programs and making them parallelizable using polyhedral techniques (like \emph{loop skewing}, as also outlined in Section~\ref{sec_rw_hla}), and
mapping the optimized and parallelizable code eventually to parallel architectures based on the concepts and methodologies introduced in this paper.

\subsection{Functional Approaches}
\label{ch:rw:functional}

Functional approaches map data-parallel computations
that are expressed via small, formally defined building blocks (a.k.a. patterns~\cite{gorlatch2011parallel}, such as \texttt{map} and \texttt{reduce}) to the memory and core hierarchies of parallel architectures, based on a strong formal foundation.
Notable functional approaches include
Accelerate~\cite{10.1145/1926354.1926358}, Obsidian~\cite{10.1007/978-3-642-24452-0_9},
so-called \emph{skeleton libraries}~\cite{6008967,aldinucci2017fastflow,10.1145/1863482.1863487,skepu2},
and the modern Lift approach~\cite{10.1145/2784731.2784754} (recently also known as RISE~\cite{DBLP:journals/corr/abs-2201-03611}).

In the following, as functional approaches usually follow the same basic concepts and methodologies, we focus on comparing to Lift, because Lift is more recent
than, e.g., Accelerate and Obsidian.

\paragraph{Performance}
Functional approaches tend to struggle with achieving their full performance potential, often caused by the design of their optimization spaces.
For example, analogously to our approach, functional approach Lift
relies on an internal low-level representation~\cite{7863730} that is used as target for Lift's high-level programs.
However, Lift's transformation process, from high level to low level, turned out to be challenging:~%
Lift's lowering process relies on an infinitely large optimization space~--~identifying a well-performing configuration within that space is too complex to be done automatically in general, due to the space's large and complex structure.
As a workaround, Lift currently uses approach Elevate~\cite{10.1145/3408974} to incorporate user knowledge into the optimization process;~however, at the cost of productivity, as manually expressing optimization is challenging, particularly for non-expert users.

In contrast, our optimization process is designed as auto-tunable (Table~\ref{tab_tps}), thereby achieving fully automatically high performance, as confirmed in our experiments (Section~\ref{ch:eval}), without involving the user for optimization decisions.
In particular, our previous work already showed that our approach~--~even in its original, proof-of-concept implementation~\cite{8891668}~--~can significantly outperform Lift on GPU and CPU~\cite{8891668}.
Our performance advantage over Lift is mainly caused by the design of our optimization process:~relying on formally defined tuning parameters (Table~\ref{tab_tps}), rather than on formal transformation rules that span a too large and complex search space (as Lift), thereby contributing to a simpler, fully auto-tunable optimization process.

\paragraph{Portability}
The current functional approaches usually are designed and optimized toward code generation in a particular programming model only.
For example, Lift inherently relies on the OpenCL programming model, because OpenCL works for multiple kinds of architectures: NVIDIA GPU, Intel CPU, etc.
However,
we see two major disadvantages in addressing the portability issue via OpenCL only:~%
1)~GPU-specific optimizations (such as \emph{shuffle operations}~\cite{cuda_shuffles}) are available only in the CUDA programming model, but not in OpenCL;~%
2)~the set of OpenCL-compatible devices is broad but still limited;~in particular, in the \emph{new golden age for computer architectures}~\cite{10.1145/3282307}, upcoming architectures are arising continuously and may not support the OpenCL standard.
We consider targeting new programming models as challenging for Lift, as its formal low-level representation is inherently designed for OpenCL~\cite{7863730};~targeting further programming models with Lift would require the design and implementation of new low-level representations, which we do not consider as straightforward.

To allow easily targeting new programming models with our approach, we have designed our formalism as generic in the target model:~our low-level representation (Figure~\ref{fig_gen_ll}) and optimization space (Table~\ref{tab_tps}) are designed and optimized toward an \emph{Abstract System Model} (Definition~\ref{def_asm}) which is capable of representing the device models of important programming approaches, including OpenMP, CUDA, and OpenCL (Example~\ref{example_asm}).
Furthermore, we have designed our high- and low-level representations as minimalistic (Figures~\ref{fig_generic_hl} and~\ref{fig_gen_ll}), e.g., by relying on three higher-order functions only for expressing programs at the high abstraction level, which simplifies and reduces the development effort for implementing code generators for programming models.

In addition, we believe that compared to our approach, the following basic design decisions of Lift (and similar functional approaches) complicate the process of code generation for them and increase the development effort for implementing code generators:~%
1)~relying on a vast set of small patterns for expressing computations,
rather than aiming at a minimalistic design as we do
(as also discussed in Section~\ref{ch:eval:functional});~%
2)~relying on complex function nestings and compositions for expressing computations,
rather than avoiding nesting and relying on a fixed composition structure of functions, as in our approach (Figure~\ref{hl_overview});~%
3)~requiring new patterns for targeting new classes of data-parallel computations (such as patterns \texttt{slide} and \texttt{pad} for stencils~\cite{10.1145/3168824}), which have to be non-trivially integrated into Lift's type and optimization system (often via extensions of the systems~\cite{10.1145/2884045.2884046,10.1145/3168824}), instead of relying on a fixed set of expressive patterns (Figure~\ref{fig_generic_hl}) and generalized optimizations (Table~\ref{tab_tps}) that work for various kinds of data-parallel computations (Figure~\ref{fig_hl_examples});~%
4)~expressing high-level and low-level concepts in the same language, instead of separating high-level and low-level concepts
for a more structured and thus simpler code generation process
(Figure~\ref{contributions_overall}).
We consider these four design decisions as disadvantageous for code generation, because they require from a code generator handling various kinds of patterns (decision 1), and the patterns need to be translated to significantly different code variants, depending on their nesting level and composition order (decision 2).
Moreover, each extension of patterns (decision 3) might affect code generation also for the already supported patterns, because the existing patterns need to be combined with the new ones via composition and nesting (decision 2).
We consider mixing up high-level and low-level concepts in the same language (decision 4) as further complicating the code generation process, because code generators cannot be implemented in clear, distinct stages:~\emph{high-level language} $\rightarrow$ \emph{low-level language} $\rightarrow$ \emph{executable program code}.

\paragraph{Productivity}

Functional approaches are expressive frameworks~--~to the best of our knowledge, the majority of these approaches
should also be able
to express
(possibly after some extension)
many of the
high-level programs that can also be expressed via our high-level representation (e.g., those presented in Figure~\ref{fig_hl_examples}).

A main difference we see between the high-level representations of existing functional approaches and the representation introduced by our approach is that the existing approaches rely on a vast set of higher-order functions for expressing computations;~these functions have to be functionally composed and nested in complex ways for expressing computations.
For example, expressing matrix multiplication in Lift requires also involving Lift's pattern \texttt{transpose} (also when operating on non-transposed input matrices)~\cite{10.1145/2884045.2884046}, as per design in Lift, multi-dimensional data is considered as an array of arrays (rather than a multi-dimensional array, as in our approach as well as polyhedral approaches).
In contrast, we aim to keep our high-level language minimalistic, by expressing data-parallel computations using exactly three higher-order functions and which are always used in the same, fixed order (shown in Figure~\ref{hl_overview}).
\citet{mdpoly,mdpoly_src} confirm that due to the minimalistic and structured design of our high-level representation, programs in our representation can even be systematically generated from straightforward, \mbox{sequential program code}.

Functional approaches also tend to require extension when targeting new application areas, which hinders the expressivity of the frameworks and thus also their productivity.
For example, functional approach Lift~\cite{10.1145/2784731.2784754} required notable extension for targeting, e.g., matrix multiplications (so-called \emph{macro-rules} had to be added to Lift~\cite{10.1145/2884045.2884046}) and stencil computations (primitives \texttt{slide} and \texttt{pad} were added, and Lift's tiling optimization had to be extended toward \emph{overlapped tiling}~\cite{10.1145/3168824}).
In contrast, we have formally defined our class of targeted computations (as MDH functions, Definition~\ref{def_mdh}), and the generality of our approach allows expressing matrix multiplications and stencils out of the box, without relying on domain-specific building blocks.

\subsection{Domain-Specific Approaches}
\label{sec_rw_dsa}
Many approaches focus on code generation and optimization for particular domains.
A popular domain-specific approach is \emph{ATLAS}~\cite{1437325}
for linear algebra routines on CPUs\footnote{
  Previous work~\cite{10.1145/3427093} shows that MDH (already in its original, proof-of-concept implementation) achieves higher performance than ATLAS.
}.
Similar to ATLAS, approach \emph{FFTW}~\cite{681704} targets \emph{Fast Fourier Transform (FFT)}, and \emph{SPIRAL}~\cite{1386651} works for \emph{Digital Signal Processing (DSP)}.

Nowadays, the best performing, state-of-practice domain-specific approaches are often provided by vendors and specifically designed and optimized toward their target application domain and also architecture.
For example, the popular vendor library NVIDIA cuBLAS~\cite{cublas} is optimized by hand, on the assembly level, toward computing linear algebra routines on NVIDIA GPUs~--~cuBLAS is considered in the community as gold standard for computing linear algebra routines on GPUs.
Similarly, Intel's oneMKL library~\cite{onemkl} computes with high performance linear algebra routines on Intel CPUs, and libraries NVIDIA cuDNN~\cite{cudnn} and Intel oneDNN~\cite{onednn} work well for convolution computations on either NVIDIA GPU (cuDNN) or Intel CPU (oneDNN), respectively.

\newpage

In the following, we discuss domain-specific approaches in terms of \emph{performance}, \emph{portability}, and \emph{productivity}.

\paragraph{Performance}
Domain-specific approaches, such as cuBLAS and cuDNN, usually achieve high performance.
This is because the approaches are hand-optimized by performance experts~--~on the assembly level~--~to exploit the full performance potential of their target architecture.
In our experiments (Section~\ref{ch:eval}), we show that our approach often achieves competitive and sometimes even better performance than domain-specific approaches provided by NVIDIA and Intel, which is mainly caused by their portability issues across different data characteristics, as we discuss in the next paragraph.

\paragraph{Portability}
Domain-specific approaches usually struggle with achieving portability across different architectures.
This is because the approaches are often implemented in architecture-specific assembly code to achieve high performance, but thereby also being limited to their target architecture.
The domain-specific approaches often also struggle with achieving portability of performance across different data characteristics (e.g., their sizes):~the approaches usually rely on a set of pre-implemented implementations that are each designed and optimized toward average high performance across a range of data characteristic.
In contrast, our approach (as well as many scheduling and polyhedral approaches) allow automatically optimizing (auto-tuning) computations for particular data characteristics, which is important for achieving high performance~\cite{10.1145/3126908.3126939}.
Thereby, our approach often outperforms domain-specific approaches (as confirmed in Section~\ref{ch:eval}), particularly for advanced data characteristics (small, uneven, irregularly shaped, $\dots$), e.g., as used in deep learning.
The costly time for auto-tuning is well amortized in many application areas, because the auto-tuned implementations are re-used in many program runs.
Furthermore, auto-tuning avoids the time-intensive and costly process of hand-optimization by human experts.

\paragraph{Productivity}
Domain-specific approaches usually achieve highest productivity for their target domain (e.g., linear algebra), by providing easy to use high-level abstractions.
However, the approaches suffer from significant expressivity issues, because~--~per design~--~they are inherently restricted to their target application domain only.
Also, the approaches are often inherently bound to only particular architectures, e.g., only GPU (as NVIDIA cuBLAS and cuDNN) or only CPU (as Intel oneMKL and oneDNN).
Domain-specific vendor libraries, such as NVIDIA cuBLAS and Intel oneMKL, also tend to offer the user differently performing variants of computations;~the variants have to be naively tested by the user when striving for the full performance potentials of approaches (as discussed in Section~\ref{ch:eval:domain}), which is cumbersome for the user.

\subsection{Higher-Level Approaches}
\label{sec_rw_hla}

There is a broad range of existing work that is focused on higher-level optimizations than proposed by this work.
We consider such higher-level approaches as greatly combinable with our approach.
For example, the polyhedral approach is capable of expressing algorithmic-level optimizations, like \emph{loop skewing}~\cite{97902}, to make programs parallelizable;~such optimizations are beyond the scope of this work, but they can be combined with our approach as demonstrated by \citet{mdpoly,mdpoly_src}.
Similarly, we consider the approaches introduced by~\citet{814600,10.1145/3434301,10.1145/3314221.3314612,10.1145/504210.504213}, which also focus on algorithmic-level optimizations, as greatly combinable with our approach:~algorithmically optimizing user code according to the approaches' techniques, and using our methodologies to eventually map the optimized code to executable program code for parallel architectures.

Futhark~\cite{10.1145/3062341.3062354}, Dex~\cite{10.1145/3473593}, and ATL~\cite{10.1145/3498717} are further approaches focussed on high-level program transformations, like advanced \emph{flattening} mechanisms~\cite{10.1145/3293883.3295707}, thereby optimizing programs at the algorithmic level of abstraction.
We consider using our work as backend for these approaches as promising:~the three approaches often struggle with mapping their algorithmically optimized program variants eventually to the multi-layered memory and core hierarchies of state-of-the-art parallel architectures, which is exactly the focus of this work.

\subsection{Existing Work on MDH}
\label{sec_rw_mdh}

Our work is inspired by the algebraic formalism of Multi-Dimensional Homomorphisms (MDHs) which is introduced in the work-in-progress paper~\cite{rasch2018multi}.
The MDH approach, as presented in the previous work, relies on a semi-formal foundation and focuses on code generation for
the OpenCL programming model only~\cite{8891668}.
This work makes major contributions over the existing work on MDHs and its OpenCL code generation approach.

We introduce a full formalization of MDH's high-level program representation.
In our new formalism, we rely on expressive typing:~for example, we encode MDHs' data sizes into our type system, e.g., by introducing both \emph{index sets} for MDAs (Definition~\ref{def_mda}) and \emph{index set functions} for combine operators (Definition~\ref{def_combine_op}), and we respect and maintain these sets and functions thoroughly during MDH computations (Definition~\ref{def_mdh}).
Our expressive typing significantly contributes
to correct and simplified code generation,
as all relevant type and data size information are contained in our formal, low-level program representation (Figure~\ref{fig_gen_ll}) from which we eventually generate executable program code~(Section~\ref{ch:low_level}).
In contrast, the existing MDH work considers multi-dimensional arrays~(MDAs) of arbitrary sizes and dimensionalities to be all of the same, straightforward type, which has greatly simplified the design of the proof-of-concept MDH formalism introduced by~\citet{rasch2018multi} (in particular, the definition and usage of combine operators), but at the cost of significantly harder and error-prone code generation:~all the missing, type-relevant information need to be elaborated by the implementer of the code generator in the existing MDH work, e.g., allocation sizes of fast memory resources used for caching input data or for storing computed intermediate results.
Furthermore, while the original MDH work~\cite{rasch2018multi} is focused on introducing higher-order function $\mdh$ only, this work particularly also introduces higher order functions $\iv$ and $\ov$ (Section~\ref{sec_views}) which express input and output views in a formally structured and concise manner, and which are central building blocks in our new approach for expressing computations (Figure~\ref{fig_hl_examples}).
Also, by introducing and exploiting the index set concept for MDAs, we have improved the definition of the concatenation operator $\dplus$ (Example~\ref{def:mda_concat}) toward commutativity, which is required for important optimizations. e.g., loop permutations (expressed via Parameters~\texttt{D1},~\texttt{S1},~\texttt{R1} in Table~\ref{tab_tps}).

A further substantial improvement is the introduction of our low-level representation (Section~\ref{ch:low_level}).
It relies on a novel combination of tuning parameters (Table~\ref{tab_tps}) that enhance, generalize, and extend the existing, proof-of-concept MDH parameters which capture
a subset of OpenCL-orientated features only~\cite{8891668}.
Moreover, while the existing MDH work introduces formally only parameters for flexibly choosing numbers of threads~\cite{rasch2018multi} (which corresponds to a very limited variant of our tuning parameter~\texttt{0} in Table~\ref{tab_tps}, because our parameter~\texttt{0} also choses numbers of memory tiles and is not restricted to OpenCL), the other OpenCL parameters are introduced and discussed by~\citet{8891668} only informally, from a technical perspective.
With our novel parameter set, we are able to target various kinds of programming models (e.g., also CUDA, as in Section~\ref{ch:eval}) and also to express important optimizations that are beyond the existing work on MDH, e.g., optimizing the memory access pattern of computations:~for example, we achieve speedups $>2\times$
over existing MDH for the deep learning computations discussed in Section~\ref{ch:eval}.
Our new tuning parameters are expressive enough to represent state-of-the-art, data-parallel implementations, e.g., as generated by scheduling and polyhedral approaches (Figures~\ref{tp_ll_tab_tvm_gpu}-\ref{tp_ll_tab_pluto_cpu}), and our experiments in Section~\ref{ch:eval} confirm that auto-tuning our parameters enables performance beyond the state of the art, including hand-optimized solutions provided by vendors, which is not possible
when using the existing MDH approach.
The expressivity of our parameters particularly also enables comparing significantly differently optimized implementations (e.g., scheduling-optimized vs. polyhedral-optimized, as in Section~\ref{ll_examples_matmul_resnet}), based on the values of formally specified tuning parameters,
which we consider as promising for structured performance analysis in future work.
Moreover, our new low-level representation targets architectures that may have arbitrarily deep memory and core hierarchies, by having optimized our representation toward an \emph{Abstract System Model}~(Definition~\ref{def_asm});~in contrast, the existing MDH work is focused on OpenCL-compatible architectures only.

Our experimental evaluation extends the previous MDH experiments by comparing also to the popular state-of-practice approach TVM which is attracting increasing attention from both academia~\cite{tvmcon} and industry~\cite{octoml}.
Also, we compare to the popular polyhedral compilers PPCG and Pluto, as well as the currently newest versions of hand-optimized, high-performance libraries provided by vendors.
Furthermore, we have included a real-world case study in our experiments, considering the most time-intensive computations within the three popular deep learning neural networks \texttt{ResNet-50}, \texttt{VGG-16}, and \texttt{MobileNet};~the study also includes Capsule-style convolution computations, which are considered challenging to optimize~\cite{10.1145/3317550.3321441}.
Moreover, Table~\ref{fig_hl_examples} analyzes MDH's expressivity using new examples:~it shows that MDH~--~based on the new contributions of this work (e.g., view functions)~--~is capable of expressing computations \texttt{bMatMuL}, \texttt{MCC\_Capsule}, \texttt{Histo}, \texttt{scan}, and \texttt{MBBS}, which have not been expressed via MDH in previous work.
Our experiments confirm that we achieve high performance for \texttt{bMatMuL} and \texttt{MCC\_Capsule} on GPUs and CPUs, and our future work aims to thoroughly analyze our approach for computations \texttt{Histo}, \texttt{scan}, and \texttt{MBBS} in terms of performance, portability, and productivity.

\section{Conclusion}
\label{ch:conclusion}

We introduce a formal (de/re)-composition approach for data-parallel computations targeting state-of-the-art parallel architectures.
Our approach aims to combine three major advantages over related approaches~--~{performance}, {portability}, and {productivity}~--~by introducing formal program representations on both:~%
1)~\emph{high level}, for conveniently expressing~--~in one uniform formalism~--~various kinds of data-parallel computations (including linear algebra routines, stencil computations, data mining algorithms, and quantum chemistry computations), agnostic from hardware and optimization details, while still capturing all information relevant for generating high-performance program code;
2)~\emph{low level}, which
allows uniformly reasoning~--~in the same formalism~--~about optimized \mbox{(de/re)-compositions} of data-parallel computations targeting different kinds of parallel architectures (GPUs, CPUs, etc).
We \emph{lower} our high-level representation to our low-level representation, in a formally sound manner, by introducing a generic search space that is based on performance-critical parameters.
The parameters of our lowering process enable fully automatically optimizing (auto-tuning) our low-level representations for a particular target architecture and characteristics of the input and output data,
and our low-level representation is designed such that it can be straightforwardly transformed to executable program code in imperative-style programming languages (including OpenMP, CUDA, and OpenCL).
Our experiments confirm that due to the design and structure of our generic search space in combination with auto-tuning, our approach achieves higher performance on GPUs and CPUs than popular state-of-practice approaches, including hand-optimized libraries provided by vendors.

\section{Future Work}
\label{ch:fw}

We consider this work as a promising starting point for future directions.
A major future goal is to extend our approach toward expressing and optimizing simultaneously multiple data-parallel computations (e.g., matrix multiplication followed by convolution), rather than optimizing computations individually and thus independently from each other (e.g., only matrix multiplication or only convolution).
Such extension enables optimizations, such as \emph{kernel fusion}, which is important for the overall application performance and considered challenging~\cite{7013003,9741270,10.1145/3519941.3535078}.
We see this work as a promising
foundation
for our future goal, because it enables expressing and reasoning about different computations in the same formal framework.
Targeting computations on sparse input/output data formats, inspired by~\citet{10.1145/3377555.3377896,9407251,10.1145/3155284.3018756,10.1145/3133901}, is a further major goal, which requires extending our approach toward irregularly-shaped input and output data, similarly as done by~\citet{10.1145/3377555.3377896}.
Regarding our optimization process, we aim to introduce an analytical cost model for computations expressed in our formalism~--~based on operational semantics~--~%
thereby accelerating (or even avoiding) the auto-tuning overhead, similarly as done by~\citet{10.1145/3434306,10.1145/3445814.3446759}.
Moreover, we aim to incorporate methods from machine learning into our optimization process~\cite{10.1145/2536688,merouani2024looper}, instead of relying on empirical auto-tuning methods only.
To make our work better accessible for the community, we aim to implement our approach into \emph{MLIR}~\cite{9370308} which offers a reusable compiler infrastructure.
The contributions of this work give a precise, formal recipe of how to implement our introduced methods into approaches such as MLIR.
Moreover, relying on the MLIR framework will contribute to a structured code generation process in assembly-level programming models, such as LLVM~\cite{1281665} and NVIDIA PTX~\cite{ptx}.
We consider targeting assembly languages as important for our future work:~assembly code offers further, low-level optimization opportunities~\cite{6494986,10.1145/1356052.1356053}, thereby enabling our approach to potentially achieve higher performance than reported in Section~\ref{ch:eval} for our generated CUDA and OpenCL code.
Also, we aim to extend our approach toward distributed multi-device systems that are heterogeneous, inspired by dynamic load balancing approaches~\cite{5470413} and advanced data distributions techniques~\cite{10.1145/3519939.3523437}.
Targeting domain-specific hardware extensions, such as \emph{NVIDIA Tensor Cores}~\cite{cuda_tensor_cores} is also an important goal for our future work, as such extensions allow significantly accelerating computations for the target of the extensions (e.g., deep learning~\cite{8425458}).
Finally, we aim to support more target backends (additionally to OpenMP, CUDA, and OpenCL), e.g., AMD's \emph{HIP}~\cite{amd_hip} which is efficient for programming AMD GPUs.
Similarly, we consider \emph{Triton}~\cite{10.1145/3315508.3329973}, \emph{AMOS}~\cite{10.1145/3470496.3527440}, and \emph{Graphene}~\cite{10.1145/3582016.3582018} as further, promising backends for our approach.

\begin{acks}
This work was funded by
the \emph{Deutsche Forschungsgemeinschaft (DFG, German Research Foundation)} -- project \emph{PPP-DL (470527619)}.
I would like to thank Richard Schulze for conducting the extensive set of experiments and the reviewers for their thorough reading of the paper and their comments and remarks that helped to improve this work.
\end{acks}

\newpage

\bibliography{references.bib}

\newpage

\begin{appendix}
\section*{Appendix}

Our appendix provides details for the interested reader that
should not
be required for understanding the basic concepts and ideas introduced in this paper.

\section{Mathematical Foundation}
\label{sec_math_foundation}

We rely on a set theoretical foundation, based on ZFC set theory~\cite{ciesielski1997set}.
We avoid class theory, such as NBG~\cite{Neumann+1925+219+240}, by assuming, for example, that our universe of types contains all relevant representatives (\texttt{int}, \texttt{float}, \texttt{struct}, etc), but is not the "class of all types".
Thereby, we avoid fundamental issues~\cite{russell2020principles}
which are not relevant for this work.

\subsection{Family}

\begin{definition}[Family]
Let $I$ and $A$ be two sets.
A \emph{family} $F$ from $I$ to $A$ is any set
\[
F := \{\ (i,a) \ | \ i\in I \wedge a\in A \ \}
\]
such that the following two properties are satisfied:
\begin{itemize}

\item  \emph{left-total}: $\forall i\in I:\,\exists a\in A:\,(i,a)\in F$
\item  \emph{right-unique}: $(i,a)\in F \ \wedge \ (i,a')\in F \ \Rightarrow \ a=a'$

\end{itemize}
We refer to $I$ also as \emph{index set} of family $F$ and to $A$ as $F$'s \emph{image set}.
If $I$ has a strict total order $<$,
we refer to $F$ also as \emph{ordered family}.
\end{definition}

\begin{notation}[Family]
\label{app_not_family}
  Let $F$ be a family from $I$ to $A$.

  We write:
  \begin{itemize}
    \item $F_i$ for the unique $a\in A$ such that $(i,a)\in F$;

    \item $(F_i)_{i\in I}$ instead of $F$ to explicitly state $F$'s index and image sets in our notation;

    \item $(\,F_{i_1,\dotsc,i_n}\,)_{i_1\in I_1,\dotsc,i_n\in I_n}$
    instead of $(\dotsc(\,F_{i_1,\dotsc,i_n}\,)_{i_n\in I_n}\,\dotsc)_{i_1\in I_1}$.

  \end{itemize}

  Alternatively, depending on the context, we use the following notation:
  \begin{itemize}
    \item $F^{<i>}$  \ instead of $F_i$;

    \item $(F_i)^{<i\in I>}$  \ instead of $(F_i)_{i\in I}$;

    \item $(\,F_{i_1,\dotsc,i_n}\,)^{<i_1\in I_1\,|\,\dotsc\,|\,i_n\in I_n>}$ \ instead of $(\,F_{i_1,\dotsc,i_n}\,)_{i_1\in I_1,\dotsc,i_n\in I_n}$.
  \end{itemize}
  For nested families, each index set $I_k$ may depend on the earlier-defined values $i_1,\dotsc,i_{k-1}$ (not explicitly stated above for brevity).
\end{notation}

\vspace*{5px}

\begin{definition}[Tuple]
We identify $n$-tuples as families that have index set $[1,n]_\IN$.
\end{definition}

\begin{example}[Tuple]
A $2$-tuple $(a,b)$ (a.k.a \emph{pair}) is a family $(F_i)_{i\in I:=[1,2]_\IN}$ for which $F_1=a$ and $F_2=b$.
\end{example}

\subsection{Scalar Types}
\label{app_scalar_type}
We denote by
\[
\type := \{ \, \texttt{int}, \texttt{int8}, \,\texttt{int16}, \,\dotsc, \ \texttt{float},\, \texttt{double},\, \dotsc, \ \texttt{struct}, \dotsc  \ \}
\]
our set of \emph{scalar types}, where
\texttt{int8} and \texttt{int16} represent 8-bit/16-bit integer numbers,
\texttt{float} and \texttt{double} are the types of single/double precision floating point numbers (IEEE 754 standards),
\texttt{struct}s contain a fixed set of other scalar types, etc.
For simplicity, we interpret integer types (\texttt{int}, \texttt{int8}, \,\texttt{int16}, $\dotsc$) uniquely as integers $\IZ$, floating point number types (\texttt{float} and \texttt{double}) as rationale numbers $\IQ$, etc.

For high flexibility, we avoid fixing $\type$ to a particular set of scalar types, i.e., we assume that $\type$ contains all practice-relevant types.
This is legal, because our formalism makes no assumptions on the number and kinds of scalar types.

We consider operations on scalar types (addition, multiplication, etc) to be:~%
1)~\emph{atomic}: we do not aim at parallelizing or otherwise optimizing operations on scalar values in this work;~%
2)~\emph{size preserving}: we assume that all values of a scalar type have the same arbitrary but fixed size.

Note that we can potentially also define, for example, the set of arbitrarily sized matrices $\{ T^{m\times n} \ | \ m,n\in\IN, T\in\type \}$ as scalar type in our approach.
However, this would prevent any kind of formal reasoning about type correctness and performance of matrix-related operations (e.g., matrix multiplication), such as parallelization (due to our atomic assumption above) or type correctness (e.g., assuring in matrix multiplication that number of columns of the first input matrix coincides with and number of rows of the second matrix:~due to our size preservation assumption above, we would not be able to distinguish matrices based on their sizes).

\subsection{Functions}

\begin{definition}[Function]
Let $A\in\type$ and $B\in\type$ be two scalar types.

A \emph{(total) function}
$f$
is a tuple of the form
\[
f \ \in \ \{ \ \ ( \ \underbrace{(\,A, \ B\,)}_{\texttt{function type}}\,, \ \underbrace{G_f}_{\substack{\texttt{function}\\[2pt]\texttt{graph}}} \ ) \ | \ G_f\subseteq\{ \ (a,b) \ | \ a\in A \wedge b\in B \ \} \ \ \}
\]
that satisfies the following two properties:~%
\begin{itemize}
  \item \emph{left-total}:~$\forall a\in A:\,\exists b\in B: \, (a,b)\in G_f$;
  \item \emph{right-unique}:~$(a,b)\in G_f \ \wedge \ (a,b')\in G_f \ \Rightarrow \ b=b'$.
\end{itemize}
We write $f(a)$ for the unique $b\in B$ such that $(a,b)\in G_f$.
Moreover, we denote $f$'s function type as $A\to B$, and we write $f:A\to B$ to state that $f$ has function type $A\to B$.
\vspace*{3px} \\
We refer to:
\begin{itemize}
\item $A$ as the \emph{domain} of $f$
\item $B$ as the \emph{co-domain} (or \emph{range}) of $f$
\item $(A,B)$ as the \emph{type} of $f$
\item $G_f$ as the \emph{graph} of $f$
\end{itemize}
If $f$ does not satisfy the left total property, we say $f$ is \emph{partial}, and we denote $f$'s type as $f:A\to_p B$.
\end{definition}

We allow functions to have so-called \emph{dependent types}~\cite{10.1145/292540.292560} for expressive typing.
For example, dependent types enable encoding the sizes of families into the type system, which contributes to better error checking.
We refer to dependently typed functions as \emph{meta-functions}, as outlined in the following.

\begin{definition}[Meta-Function]
We refer to any family of functions
\[
( \ f^{<i>}:A^{<i>}\to B^{<i>} \ )^{<i\in I>}
\]
as \emph{meta-function}.
In the context of meta-functions, we refer to index $i\in I$ also as \emph{meta-parameter}, to index set $I$ as \emph{meta-parameter type}, to $A^{<i\in I>}$ and $B^{<i\in I>}$ as meta-types (as both are generic in meta-parameter $i\in I$), and to $A^{<i>}\to B^{<i>}$ for concrete $i$ as meta-function $f$'s ordinary function type.

\noindent
In the following, we often write:
\begin{itemize}
  \item $f^{<i\in I>}:A^{<i>}\to B^{<i>}$ instead of $( \, f^{<i>}:A^{<i>}\to B^{<i>} \, )^{<i\in I>}$;
  \item $f^{<i>}:A'\to B^{<I>} $ (or $f^{<i>}:A^{<i>}\to B'$) iff $A^{<i>}=A'$ (or $B^{<i>}=B'$) for all $i\in I$.
\end{itemize}
\end{definition}

We use \emph{multi-stage meta-functions} as a concept analogous to \emph{multi staging}~\cite{10.1145/258993.259019} in programming and similar to \emph{currying} in mathematics.

\begin{definition}[Multi-Stage Meta-Function]
\label{app_multi_stage_meta_function}

A \emph{multi-stage meta-function} is a nested family of functions:
\begin{align*}
&
f^{
  \overbrace{<i_1\in I_1^{<>}>}^{\texttt{stage $1$}}
  \ \dotsc \
  \overbrace{<i_S\in I_S^{<i_1,\dotsc,i_{S-1}>}>}^{\texttt{stage $S$}}
  }: \
\underbrace{A^{<i_1,\dotsc,i_S>} \ \to \  B^{<i_1,\dotsc,i_S>}}_{\texttt{function instance}}
\end{align*}
Here,
$I_s^{<i_1,\dotsc,i_{s-1}>}$, $s\in[1,S]_\IN$,
is the meta-parameter type on stage $s$, which may depend on all meta-parameters of the previous stages $i_1,\dotsc,i_{s-1}$.
We refer to such meta-functions also as \emph{$S$-stage meta-functions}, and we denote their type also
as
\[
f^{<i_1\in I_1^{<>} \ | \ \dotsc \ | \ i_S\in I_S^{<i_1,\dotsc,i_{S-1}>}>}:A^{<i_1,\dotsc,i_S>} \ \to \  B^{<i_1,\dotsc,i_S>}
\]
and access to them as
\[
f^{<i_1 \ | \ \dotsc \ | \ i_S>}(x)
\]
where different stages are separated by vertical bars.
\end{definition}

\vspace*{5px}

We allow partially applying parameters (meta and ordinary) of meta-functions.

\begin{definition}[Partial Meta-Function Application]
\label{app_partial_meta_func_app}

Let
\begin{align*}
f^{<i_1\in I_1 \, | \,\dotsc \, | \,i_S\in I_S>}:A^{<i_1,\dotsc,i_S>} \ \to \  B^{<i_1,\dotsc,i_S>}
\end{align*}
be a meta-function (meta-parameters of meta-types $I_1,\dotsc,I_S$ omitted for brevity).

\begin{itemize}
  \item The \emph{partial application} of meta-function $f$ on stage $s$ to meta-parameter $\hat{i}_s$ is the meta-function
  \[
  f'^{<{i}_1\in \hat{I}_1 \, | \,\dotsc \, | \,{i}_{s-1}\in \hat{I}_{s-1} \, | \,i_{s+1}\in I_{s+1} \, | \,\dotsc \, | \,i_S\in I_S>}:A^{<i_1,\dotsc,i_{s-1},\,\hat{i}_s\,,i_{s+1}\dotsc,i_S>} \ \to \  B^{<i_1,\dotsc,i_{s-1},\,\hat{i}_s\,,i_{s+1},\dotsc,i_S>}
  \]
  where $\hat{I}_1\subseteq I_1\:,\dotsc,\:\hat{I}_{s-1}\subseteq I_{s-1}$ are the largest sets such that $\hat{i}_s\in I_s^{<{i}_1,\dotsc,{i}_{s-1}>}$ for all ${i}_1\in \hat{I}_1\:,\dotsc,\:{i}_{s-1}\in \hat{I}_{s-1}$.
  The function is defined as:
  \begin{align*}
  f'^{<{i}_1 \, | \,\dotsc \, | \,{i}_{s-1} \, | \,i_{s+1} \, | \,\dotsc \, | \,i_S>}(x) := f^{<{i}_1 \, | \,\dotsc \, | \,{i}_{s-1} \, | \,\hat{i}_s \, | \,i_{s+1} \, | \,\dotsc \, | \,i_S>}(x)
  \end{align*}
  We write for $f'$'s type also
  \[
  f^{<{i}_1\in \hat{I}_1 \, | \,\dotsc \, | \,{i}_{s-1}\in \hat{I}_{s-1} \, | \,\hat{i}_s \, | \,i_{s+1}\in I_{s+1} \, | \,\dotsc \, | \,i_S\in I_S>}:A^{<i_1,\dotsc,i_{s-1},\,\hat{i}_s\,,i_{s+1}\dotsc,i_S>} \ \to \  B^{<i_1,\dotsc,i_{s-1},\,\hat{i}_s\,,i_{s+1},\dotsc,i_S>}
  \]
  where $f'$ is replaced by $f$ and $i_s\in I_s$ is replaced by the concrete value $\hat{i}_s$.

    \item The \emph{partial application} of meta-function $f$ to ordinary parameter $x$ is the meta-function
    \begin{align*}
    f'^{<{i}_1\in\hat{I}_1 \, | \,\dotsc \, | \,{i}_S\in\hat{I}_S>}:\underbrace{B_1^{<{i}_1,\dotsc,{i}_S>} \ \to \  B_2^{<{i}_1,\dotsc,{i}_S>}}_{=\,B^{<{i}_1,\dotsc,{i}_S>}}
    \end{align*}
    where $\hat{I}_1\subseteq I_1\:,\dotsc,\:\hat{I}_{S}\subseteq I_{S}$ are the largest sets such that $x\in A^{<{i}_1,\dotsc,{i}_S>}$ for all ${i}_1\in\hat{I}_1\:,\dotsc,\:{i}_{S}\in\hat{I}_{S}$.
    The function is defined as:
    \[
      f'^{<{i}_1 \, | \,\dotsc \, | \,{i}_S>}(\underbrace{x'}_{\in B_1^{<...>}}) := f^{<{i}_1 \, | \,\dotsc \, | \,{i}_S>}(\underbrace{x}_{\in A^{<...>}})(\underbrace{x'}_{\in B_1^{<...>}})
    \]
\end{itemize}
\end{definition}

We allow \emph{generalizing} meta-parameters.
For example, when generalizing meta-parameters that express input sizes, we allow using the corresponding meta-function on arbitrarily sized inputs (a.k.a. \emph{dynamic size} in programming).
In our generated code, meta-parameters are available at compile time such that concrete meta-parameter values can be exploited for generating well-performing code (e.g., for setting static loop boundaries).
Consequently, meta-parameter generalization increases the expressivity of the generated code (e.g., by being able to process differently sized inputs without re-compilation for unseen input sizes), but usually at the cost of performance, because generalized meta-parameters cannot be exploited during code generation.

\begin{definition}[Generalized Meta-Parameters]
\label{app_gen_meta_func}

Let
\begin{align*}
f^{<i_1\in I_1 \, | \,\dotsc \, | \,i_s\in I_s \, | \,\dotsc \, | \,i_S\in I_S>}:A^{<i_1,\dotsc,i_s,\dotsc,i_S>} \ \to \  B^{<i_1,\dotsc,i_s,\dotsc,i_S>}
\end{align*}
be a meta-function (meta-parameters of $I_1,\dotsc,I_S$ omitted for brevity) such that
\[
    f^{<i_1 \, | \,\dotsc \, | \,i_s \, | \,\dotsc \, | \,i_S>}(x)
    =
    f^{<i_1 \, | \,\dotsc \, | \,i'_s \, | \,\dotsc \, | \,i_S>}(x)
\]
i.e., $f$'s behavior is invariant under different values of meta-parameter $i_s$ in stage $s$.

  The \emph{generalization} of $f$ in meta-parameter $s\in[1,S]_\IN$ is the meta-function
  \begin{align*}
  &
  f'^{<i_{1}\in I_{1} \ | \ \dotsc \ | \ i_{s-1}\in I_{s-1} \ | \ i_{s+1}\in I_{s+1} \ | \ \dotsc \ | \ i_S \in I_{S}>}: \\[0px]
  &
  \hspace*{10px}
  \bigcup_{i_s\in I_s^{<i_1,\dotsc,i_{s-1}>}} A^{<i_{1} \ | \ \dotsc \ | \ i_{s-1} \ | \ i_s \ | \ i_{s+1} \ | \ \dotsc \ | \ i_S>}
  \to
  \bigcup_{i_s\in I_s^{<i_1,\dotsc,i_{s-1}>}} B^{<i_{1} \ | \ \dotsc \ | \ i_{s-1} \ | \ i_s \ | \ i_{s+1} \ | \ \dotsc \ | \ i_S>}
  \end{align*}
  which is defined as:
  \[
    f'^{<i_{1} \ | \ \dotsc \ | \ i_{s-1} \ | \ i_{s+1} \ | \ \dotsc \ | \ i_S>}(x) := f'^{<i_{1} \ | \ \dotsc \ | \ i_{s} \ | \ i_{s+1} \ | \ \dotsc \ | \ i_S>}(x)
  \]
  for an arbitrary $i_s\in I_s$ such that $x\in A^{<i_{1} \ | \ \dotsc \ | \ i_{s-1} \ | \ i_s \ | \ i_{s+1} \ | \ \dotsc \ | \ i_S>}$.

  We write for $f$'s type also
  \[
  f^{<i_{1}\in I_{1} \ | \ \dotsc \ | \ i_{s-1}\in I_{s-1} \ | \ *\in I_s \ | \ i_{s+1}\in I_{s+1} \ | \ \dotsc \ | \ i_S \in I_{S}>}: A^{<i_1,\dotsc,i_S>}\to B^{<i_1,\dotsc,i_S>}
  \]
  where $i_s$ is replaced by $*$, and for access to $f$
  \[
  f^{<i_{1} \ | \ \dotsc \ | \ i_{s-1} \ | \ * \ | \ i_{s+1} \ | \ \dotsc \ | \ i_S>}(x)
  \]
  \end{definition}

We use \emph{postponed meta-parameters} to change the order of meta-parameters of already defined meta-functions.
For example, we use postponed meta-parameters in Definition~\ref{def_iv} to compute the values of meta-parameters based on the particular meta-parameter values of later stages.

\begin{definition}[Postponed Meta-Parameters]
\label{app_del_meta_func}
Let
\begin{align*}
f^{<i_1\in I_1 \, | \,\dotsc \, | \,i_s\in I_s \, | \,\dotsc \, | \,i_S\in I_S>}:A^{<i_1,\dotsc,i_s,\dotsc,i_S>} \ \to \  B^{<i_1,\dotsc,i_s,\dotsc,i_S>}
\end{align*}
be a meta-function (meta-parameters of $I_1,\dotsc,I_S$ omitted via ellipsis for brevity) such that for each $k\in(s,S]_\IN$, it holds:
\[
I_{k}^{<i_1 \ | \ \dotsc \ | \ i_s   \ | \ \dotsc \ | \ i_{k-1}>} =
I_{k}^{<i_1 \ | \ \dotsc \ | \ i_s'   \ | \ \dotsc \ | \ i_{k-1}>}
\]
i.e., the $I_k$ are invariant under different values of meta-parameter $i_s$ in stage $s$ (i.e., $i_s$ can be ignored in the parameter list of $I_{k}$).

Function $f'$ is function $f$ \emph{postponed} on stage $s$ to meta-type
\[
\hat{I}_s^{<i_1 \ | \ \dotsc \ | \ i_{s-1} \ | \ i_{s+1} \ | \ \dotsc \ | \ i_S>}\subseteq I_s^{<i_1 \ | \ \dotsc \ | \ i_{s-1}>}
\]
which, in contrast to $I_s$, may also depend on meta-parameter values $i_{s+1},\dotsc,i_S$,
iff $f'$ is of type
\begin{align*}
f'^{<i_1\in I_1^{<\dotsc>} \ | \ \dotsc \ | \ i_{s-1}\in I^{<\dotsc>}_{s-1} \ | \ i_{s+1}\in I^{<\dotsc>}_{s+1} \ | \ \dotsc \ | \ i_S\in I^{<\dotsc>}_S><i_s\in\hat{I}^{<i_1,\dotsc,i_S>}_s>}:A^{<i_1,\dotsc,i_s,\dotsc,i_S>} \ \to \  B^{<i_1,\dotsc,i_s,\dotsc,i_S>}
\end{align*}
and defined as:
\[
    f'^{<i_1\,|\,\dotsc\,|\,i_{s-1}\,|\,i_{s+1}\,|\,\dotsc\,|\,i_S><i_s>}(a) =
    f^{<i_1\,|\,\dotsc\,|\,i_{s-1}\,|\,i_{s}\,|\,i_{s+1}\,|\,\dotsc\,|\,i_S>}(a)
\]

\noindent
We write for $f'$'s type also
\begin{align*}
f^{<i_1\in I_1^{<\dotsc>} \ | \ \dotsc \ | \ i_{s-1}\in I^{<\dotsc>}_{s-1} \ | \ \rightarrow \ | \ i_{s+1}\in I^{<\dotsc>}_{s+1} \ | \ \dotsc \ | \ i_S\in I^{<\dotsc>}_S><i_s\in\hat{I}^{<\dotsc>}_s>}:A^{<i_1,\dotsc,i_s,\dotsc,i_S>} \ \to \  B^{<i_1,\dotsc,i_s,\dotsc,i_S>}
\end{align*}
where $f'$ is replaced by $f$ and $i_s$ by symbol "$\rightarrow$".
For access to $f'$, we write
\begin{align*}
f^{<i_1 \ | \ \dotsc \ | \ i_{s-1} \ | \ \rightarrow \ | \ i_{s+1} \ | \ \dotsc \ | \ i_S><i_s>}(x)
\end{align*}
\end{definition}

When using a binary function for combining a family of elements, we often use the following notation.

\begin{notation}[Iterative Function Application]
\label{app_not_iterative_func_app}
  Let $\circledast:T\times T\to T$ be an arbitrary associative and commutative binary function on scalar type $T\in\type$.
  Let further
  $x$ be an arbitrary family that has index set $I:=\{i_1,\dotsc,i_N\}$ and image set $\{x_i\}_{i\in I}\subseteq T$.

  We write
  $
    \underset{i\in I}{\circledast} \ x_i
  $
  instead of
  $
    x_{i_1} \circledast \dotsc \circledast x_{i_N}
  $ (infix notation).
\end{notation}

\vspace*{10px}

\subsection{\texttt{MatVec} Expressed in MDH DSL}
\label{matvec_in_dsl}

Our \texttt{MatVec} example from Figure~\ref{fig:intro_example} is expressed in MDH's Python-based high-level \emph{Domain-Specific Language~(DSL)}, used as input by our \emph{MDH compiler}~\cite{mdh_in_python}, as follows:

\begin{lstlisting}[
    language=Python,
    morekeywords={__kernel__},
    mathescape=true,
    numbers=left,
    xleftmargin=1.5em,
    float=h!,
    frame=lines,
    caption={\texttt{MatVec} expressed in MDH's Python DSL},
    label={lst_matvec_in_mdh_dsl}
    ]
def matvec(T: ScalarType, I: int, K: int):
    @mdh( out( w = Buffer[T, [I]]                        ) ,
          inp( M = Buffer[T, [I, K]], v = Buffer[T, [K]] ) )
    def mdh_matvec():
        def mul(out, inp):
            out['w'] = inp['M'] * inp['v']

        def scalar_plus(res, lhs, rhs):
            res['w'] = lhs['w'] + rhs['w']

        return (
            out_view[T]( w = [lambda i, k: (i)] ),
              md_hom[I, K]( mul, ( CC, PW(scalar_plus) ) ),
                inp_view[T, T]( M = [lambda i, k: (i, k)] ,
                                v = [lambda i, k: (k)   ] )
        )
\end{lstlisting}

\noindent
Our MDH compiler takes an expression as in Listing~\ref{lst_matvec_in_mdh_dsl} as input, and it fully automatically generates auto-tuned program code from it, according to the methodologies presented in this paper (particularly in Section~\ref{app_sec_code_generation}).

We rely on a Python-based DSL, because Python is becoming increasingly popular in both academia and industry~\cite{tiobe}.
Our future work aims to offer our MDH-based DSL in further approaches, e.g., the \emph{MLIR} compiler framework~\cite{9370308} to make our approach better accessible to the community.

\newpage

\section{Addendum Section~\ref{ch:high_level}}

\subsection{Design Decisions: Combine Operators}
\label{app_design_dec_co}

We list some design decisions for combine operators (Definition~\ref{def_combine_op}).

\begin{note}

\begin{itemize}
\setlength\itemsep{0.2em}

\item[]

\item
We deliberately restrict index set function $\size{}{MDA}{MDA}$ to compute the index set in the particular dimension $d$ only, and not of all $D$ Dimensions
(i.e., the function's output is in $\IDXs$ and not $\IDXs^D$),
because this enables applying combine operator $\co{}$ iteratively:
\[
(\dotsc(\,(\MDA_1 \co{}^{<(P,Q)>} \MDA_2) \, \co{}^{<(P\cupdot Q,R)>} \MDA_3) \, \co{}^{<(P\cupdot Q\cupdot R,\dotsc)>} \, \dotsc
\]
for MDAs $\MDA_1,\MDA_2,\MDA_3,\dotsc$ that have index sets $\size{}{MDA}{MDA}(P), \size{}{MDA}{MDA}(Q), \size{}{MDA}{MDA}(R),\dotsc$ in dimension $d$.
This is because the index set of the output MDA changes only in dimension $d$, to the new index set $\size{}{MDA}{MDA}(P\cupdot Q)$, \ $\size{}{MDA}{MDA}(\size{}{MDA}{MDA}(P\cupdot Q)\cupdot R)$,~~$\dotsc$, so that the output MDA can be used as input for a new application of $\co{}$.

\item It is a design decision that a combine operator's index set function $\size{}{MDA}{MDA}$ takes as input the MDA index set $P$ or $Q$ in the particular dimension $d$ only, rather than the all sets $(I_1,\dotsc,I_D)$.
Our approach can be easily extended to index set functions $\size{}{MDA}{MDA}:\IDXs^D\to\IDXs$ that take the entire MDA's index sets as input.
However, we avoid this additional complexity, because we are currently not aware of any real-world application that would benefit from such extension.

\item
For better convenience, we could potentially define the meta-type of combine operators (Definition~\ref{def_combine_op}) such that meta-parameter $(I_1,\dotsc,I_{d-1},I_{d+1},\dotsc,I_D)$ is separated from parameter $(P,Q)$ in a distinct, earlier stage (Definition~\ref{app_multi_stage_meta_function}).
This would allow automatically deducing $(I_1,\dotsc,I_{d-1},I_{d+1},\dotsc,I_D)$ from the input MDAs' types, whereas for meta-parameter $(P,Q)$, automatic deduction is usually not possible:~function $\size{}{MDA}{MDA}$ has to be either invertible for automatically deducing $P$ and $Q$ from the input MDAs
or
invariant under different values of $P$ and $Q$.
Consequently, separating parameter $(I_1,\dotsc,I_{d-1},I_{d+1},\dotsc,I_D)$ in a distinct, earlier stage would allow avoiding explicitly stating this parameter, by deducing it from the input MDAs' type, and only explicitly stating parameter $(P,Q)$, e.g.,
$\circledast^{<\,(P,Q)\,>}_{2}(a,b)$ instead of
$\circledast^{<\,(I_1)\,|\,(P,Q)\,>}_{2}(a,b)$
for
$a\in T[I_1,\size{}{MDA}{MDA}(P)]$ and
$b\in T[I_1,\size{}{MDA}{MDA}(Q)]$.

We avoid separating $(I_1,\dotsc,I_{d-1},I_{d+1},\dotsc,I_D)$ and $(P,Q)$ in this work, as we focus on concatenation (Example~\ref{def:mda_concat}), prefix-sum (Example~\ref{def:mda_scan}), and point-wise combination (Example~\ref{def:mda_pw}) only, which have invertible or $P$/$Q$-invariant index set functions, respectively.
Consequently, for the practice-relevant combine operators considered in this work, we can deduce all meta-parameters automatically.

\end{itemize}
\end{note}

\newpage

\subsection{Generalized Notion of MDHs}
\label{app_gen_notion_mdh}

The MDH Definition~\ref{def_mdh} can be generalized to have an arbitrary algebraic structure as input.

\begin{definition}[Multi-Dimensional Homomorphism]
\label{def_gen_mdh}

Let
\[
\mathcal{A}^\downarrow:=(\ T^\texttt{INP}[\fsize{1}{}{MDA}{MDA}{^\downarrow}(*),\dotsc,\fsize{D}{}{MDA}{MDA}{^\downarrow}(*)] \ , \ ({^\downarrow}{\circledast_d})_{d\in[1,D]_\IN} \ )
\]
and
\[
\mathcal{A}^\uparrow:=(\ T^\texttt{OUT}[\fsize{1}{}{MDA}{MDA}{^\uparrow}(*),\dotsc,\fsize{D}{}{MDA}{MDA}{^\uparrow}(*)] \ , \ ({^\uparrow}{\circledast_d})_{d\in[1,D]_\IN} \ )
\]
be two algebraic structures, where
\[
({^\downarrow}{\circledast_d}\in{\CO}{^{\scriptstyle<\sfsize{d}{}{MDA}{MDA}{^\downarrow}\,|\,T^\texttt{INP}\,|\,D\,|\,d>}})_{d\in[1,D]_\IN}
\]
and
\[
({^\uparrow}{\circledast_d}\in{\CO}^{<\sfsize{d}{}{MDA}{MDA}{^\uparrow}\,|\,T^\texttt{OUT}\,|\,D\,|\,d>})_{d\in[1,D]_\IN}
\]
are tuples of combine operators,
for $D\in\IN$, \
$T^\texttt{INP},T^\texttt{OUT}\in\type$, \
$
\fsize{d}{}{MDA}{MDA}{^\downarrow}, \fsize{d}{}{MDA}{MDA}{^\uparrow}:\IDXs$ $\to\IDXs
$, and the two structures' carrier sets
\[
T^\texttt{INP}[\fsize{1}{}{MDA}{MDA}{^\downarrow}(*),\dotsc,\fsize{D}{}{MDA}{MDA}{^\downarrow}(*)]
\]
and
\[
T^\texttt{OUT}[\fsize{1}{}{MDA}{MDA}{^\uparrow}(*),\dotsc,\fsize{D}{}{MDA}{MDA}{^\uparrow}(*)]
\]
denote the set of MDAs that are in the function domain of combine operators
(the star symbol is used for indicating the function range of index functions).

A \emph{Multi-Dimensional Homomorphism (MDH)} from the algebraic structure $\mathcal{A}^\downarrow$ to the structure $\mathcal{A}^\uparrow$
is any function
\[
h^{<I_1,\dotsc,I_D\in\IDXs>}:T^\texttt{INP}[\fsize{1}{}{MDA}{MDA}{^\downarrow}(I_1),\dotsc,\fsize{D}{}{MDA}{MDA}{^\downarrow}(I_D)]\to T^\texttt{OUT}[\fsize{1}{}{MDA}{MDA}{^\uparrow}(I_1),\dotsc,\fsize{D}{}{MDA}{MDA}{^\uparrow}(I_D)]
\]
that satisfies the \emph{homomorphic property}:
\[
h( \ \MDA_1 {^\downarrow}{\circledast_d} \ \MDA_2 \ ) \ = \ h(\MDA_1)  \, {^\uparrow}{\circledast_d} \, h(\MDA_2)
\]
\end{definition}

The MDH Definition~\ref{def_mdh} is a special case of our generalized MDH Definition~\ref{def_gen_mdh}, for ${^\downarrow}{\circledast_d}=\dplus^{<T^\texttt{INP}\,|\,D\,|\,d>}$~(Example~\ref{def:mda_concat}).

\vspace*{5px}

Higher-order function $\mdh$ (originally introduced in Definition~\ref{def_md_hom}) is defined for the generalized MDH Definition~\ref{def_gen_mdh} as follows.

\newpage

\begin{definition}[Higher-Order Function $\mdh$]
\label{def_gen_md_hom}
The higher-order function $\mdh$ is of type
\begin{align*}
&
\hspace*{0px}
\texttt{md\_hom}^{<T^\texttt{INP},T^\texttt{OUT}\in\type \ | \ D\in\IN \ | \ (\ssize{d}{MDA}{MDA}{^\downarrow}:\IDXs\to\IDXs)_{d\in[1,D]_\IN},} \\[-5pt]
&\hspace*{111px}^{(\ssize{d}{MDA}{MDA}{^\uparrow}:\IDXs\to\IDXs)_{d\in[1,D]_\IN}>}: \\[2pt]
& \hspace*{20px}
\underbrace{( \ \CO^{<\ssize{1}{MDA}{MDA}{^\downarrow}\,|\,T^\texttt{OUT}\,|\,D\,|\,1>}\times\dotsc\times\CO^{<\ssize{D}{MDA}{MDA}{^\downarrow}\,|\,T^\texttt{OUT}\,|\,D\,|\,D>} \ )}_{{^\downarrow}{\circledast_1} \ ,\,\dotsc\,, \ {^\downarrow}{\circledast_D}} \ \ \times \\[5pt]
&\hspace*{20px}
\underbrace{\mathrm{SF}^{<T^\texttt{INP},T^\texttt{OUT}>}}_{f} \ \ \times \\[5pt]
&\hspace*{20px}
\underbrace{( \ \CO^{<\ssize{1}{MDA}{MDA}{^\uparrow}\,|\,T^\texttt{OUT}\,|\,D\,|\,1>}\times\dotsc\times\CO^{<\ssize{D}{MDA}{MDA}{^\uparrow}\,|\,T^\texttt{OUT}\,|\,D\,|\,D>} \ )}_{{^\uparrow}{\circledast_1} \ ,\,\dotsc\,, \ {^\uparrow}{\circledast_D}} \\[5pt]
&\hspace*{180px}
\to_p \ \
\underbrace{\MDH^{<T^\texttt{INP},T^\texttt{OUT}\,|\,D\,|\,(\ssize{d}{MDA}{MDA}{^\downarrow})_{d\in[1,D]_\IN},(\ssize{d}{MDA}{MDA}{^\uparrow})_{d\in[1,D]_\IN}>}}_{
\texttt{md\_hom}( \ (\circledast^\downarrow_1,\dotsc, \circledast^\downarrow_D) \ , \ f \ , \ (\circledast^\uparrow_1,\dotsc, \circledast^\uparrow_D) \ )
}
\end{align*}
The function takes as input a scalar function $f$ and two tuples of $D$-many combine operators $({^\downarrow}{\circledast_1},\dotsc,{^\downarrow}{\circledast_D})$ and $({^\uparrow}{\circledast_1},\dotsc,{^\uparrow}{\circledast_D})$,
and it yields a function $\texttt{md\_hom}( \ ({^\downarrow}{\circledast_1},\dotsc, {^\downarrow}{\circledast_D}) \ , \ f \ , \ ({^\uparrow}{\circledast_1},\dotsc, {^\uparrow}{\circledast_D}) \ )$ which is defined as:
\begin{align*}
                        \hspace*{10px}& {^\downarrow}{\MDA}
                        &&\in T^\texttt{INP}[\,\sfsize{1}{}{MDA}{MDA}{^\downarrow}(\,I_1\,)\,,\,\dotsc\,,\,\sfsize{D}{}{MDA}{MDA}{^\downarrow}(\,I_D\,)\,] &\\
=:                    \hspace*{10px}& \\
                        \hspace*{10px}& \underset{i_1\in I_1}{{^\downarrow}{\circledast_1}}\dotsc \underset{i_D\in I_D}{{^\downarrow}{\circledast_D}} {^\downarrow}{\MDA}{^{<i_1,\dotsc,i_d>}}
                        &&\in T^\texttt{INP}[\,{\sfsize{1}{}{MDA}{MDA}{^\downarrow}(\,\{i_1\}\,)\,,\,\dotsc\,,\,\sfsize{D}{}{MDA}{MDA}{^\downarrow}(\,\{i_D\}\,)}\,] &\\
\mapsto              \hspace*{10px}& \\
                        \hspace*{10px}& \underset{i_1\in I_1}{\dplus_1}\dotsc \underset{i_D\in I_D}{\dplus_D} {^\downarrow}{\MDA}_f{^{<i_1,\dotsc,i_d>}}
                        &&\in T^\texttt{INP}[\, \{i_1\} \,,\,\dotsc\,,\, \{i_D\} \,] &\\
\overset{\vec{f}}{\mapsto} \hspace*{10px}& \\
                        \hspace*{10px}& \underset{i_1\in I_1}{\dplus_1}\dotsc \underset{i_D\in I_D}{\dplus_D} {^\uparrow}{\MDA}_f{^{<i_1,\dotsc,i_d>}}
                        &&\in T^\texttt{OUT}[\, \{i_1\} \,,\,\dotsc\,,\, \{i_D\} \,] &\\
\mapsto              \hspace*{10px}& \\
                        \hspace*{10px}& \underset{i_1\in I_1}{{^\uparrow}{\circledast_1}}\dotsc \underset{i_D\in I_D}{{^\uparrow}{\circledast_D}} {^\uparrow}{\MDA}{^{<i_1,\dotsc,i_d>}}
                        &&\in T^\texttt{OUT}[\,{\sfsize{1}{}{MDA}{MDA}{^\uparrow}(\,\{i_1\}\,)\,,\,\dotsc\,,\,\sfsize{D}{}{MDA}{MDA}{^\uparrow}(\,\{i_D\}\,)}\,] &\\
=:                    \hspace*{10px}& \\
                        \hspace*{10px}& {^\uparrow}{\MDA}
                        &&\in T^\texttt{OUT}[ \sfsize{1}{}{MDA}{MDA}{^\uparrow}(\,I_1\,),\dotsc,\sfsize{D}{}{MDA}{MDA}{^\uparrow}(\,I_D\,)]&
\end{align*}
\vspace*{2px}
Here, $\vec{f}$ denotes the adaption of function $f$ to operate on MDAs comprising a single scalar value only~--~the function is of type
\[
\vec{f}^{<i_1,\dotsc,i_D\in\IN>}:T^\texttt{INP}[ \, \{i_1\},\dotsc,\{i_D\} \, ]\to T^\texttt{OUT}[ \ \{i_1\},\dotsc,\{i_D\} \ ]
\]
and defined as~%
\[
\vec{f}(x)[\,i_1,\dotsc,i_D\,]:=f(x[\,i_1,\dotsc,i_D\,])
\]

We refer to the
first application of $\mapsto$ as \emph{de-composition}, to the
application of $\overset{\vec{f}}{\mapsto}$ as \emph{scalar function application}, and to
the second application of $\mapsto$ as \emph{re-composition}.

\vspace*{2px}

\noindent
For $\texttt{md\_hom}( \, (\circledast^\downarrow_1,\dotsc, \circledast^\downarrow_d) \, , \, f , \, (\circledast^\uparrow_1,\dotsc, \circledast^\uparrow_d) \, )$, we require by definition the homomorphic property (Definition~\ref{def_gen_mdh}),
i.e., for each $d\in[1,D]_\IN$, it must hold:
\begin{align*}
&
\texttt{md\_hom}( \ ({^\downarrow}{\circledast_1},\dotsc, {^\downarrow}{\circledast_D}) \ , \ f \ , \ ({^\uparrow}{\circledast_1},\dotsc, {^\uparrow}{\circledast_D}) \ )
(\ \MDA_1 {^\downarrow}{\circledast_d} \ \MDA_2 \ ) \ = \\[5pt]
&\hspace*{165px}
\texttt{md\_hom}( \ ({^\downarrow}{\circledast_1},\dotsc, {^\downarrow}{\circledast_D}) \ , \ f \ , \ ({^\uparrow}{\circledast_1},\dotsc, {^\uparrow}{\circledast_D}) \ )
( \, \MDA_1 \, ) \\
&\hspace*{175px} {^\uparrow}{\circledast_d} \\
&\hspace*{165px} \texttt{md\_hom}( \ ({^\downarrow}{\circledast_1},\dotsc, {^\downarrow}{\circledast_D}) \ , \ f \ , \ ({^\uparrow}{\circledast_1},\dotsc, {^\uparrow}{\circledast_D}) \ )
( \, \MDA_2 \, )
\end{align*}
\end{definition}

\subsection{Simple MDH Examples}
\label{sec_mdh_examples}

\paragraph*{Function Mapping}

Function
$\texttt{map}^{<T^\texttt{INP},T^\texttt{OUT}\,|\,D\,|\,(I_1,\dotsc,I_D)>}(f)$
maps a function $f:T^\texttt{INP}\to T^\texttt{OUT}$ to each element of an MDA that has scalar type $T^\texttt{INP}\in\type$, dimensionality $D\in\IN$, and index sets $I:=(I_1,\dotsc,I_D)\in\IDXs^D$.
The function is of type
\begin{align*}
&
\texttt{map}^{<T^\texttt{INP},T^\texttt{OUT}\in\type\,|\,D\in\IN\,|\,(I_1,\dotsc,I_D)\in\IDXs^D>}: \\[3px]
&\hspace*{100px}
\underbrace{T^\texttt{INP}\to T^\texttt{OUT}}_{f} \ \to \  \underbrace{T^\texttt{INP}[ \ I_1  \ , \, \dotsc \, , \ I_D \ \ ] \ \to \ T^\texttt{OUT}[ \ I_1  \ , \, \dotsc \, , \ I_D \  \ ]}_{\texttt{map}^{<T^\texttt{INP},T^\texttt{OUT}\,|\,D\,|\,(I_1,\dotsc,I_D)>}(f)}
\end{align*}
and it is computed as:
\[
\MDA \ \ \ \ \overset{\texttt{map}^{<T^\texttt{INP},T^\texttt{OUT}\,|\,D\,|\,(I_1,\dotsc,I_D)>}(f)}{\mapsto} \ \ \ \
\underset{i_1\in I_1}{\dplus_1}
\ \ \dotsc \ \
\underset{i_D\in I_D}{\dplus_D}
\vec{f}_\texttt{map}( \ \MDA|_{\{i_1\}\times\dotsc\times\{i_D\}} \ )
\]
Here, $\dplus_d:=\dplus^{<T^\texttt{OUT}\,|\,D\,|\,d>}$ denotes concatenation (Example~\ref{def:mda_concat}) in dimension $d$,
MDA $\MDA|_{\{i_1\}\times\dotsc\times\{i_D\}}$ is the restriction of $\MDA$ to the single scalar element accessed via indices $(i_1,\dotsc,i_D)$, and $\vec{f}_\texttt{map}$ denotes the adaption of function $f$ to operate on MDAs comprising a single value only:~it is of type
\[
\vec{f}_\texttt{map}^{<i_1,\dotsc,i_D\in\IN>}:T^\texttt{INP}[ \, \{i_1\},\dotsc,\{i_D\} \, ]\to T^\texttt{OUT}[ \ \{i_1\},\dotsc,\{i_D\} \ ]
\]
and defined as~%
\[
\vec{f}_\texttt{map}(x)[\,i_1,\dotsc,i_D\,]:=f(x[\,i_1,\dotsc,i_D\,])
\]

\vspace*{5px}

It is easy to see that $\texttt{map}^{<T^\texttt{INP},T^\texttt{OUT}\,|\,D>}(f)$ is an MDH of type $\MDH^{<T^\texttt{INP},T^\texttt{OUT}\,|\,D\,|\,id,\dotsc,id>}$ whose combine operators are concatenation $\dplus_d$%
$\in{\CO}^{<id\,|\,T^\texttt{OUT}\,|\,D\,|\,d>}$
in all dimensions $d\in[1,D]_\IN$.

We have chosen \texttt{map} function's order of stages~--~$T^\texttt{INP},T^\texttt{OUT}\in\type$ (stage 1), $D\in\IN$ (stage 2), and $(I_1,\dotsc,I_D)\IDXs^D$ (stage 3)~--~according to the recommendations in~\citet{haskell_parameter_order}, i.e., earlier stages (such as the scalar types  $T^\texttt{INP},T^\texttt{OUT}$) are expected to change less frequently than later stages (e.g., the MDAs' index sets $I_1,\dotsc,I_D$).

\paragraph*{Reduction}
Function $\texttt{red}^{<T\,|\,D\,|\,(I_1,\dotsc,I_D)>}(\oplus)$ combines all elements within an MDA that has scalar type $T\in\type$, dimensionality $D\in\IN$, and index sets $I:=(I_1,\dotsc,I_D)\in\IDXs^D$, using an associative and commutative binary function $\oplus:T\times T\to T$.
The function is of type
\[
\texttt{red}^{<T\in\type\,|\,D\in\IN\,|\,(I_1,\dotsc,I_D)\in\IDXs^D>} \ : \ \underbrace{T\times T\to T}_{\oplus} \ \to \ \underbrace{T[ \ I_1  \ , \, \dotsc \, , \ I_D \ \ ] \ \to \ T[ \ 1,\dotsc,1 \ ]}_{\texttt{red}^{<T\,|\,D\,|\,(I_1,\dotsc,I_D)>}(\oplus)}
\]
and it is computed as:
\[
\MDA \ \ \ \overset{\texttt{red}^{<T\,|\,D\,|\,(I_1,\dotsc,I_D)>}(\oplus)}{\mapsto} \ \ \
\underset{i_1\in I_1 \ \ }{\overrightarrow{\bullet_1}(\oplus)}
\ \ \dotsc \ \
\underset{i_D\in I_D \ \ }{\overrightarrow{\bullet_D}(\oplus)}
\ \vec{f}_\texttt{red}( \ \MDA|_{\{i_1\}\times\dotsc\times\{i_D\}} \ )
\]
Here, $\overrightarrow{\bullet_d}(\oplus):=\overrightarrow{\bullet}^{<T\,|\,D\,|\,d>}(\oplus)$ denotes point-wise combination (Example~\ref{def:mda_pw}) in dimension $d$,
MDA $\MDA|_{\{i_1\}\times\dotsc\times\{i_D\}}$ is
defined as above,
and $\vec{f}_\texttt{red}$ is the function of type
\[
\vec{f}_\texttt{red}^{<i_1,\dotsc,i_D\in\IN>}:T^\texttt{INP}[ \, \{i_1\},\dotsc,\{i_D\} \, ]\to T^\texttt{OUT}[ \ \{0\} \, ,\dotsc, \, \{0\} \ ]
\]
that is defined as~%
\[
\vec{f}_\texttt{red}(x)[\,0,\dotsc,0\,] := x[\,i_1,\dotsc,i_D\,]
\]

It is easy to see that $\texttt{red}^{<T\,|\,D>}(\oplus)$ is an MDH of type $\MDH^{<T,T\,|\,D\,|\,0_f,\dotsc,0_f>}$ whose combine operators are point-wise addition $\overrightarrow{\bullet_d}(\,\oplus\,)$%
$\in{\CO}^{<id\,|\,T\,|\,D\,|\,d>}$ in all dimensions $d\in[1,D]_\IN$.
The same as for function \texttt{map}, function \texttt{red}'s order of meta-parameter stages are chosen according to~\cite{haskell_parameter_order}.

\subsection{Design Decisions: \mdh}
\label{app_design_dec_mdh}

We list some design decisions for higher-order function $\mdh$ (Definition~\ref{def_md_hom}).

\begin{note}
For some MDHs (such as Mandelbrot), the scalar function $f$ (Definition~\ref{def_md_hom}) is dependent on the position in the input MDA, i.e., it takes as arguments, in addition to $\MDA[i_1,\dotsc,i_D]$, also the indices $i_1,\dotsc,i_D$.
Such MDHs can be easily expressed via $\mdh$ after a straightforward type adjustment:~type $\mathrm{SF}^{<T^\texttt{INP},T^\texttt{OUT}>}$ has to be defined as the set of functions $f:T^\texttt{INP}\times\IDXs^D\to T^\texttt{OUT}$ (rather than of functions $f:T^\texttt{INP}\to T^\texttt{OUT}$, as in Definition~\ref{def_md_hom}).

Since we do not aim at forcing scalar functions to always take MDA indices as input arguments~--~for expressing most computations, this is not required (Figure~\ref{fig_hl_examples}) and only causes additional complexity~--~we assume in the following two different definitions of pattern $\mdh$:~one variant exactly as in Definition~\ref{def_md_hom}, and one variant with the adjusted type for scalar functions and that passes automatically indices $i_1,\dotsc,i_D$ to $f$.
The two variants can be easily differentiated, via an additional, boolean meta-parameter \texttt{USE\_MDA\_INDICES}:~first variant iff $\texttt{USE\_MDA\_INDICES}=\texttt{false}$ and second variant iff $\texttt{USE\_MDA\_INDICES}=\texttt{true}$.

For simplicity, we focus in this paper on the first variant (as in Definition~\ref{def_md_hom}), because it is the more common variant, and because all insights presented in this work apply to both variants.
\end{note}

\subsection{Proof $\mdh$ Lemma~\ref{lemma_mdhom}}
\label{app_proof_mdhhom_lemma}

\begin{proof}
    Let
    $\MDA_1\in T[I^{\MDA_1}_1,\dotsc,I^{\MDA_2}_D]$
    and
    $\MDA_2\in T[I^{\MDA_2}_1,\dotsc,I^{\MDA_2}_D]$
    be two arbitrary MDAs
    that are concatenable in dimension $d$.

    According to Definition~\ref{def_md_hom}, we have to show that
    \begin{align*}
        &
        \texttt{md\_hom}( \ f \ , \ (\circledast_1,\dotsc, \circledast_D) \ ) (\ \MDA_1 \dplus_d \ \MDA_2 \ ) \ = \\[5pt]
        &\hspace*{80px}
        \texttt{md\_hom}( \ f \ , \ (\circledast_1,\dotsc, \circledast_D) \ ) ( \, \MDA_1 \, )
        \ \ \circledast_d \ \
        \texttt{md\_hom}( \ f \ , \ (\circledast_1,\dotsc, \circledast_D) \ ) ( \, \MDA_2 \, )
    \end{align*}
    For this, we first show for arbitrary $k\in[1,D)_\IN$ that
    \[
        \dotsc\underset{i_{k}\in I_{k}}{\circledast_{k}}\underset{i_{k+1}\in I_{k+1}}{\circledast_{k+1}}\dotsc \ \
        x|_{\dotsc,\{i_{k}\},\{i_{k+1}\},\dotsc}
        \ \ = \ \
        \dotsc\underset{i_{k+1}\in I_{k+1}}{\circledast_{k+1}}\underset{i_{k}\in I_{k}}{\circledast_{k}}\dotsc \ \
        x|_{\dotsc,\{i_{k}\},\{i_{k+1}\},\dotsc}
    \]
    from which follows
    \[
        \underset{i_1\in I_1}{\circledast_1}\dotsc \underset{i_D\in I_D}{\circledast_D}
        x|_{\{i_{1}\},\dotsc,\{i_{D}\}}
        \ \ = \ \
        \underset{i_\sigma(1)\in I_\sigma(1)}{\circledast_{\sigma(1)}}\dotsc \underset{i_\sigma(D)\in I_\sigma(D)}{\circledast_{\sigma(D)}}
        x|_{\{i_{1}\},\dotsc,\{i_{D}\}}
    \]
    for any permutation $\sigma:\{1,\dotsc,D\}\bij\{1,\dotsc,D\}$.
    Afterward, in our assumption above, we can assume w.l.o.g. that $d=1$.

\vspace*{10px}

    \begin{enumerate}
    \setlength\itemsep{1em}
        \item[] \underline{Case 1:} [$\co{k}=\co{k+1}$]
        Follows immediately from the commutativity of $\dplus$ or $\overrightarrow{\bullet}(\oplus)$
        for commutative $\oplus$, respectively. $\ \checkmark$

        \item[] \underline{Case 2:} [$\co{k}\neq\co{k+1}$]
        Trivial, as it is either $\co{k}=\dplus$ or $\co{k+1}=\dplus$, and
        \[
            (\underset{i_d\in I_d}{\dplus} \ x|_{\dotsc,\{i_d\},\dotsc})[i_1,\dotsc,i_D]
            =
            (x|_{\dotsc,\{i_d\},\dotsc})[i_1,\dotsc,i_D]
        \]
        according to the definition of MDA concatenation $\dplus$ (Example~\ref{def:mda_concat}).~~~$\checkmark$
    \end{enumerate}

    Let now be $d=1$ (see assumption above), it holds:
    \begin{align*}
        &\hspace*{-20px}
        \texttt{md\_hom}( \ f \ , \ (\circledast_1,\dotsc, \circledast_D) \ )(\ \MDA_1 \dplus_1 \ \MDA_2 \ )
        \\
        =& \
        \underset{i_1\in I_1}{\circledast_1}\dotsc \underset{i_D\in I_D}{\circledast_D} \ \vec{f}( \ (\MDA_1 \dplus_1 \ \MDA_2)|_{\{i_1\}\times\dotsc\times\{i_D\}} \ ) \\
        =& \
        \underset{i_1\in I^{\MDA_1}_1}{\circledast_1}\dotsc \underset{i_D\in I_D}{\circledast_D} \ \vec{f}( \ \MDA_1|_{\{i_1\}\times\dotsc\times\{i_D\}} \ )
        \ \ \ \co{1} \ \ \
        \underset{i_1\in I^{\MDA_2}_1}{\circledast_1}\dotsc \underset{i_D\in I_D}{\circledast_D} \ \vec{f}( \ \MDA_2|_{\{i_1\}\times\dotsc\times\{i_D\}} \ ) \\
        =& \ \ \
        \texttt{md\_hom}( \ f \ , \ (\circledast_1,\dotsc, \circledast_D) \ ) ( \, \MDA_1 \, )
        \ \ \ \, \co{1} \, \ \ \
        \texttt{md\_hom}( \ f \ , \ (\circledast_1,\dotsc, \circledast_D) \ ) ( \, \MDA_2 \, )
        \ \ \ \checkmark
    \end{align*}

\end{proof}

\subsection{Examples of Index Functions}
\label{app_examples_idx_funcs}

We present examples of index functions (Definition~\ref{def_idx_func}).

\begin{example}[Index Functions~--~Matrix-Vector Multiplication]
  The index functions we use for expressing Matrix-Vector Multiplication (\texttt{MatVec}) are:
  \begin{itemize}
    \setlength\itemsep{1em}

    \item \underline{Input Matrix:}
    \[\idx(i,k):=(i,k)\in\IDXFCT^{<D=2,D_b=2\,|\,\sbsize{}{MDA}{BUF}>}\]
    for
    $\bsize{}{MDA}{BUF}(I^\texttt{MDA}_1,I^\texttt{MDA}_2)\ := \ [0,\max(I^\texttt{MDA}_1)]_{\INz} \, , \, [0,\max(I^\texttt{MDA}_2)]_{\INz}$

    \item \underline{Input Vector:}
    \[\idx(i,k):=(k)\in\IDXFCT^{<D=2,D_b=1\,|\,\sbsize{}{MDA}{BUF}>}\]
    for
    $\bsize{}{MDA}{BUF}(I^\texttt{MDA}_1,I^\texttt{MDA}_2)\ := \ [0,\max(I^\texttt{MDA}_2)]_{\INz}$

    \item \underline{Output Vector:}
    \[\idx(i,k):=(i)\in\IDXFCT^{<D=2,D_b=1\,|\,\sbsize{}{MDA}{BUF}>}\]
    for
    $\bsize{}{MDA}{BUF}(I^\texttt{MDA}_1,I^\texttt{MDA}_2)\ := \ [0,\max(I^\texttt{MDA}_1)]_{\INz}$

  \end{itemize}
\end{example}

\vspace*{5px}

\begin{example}[Index Functions~--~Jacobi 1D]
  The index functions we use for expressing Jacobi 1D (\texttt{Jacobi1D}) are:
  \begin{itemize}
    \setlength\itemsep{1em}

    \item \underline{Input Buffer, 1. Access:}
    \[\idx(i):=(i+0)\in\IDXFCT^{<D=1,D_b=1\,|\,\sbsize{}{MDA}{BUF}>}\]
    for
    $\bsize{}{MDA}{BUF}(I^\texttt{MDA}_1) \ := \ [0,\max(I^\texttt{MDA}_1)+0]_{\INz}$

    \item \underline{Input Buffer, 2. Access:}
    \[\idx(i):=(i+1)\in\IDXFCT^{<D=1,D_b=1\,|\,\sbsize{}{MDA}{BUF}>}\]
    for
    $\bsize{}{MDA}{BUF}(I^\texttt{MDA}_1) \ := \ [0,\max(I^\texttt{MDA}_1)+1]_{\INz}$

    \item \underline{Input Buffer, 3. Access:}
    \[\idx(i):=(i+2)\in\IDXFCT^{<D=1,D_b=1\,|\,\sbsize{}{MDA}{BUF}>}\]
    for
    $\bsize{}{MDA}{BUF}(I^\texttt{MDA}_1) \ := \ [0,\max(I^\texttt{MDA}_1)+2]_{\INz}$

    \item \underline{Output Buffer:}
    \[\idx(i):=(i)\in\IDXFCT^{<D=1,D_b=1\,|\,\sbsize{}{MDA}{BUF}>}\]
    for
    $\bsize{}{MDA}{BUF}(I^\texttt{MDA}_1) \ := \ [0,\max(I^\texttt{MDA}_1)]_{\INz}$

  \end{itemize}
\end{example}

\vspace*{10px}

\subsection{Representation of Scalar Values}
\label{app_scalar_values}

Scalar values can be considered as $0$-dimensional BUFs (Definition~\ref{def_buffer}).
Consequently, in Definition~\ref{def_buffer}, the cartesian product $[0,N_1)_{\INz}\,\times\dotsc\times\,[0,N_D)_{\INz}$ is empty for $D=0$, and thus results in the neutral element of the cartesian product.
As any singleton set can be considered as neutral element of cartesian product (up to bijection), we define the set $\{\epsilon\}$ containing the dedicated symbol epsilon only, as the uniquely determined neutral element of cartesian product (inspired by the notation of the \emph{empty word}).

We often refrain from explicitly stating symbol $\epsilon$, e.g., by writing $\Buf$ instead of $\Buf[\epsilon]$ for accessing a BUF, or $(i_1,\dotsc,i_D)\to()$ instead of $(i_1,\dotsc,i_D)\to(\epsilon)$ for index functions.

Note that alternatively, scalar values can be considered as any multi-dimensional BUF containing a single element only.
For example, a scalar value $s$ can be represented as $1$-dimensional BUF $\Buf_\texttt{1D}[0]:=s$, or a $2$-dimensional BUF $\Buf_\texttt{2D}[0,0]:=s$, or a $3$-dimensional BUF $\Buf_\texttt{3D}[0,0,0]:=s$, etc.
However, this results in an ambiguous representation of scalar values, which we aim to avoid by considering scalars as $0$-dimensional BUFs, as described above.

\subsection{Runtime Complexity of Histograms}
\label{histo_complexity}

Our implementation of Histograms~(Subfigure~5 in Figure~\ref{fig_hl_examples})
has a \emph{work complexity} of $\mathcal{O}(E*B)$, where $E$ is the number of elements to check and $B$ the number of bins, i.e., our MDH Histogram implementation is not work efficient.
However, our Histograms' \emph{step complexity}~\cite{harris2007optimizing} is $\mathcal{O}( log(E) )$:~step complexity is often used for parallel algorithms and assumes an infinite number of cores, i.e., we can ignore in our implementation of Histogram the concatenation dimension~$B$~(which has a step complexity of $\mathcal{O}(1)$) and take into account its reduction dimension $E$ only, which has a step complexity of $log(E)$ (parallel reduction~\cite{harris2007optimizing}).
In contrast, related approaches~\cite{9355244} are often work efficient, by having a work complexity of $\mathcal{O}(E)$;~however, their high work efficiency is at the cost of their step complexity which is also $\mathcal{O}(E)$, rather than $\mathcal{O}(log(E))$ as for our implementation in Subfigure~5, thereby being asymptotically less efficient for parallel machines consisting of many cores.
Our future work will show that the work-efficient Histogram implementation introduced in~\citet{9355244} can also be expressed in our approach, by using for scalar function $f$ an optimized micro kernel for Histogram computation, similarly as done in the related work.

\newpage

\subsection{Combine Operator of Prefix-Sum Computations}
\label{app_co_scan}

We define \emph{prefix-sum} which is the combine operator of compute pattern \texttt{scan} and example \texttt{MBBS} in Section~\ref{ch_hl_examples}.

\begin{example}[Prefix-Sum]
\label{def:mda_scan}
    We define \emph{prefix-sum},
    according to a binary function $\oplus:T\times T\to T$ (e.g. addition),
    as function $\circledast_\texttt{prefix-sum}$ of type
    \begin{align*}
    &
    \circledast_\texttt{prefix-sum}^{<T\in\type\,|\,D\in\IN\,|\,d\in[1,D]_\IN\,|\,(I_1,\dotsc,I_{d-1},I_{d+1},\dotsc,I_D)\in\IDXs^{D-1},(P,Q)\in\IDXsxIDXs>}: \\[5pt]
    &\hspace*{0px}
    \underbrace{T\times T\to T
    }_{\oplus}
    \ \to \
    \underbrace{
    T[I_1,\dotsc,\underset{\underset{d}{\uparrow}}{\underbrace{id(P)}},\dotsc,I_D]
    \times
    T[I_1,\dotsc,\underset{\underset{d}{\uparrow}}{\underbrace{id(Q)}},\dotsc,I_D]
    \to T[I_1,\dotsc,\underset{\underset{d}{\uparrow}}{\underbrace{id(P\cupdot Q)}},\dotsc,I_D]
    }_{\substack{\texttt{prefix-sum (according to $\oplus$)} }}
    \end{align*}
    where $id:\IDXs\to\IDXs$ is the identity function on MDA index sets.
    The function is computed as:
    \begin{align*}
    &
    \circledast_\texttt{prefix-sum}^{<T\,|\,D\,|\,d\,|\,(I_1,\dotsc,I_{d-1},I_{d+1},\dotsc,I_D),(P,Q)>}(\,\oplus\,)(\, \MDA_1,\MDA_2 \,)[ \, i_1 \, ,\dotsc, \ \ i_d \ \ ,\dotsc, \, i_D \, ] \\[5pt]
    &\hspace*{35px}
    := \
    \begin{cases}
     \ \MDA_1[ \, i_1 \, ,\dotsc, \hspace*{23px}  i_d \hspace*{23px} ,\dotsc, \, i_D \, ] \ \oplus \ \MDA_2[ \, i_1 \, ,\dotsc, \ \ max_{\le i_d}(Q) \ \ ,\dotsc, \, i_D \, ] &, \ \ i_d\in P \\
     \ \MDA_1[ \, i_1 \, ,\dotsc, \ \ max_{\le i_d}(P) \ \ ,\dotsc, \, i_D \, ] \ \oplus \ \MDA_2[  \, i_1 \, ,\dotsc, \hspace*{24px}  i_d \hspace*{24px} ,\dotsc, \, i_D \, ] &, \ \ i_d\in Q \\
    \end{cases}
    \end{align*}
    Function $\circledast_\texttt{prefix-sum}^{<T\,|\,D\,|\,d>}(\oplus)$ (meaning:~$\circledast_\texttt{prefix-sum}$ is partially applied to ordinary function parameter $\oplus$;~formal details provided in the Appendix, Definition~\ref{app_partial_meta_func_app}) is a combine operator of type ${\CO}^{<id\,|\,T\,|\,D\,|\,d>}$ for any binary operator $\oplus:T\times T\to T$.
\end{example}

\section{Addendum Section~\ref{ch:low_level}}

\subsection{Constraints of Programming Models}
\label{app_constraints}

\setlength\dashlinedash{0.5pt}
\setlength\dashlinegap{3pt}
\setlength\arrayrulewidth{0.3pt}
\begin{table}[b!]
\center
\resizebox{\textwidth}{!}{
\begin{tabular}{c|lr}
No. & Constraint& \\
\hline & &\\
\texttt{0} & $\prod_{d\in[1,D]_\IN}\texttt{\#PRT}(\overset{\overset{\texttt{$\bullet$-MDH}}{\leftarrow}}{\texttt{CC}},\overset{\overset{\texttt{$\bullet$-MDH}}{\leftarrow}}{\texttt{d}})\le 1024$ &\texttt{ (Number of \texttt{CC}s limited)} \\
& \\
\texttt{R3} & $\texttt{\#PRT}(\overset{\overset{\texttt{$\uparrow$-MDH}}{\leftarrow}}{\texttt{BLK}},\overset{\overset{\texttt{$\uparrow$-MDH}}{\leftarrow}}{\texttt{d}})>1 \wedge \co{\overset{\overset{\texttt{$\uparrow$-MDH}}{\leftarrow}}{\texttt{d}}}\neq\dplus_{\overset{\overset{\texttt{$\uparrow$-MDH}}{\leftarrow}}{\texttt{d}}} \,\Rightarrow\, \texttt{$\uparrow$-mem}^\texttt{<ob>}(\overset{\overset{\texttt{$\uparrow$-MDH}}{\leftarrow}}{\texttt{BLK}},\overset{\overset{\texttt{$\uparrow$-MDH}}{\leftarrow}}{\texttt{d}})\in\{\texttt{DM}\}$ &\texttt{ (\texttt{SMX}s combine in \texttt{DM})} \\
   & $\texttt{\#PRT}(\overset{\overset{\texttt{$\uparrow$-MDH}}{\leftarrow}}{\texttt{CC}},\overset{\overset{\texttt{$\uparrow$-MDH}}{\leftarrow}}{\texttt{d}})>1 \wedge \co{\overset{\overset{\texttt{$\uparrow$-MDH}}{\leftarrow}}{\texttt{d}}}\neq\dplus_{\overset{\overset{\texttt{$\uparrow$-MDH}}{\leftarrow}}{\texttt{d}}} \,\Rightarrow\, \texttt{$\uparrow$-mem}^\texttt{<ob>}(\overset{\overset{\texttt{$\uparrow$-MDH}}{\leftarrow}}{\texttt{CC}},\overset{\overset{\texttt{$\uparrow$-MDH}}{\leftarrow}}{\texttt{d}})\in\{\texttt{DM,SM}\}$ &\texttt{ (\texttt{CC}s combine in \texttt{DM/SM})} \\
& \\
\end{tabular}
}
\caption{
CUDA model constraints on tuning parameters
}
\label{tab_model_constraints_cuda}
\end{table}

\setlength\dashlinedash{0.5pt}
\setlength\dashlinegap{3pt}
\setlength\arrayrulewidth{0.3pt}
\begin{table}[t!]
\center
\resizebox{\textwidth}{!}{
\begin{tabular}{c|lr}
No. & Constraint& \\
\hline & &\\
\texttt{0} & $\prod_{d\in[1,D]_\IN}\texttt{\#PRT}(\overset{\overset{\texttt{$\bullet$-MDH}}{\leftarrow}}{\texttt{CC}},\overset{\overset{\texttt{$\bullet$-MDH}}{\leftarrow}}{\texttt{d}})\le 1024$ &\texttt{ (Number of \texttt{CC}s limited)} \\
& \\
\texttt{R3} & $\texttt{\#PRT}(\overset{\overset{\texttt{$\uparrow$-MDH}}{\leftarrow}}{\texttt{BLK}},\overset{\overset{\texttt{$\uparrow$-MDH}}{\leftarrow}}{\texttt{d}})>1 \wedge \co{\overset{\overset{\texttt{$\uparrow$-MDH}}{\leftarrow}}{\texttt{d}}}\neq\dplus_{\overset{\overset{\texttt{$\uparrow$-MDH}}{\leftarrow}}{\texttt{d}}} \,\Rightarrow\, \texttt{$\uparrow$-mem}^\texttt{<ob>}(\overset{\overset{\texttt{$\uparrow$-MDH}}{\leftarrow}}{\texttt{BLK}},\overset{\overset{\texttt{$\uparrow$-MDH}}{\leftarrow}}{\texttt{d}})\in\{\texttt{DM}\}$ &\texttt{ (\texttt{SMX}s combine in \texttt{DM})} \\
   & $\texttt{\#PRT}(\overset{\overset{\texttt{$\uparrow$-MDH}}{\leftarrow}}{\texttt{WRP}},\overset{\overset{\texttt{$\uparrow$-MDH}}{\leftarrow}}{\texttt{d}})>1 \wedge \co{\overset{\overset{\texttt{$\uparrow$-MDH}}{\leftarrow}}{\texttt{d}}}\neq\dplus_{\overset{\overset{\texttt{$\uparrow$-MDH}}{\leftarrow}}{\texttt{d}}} \,\Rightarrow\, \texttt{$\uparrow$-mem}^\texttt{<ob>}(\overset{\overset{\texttt{$\uparrow$-MDH}}{\leftarrow}}{\texttt{WRP}},\overset{\overset{\texttt{$\uparrow$-MDH}}{\leftarrow}}{\texttt{d}})\in\{\texttt{DM,SM}\}$ &\texttt{ (\texttt{WRP}s combine in \texttt{DM/SM})} \\
& \\
\end{tabular}
}
\caption{
CUDA+WRP model constraints on tuning parameters
}
\label{tab_model_constraints_cuda_wrp}
\end{table}
\setlength\dashlinedash{0.5pt}
\setlength\dashlinegap{3pt}
\setlength\arrayrulewidth{0.3pt}
\begin{table}[t!]
\center
\resizebox{\textwidth}{!}{
\begin{tabular}{c|lr}
No. & Constraint& \\
\hline & &\\
\texttt{0} & $\prod_{d\in[1,D]_\IN}\texttt{\#PRT}(\overset{\overset{\texttt{$\bullet$-MDH}}{\leftarrow}}{\texttt{WI}},\overset{\overset{\texttt{$\bullet$-MDH}}{\leftarrow}}{\texttt{d}})\le C_\texttt{DEV}$ &\texttt{ (Number of \texttt{PE}s limited)} \\
& \\
\texttt{R3} & $\texttt{\#PRT}(\overset{\overset{\texttt{$\uparrow$-MDH}}{\leftarrow}}{\texttt{WG}},\overset{\overset{\texttt{$\uparrow$-MDH}}{\leftarrow}}{\texttt{d}})>1 \wedge \co{\overset{\overset{\texttt{$\uparrow$-MDH}}{\leftarrow}}{\texttt{d}}}\neq\dplus_{\overset{\overset{\texttt{$\uparrow$-MDH}}{\leftarrow}}{\texttt{d}}} \,\Rightarrow\, \texttt{$\uparrow$-mem}^\texttt{<ob>}(\overset{\overset{\texttt{$\uparrow$-MDH}}{\leftarrow}}{\texttt{WG}},\overset{\overset{\texttt{$\uparrow$-MDH}}{\leftarrow}}{\texttt{d}})\in\{\texttt{GM}\}$ &\texttt{ (\texttt{CU}s combine in \texttt{GM})} \\
   & $\texttt{\#PRT}(\overset{\overset{\texttt{$\uparrow$-MDH}}{\leftarrow}}{\texttt{WI}},\overset{\overset{\texttt{$\uparrow$-MDH}}{\leftarrow}}{\texttt{d}})>1 \wedge \co{\overset{\overset{\texttt{$\uparrow$-MDH}}{\leftarrow}}{\texttt{d}}}\neq\dplus_{\overset{\overset{\texttt{$\uparrow$-MDH}}{\leftarrow}}{\texttt{d}}} \,\Rightarrow\, \texttt{$\uparrow$-mem}^\texttt{<ob>}(\overset{\overset{\texttt{$\uparrow$-MDH}}{\leftarrow}}{\texttt{WI}},\overset{\overset{\texttt{$\uparrow$-MDH}}{\leftarrow}}{\texttt{d}})\in\{\texttt{GM},\texttt{LM}\}$ &\texttt{ (\texttt{PE}s combine in \texttt{GM}/\texttt{LM})} \\
& \\
\end{tabular}
}
\caption{
OpenCL model constraints on tuning parameters
}
\label{tab_model_constraints_ocl}
\end{table}

Constraints of programming models
can be expressed in our formalism;~we demonstrate this using the example models CUDA and OpenCL.
For this, we add to the general, model-unspecific constraints (described in Section~\ref{ch_gen_ll_exp}) the new, model-specific constraints for CUDA (in Table~\ref{tab_model_constraints_cuda} or Table~\ref{tab_model_constraints_cuda_wrp}) or for OpenCL (in Table~\ref{tab_model_constraints_ocl}), respectively.

For brevity, we use in the following:
\[
    (
    \overset{\overset{\texttt{$\bullet$-MDH}}{\leftarrow}}{l_\texttt{ASM}}
    ,
    \overset{\overset{\texttt{$\bullet$-MDH}}{\leftarrow}}{d_\texttt{ASM}}
    )
    \ := \
    \leftrightarrow^{-1}_\texttt{$\bullet$-ass}(l_\texttt{ASM},d_\texttt{ASM}) \ , \ \ \bullet\in\{\downarrow,f,\uparrow\}
\]

In Tables~\ref{tab_model_constraints_cuda} and~\ref{tab_model_constraints_cuda_wrp} for CUDA, the constraint No. \texttt{0} (which constrains tuning parameter No. \texttt{0} in Table~\ref{tab_tps}) limits the number of cuda cores (\texttt{CC}) to $1024$, according to the CUDA specification~\cite{cuda-specification}.
The constraints on tuning parameter~\texttt{R3} specify that the results of \texttt{SMX} can be combined in device memory (\texttt{DM}) only in CUDA, and the results of \texttt{CC}s/\texttt{WRP}s in only device memory~(\texttt{DM}) or shared memory~(\texttt{SM}).
Note that in the case of Table~\ref{tab_model_constraints_cuda_wrp}, \texttt{CC}s are not constrained in parameter~14, as \texttt{CC}s within a \texttt{WRP} have access to all CUDA memory regions:~\texttt{DM}, \texttt{SM}, as well as \texttt{RM} (via warp shuffles~\cite{cuda_shuffles}).

In Table~\ref{tab_model_constraints_ocl} for OpenCL, the constraints are similar to the CUDA's constraints in Tables~\ref{tab_model_constraints_cuda}:~they limit the number of \texttt{PE}s to $C_\texttt{DEV}$ (which is a device-specific constant in OpenCL), and the constraints specify the valid memory regions for combining the results of cores, according to the OpenCL specification~\cite{opencl-specification}.

Note that the tables present some important example constraints only and are not complete:~for example, CUDA and OpenCL devices are also constrained regarding their memory sizes (shared/private memory), which is not considered in the tables for brevity.

\newpage

\subsection{Inverse Concatenation}
\label{app_inverse_concat}

\begin{definition}[Inverse Concatenation]
    \label{def:inverse_mda_concat}

    The inverse of operator \emph{concatenation} (Example~\ref{def:mda_concat}) is function $\dplus^{-1}$ which is of type
    \begin{align*}
    &
    {\dplus^{-1}}^{<T\in\type\,|\,D\in\IN\,|\,d\in[1,D]_\IN\,|\,(I_1,\dotsc,I_{d-1},I_{d+1},\dotsc,I_D)\in\IDXs^{D-1},(P,Q)\in\IDXsxIDXs>}: \\
    &\hspace*{20px}
    T[ \, I_1 \, ,\dotsc, \, \underset{\underset{d}{\uparrow}}{\underbrace{id(P\cupdot Q)}} \, ,\dotsc, \, I_D \, ]
    \ \to \
    T[ \, I_1 \, ,\dotsc, \, \underset{\underset{d}{\uparrow}}{\underbrace{id(P)}} \, ,\dotsc, \, I_D \, ]\ \times \ T[ \,
    I_1 \, ,\dotsc, \, \underset{\underset{d}{\uparrow}}{\underbrace{id(Q)}} \, ,\dotsc, \, I_D \, ]
    \end{align*}
    where $id:\IDXs\to\IDXs$ is the identity function on MDA index sets.
    The function is computed as:
    \begin{align*}
    &
    {\dplus^{-1}}^{<T\,|\,D\,|\,d\,|\,(I_1,\dotsc,I_{d-1},I_{d+1},\dotsc,I_D),(P,Q)>}( \, \MDA \ ) \ := \ (\MDA_1,\MDA_2)
    \end{align*}
    for
    \[
        \MDA_1[ \, i_1 \, ,\dotsc, \ \ i_d \ \ ,\dotsc, \, i_D \, ] \ := \ \MDA[ \, i_1 \, ,\dotsc, \ \ i_d \ \ ,\dotsc, \, i_D \, ] \ , \ \ i_d\in P
    \]
    and
    \[
        \MDA_2[ \, i_1 \, ,\dotsc, \ \ i_d \ \ ,\dotsc, \, i_D \, ] \ := \ \MDA[ \, i_1 \, ,\dotsc, \ \ i_d \ \ ,\dotsc, \, i_D \, ] \ , \ \ i_d\in Q
    \]
    i.e., $\MDA_1$ and $\MDA_2$ behave exactly as MDA $\MDA$ on their restricted index sets $P$ or $Q$, respectively.

    We often write for $(\MDA_1,\MDA_2) := {\dplus^{-1}}^{<\dotsc>}(\MDA)$ (meta-parameters omitted via ellipsis) also
    \[
        \MDA =: \MDA_1 {\dplus}^{<\dotsc>} \MDA_2
    \]
    Our notation is justified by the fact that the inverse of MDA $\MDA$ is uniquely determined, as the two MDAs $\MDA_1$ and $\MDA_2$ which are equal to MDA $\MDA$ when concatenating them.
\end{definition}

\subsection{Example~\ref{fig_ll_example} in Verbose Math Notation}
\label{app_ll_example}

Figures~\ref{fig_app_decomp}-\ref{fig_app_comp} show our low-level representation from Example~\ref{fig_ll_example} in verbose math notation.
The symbols $\blacksquare_\bot,\dotsc,\blacksquare_f$ used in the figures are a textual abbreviation for:
\begin{align*}
  &\blacksquare_\bot         &:=& & &*     , *     \ &&|&& \ *     , *     \ &&|&& \ *     , *     \\
  &\blacksquare^\texttt{1}_1 &:=& & &*     , *     \ &&|&& \ *     , *     \ &&|&& \ *     , *     \\
  &\blacksquare^\texttt{1}_2 &:=& & &p^1_1 , *     \ &&|&& \ *     , *     \ &&|&& \ *     , *     \\
  &\blacksquare^\texttt{2}_1 &:=& & &p^1_1 , p^1_2 \ &&|&& \ *     , *     \ &&|&& \ *     , *     \\
  &\blacksquare^\texttt{2}_2 &:=& & &p^1_1 , p^1_2 \ &&|&& \ p^2_1 , *     \ &&|&& \ *     , *     \\
  &\blacksquare^\texttt{3}_1 &:=& & &p^1_1 , p^1_2 \ &&|&& \ p^2_1 , p^2_2 \ &&|&& \ *     , *     \\
  &\blacksquare^\texttt{3}_2 &:=& & &p^1_1 , p^1_2 \ &&|&& \ p^2_1 , p^2_2 \ &&|&& \ p^3_1 , *     \\
  &\blacksquare_f            &:=& & &p^1_1 , p^1_2 \ &&|&& \ p^2_1 , p^2_2 \ &&|&& \ p^3_1 , p^3_2
\end{align*}
where symbol $*$ indicates generalization in meta-parameters (Definition~\ref{app_gen_meta_func}).

\vspace*{5px}

In Example~\ref{fig_ll_example}, the arrow annotation of combine operators is formally an abbreviation.
For example, operator~$\dplus^{\texttt{(COR,y)}}_2$ in Figure~\ref{fig_ll_example}
is annotated with~\texttt{$\rightarrow$ M: HM[1,2],v: HM[1]} which abbreviates

\[
\dotsc \ \ \
{^\downarrow}{\MDA^2_2}^{
<\, p^1_1,p^1_2 \ | \ p^2_1,p^2_2\,:=\,*  \ | \ p^3_1\,:=\,*, p^3_2\,:=\,* \,>}
=:\underset{p^2_2\in[0,16)_{\INz}\phantom{COR}}{\dplus^\texttt{(COR,y)}_2}
\ \dotsc
\]

Here, ${^\downarrow}{\MDA^2_2}$ represents
the low-level MDA (Definition~\ref{def_ll_mda}) that is already
partitioned
for layer $1$ in dimensions $1$ and $2$, and for layer $2$ in dimension $1$ (because in Figure~\ref{fig_ll_example}, operators $\dplus^\texttt{(HM,x)}_1$, $\dplus^\texttt{(HM,y)}_2$, $\dplus^\texttt{(COR,x)}_1$ appear before operator $\dplus^\texttt{(COR,y)}_2$), but not yet for layer $2$ in dimension~$2$ as well as for layer~$3$ in both dimensions (indicated by symbol $*$ which is described formally in Definition~\ref{app_gen_meta_func} of our Appendix).
In our generated code (discussed in Section~\ref{app_sec_code_generation} of our Appendix), we store low-level MDAs, like ${^\downarrow}{\MDA^2_2}$, using their domain-specific data representation, as the domain-specific representation is usually more efficient:~in the case of \texttt{MatVec}, we physically store matrix $M$ and vector $v$ for the input MDA, and vector $w$ for the output MDA.
For example, low-level MDA
\[
    {^\downarrow}{\MDA^2_2}^{<\, p^1_1,p^1_2 \ | \ p^2_1,p^2_2\,:=\,*  \ | \ p^3_1\,:=\,*, p^3_2\,:=\,* \,>}
\]
can be transformed via view functions (Definitions~\ref{def_iv} and~\ref{def_ov}) to \emph{low-level BUFs} (Definition~\ref{ll_buffers})
\[
    {M^2_2}^{ <\texttt{HM} \:|\: id><\, p^1_1,p^1_2 \ | \ p^2_1,p^2_2\,:=\,*  \ | \ p^3_1\,:=\,*, p^3_2\,:=\,* \,>}
{ \ , \ \ }
    {v^2_2}^{<\texttt{HM} \:|\: id><\, p^1_1,p^1_2 \ | \ p^2_1,p^2_2\,:=\,*  \ | \ p^3_1\,:=\,*, p^3_2\,:=\,* \,>}
\]
and back (Lemma~\ref{theorem_views}).
Similarly as for data structures in low-level programming models (e.g., \emph{C arrays} as in OpenMP, CUDA, and OpenCL), low-level BUFs are defined to have an explicit notion of memory regions
and memory layouts.

\vspace*{5px}

In Figure~\ref{fig_app_decomp}, we de-compose the input MDA ${^\downarrow}{\MDA}$, step by step, for the MDH levels $(1,1),\dotsc,(3,2)$:
\[
{^\downarrow}\MDA
=:
{^\downarrow}\MDA_\bot^{<\blacksquare_\bot>}
\to
{^\downarrow}{\MDA^1_1}^{<\blacksquare^1_1>}
\to
{^\downarrow}{\MDA^1_2}^{<\blacksquare^1_2>}
\to
{^\downarrow}{\MDA^2_1}^{<\blacksquare^2_1>}
\to
{^\downarrow}{\MDA^2_2}^{<\blacksquare^2_2>}
\to
{^\downarrow}{\MDA^3_1}^{<\blacksquare^3_1>}
\to
{^\downarrow}{\MDA^3_2}^{<\blacksquare^3_2>}
\to
{^\downarrow}\MDA_f^{<\blacksquare_f>}
\]
The input MDAs $({^\downarrow}{\MDA^l_d})_{l\in[1,3]_\IN,d\in[1,2]_\IN}$, as well as ${^\downarrow}\MDA_\bot$ and ${^\downarrow}{\MDA_f}$, are all low-level MDA representations~(Definition~\ref{def_ll_mda}).
We use as partitioning schema $P$ (Definition~\ref{def_ll_mda})
\[
P:=
(\,
(P^1_1,P^1_2)
\,,\,\,
(P^2_1,P^2_2)
\,,\,\,
(P^3_1,P^3_2)
)
=
(\,
(2,4)
\,,\,
(8,16)
\,,\,
(32,64)
)
\]
and we use the index sets $I_d$ from Definition~\ref{app_mda_part} (which define a uniform index set partitioning):
\[
  ( \
    \fsize{d}{\dplus_{d}}{MDA}{MDA}(I_d^{<p^1_1,p^1_2\,|\,p^2_1,p^2_2\,|\,p^3_1,p^3_2>})^{<(p^1_1,p^1_2)\in P^1_1\times P^1_2\,|\,(p^2_1,p^2_2)\in P^2_1\times P^2_2\,|\,(p^3_1,p^3_2)\in P^3_1\times P^3_2>}
  \ )_{d\in[1,D]_\IN}
\]
Here, $\fsize{d}{\dplus_{d}}{MDA}{MDA}$ denotes the index set function of combine operator concatenation (Example~\ref{def:mda_concat}), which is the identity function
and explicitly stated for the sake of completeness only.
Note that in Figure~\ref{fig_app_decomp}, we access low-level MDAs ${^\downarrow}{\MDA^l_d}$ as generalized in some partition sizes, via $*$ (Definition~\ref{app_gen_meta_func}), according to the definitions of the $\blacksquare^l_d$.

Each MDA
${^\downarrow}{\MDA}$
can be transformed to its domain-specific data representation matrix~%
${^\downarrow}{M}$
and vector~%
${^\downarrow}{v}$
and vice versa, using the view functions, as discussed above.

Figure~\ref{fig_app_f} shows our scalar phase, which is formally trivial.

In Figure~\ref{fig_app_comp}, we re-compose the computed
data ${^\uparrow}{\MDA_f}^{<\blacksquare_f>}$, step by step, to the final result~${^\uparrow}{\MDA}$:
\[
{^\uparrow}{\MDA_f}^{<\blacksquare_f>}
\to
{^\uparrow}{\MDA^1_1}^{<\blacksquare^1_1>}
\to
{^\uparrow}{\MDA^1_2}^{<\blacksquare^1_2>}
\to
{^\uparrow}{\MDA^2_2}^{<\blacksquare^2_2>}
\to
{^\uparrow}{\MDA^3_1}^{<\blacksquare^3_1>}
\to
{^\uparrow}{\MDA^3_2}^{<\blacksquare^3_2>}
\to
{^\uparrow}\MDA_\bot^{<\blacksquare_\bot>}
=:
{^\uparrow}{\MDA}
\]
Analogously to the de-composition phase, each output MDA $({^\uparrow}{\MDA^l_d})_{l\in[1,3]_\IN,d\in[1,2]_\IN}$, as well as ${^\uparrow}{\MDA_f}$ and ${^\uparrow}{\MDA_\bot}$, are low-level MDA representations, for $P$ as defined above and index sets
\[
  ( \
    \fsize{d}{\co{d}}{MDA}{MDA}(I_d^{<p^1_1,p^1_2\,|\,p^2_1,p^2_2\,|\,p^3_1,p^3_2>})^{<(p^1_1,p^1_2)\in P^1_1\times P^1_2\,|\,(p^2_1,p^2_2)\in P^2_1\times P^2_2\,|\,(p^3_1,p^3_2)\in P^3_1\times P^3_2>}
  \ )_{d\in[1,D]_\IN}
\]
where $\fsize{d}{\co{d}}{MDA}{MDA}$ are the index set functions of the combine operators (Definition~\ref{def_combine_op}) used in the re-composition phase.
The same as in the de-composition phase, we access the output low-level MDAs as generalized in some partition sizes, according to our definitions of the $\blacksquare^l_d$, and
we identify each MDA with its domain-specific data representation (the output vector $w$).

\begin{figure}[!p]
  \includegraphics[width=0.85\textwidth]{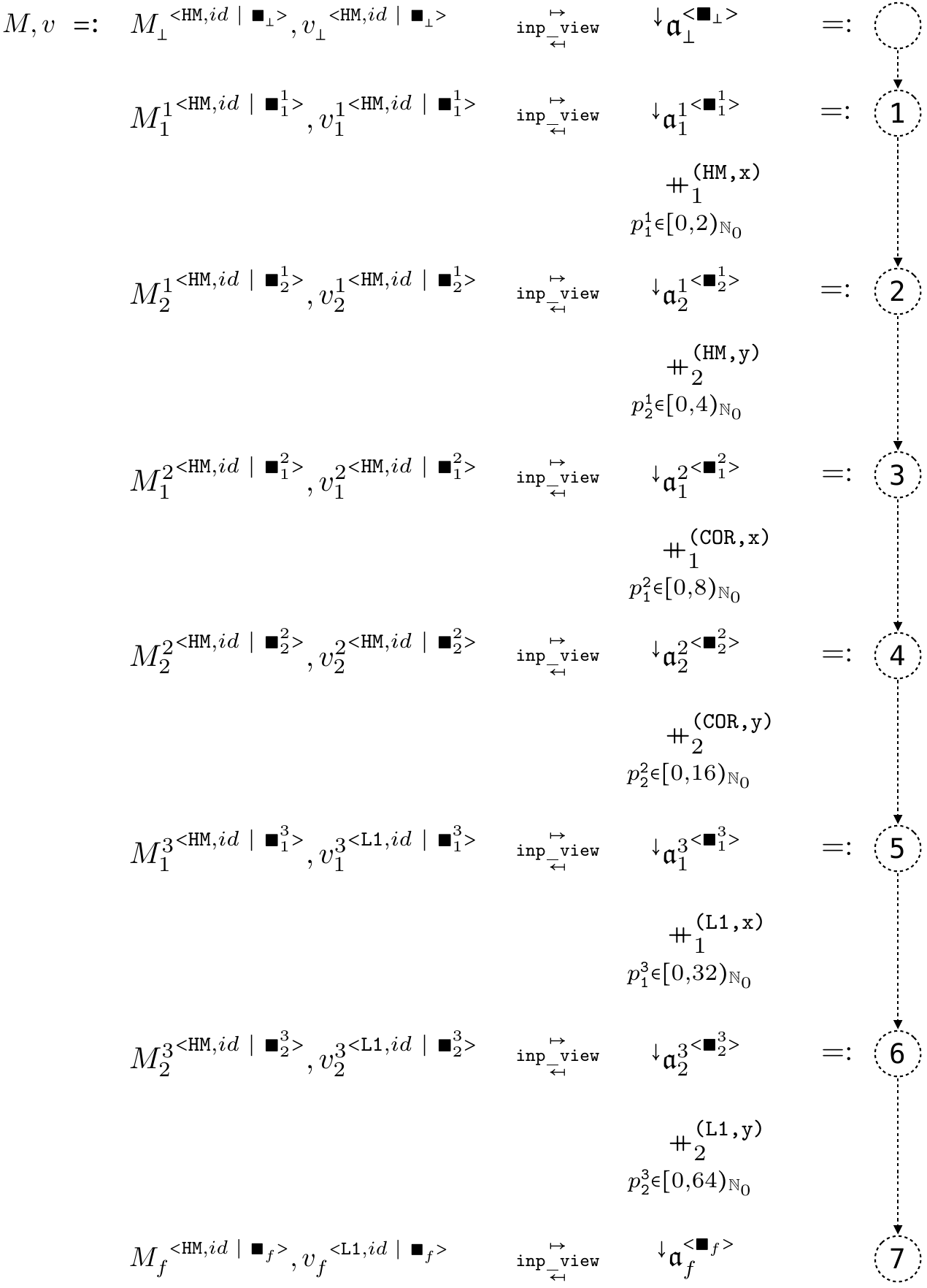}
\caption{
De-composition phase of Example~\ref{fig_ll_example} in verbose math notation.}
\label{fig_app_decomp}
\end{figure}
\clearpage

\begin{figure}[!p]
  \includegraphics[width=0.7\textwidth]{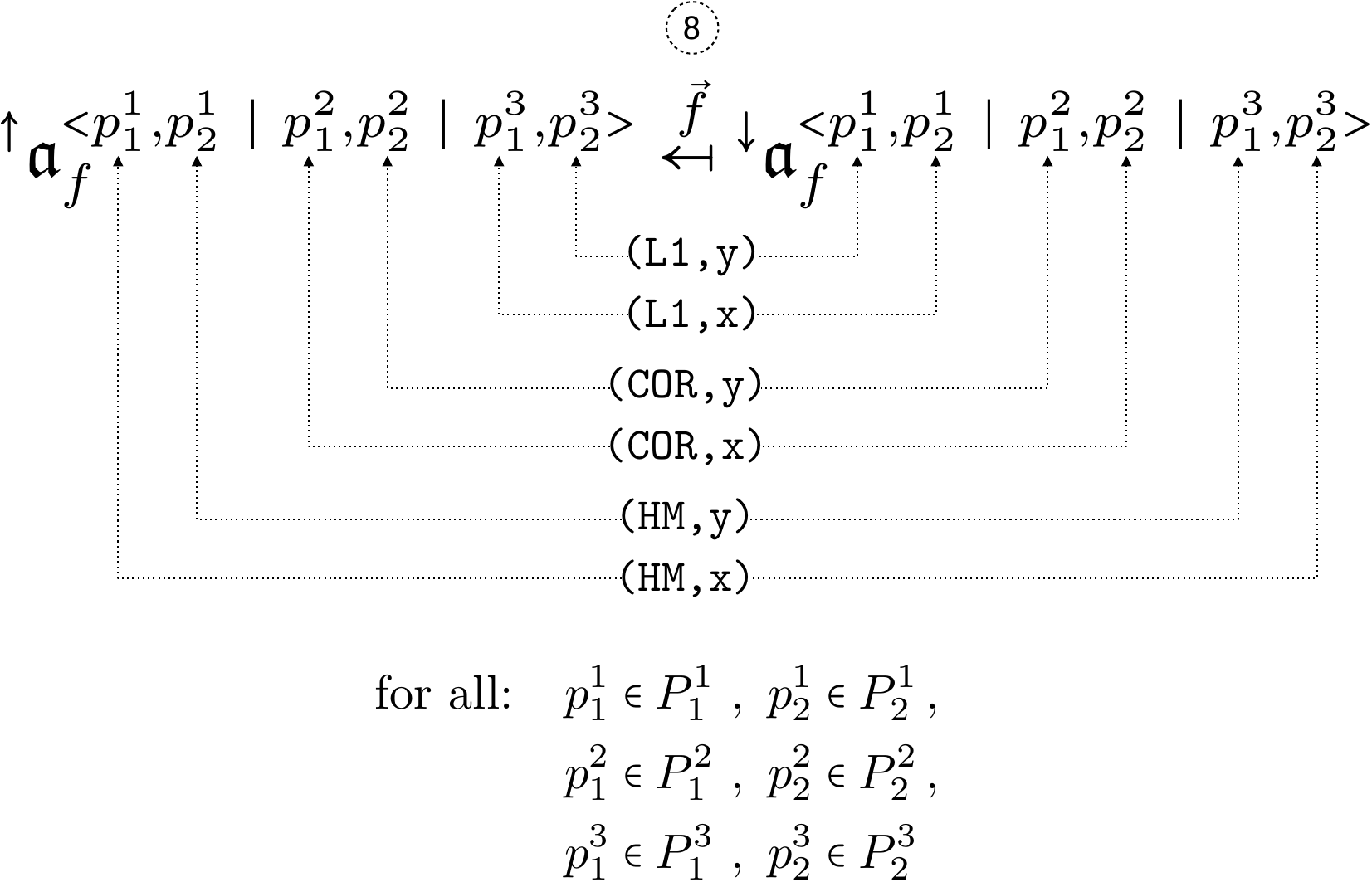}
\caption{Scalar phase of Example~\ref{fig_ll_example} in verbose math notation.}
\label{fig_app_f}
\end{figure}

\clearpage

\begin{figure}[!p]
  \includegraphics[width=0.68\textwidth]{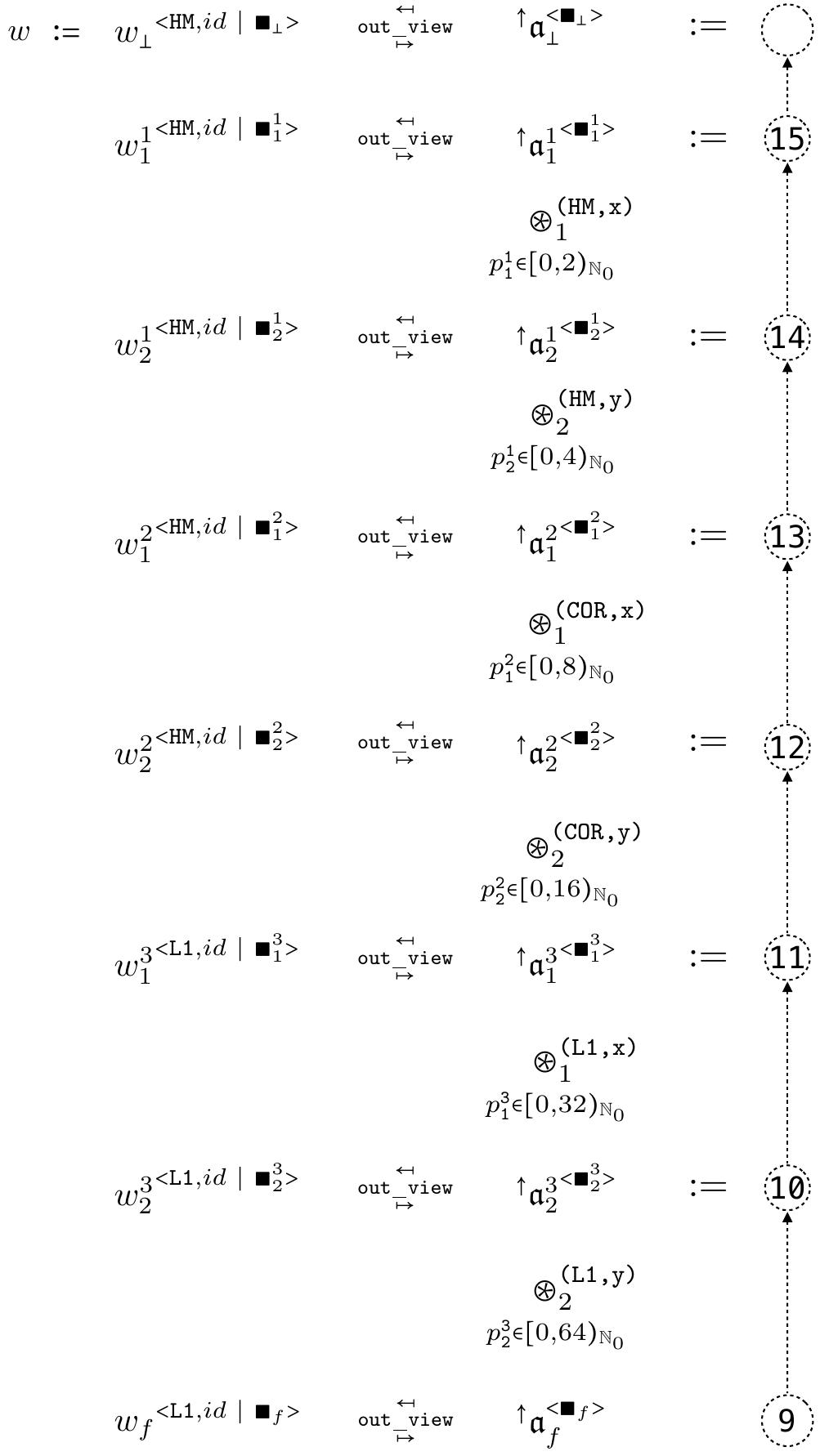}
\caption{Re-composition phase of Example~\ref{fig_ll_example} in verbose math notation.}
\label{fig_app_comp}
\end{figure}

\clearpage

\subsection{Multi-Dimensional ASM Arrangements}
\label{app_md_arrangement_mem_cor}

\begin{figure}[b]
    \includegraphics[width=0.75\textwidth]{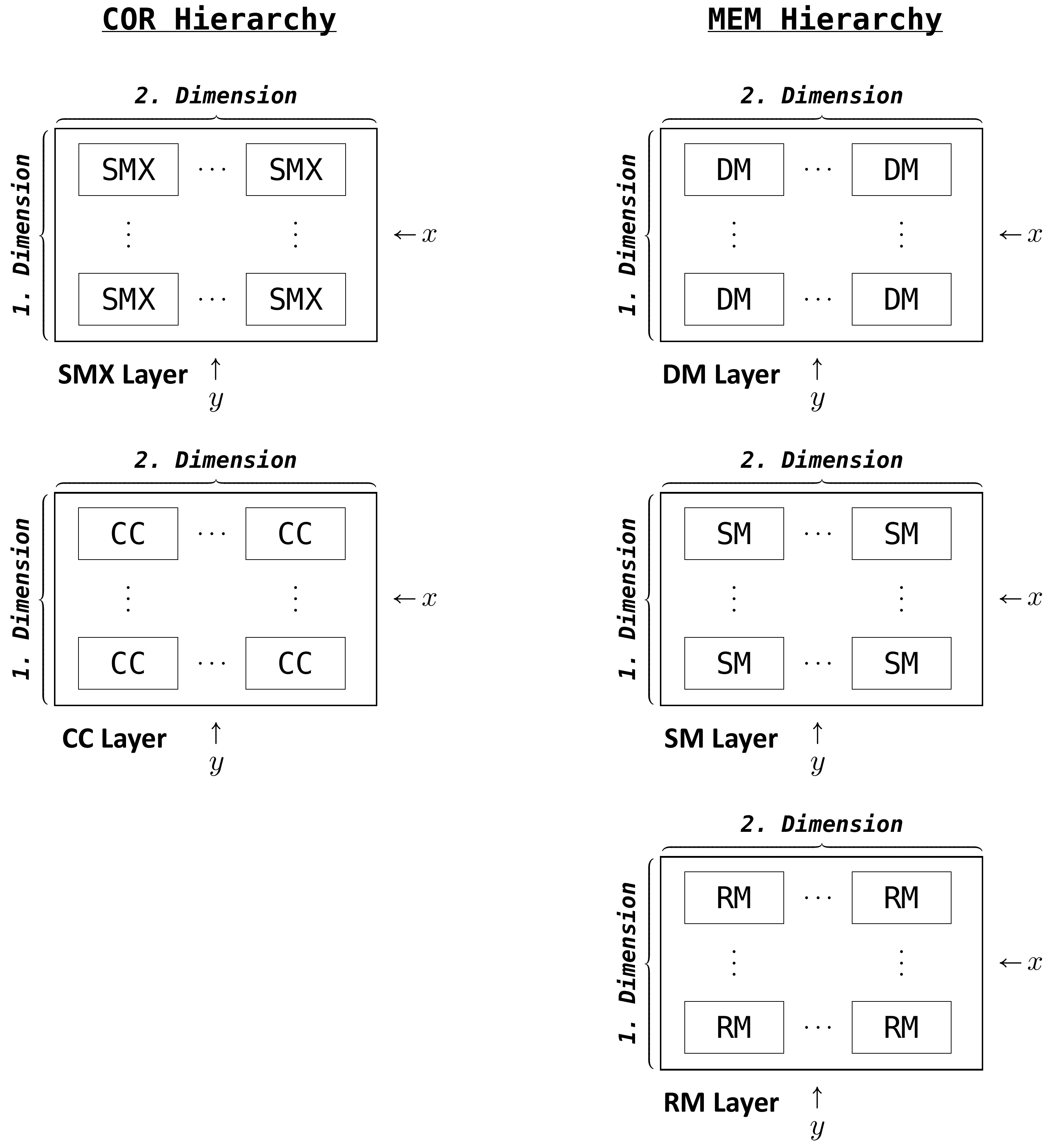}
    \caption{Multi-dimensional ASM arrangement illustrated using CUDA for the case $D=2$ (two dimensions)}
    \label{img_cuda_md_arrangement}
\end{figure}

We demonstrate how we arrange memory regions and cores of ASM-represented systems (Section~\ref{sec_amm}) in multiple dimensions using the example of CUDA.

\paragraph*{Cores (COR):}

In CUDA,
\texttt{SMX} cores are programmed via so-called \emph{CUDA Blocks}, and CUDA's \texttt{CC} cores are programmed via \emph{CUDA Threads}.
CUDA has native support for arranging its blocks and threads in up to three dimensions which are called \texttt{x}, \texttt{y}, and \texttt{z} in CUDA~\cite{cuda-programming-guide}.
Consequently, even though the original CUDA specification~\cite{cuda-specification} introduces \texttt{SMX} and \texttt{CC} without having an order,
the CUDA programmer benefits from imagining \texttt{SMX} and \texttt{CC} as three-dimensionally arranged.

Additional dimensions can be explicitly programmed in CUDA.
For example, to add a fourth dimension to CUDA, we can embed the additional dimension in the CUDA's \texttt{z} dimension, thereby splitting CUDA dimension \texttt{z} in the explicitly programmed dimensions \texttt{z\_1} (third dimension) and \texttt{z\_2} (fourth dimension), as follows:
\[
    \texttt{z\_1 := z \% Z\_1 and z\_2 := z / Z\_1}
\]
Here, \texttt{Z\_1} represents the number of threads in the additional dimension, and symbol \texttt{\%} denotes the modulo operator.

\paragraph*{Memory (MEM):}

In CUDA, memory is managed via \emph{C arrays} which may be multi-dimensional:~%
to arrange $(\texttt{DIM\_1}\times\dotsc\times\texttt{DIM\_D})$-many memory regions, each of size \texttt{N}, we use a CUDA array of the following type (pseudocode):
\[
  \texttt{array[ DIM\_1 ]$\dotsc$[ DIM\_D ][ N ]}
\]

Note that CUDA implicitly arranges its \emph{shared} and \emph{private} memory allocations in multiple dimensions, depending on the number of blocks and threads:~a shared memory array of type
\texttt{shared\_array[ DIM\_1 ]$\dotsc$[ DIM\_D ][ N ]}
is internally managed in CUDA as
\texttt{shared\_array[ blockIdx.x ][ blockIdx.y ][ blockIdx.z ][ DIM\_1 ]$\dotsc$[ DIM\_D ][ N ]}, i.e., each CUDA block has its own shared memory region.
Analogously, a private memory array
\texttt{private\_array[ DIM\_1 ]$\dotsc$[ DIM\_D ][ N ]}
is managed in CUDA as
\texttt{private\_array[ blockIdx.x ][ blockIdx.y ][ blockIdx.z ][ threadIdx.x ][ threadIdx.y ][ threadIdx.z ][ DIM\_1 ]$\dotsc$[ DIM\_D ][ N ]}, correspondingly.
Our arrangement methodology continues the CUDA's approach by explicitly programming the additional arrangement dimensions $\texttt{DIM\_1},\dotsc,\texttt{DIM\_D}$.

\vspace*{2px}

Figure~\ref{img_cuda_md_arrangement} illustrates our multi-dimensional core and memory arrangement using the example of CUDA, for $D=2$ (two-dimensional arrangement).

\vspace*{10px}

\subsection{ASM Levels}
\label{app_asm_levels}

ASM levels are pairs $(l_\texttt{ASM},d_\texttt{ASM})$ consisting of an ASM layer $l_\texttt{ASM}\in\IN$ and ASM dimension $d_\texttt{ASM}\in\IN$.

Figure~\ref{img_cuda_thread_hierarchy} illustrates ASM levels using the example of CUDA's thread hierarchy.
The figure shows that thread hierarchies can be considered as a tree in which each level is uniquely determined by a particular combination of a layer (\texttt{block} or \texttt{thread} in the case of CUDA) and dimension (\texttt{x}, \texttt{y}, or~\texttt{z}).
In the figure, we use \texttt{lvl} as an abbreviation for \emph{level}, \texttt{l} for \emph{layer}, and \texttt{d} for \emph{dimension}.

For ASM layers and dimensions, we usually use their domain-specific identifiers, e.g., \texttt{BLK}/\texttt{CC} and \texttt{x}/\texttt{y}/\texttt{z} as aliases for numerical values of layers and dimensions.

\vspace*{20px}

\begin{figure}[h!]
    \includegraphics[width=\textwidth]{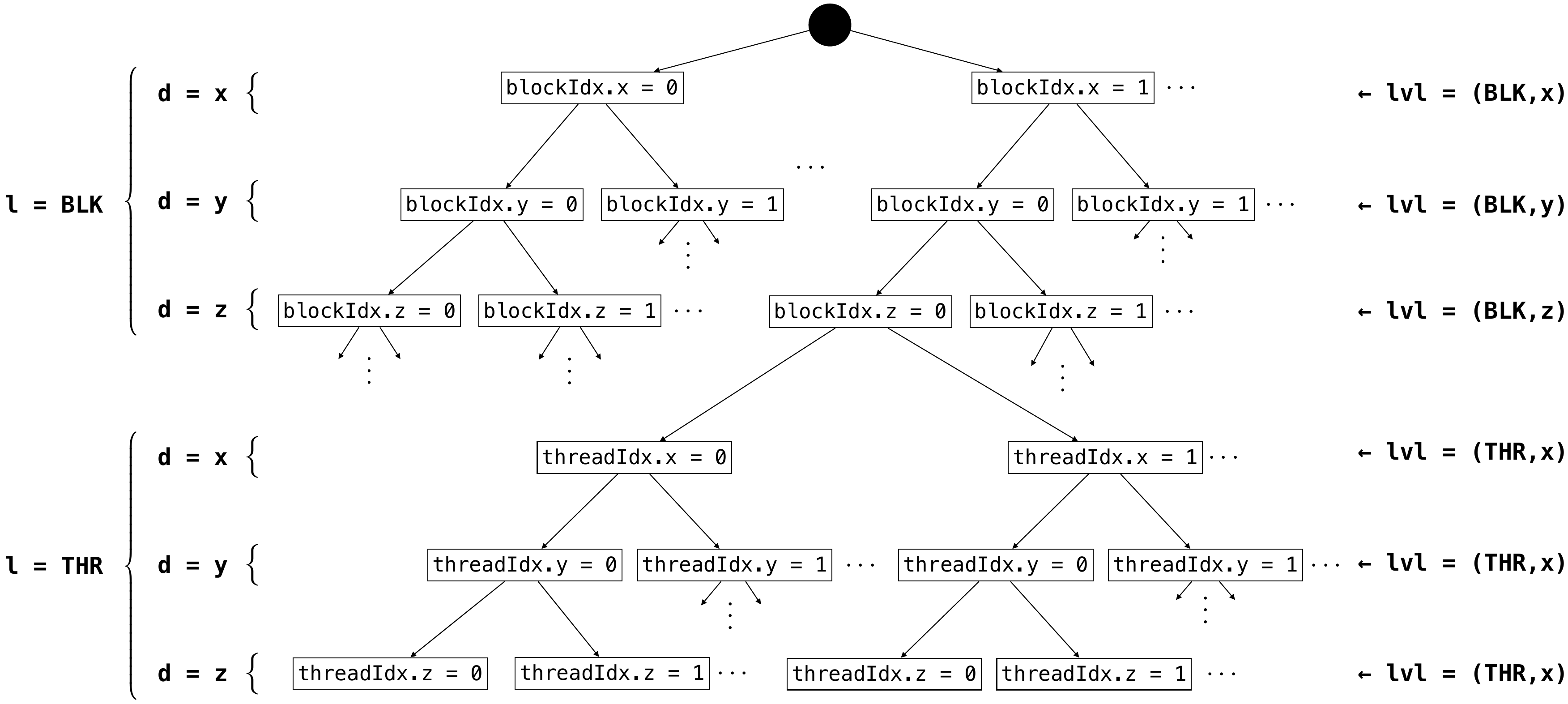}
    \caption{ASM levels illustrated using CUDA's thread hierarchy}
    \label{img_cuda_thread_hierarchy}
\end{figure}

\newpage

\subsection{MDH Levels}
\label{app_mdh_levels}

MDH levels are pairs $(l_\texttt{MDH},d_\texttt{MDH})$ consisting of an MDH layer $l_\texttt{MDH}\in\IN$ and MDH dimension $d_\texttt{MDH}\in\IN$.

\begin{figure}[h!]
    \includegraphics[width=0.9\textwidth]{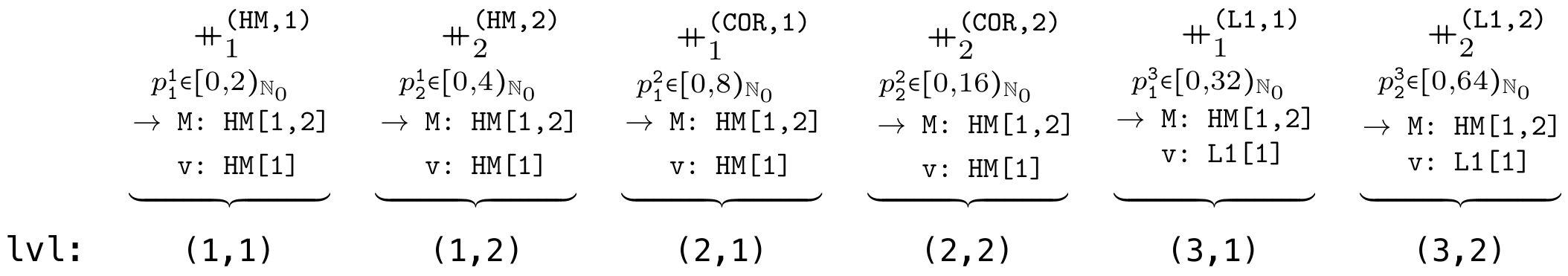}
    \caption{MDH levels illustrated using as example the de-composition phase in Figure~\ref{fig_ll_example}}
    \label{img_mdh_thread_hierarchy}
\end{figure}

Figure~\ref{img_mdh_thread_hierarchy} illustrates MDH levels using as example the de-composition phase in Figure~\ref{fig_ll_example}.
The levels $(l_\texttt{MDH},d_\texttt{MDH})$ can be derived from the super- and subscripts of combine operators' variables~$p^{l_\texttt{MDH}}_{d_\texttt{MDH}}$.

\subsection{MDA Partitioning}
\label{app_sec_mda_part}

We demonstrate how we partition MDAs into equally sized parts (a.k.a. \emph{uniform partitioning}).

\begin{definition}[MDA Partitioning]
\label{app_mda_part}
Let $\MDA\in T[I_1,\dotsc,I_D]$ be an arbitrary MDA that has scalar type $T\in\type$, dimensionality $D\in\IN$, index sets $I=(I_1,\dotsc,I_D)\in\IDXs^D$, and size $N=\{|I_1|,\dotsc,|I_D|\}\in\IN^D$.
We consider $I_d = \{ i^d_1,\dotsc,i^d_{N_d}\}$, $d\in[1,D]_\IN$, such that $i^d_1<\dotsc<i^d_{N_d}$ represents a sorted enumeration of the elements in $I_d$.
Let further
$
P=(\,
(P^1_1,\dotsc,P^1_D)
\ ,\,\dotsc\,, \
(P^L_1,\dotsc,P^L_D)\,)
$
be an arbitrary tuple of $L$-many $D$-tuples of positive natural numbers such that
$\prod_{l\in[1,L]_\IN} P^l_d$
divides~$N_d$ (the number of indices of MDA~$\MDA$ in dimension~$d$),
for each $d\in\{1,\dotsc,D\}$.

The \emph{$L$-layered, $D$-dimensional, $P$-partitioning of MDA $\MDA$} is
the $L$-layered, $D$-dimensional, $P$-partitioned low-level MDA $\MDA_\texttt{prt}$ (Definition~\ref{def_ll_mda}) that has scalar type $T$ and index sets
\begin{align*}
&
I_d^{<\, p^1_d,\dotsc,p^L_d  \,>} :=
\\
&\hspace*{20px}
\{ i_j\in I_d \ | \ j = OS + j' \, , \ \text{for} \ \ OS:=\sum_{l\in[1,L]_\IN}p^l_d*\frac{N_d}{\prod_{l'\in[1,l]_\IN} P^{l'}_d} \ \ \text{and} \ \ j'\in PS:=\frac{N_d}{\prod_{l'\in[1,L]_\IN} P^{l'}_d}
\}
\end{align*}
i.e., set
$I_d^{<\, p^1_d,\dotsc,p^L_d  \,>}$
denotes for each choice of parameters $p^1_d,\dotsc,p^L_d$ a part of the uniform partitioning of the ordered index set $I_d$ ($OS$ in the formula above represents the \underline{O}ff\underline{S}et to the part, and $PS$ the \underline{P}art's \underline{S}ize).
The partitioned MDA $\MDA_\texttt{prt}$ is defined as:
\[
  \MDA \ \ =: \ \ \ \
\underbrace{\concat{1}{p^1_1\in P^1_1}\dotsc\concat{D}{p^1_D\in P^1_D}}_\text{Layer $1$}
\ \ \ \ \dotsc \ \ \ \
\underbrace{\concat{1}{p^L_1\in P^L_1}\dotsc\concat{D}{p^L_D\in P^L_D}}_\text{Layer $L$}
\ \ \ \
\MDA_\texttt{prt}^{<
\, p^1_1,\dotsc,p^1_D \ | \ \dotsc \ | \ p^L_1,\dotsc,p^L_D \,
>}
\]
i.e., the parts $\MDA_\texttt{prt}^{<\, p^1_1,\dotsc,p^1_D \ | \ \dotsc \ | \ p^L_1,\dotsc,p^L_D \,>}$ are defined such that concatenating them results in the original MDA $\MDA$.
\end{definition}

\vspace*{5px}

\subsection{TVM Schedule for \texttt{MatMul}}
\label{app_tvm_schedule_matmul_resnet}

Listing~\ref{ansor_a100_resnet50_training} shows TVM's Ansor-generated schedule program for \texttt{MatMul} on input matrices of sizes $16\times2048$ and $2048\times1000$ taken from ResNet-50's training phase, discussed in Section~\ref{ll_examples_matmul_resnet}.
Code formatting, like names of variables and comments, have been shortened and adapted in the listing for brevity.

\begin{lstlisting}[
    language=Python,
    morekeywords={},
    breaklines=true,
    mathescape=true,
    numbers=left,
    xleftmargin=1.5em,
    float=h!,
    frame=lines,
    caption={TVM schedule for Matrix Multiplication on \texttt{NVIDIA\:Ampere\:GPU} (variable names shortened for brevity)},
    label={ansor_a100_resnet50_training}
    ]
# exploiting fast memory resources for computed results
matmul_local, = s.cache_write([matmul], "local")
matmul_1, matmul_2, matmul_3 = tuple(matmul_local.op.axis) + tuple(matmul_local.op.reduce_axis)
SHR_1, REG_1 = s[matmul_local].split(matmul_1, factor=1)
THR_1, SHR_1 = s[matmul_local].split(SHR_1, factor=1)
DEV_1, THR_1 = s[matmul_local].split(THR_1, factor=4)
BLK_1, DEV_1 = s[matmul_local].split(DEV_1, factor=2)
SHR_2, REG_2 = s[matmul_local].split(matmul_2, factor=1)
THR_2, SHR_2 = s[matmul_local].split(SHR_2, factor=1)
DEV_2, THR_2 = s[matmul_local].split(THR_2, factor=20)
BLK_2, DEV_2 = s[matmul_local].split(DEV_2, factor=1)
SHR_3, REG_3 = s[matmul_local].split(matmul_3, factor=2)
DEV_3, SHR_3 = s[matmul_local].split(SHR_3, factor=128)
s[matmul_local].reorder(BLK_1, BLK_2, DEV_1, DEV_2, THR_1, THR_2, DEV_3, SHR_3, SHR_1, SHR_2, REG_3, REG_1, REG_2)

# low-level optimizations:
s[matmul_local].pragma(BLK_1, "auto_unroll_max_step", 512)
s[matmul_local].pragma(BLK_1, "unroll_explicit", True)

# tiling
matmul_1, matmul_2, matmul_3 = tuple(matmul.op.axis) + tuple(matmul.op.reduce_axis)
THR_1, SHR_REG_1 = s[matmul].split(matmul_1, factor=1)
DEV_1, THR_1 = s[matmul].split(THR_1, factor=4)
BLK_1, DEV_1 = s[matmul].split(DEV_1, factor=2)
THR_2, SHR_REG_2 = s[matmul].split(matmul_2, factor=1)
DEV_2, THR_2 = s[matmul].split(THR_2, factor=20)
BLK_2, DEV_2 = s[matmul].split(DEV_2, factor=1)
s[matmul].reorder(BLK_1, BLK_2, DEV_1, DEV_2, THR_1, THR_2, SHR_REG_1, SHR_REG_2)
s[matmul_local].compute_at(s[matmul], THR_2)

# block/thread assignments:
BLK_fused = s[matmul].fuse(BLK_1, BLK_2)
s[matmul].bind(BLK_fused, te.thread_axis("blockIdx.x"))
DEV_fused = s[matmul].fuse(DEV_1, DEV_2)
s[matmul].bind(DEV_fused, te.thread_axis("vthread"))
THR_fused = s[matmul].fuse(THR_1, THR_2)
s[matmul].bind(THR_fused, te.thread_axis("threadIdx.x"))

# exploiting fast memory resources for first input matrix:
A_shared = s.cache_read(A, "shared", [matmul_local])$\label{ansor_a100_resnet50_training_1}$
A_shared_ax0, A_shared_ax1 = tuple(A_shared.op.axis)
A_shared_ax0_ax1_fused = s[A_shared].fuse(A_shared_ax0, A_shared_ax1)
A_shared_ax0_ax1_fused_o, A_shared_ax0_ax1_fused_i = s[A_shared].split(A_shared_ax0_ax1_fused, factor=1)
s[A_shared].vectorize(A_shared_ax0_ax1_fused_i)
A_shared_ax0_ax1_fused_o_o, A_shared_ax0_ax1_fused_o_i = s[A_shared].split(A_shared_ax0_ax1_fused_o, factor=80)
s[A_shared].bind(A_shared_ax0_ax1_fused_o_i, te.thread_axis("threadIdx.x"))
s[A_shared].compute_at(s[matmul_local], DEV_3)$\label{ansor_a100_resnet50_training_2}$

# exploiting fast memory resources for second input matrix:
# ... (analogous to lines $\ref{ansor_a100_resnet50_training_1}-\ref{ansor_a100_resnet50_training_2}$)
  \end{lstlisting}
\clearpage

\section{Addendum Section~\ref{ch:eval}}

\subsection{Data Characteristics used in Deep Neural Networks}
\label{app_data_characteristics}

Figure~\ref{fig_data_characteristics} shows the data characteristics used for the deep learning experiments in Figures~\ref{img_eval_dl_gpu} and~\ref{img_eval_dl_cpu} of Section~\ref{ch:eval}.
We use real-world characteristics taken from the neural networks~\texttt{ResNet-50},~\texttt{VGG-16}, and~\texttt{MobileNet}.
For each network, we consider computations \texttt{MCC} and \texttt{MatMul} (Table~\ref{fig_hl_examples}), because these are the networks' most time-intensive building blocks.
Each computation is called in each network on different data characteristics~--~we use for each combination of network and computation the two most time-intensive characteristics.
Note that the MobileNet network does not use \texttt{MatMul} in its implementation.

The capsule variants \texttt{MCC\_Capsule} in Figures~\ref{img_eval_dl_gpu} and~\ref{img_eval_dl_cpu} of Section~\ref{ch:eval} have the same characteristics as those listed for \texttt{MCC}s in Figure~\ref{fig_data_characteristics};~the only difference is that \texttt{MCC\_Capsule}, in addition to the dimensions $\texttt{N}, \texttt{H},\texttt{W},\texttt{K},\texttt{R},\texttt{S},\texttt{C}$, uses three additional dimensions $\texttt{MI}, \texttt{MJ}, \texttt{MK}$, each with a fixed size of $4$.
This is because \texttt{MCC\_Capsule} operates on $4\times4$ matrices, rather than scalars as \texttt{MCC} does.

\vspace{5px}

\begin{figure}[h!]
    \begin{subfigure}{\textwidth}
    \centering
        \resizebox{\columnwidth}{!}{%
        \begin{tabular}{|l|l|r|r|r|r|r|r|r|r|r|l|r|r|l|l|l|}
        \hline
        \multicolumn{1}{|c|}{\textbf{Network}} & \multicolumn{1}{c|}{\textbf{Phase}} & \multicolumn{1}{c|}{\textbf{N}} & \multicolumn{1}{c|}{\textbf{H}} & \multicolumn{1}{c|}{\textbf{W}} & \multicolumn{1}{c|}{\textbf{K}} & \multicolumn{1}{c|}{\textbf{R}} & \multicolumn{1}{c|}{\textbf{S}} & \multicolumn{1}{c|}{\textbf{C}} & \multicolumn{1}{c|}{\textbf{Stride H}} & \multicolumn{1}{c|}{\textbf{Stride W}} & \multicolumn{1}{c|}{\textbf{Padding}} & \multicolumn{1}{c|}{\textbf{P}} & \multicolumn{1}{c|}{\textbf{Q}} & \multicolumn{1}{c|}{\textbf{Image Format}} & \multicolumn{1}{c|}{\textbf{Filter Format}} & \multicolumn{1}{c|}{\textbf{Output Format}} \\ \hline
        \multirow{2}{*}{\texttt{ResNet-50}}             & Training                            & 16                              & 230                             & 230                             & 64                              & 7                               & 7                               & 3                               & 2                                      & 2                                      & VALID                                 & 112                             & 112                             & NHWC                                       & KRSC                                        & NPQK                                        \\ \cline{2-17}
                                               & Inference                           & 1                               & 230                             & 230                             & 64                              & 7                               & 7                               & 3                               & 2                                      & 2                                      & VALID                                 & 112                             & 112                             & NHWC                                       & KRSC                                        & NPQK                                        \\ \hline
        \multirow{2}{*}{\texttt{VGG-16}}                & Training                            & 16                              & 224                             & 224                             & 64                              & 3                               & 3                               & 3                               & 1                                      & 1                                      & VALID                                 & 224                             & 224                             & NHWC                                       & KRSC                                        & NPQK                                        \\ \cline{2-17}
                                               & Inference                           & 1                               & 224                             & 224                             & 64                              & 3                               & 3                               & 3                               & 1                                      & 1                                      & VALID                                 & 224                             & 224                             & NHWC                                       & KRSC                                        & NPQK                                        \\ \hline
        \multirow{2}{*}{\texttt{MobileNet}}             & Training                            & 16                              & 225                             & 225                             & 32                              & 3                               & 3                               & 3                               & 2                                      & 2                                      & VALID                                 & 112                             & 112                             & NHWC                                       & KRSC                                        & NPQK                                        \\ \cline{2-17}
                                               & Inference                           & 1                               & 225                             & 225                             & 32                              & 3                               & 3                               & 3                               & 2                                      & 2                                      & VALID                                 & 112                             & 112                             & NHWC                                       & KRSC                                        & NPQK                                        \\ \hline
        \end{tabular}
        }
        \caption{Data characteristics used for \texttt{MCC} experiments}
    \label{data_char_mcc}
    \end{subfigure}
    \newline
    \vspace{10px}
    \begin{subfigure}{\textwidth}
    \centering
        \resizebox{0.4\columnwidth}{!}{%
        \begin{tabular}{|l|l|r|r|r|l|}
        \hline
        \multicolumn{1}{|c|}{\textbf{Network}} & \multicolumn{1}{c|}{\textbf{Phase}} & \multicolumn{1}{c|}{\textbf{M}} & \multicolumn{1}{c|}{\textbf{N}} & \multicolumn{1}{c|}{\textbf{K}} & \multicolumn{1}{c|}{\textbf{Transposition}} \\ \hline
        \multirow{2}{*}{\texttt{ResNet-50}}             & Training                            & 16                              & 1000                            & 2048                            & NN                                          \\ \cline{2-6}
                                               & Inference                           & 1                               & 1000                            & 2048                            & NN                                          \\ \hline
        \multirow{2}{*}{\texttt{VGG-16}}                & Training                            & 16                              & 4096                            & 25088                           & NN                                          \\ \cline{2-6}
                                               & Inference                           & 1                               & 4096                            & 25088                           & NN                                          \\ \hline
        \end{tabular}
        }
    \caption{Data characteristics used for \texttt{MatMul} experiments}
    \label{data_char_matmul}
    \end{subfigure}
    \vspace{5px}
    \caption{Data characteristics used for experiments in Section~\ref{ch:eval}}
    \label{fig_data_characteristics}
\end{figure}

\vspace*{10px}

\subsection{Runtime and Accuracy of \texttt{cuBLASEx}}
\label{app_cuBLASEx_runtime_accuracy}

Listing~\ref{lst_runtime_cublasex_v100_matmul_1024} shows the runtime of \texttt{cuBLASEx} for its different \emph{algorithm} variants.
For demonstration, we use the example of matrix multiplication \texttt{MatMul} on \texttt{NVIDIA\:Volta\:GPU} for square input matrices of sizes $1024\times1024$.
For each algorithm variant, we list both:~%
1)~the runtime achieved by \texttt{cuBLASEx} (in nanoseconds~\texttt{ns}), as well as
2)~the maximum absolute deviation~($\texttt{delta}_\texttt{max}$ values) compared to a straightforward, sequential CPU computation.
For example, the $\texttt{delta}_\texttt{max}$ value of algorithm \texttt{CUBLAS\_GEMM\_DEFAULT} is \texttt{3.14713e-05}, i.e., at least one value $c^\texttt{GPU}_{i,j}$ in the GPU-computed output matrix deviates by \texttt{3.14713e-05} from its corresponding, sequentially computed value $c^\texttt{seq}_{i,j}$ such that
$|c^\texttt{GPU}_{i,j}|=|c^\texttt{seq}_{i,j}|+\texttt{3.14713e-05}$~(bar symbols $|\dotsc|$ denote absolute value).
All other GPU-computed values $c^\texttt{GPU}_{i',j'}$ deviate from their sequentially computed CPU-variant by \texttt{3.14713e-05} or less.

Note that \texttt{cuBLASEx} offers $42$ algorithm variants, but not all of them are supported for all potential characteristics of the input and output data (size, memory layout, $\dotsc$).
For our \texttt{MatMul} example, the list of unsupported variants includes: \texttt{CUBLAS\_GEMM\_ALGO1}, \texttt{CUBLAS\_GEMM\_ALGO12}, etc.

\newpage

\begin{lstlisting}[
    language=C,
    mathescape=true,
    xleftmargin=1.5em,
    numbers=none,
    frame=lines,
    label={lst_runtime_cublasex_v100_matmul_1024},
    caption={Runtime of \texttt{cuBLASEx} for its different \emph{algorithm} variants on \texttt{NVIDIA\:Volta\:GPU} when computing \texttt{MatMul} on square $1024\times1024$ input matrices}
    ]
CUBLAS_GEMM_DEFAULT: 188416ns (delta_max: 3.14713e-05)
CUBLAS_GEMM_ALGO2: 190464ns (delta_max: 6.86646e-05)
CUBLAS_GEMM_ALGO3: 186368ns (delta_max: 6.86646e-05)
CUBLAS_GEMM_ALGO4: 185344ns (delta_max: 6.86646e-05)
CUBLAS_GEMM_ALGO5: 181248ns (delta_max: 6.86646e-05)
CUBLAS_GEMM_ALGO6: 181248ns (delta_max: 6.86646e-05)
CUBLAS_GEMM_ALGO7: 178176ns (delta_max: 4.1008e-05)
CUBLAS_GEMM_ALGO8: 189440ns (delta_max: 4.1008e-05)
CUBLAS_GEMM_ALGO9: 171008ns (delta_max: 4.1008e-05)
CUBLAS_GEMM_ALGO10: 188416ns (delta_max: 4.1008e-05)
CUBLAS_GEMM_ALGO11: 191488ns (delta_max: 4.1008e-05)
CUBLAS_GEMM_ALGO18: 185344ns (delta_max: 2.67029e-05)
CUBLAS_GEMM_ALGO19: 172032ns (delta_max: 2.67029e-05)
CUBLAS_GEMM_ALGO20: 192512ns (delta_max: 2.67029e-05)
CUBLAS_GEMM_ALGO21: 201728ns (delta_max: 1.90735e-05)
CUBLAS_GEMM_ALGO22: 177152ns (delta_max: 1.90735e-05)
CUBLAS_GEMM_ALGO23: 194560ns (delta_max: 1.90735e-05)
CUBLAS_GEMM_DEFAULT_TENSOR_OP: 184320ns (delta_max: 3.14713e-05)
CUBLAS_GEMM_ALGO0_TENSOR_OP:  62464ns (delta_max: 0.0131454)
CUBLAS_GEMM_ALGO1_TENSOR_OP:  52224ns (delta_max: 0.0131454)
CUBLAS_GEMM_ALGO2_TENSOR_OP:  190464ns (delta_max: 3.14713e-05)
CUBLAS_GEMM_ALGO3_TENSOR_OP:  189440ns (delta_max: 3.14713e-05)
CUBLAS_GEMM_ALGO4_TENSOR_OP:  183296ns (delta_max: 3.14713e-05)
CUBLAS_GEMM_ALGO5_TENSOR_OP:  183296ns (delta_max: 3.14713e-05)
CUBLAS_GEMM_ALGO6_TENSOR_OP:  183296ns (delta_max: 3.14713e-05)
CUBLAS_GEMM_ALGO7_TENSOR_OP:  189440ns (delta_max: 3.14713e-05)
CUBLAS_GEMM_ALGO8_TENSOR_OP:  183296ns (delta_max: 3.14713e-05)
CUBLAS_GEMM_ALGO9_TENSOR_OP:  189440ns (delta_max: 3.14713e-05)
CUBLAS_GEMM_ALGO10_TENSOR_OP: 188416ns (delta_max: 3.14713e-05)
CUBLAS_GEMM_ALGO11_TENSOR_OP: 183296ns (delta_max: 3.14713e-05)
CUBLAS_GEMM_ALGO12_TENSOR_OP: 183296ns (delta_max: 3.14713e-05)
CUBLAS_GEMM_ALGO13_TENSOR_OP: 188416ns (delta_max: 3.14713e-05)
CUBLAS_GEMM_ALGO14_TENSOR_OP: 183296ns (delta_max: 3.14713e-05)
CUBLAS_GEMM_ALGO15_TENSOR_OP: 189440ns (delta_max: 3.14713e-05)
\end{lstlisting}

\newpage

\section{Code Generation}
\label{app_sec_code_generation}

This section outlines how imperative-style pseudocode is generated from our low-level program representation in Section~\ref{ch:low_level}.
Optimizations that operate below the abstraction level of our low-level representation (e.g., loop unrolling) are beyond the scope of this section and outlined in Section~\ref{ch_ll_opts}.
We aim to discuss and illustrate our code generation approach in detail in future work.

In the following, we highlight tuning parameters gray in our pseudocode, which are substituted by concrete, optimized values in our executable program code.
Static parameters, such as scalar types and the number of input/output buffers, are denoted in math font and also substituted by concrete values in our executable code.
We list meta-parameters in angle brackets \texttt{<...>}, and other static function annotations in double angle brackets \texttt{<<...>>}, e.g.,
\texttt{idx<<OUT>><<1,1>>} for denoting index function $\idx^\texttt{OUT}_{1,1}$ (used in Figure~\ref{fig_generic_hl}) in our pseudocode.

\subsection*{Overall Structure}

Listing~\ref{lst_app_code_gen_overview} shows the overall structure of our generated code.
We implement a particular expression in our low-level representation (Figure~\ref{fig_gen_ll}) as a compute kernel that is structured in the following phases:~%
0)~preparation (Section~\ref{app_code_gen_prep}),~%
1)~de-composition phase (Section~\ref{app_code_gen_decomp}),~%
2)~scalar phase (Section~\ref{app_code_gen_scalar}),~%
3)~re-composition phase (Section~\ref{app_code_gen_comp}).

\vspace*{10px}

\begin{lstlisting}[
  language=C,
  morekeywords={kernel},
  mathescape=true,
  numbers=left,
  xleftmargin=1.5em,
  frame=lines,
  caption={Overall structure of our generated code
  },
  label={lst_app_code_gen_overview}
  ]
kernel mdh(
  $T^\texttt{IB}_1$ trans_ll_IB<<$\bot$>><<$1$>><*,...,*>,...,$T^\texttt{IB}_{B^\texttt{IB}}$ trans_ll_IB<<$\bot$>><<$B^\texttt{IB}$>><*,...,*> ,
  $T^\texttt{OB}_1$ trans_ll_OB<<$\bot$>><<$1$>><*,...,*>,...,$T^\texttt{OB}_{B^\texttt{OB}}$ trans_ll_OB<<$\bot$>><<$B^\texttt{OB}$>><*,...,*> )
{
  // 0. preparation
  ...
  // 2. de-composition phase
  ...
  // 3. scalar phase
  ...
  // 4. re-composition phase
  ...
}
\end{lstlisting}

\setcounter{subsection}{-1}
\subsection{Preparation}
\label{app_code_gen_prep}

Listing~\ref{lst_app_code_gen_0} shows the preparation phase.
It prepares in five sub-phases the basic building blocks used in our low-level representation:~%
1)~$\mdh$ (Section~\ref{sec_app_code_gen_0_1}),~%
2)~$\iv$ (Section~\ref{sec_app_code_gen_0_2}),~%
3)~$\ov$ (Section~\ref{sec_app_code_gen_0_3}),~%
4)~BUFs (Section~\ref{sec_app_code_gen_0_4}),~%
5)~MDAs (Section~\ref{sec_app_code_gen_0_5}).

\begin{lstlisting}[
  caption={Preparation Phase},
  label={lst_app_code_gen_0}
  ]
// 0. preparation
  // 0.1. md_hom
  ...
  // 0.2. inp_view
  ...
  // 0.3. out_view
  ...
  // 0.4. BUFs
  ...
  // 0.5. MDAs
  ...
\end{lstlisting}

\subsubsection{\mdh}
\label{sec_app_code_gen_0_1}
\mbox{}

Listing~\ref{lst_app_code_gen_0_1} shows the user-defined scalar function and low-level combine operators (Definition~\ref{def_ll_comb}) which are both provided by the user via higher-order function $\mdh$ (Definition~\ref{def_md_hom}).

Listing~\ref{lst_app_code_gen_0_1_2_a} shows how we pre-implement for the user the two combine operators \emph{concatenation}~(Example~\ref{def:mda_concat}) and \emph{point-wise combination} (Example~\ref{def:mda_pw}).

Listing~\ref{lst_app_code_gen_0_1_2_b} shows how we pre-implement the \emph{inverse of concatenation} (Definition~\ref{app_inverse_concat}), which we use in the de-composition phase (via Definition~\ref{app_mda_part}).

\vspace*{15px}

\begin{lstlisting}[
  caption={Scalar Function \& Combine Operators},
  label={lst_app_code_gen_0_1}
  ]
// 0.1. md_hom

  // 0.1.1. scalar function
  f( $T^\texttt{INP}$ inp ) -> $T^\texttt{OUT}$ out
  {
    // ... (user defined)
  }

  // 0.1.2. combine operators
  $\forall d\in[1,D]_\IN$:
  co<<$d$>><$I_1,...,I_{{d}-1},I_{{d}+1},...,I_D\in\IDXs \, , \, (P,Q)\in\IDXsxIDXs$>(
    $ T^\texttt{OUT}[I_1,...,I_{{d}-1} , \hspace*{2px}\size{{d}}{MDA}{MDA}(P)\hspace*{3px} , I_{{d}+1},...,I_D]$ lhs ,
    $ T^\texttt{OUT}[I_1,...,I_{{d}-1} , \hspace*{2px}\size{{d}}{MDA}{MDA}(Q)\hspace*{2px} , I_{{d}+1},...,I_D]$ rhs ) -> $T^\texttt{OUT}[I_1,...,I_{{d}-1} , \size{{d}}{MDA}{MDA}(P\cupdot Q) , I_{{d}+1},...,I_D]$ res
  {
    // ... (user defined)
  }
\end{lstlisting}

\newpage

\begin{lstlisting}[
  caption={Pre-Implemented Combine Operators},
  label={lst_app_code_gen_0_1_2_a}
  ]
// 0.1.2. combine operators

  // pre-implemented combine operators

  // concatenation
  $\forall d\in\IN$:
  cc<<$d$>><$I_1,...,I_{{d}-1},I_{{d}+1},...,I_D\in\IDXs \, , \, (P,Q)\in\IDXsxIDXs$>(
    $ T^\texttt{OUT}[I_1,...,I_{{d}-1} , \hspace*{2px}id(P)\hspace*{3px} , I_{{d}+1},...,I_D]$ lhs ,
    $ T^\texttt{OUT}[I_1,...,I_{{d}-1} , \hspace*{2px}id(Q)\hspace*{2px} , I_{{d}+1},...,I_D]$ rhs ) -> $T^\texttt{OUT}[I_1,...,I_{{d}-1} , id(P\cupdot Q) , I_{{d}+1},...,I_D]$ res
  {
    int i_1 $\in \ I_1$
      $\ddots$
        int i_$\{d-1\}$ $\in$ I$_{d-1}$
          int i_$\{d+1\}$ $\in$ I$_{d+1}$
            $\ddots$
              int i_$D$ $\in \ I_D$
              {
                int i_$d$ $\in \ P$
                  res[ i_1,...,i_$d$,...,i_$D$] := lhs[ i_1,...,i_$d$,...,i_$D$];
                int i_$d$ $\in \ Q$
                  res[ i_1,...,i_$d$,...,i_$D$] := rhs[ i_1,...,i_$d$,...,i_$D$];
              }
  }

  // point-wise combination
  $\forall d\in\IN$:
  pw<<$d$>><$I_1,...,I_{{d}-1},I_{{d}+1},...,I_D\in\IDXs \, , \, (P,Q)\in\IDXsxIDXs$>(
    $\oplus:T^\texttt{OUT}\times T^\texttt{OUT}\to T^\texttt{OUT}$)( $\hspace*{1px}T^\texttt{OUT}[I_1,...,I_{{d}-1} , \hspace*{2px}0_f(P)\hspace*{3px} , I_{{d}+1},...,I_D]$ lhs ,
                  $\hspace*{2px}T^\texttt{OUT}[I_1,...,I_{{d}-1} , \hspace*{2px}0_f(Q)\hspace*{2px} , I_{{d}+1},...,I_D]$ rhs )
  -> $T^\texttt{OUT}[I_1,...,I_{{d}-1} , 0_f(P\cupdot Q) , I_{{d}+1},...,I_D]$ res
  {
    int i_1 $\in \ I_1$
      $\ddots$
        int i_$\{d-1\}$ $\in$ I$_{d-1}$
          int i_$\{d+1\}$ $\in$ I$_{d+1}$
            $\ddots$
              int i_$D$ $\in \ I_D$
              {
                res[ i_1,...,i_$\{d-1\}$ , 0 , i_$\{d+1\}$,...,i_$D$]
                  :=          $\hspace{-7px}$lhs[ i_1,...,i_$\{d-1\}$ , 0 , i_$\{d+1\}$,...,i_$D$]
                     $\texttt{atomic}(\oplus)\hspace*{1px}$ rhs[ i_1,...,i_$\{d-1\}$ , 0 , i_$\{d+1\}$,...,i_$D$];
              }
  }
\end{lstlisting}

\newpage

\begin{lstlisting}[
  caption={Pre-Implemented Combine Operators},
  label={lst_app_code_gen_0_1_2_b}
  ]
// 0.1.2. combine operators

  // pre-implemented combine operators

  // inverse concatenation
  $\forall d\in\IN$:
  cc_inv<<$d$>><$I_1,...,I_{{d}-1},I_{{d}+1},...,I_D\in\IDXs \, , \, (P,Q)\in\IDXsxIDXs$>(
    $T^\texttt{INP}[I_1,...,I_{{d}-1} , id(P\cupdot Q) , I_{{d}+1},...,I_D]$ res ) -> ( $T^\texttt{INP}[I_1,...,I_{{d}-1} , \hspace*{2px}id(P)\hspace*{3px} , I_{{d}+1},...,I_D]$ lhs ,
                                            $\hspace{10px}$$T^\texttt{INP}[I_1,...,I_{{d}-1} , \hspace*{2px}id(Q)\hspace*{2px} , I_{{d}+1},...,I_D]$ rhs )
  {
    int i_1 $\in \ I_1$
      $\ddots$
        int i_$\{d-1\}$ $\in$ I$_{d-1}$
          int i_$\{d+1\}$ $\in$ I$_{d+1}$
            $\ddots$
              int i_$D$ $\in \ I_D$
              {
                int i_$d$ $\in \ P$
                  res[ i_1,...,i_$d$,...,i_$D$] =: lhs[ i_1,...,i_$d$,...,i_$D$];
                int i_$d$ $\in \ Q$
                  res[ i_1,...,i_$d$,...,i_$D$] =: rhs[ i_1,...,i_$d$,...,i_$D$];
              }
  }
\end{lstlisting}

\vspace*{20px}

\subsubsection{\iv}
\label{sec_app_code_gen_0_2}
\mbox{}

Listing~\ref{lst_app_code_gen_0_2} shows the user-defined index functions provided by the user via higher-order function $\iv$ (Definition~\ref{def_iv}).
\begin{lstlisting}[
  caption={Index Functions (input)},
  label={lst_app_code_gen_0_2}
  ]
// 0.2. inp_view

  // index functions
  $\forall b\in[1,B^\texttt{IB}]_\IN$,$a\in[1,A^\texttt{IB}_b]_\IN$: $\forall d\in[1,D^\texttt{IB}_b]_\IN$:
  static
  idx<<INP>><<$b$,$a$>><<$d$>>( int i_MDA_1 , ... , i_MDA_$D$ ) -> int i_BUF_$d$
  {
      // ... (user defined)
  }
\end{lstlisting}

\newpage

\subsubsection{\ov}
\label{sec_app_code_gen_0_3}
\mbox{}

Listing~\ref{lst_app_code_gen_0_3} shows the user-defined index functions provided by the user via higher-order function $\ov$ (Definition~\ref{def_ov}).
\begin{lstlisting}[
  caption={Index Functions (output)},
  label={lst_app_code_gen_0_3}
  ]
// 0.3. out_view

  // index functions
  $\forall b\in[1,B^\texttt{OB}]_\IN$,$a\in[1,A^\texttt{OB}_b]_\IN$: $\forall d\in[1,D^\texttt{OB}_b]_\IN$:
  static
  idx<<OUT>><<$b$,$a$>><<$d$>>( int i_MDA_1 , ... , i_MDA_$D$ ) -> int i_BUF_$d$
  {
      // ... (user defined)
  }
\end{lstlisting}


\subsubsection{BUFs}
\label{sec_app_code_gen_0_4}
\mbox{}

Listing~\ref{lst_app_code_gen_0_4} shows our implementation of low-level BUFs (Definition~\ref{ll_buffers}).
We compute BUFs' sizes using the ranges of their index functions (Definitions~\ref{def_iv} and~\ref{def_ov}).
Moreover, we partially evaluate BUFs' meta-parameters \texttt{MEM} (memory region) and $\sigma$ (memory layout) immediately, as the same values are re-used for them during program runtime.

The BUFs in lines~\ref{app_lst_ll_buffers_alloc_LVL_INP}~and ~\ref{app_lst_ll_buffers_alloc_f_INP} as well as in lines~\ref{app_lst_ll_buffers_alloc_LVL_OUT} and~\ref{app_lst_ll_buffers_alloc_f_OUT} represent the
BUFs' transposed function representation (Definition~\ref{ll_buffers}),
and the BUFs in lines~\ref{app_lst_ll_sig_1_INP},~\ref{app_lst_ll_sig_2_INP},~and~\ref{app_lst_ll_sig_3_INP} as well as in lines~\ref{app_lst_ll_sig_1_OUT},~\ref{app_lst_ll_sig_2_OUT},~and~\ref{app_lst_ll_sig_3_OUT} are
the transposed BUFs' ordinary low-level BUF representation.

\begin{lstlisting}[
  caption={Low-Level BUFs},
  label={lst_app_code_gen_0_4}
  ]
// 0.4. BUFs

  // 0.4.1. compute BUF sizes
  $\forall \texttt{IO}\in\{\texttt{IB},\texttt{OB}\}$: $\forall {b}\in[1,B^\texttt{IO}]_\IN$ $\forall {d}\in[1,D^\texttt{IO}_b]_\IN$:
  static N<<IO>><<$b$>><<$d$>>( mda_idx_set I_1 , ... , I_$D$ ) -> int N_$b$_$d$
  {
    N_$b$_$d$ := 0;

    i_1 $\in$ I_1
      $\ddots$
        i_$D$ $\in$ I_$D$
        {
          $\forall{a}\in[1,A^\texttt{IB}_b]_\IN$:
          N_$b$_$d$ :=$_\texttt{max}$ 1 + idx<<IO>><<$b$,$a$>><<$d$>>( i_1,...,i_$D$ );
        }
  }

  // 0.4.2. input BUFs

  // initial BUFs
  $\forall {b}\in[1,B^\texttt{IB}]_\IN$:
  static ll_IB<<$\bot$>><<$b$>><${\overset{(\bot)}{\blacktriangledown}}{^1_1}\in\tp{\texttt{\#PRT}(1,1)}$,...,${\overset{(\bot)}{\blacktriangledown}}{^L_D}\in\tp{\texttt{\#PRT}(L,D)}$>( int i_1,...,
    int i_$D^\texttt{IB}_b$ )$\label{app_lst_ll_sig_1_INP}$ -> $T^\texttt{IB}_b$ a
  {
    a := trans_ll_IB<<$\bot$>><<$b$>><${\overset{(\bot)}{\blacktriangledown}}{^1_1},\dotsc,{\overset{(\bot)}{\blacktriangledown}}{^L_D}$>[ i_1 , ... , i_$D^\texttt{IB}_b$ ];
  }

  // de-composition BUFs
  $\forall (l,d)\in\MDHLVL$: $\forall {b}\in[1,B^\texttt{IB}]_\IN$:
  auto trans_ll_IB<<$l$,$d$>><<$b$>><${\overset{(l,d)}{\blacktriangledown}}{^1_1}\in\tp{\texttt{\#PRT}(1,1)}$,...,${\overset{(l,d)}{\blacktriangledown}}{^L_D}\in\tp{\texttt{\#PRT}(L,D)}$>$\label{app_lst_ll_buffers_alloc_LVL_INP}$
    := $\tp{\downarrow\texttt{-mem}^\texttt{<b>}(l,d)}$ $T^\texttt{IB}_b$[ N<<IB>><<$b$>><<$\tp{\sigma_{\downarrow\texttt{-mem}}^\texttt{<b>}({l,d})(1)}\hspace*{9px}$>>(  $(\:\fsize{d}{\dplus}{MDA}{MDA}(N_d)\:)_{d\in[1,D]_\IN}$  ) ,$\label{app_lst_ll_buf_size_1}$
                          $\vdots$
                        $\hspace*{2px}$N<<IB>><<$b$>><<$\tp{\sigma_{\downarrow\texttt{-mem}}^\texttt{<b>}({l,d})(D^\texttt{IB}_{b})}$>>(   $(\:\fsize{d}{\dplus}{MDA}{MDA}(N_d)\:)_{d\in[1,D]_\IN}$   ) ];$\label{app_lst_ll_buf_size_2}$

  $\forall (l,d)\in\MDHLVL$: $\forall {b}\in[1,B^\texttt{IB}]_\IN$:
  static ll_IB<<$l$,$d$>><<$b$>><${\overset{(l,d)}{\blacktriangledown}}{^1_1}\in\tp{\texttt{\#PRT}(1,1)}$,...,${\overset{(l,d)}{\blacktriangledown}}{^L_D}\in\tp{\texttt{\#PRT}(L,D)}$>( int i_1,...,
    int i_$D^\texttt{IB}_b$ ) -> $T^\texttt{IB}_b$ a$\label{app_lst_ll_sig_2_INP}$
  {
    a := trans_ll_IB<<$l$,$d$>><<b><${\overset{(l,d)}{\blacktriangledown}}{^1_1},\dotsc,{\overset{(l,d)}{\blacktriangledown}}{^L_D}$>[ i_$\tp{\sigma_{\downarrow\texttt{-mem}}^\texttt{<b>}(l,d)(1)}$ , ... ,
      i_$\tp{\sigma_{\downarrow\texttt{-mem}}^\texttt{<b>}(l,d)(D^\texttt{IB}_{b})}$ ];
  }

  // scalar BUFs
  $\forall {b}\in[1,B^\texttt{IB}]_\IN$:
  auto trans_ll_IB<f>><<$b$>><${\overset{(f)}{\blacktriangledown}}{^1_1}\in\tp{\texttt{\#PRT}(1,1)}$,...,${\overset{(f)}{\blacktriangledown}}{^L_D}\in\tp{\texttt{\#PRT}(L,D)}$>$\label{app_lst_ll_buffers_alloc_f_INP}$
    := $\tp{\texttt{f}^\downarrow\texttt{-mem}^\texttt{<b>}}$ $T^\texttt{IB}_b$[ N<<IB>><<$b$>><<$\tp{\sigma_{\texttt{f}^\downarrow\texttt{-mem}}^\texttt{<b>}(1)}\hspace*{8px}$>>(  $(\:\fsize{d}{\dplus}{MDA}{MDA}(N_d)\:)_{d\in[1,D]_\IN}$  ) ,$\label{app_lst_ll_buf_size_3}$
                      $\vdots$
                    $\hspace*{3px}$N<<IB>><<$b$>><<$\tp{\sigma_{\texttt{f}^\downarrow\texttt{-mem}}^\texttt{<b>}(D^\texttt{IB}_{b})}$>>(  $(\:\fsize{d}{\dplus}{MDA}{MDA}(N_d)\:)_{d\in[1,D]_\IN}$  ) ];$\label{app_lst_ll_buf_size_4}$

  $\forall {b}\in[1,B^\texttt{IB}]_\IN$:
  static ll_IB<<f>><<$b$>><${\overset{(f)}{\blacktriangledown}}{^1_1}\in\tp{\texttt{\#PRT}(1,1)}$,...,${\overset{(f)}{\blacktriangledown}}{^L_D}\in\tp{\texttt{\#PRT}(L,D)}$>( int i_1,...,
    int i_$D^\texttt{IB}_b$ )  -> $T^\texttt{IB}_b$ a$\label{app_lst_ll_sig_3_INP}$
  {
    a := trans_ll_IB<<f>><<b><${\overset{(\texttt{f})}{\blacktriangledown}}{^1_1},\dotsc,{\overset{(\texttt{f})}{\blacktriangledown}}{^L_D}$>[ i_$\tp{\sigma_{f^\downarrow\texttt{-mem}}^\texttt{<b>}(1)}$ , ... , i_$\tp{\sigma_{f^\downarrow\texttt{-mem}}^\texttt{<b>}(D^\texttt{IB}_{b})}$ ];
  }

  // 0.4.3. output BUFs

  // initial BUFs
  $\forall {b}\in[1,B^\texttt{OB}]_\IN$:
  static ll_OB<<$\bot$>><<$b$>><${\overset{(\bot)}{\blacktriangle}}{^1_1}\in\tp{\texttt{\#PRT}(1,1)}$,...,${\overset{(\bot)}{\blacktriangle}}{^L_D}\in\tp{\texttt{\#PRT}(L,D)}$>( int i_1,...,
    int i_$D^\texttt{OB}_b$ )$\label{app_lst_ll_sig_1_OUT}$ -> $T^\texttt{OB}_b$ a
  {
    a := trans_ll_OB<<$\bot$>><<$b$>><${\overset{(\bot)}{\blacktriangle}}{^1_1},\dotsc,{\overset{(\bot)}{\blacktriangle}}{^L_D}$>[ i_1 , ... , i_$D^\texttt{OB}_b$ ];$\label{app_lst_ll_buf_size_0init_1}$
  }

  // re-composition BUFs
  $\forall (l,d)\in\MDHLVL$: $\forall {b}\in[1,B^\texttt{OB}]_\IN$:
  auto trans_ll_OB<<$l$,$d$>><<$b$>><${\overset{(l,d)}{\blacktriangle}}{^1_1}\in\tp{\texttt{\#PRT}(1,1)}$,...,${\overset{(l,d)}{\blacktriangle}}{^L_D}\in\tp{\texttt{\#PRT}(L,D)}$>$\label{app_lst_ll_buffers_alloc_LVL_OUT}$
    := $\tp{\uparrow\texttt{-mem}^\texttt{<b>}(l,d)}$ $T^\texttt{OB}_b$[ N<<OB>><<$b$>><<$\tp{\sigma_{\uparrow\texttt{-mem}}^\texttt{<b>}({l,d})(1)}\hspace*{9px}$>>(  $(\:\fsize{d}{\co{}}{MDA}{MDA}(N_d)\:)_{d\in[1,D]_\IN}$  ) ,$\label{app_lst_ll_buf_size_5}$
                          $\vdots$
                        $\hspace*{2px}$N<<OB>><<$b$>><<$\tp{\sigma_{\uparrow\texttt{-mem}}^\texttt{<b>}({l,d})(D^\texttt{OB}_{b})}$>>(  $(\:\fsize{d}{\co{}}{MDA}{MDA}(N_d)\:)_{d\in[1,D]_\IN}$  ) ];$\label{app_lst_ll_buf_size_6}$

  $\forall (l,d)\in\MDHLVL$: $\forall {b}\in[1,B^\texttt{OB}]_\IN$:
  static ll_OB<<$l$,$d$>><<$b$>><${\overset{(l,d)}{\blacktriangle}}{^1_1}\in\tp{\texttt{\#PRT}(1,1)}$,...,${\overset{(l,d)}{\blacktriangle}}{^L_D}\in\tp{\texttt{\#PRT}(L,D)}$>( int i_1,...,
    int i_$D^\texttt{OB}_b$ ) -> $T^\texttt{OB}_b$ a$\label{app_lst_ll_sig_2_OUT}$
  {
    a := trans_ll_OB<<$l$,$d$>><<b><${\overset{(l,d)}{\blacktriangle}}{^1_1},\dotsc,{\overset{(l,d)}{\blacktriangle}}{^L_D}$>[ i_$\tp{\sigma_{\uparrow\texttt{-mem}}^\texttt{<b>}(l,d)(1)}$ , ... ,
      i_$\tp{\sigma_{\uparrow\texttt{-mem}}^\texttt{<b>}(l,d)(D^\texttt{OB}_{b})}$ ];$\label{app_lst_ll_buf_size_0init_2}$
  }

  // scalar BUFs
  $\forall {b}\in[1,B^\texttt{OB}]_\IN$:
  auto trans_ll_OB<f>><<$b$>><${\overset{(f)}{\blacktriangle}}{^1_1}\in\tp{\texttt{\#PRT}(1,1)}$,...,${\overset{(f)}{\blacktriangle}}{^L_D}\in\tp{\texttt{\#PRT}(L,D)}$>$\label{app_lst_ll_buffers_alloc_f_OUT}$
    := $\tp{\texttt{f}^\uparrow\texttt{-mem}^\texttt{<b>}}$ $T^\texttt{OB}_b$[ N<<OB>><<$b$>><<$\tp{\sigma_{\texttt{f}^\uparrow\texttt{-mem}}^\texttt{<b>}(1)}\hspace*{8px}$>>(  $(\:\fsize{d}{\co{}}{MDA}{MDA}(N_d)\:)_{d\in[1,D]_\IN}$  ) ,$\label{app_lst_ll_buf_size_7}$
                      $\vdots$
                    $\hspace*{3px}$N<<OB>><<$b$>><<$\tp{\sigma_{\texttt{f}^\uparrow\texttt{-mem}}^\texttt{<b>}(D^\texttt{OB}_{b})}$>>(  $(\:\fsize{d}{\co{}}{MDA}{MDA}(N_d)\:)_{d\in[1,D]_\IN}$  ) ];$\label{app_lst_ll_buf_size_8}$

  $\forall {b}\in[1,B^\texttt{OB}]_\IN$:
  static ll_OB<<f>><<$b$>><${\overset{(f)}{\blacktriangle}}{^1_1}\in\tp{\texttt{\#PRT}(1,1)}$,...,${\overset{(f)}{\blacktriangle}}{^L_D}\in\tp{\texttt{\#PRT}(L,D)}$>( int i_1,...,
    int i_$D^\texttt{OB}_b$ )  -> $T^\texttt{OB}_b$ a$\label{app_lst_ll_sig_3_OUT}$
  {
    a := trans_ll_OB<<f>><<b><${\overset{(\texttt{f})}{\blacktriangle}}{^1_1},\dotsc,{\overset{(\texttt{f})}{\blacktriangle}}{^L_D}$>[ i_$\tp{\sigma_{f^\uparrow\texttt{-mem}}^\texttt{<b>}(1)}$ , ... , i_$\tp{\sigma_{f^\uparrow\texttt{-mem}}^\texttt{<b>}(D^\texttt{OB}_{b})}$ ];$\label{app_lst_ll_buf_size_0init_3}$
  }
\end{lstlisting}

\noindent
where ${\overset{(\bullet)}{\blacktriangledown}}{^\mathfrak{l}_\mathfrak{d}}$ and ${\overset{(\bullet)}{\blacktriangle}}{^\mathfrak{l}_\mathfrak{d}}$, for
$\bullet\in\{\bot\}\,\cup\,\MDHLVL\,\cup\,\{\texttt{f}\}$, are textually replaced by:
\begin{align*}
{\overset{(\bullet)}{\blacktriangledown}}{^\mathfrak{l}_\mathfrak{d}} \ = \
\begin{cases}
\ p^\mathfrak{l}_\mathfrak{d} &: \ \tp{\sigma_{\downarrow\texttt{-ord}}}(\mathfrak{l},\mathfrak{d}) \ < \ {\bullet} \\[10pt]
\ * &: \ \text{else}
\end{cases}
\end{align*}
\begin{align*}
{\overset{(\bullet)}{\blacktriangle}}{^\mathfrak{l}_\mathfrak{d}} \ = \
\begin{cases}
\ p^\mathfrak{l}_\mathfrak{d} &: \ \tp{\sigma_{\uparrow\texttt{-ord}}}(\mathfrak{l},\mathfrak{d}) \ < \ {\bullet} \\[10pt]
\ * &: \ \text{else}
\end{cases}
\end{align*}
(symbol $*$ is taken from Definition~\ref{app_gen_meta_func})
where $<$ is defined according to the lexicographical order on $\MDHLVL=[1,L]_\IN\times[1,D]_\IN$,
and:
\[
\forall (l,d)\in\MDHLVL: \ \bot \ :< \ (l,d) \ :< \ f
\]
Functions
\[
  \fsize{1}{\dplus}{MDA}{MDA} \ , \ \dotsc \ , \ \fsize{D}{\dplus}{MDA}{MDA}
\]
are the index set functions $id$ of combine operator concatenation $\dplus$ (Example~\ref{def:mda_concat}), and functions
\[
  \fsize{1}{\co{}}{MDA}{MDA} \ , \ \dotsc \ , \ \fsize{D}{\co{}}{MDA}{MDA}
\]
are the index set functions of combine operators $\co{1},\dotsc,\co{D}$.

Note that we use generous BUFs sizes (lines~\ref{app_lst_ll_buf_size_1}-\ref{app_lst_ll_buf_size_2},~\ref{app_lst_ll_buf_size_3}-\ref{app_lst_ll_buf_size_4},~\ref{app_lst_ll_buf_size_5}-\ref{app_lst_ll_buf_size_6},~\ref{app_lst_ll_buf_size_7}-\ref{app_lst_ll_buf_size_8}), as imperative-style programming models usually struggle with non-contiguous index ranges.
We discuss optimizations targeting BUF sizes in Section~\ref{ch_ll_opts}.

Note further that we do not need to initialize output buffers with neutral elements of combine operators in lines~\ref{app_lst_ll_buf_size_0init_1}, \ref{app_lst_ll_buf_size_0init_2}, and~\ref{app_lst_ll_buf_size_0init_3} of Listing~\ref{lst_app_code_gen_0_4}, as the buffers are initialized implicitly in the re-composition phase (Section~\ref{app_code_gen_comp}).

\subsubsection{MDAs}
\label{sec_app_code_gen_0_5}
\mbox{}

Listing~\ref{lst_app_code_gen_0_5} shows our implementation of low-level MDAs (Definitions~\ref{def_ll_mda} and~\ref{app_mda_part}).

Note that for a particular choice of meta-parameters, low-level BUFs (Definition~\ref{ll_buffers}) are ordinary BUFs (Definition~\ref{def_buffer}), as required by the types of functions $\iv$ and $\ov$ (Definitions~\ref{def_iv} and~\ref{def_ov}).

\begin{lstlisting}[
  caption={Low-Level MDAs},
  label={lst_app_code_gen_0_5}
  ]
// 0.5. MDAs

  // 0.5.1. partitioned index sets$\label{app_lst_mda_part_start}$
  $\forall d\in[1,D]_\IN:$
  static I<<$d$>><$p^1_d\in\tp{\texttt{\#PRT}(1,d)}$,...,$p^L_d\in\tp{\texttt{\#PRT}(L,d)}$>( int j' ) -> int i_j$\label{app_lst_mda_part}$
  {
    i_j := ( $p^1_d$ * ( $N_d$ / ($\tp{\texttt{\#PRT}(1,d)}$)                           $\hspace*{5px}$) +
              $\vdots$
             $p^L_d$ * ( $N_d$ / ($\tp{\texttt{\#PRT}(1,d)}$*...*$\tp{\texttt{\#PRT}(L,d)}$) ) + j' );
  }$\label{app_lst_mda_part_end}$

  // 0.5.2. input MDAs
  $\forall \bullet\in\{\bot\}\:\cup\:\MDHLVL\:\cup\:\{\texttt{f}\}$:
  static ll_inp_mda<<$\bullet$>><${\overset{(\bullet)}{\blacktriangledown}}{^1_1}\in\tp{\texttt{\#PRT}(1,1)}$,...,${\overset{(\bullet)}{\blacktriangledown}}{^L_D}\in\tp{\texttt{\#PRT}(L,D)}$>( int i_1,...,
    int i_$D$ ) -> $T^\texttt{IB}_b$ a
  {
    $\forall {b}\in[1,B^\texttt{IB}]_\IN$,${a}\in[1,A^\texttt{IB}_{b}]_\IN$:
    a := ll_IB<<$\bullet$>><<$b$>><${\overset{(\bullet)}{\blacktriangledown}}{^1_1}$,...,${\overset{(\bullet)}{\blacktriangledown}}{^L_D}$>( idx<<INP>><<$b$,$a$>><<$\ $1$\ $>>( i_1,...,i_$D$ ) ,$\label{app_lst_mda_acc_1}$
                                  $\hspace*{25px}$   $\vdots$
                                  $\hspace*{15px}$ $\hspace*{3px}$idx<<INP>><<$b$,$a$>><<$D^\texttt{IB}_{b}$>>( i_1,...,i_$D$ )   );$\label{app_lst_mda_acc_2}$
  }

  // 0.5.3. output MDAs
  $\forall \bullet\in\{\bot\}\:\cup\:\MDHLVL\:\cup\:\{\texttt{f}\}$:
  static ll_out_mda<<$\bullet$>><${\overset{(\bullet)}{\blacktriangle}}{^1_1}\in\tp{\texttt{\#PRT}(1,1)}$,...,${\overset{(\bullet)}{\blacktriangle}}{^L_D}\in\tp{\texttt{\#PRT}(L,D)}$>( int i_1,...,
    int i_$D$ ) -> $T^\texttt{OB}_b$ a
  {
    $\forall {b}\in[1,B^\texttt{OB}]_\IN$,${a}\in[1,A^\texttt{OB}_{b}]_\IN$:
    a := ll_OB<<$\bullet$>><<$b$>><${\overset{(\bullet)}{\blacktriangle}}{^1_1}$,...,${\overset{(\bullet)}{\blacktriangle}}{^L_D}$>( idx<<OUT>><<$b$,$a$>><<$\ $1$\ $>>( i_1,...,i_$D$ ) ,$\label{app_lst_mda_acc_3}$
                                  $\hspace*{25px}$   $\vdots$
                                  $\hspace*{15px}$ $\hspace*{3px}$idx<<OUT>><<$b$,$a$>><<$D^\texttt{OB}_{b}$>>( i_1,...,i_$D$ )   );$\label{app_lst_mda_acc_4}$
  }
\end{lstlisting}

For computing the partitioned index sets (lines~\ref{app_lst_mda_part_start}-\ref{app_lst_mda_part_end}), we exploit the following proposition.

\begin{proposition}
Let $\MDA\in T[N_1,\dotsc,N_D]$ be an arbitrary MDA that operates on contiguous index sets $[1,N_d)_\IN$, $d\in[1,D]_\IN$.
Let further be
\begin{align*}
&
\MDA^{<\,
(p^1_1,\dotsc,p^1_d)\in P^1_1\times\dotsc\times P^1_D
\ \ | \ \ \dotsc \ \ | \ \
(p^L_1,\dotsc,p^L_D)\in P^L_1\times\dotsc\times P^L_D
\,>}: \
I_1^{<p^1_1\,\dotsc\,p^L_1>}
\ \ \times \ \dotsc \  \times \ \
I_D^{<p^1_D\,\dotsc\,p^L_D>}
\ \to \ \
T
\end{align*}
an arbitrary $L$-layered, $D$-dimensional, $P$-partitioning of MDA $\MDA$.

It holds that $j$-th element within an MDA's part is accessed via index $j$:
\[
I_d^{<p^1_d\,\dotsc\,p^L_d>}
=
\{
\underbrace{\sum_{l\in[1,L]_\IN}p^l_d*\frac{N_d}{\prod_{l'\in[1,l]_\IN} P^{l'}_d}}_\text{OS} \ \ + \ \  0 \, , \ \
\underbrace{\sum_{l\in[1,L]_\IN}p^l_d*\frac{N_d}{\prod_{l'\in[1,l]_\IN} P^{l'}_d}}_\text{OS} \ \ + \ \  1 \, , \ \
\,\,\dotsc\
\}
\]
\end{proposition}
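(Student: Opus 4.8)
The plan is to prove the Proposition by unfolding the construction in Definition~\ref{app_mda_part} and then specializing it to the contiguous case. Definition~\ref{app_mda_part} already defines the partitioned index set as
\[
I_d^{<p^1_d,\dotsc,p^L_d>} := \{\, i^d_j \in I_d \ | \ j = OS + j' \,\}
\]
where $i^d_1 < \dotsc < i^d_{N_d}$ is the sorted enumeration of $I_d$, the offset is $OS := \sum_{l\in[1,L]_\IN} p^l_d * \frac{N_d}{\prod_{l'\in[1,l]_\IN} P^{l'}_d}$, and $j'$ ranges over the part size $PS := \frac{N_d}{\prod_{l'\in[1,L]_\IN} P^{l'}_d}$. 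So the claim reduces to showing that, for a contiguous $I_d$, the value $i^d_j$ of the $j$-th enumerated element equals $j$ itself, so that the enumeration positions $OS + j'$ occurring in the definition coincide with the actual index values appearing on the Proposition's right-hand side.

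First I would record the divisibility hypothesis: since $\prod_{l\in[1,L]_\IN} P^l_d$ divides $N_d$ by assumption, the part size $PS$ is a positive integer, and each partial product $\prod_{l'\in[1,l]_\IN} P^{l'}_d$ divides $N_d$, so every summand of $OS$ is an integer; hence $OS \in \INz$ and $OS + j'$ is a genuine index. Next I would invoke contiguity: for $I_d = [0,N_d)_{\INz} = \{0,1,\dotsc,N_d-1\}$, the sorted enumeration is the identity under $0$-based counting, i.e. $i^d_j = j$. Substituting $i^d_j = j$ into the definition then yields
\[
I_d^{<p^1_d,\dotsc,p^L_d>} = \{\, OS + j' \ | \ j' \in [0,PS)_{\INz} \,\} = \{\, OS + 0 \, , OS + 1 \, , \dotsc \,\}
\]
which is exactly the Proposition's statement.

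The main obstacle I anticipate is not analytic but one of index-convention alignment, together with a short sanity check that the construction yields a genuine partition rather than an arbitrary family of subsets. The enumeration in Definition~\ref{app_mda_part} is phrased with $1$-based positions $i^d_1,\dotsc,i^d_{N_d}$, whereas the access notation throughout the paper (Notation~\ref{not_mda}) is $0$-based and the Proposition literally writes $[1,N_d)_\IN$; I would therefore fix one convention at the outset and verify that $OS + j'$ lands in the correct range in that convention. To confirm disjoint coverage, I would observe that the tuple $(p^1_d,\dotsc,p^L_d)$ behaves as a mixed-radix numeral with radices $P^1_d,\dotsc,P^L_d$: the assignment $(p^1_d,\dotsc,p^L_d) \mapsto OS / PS$ is precisely the standard mixed-radix encoding (digit $p^l_d$ carrying weight $\prod_{l'\in(l,L]_\IN} P^{l'}_d$) and is a bijection onto $[0,\prod_{l\in[1,L]_\IN} P^l_d)_{\INz}$, so scaling by $PS$ and adding $j' \in [0,PS)_{\INz}$ gives a bijection onto $[0,N_d)_{\INz}$. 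This shows the blocks $I_d^{<\dotsc>}$ tile $I_d$ without gaps or overlaps, matching the intended uniform partitioning and closing the argument.
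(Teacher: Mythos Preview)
Your proposal is correct and follows essentially the same approach as the paper: the paper's proof is a single sentence observing that, because the index sets are contiguous ranges of natural numbers, the sorted-enumeration value $i_j$ in Definition~\ref{app_mda_part} equals $j$ itself, which is exactly your core step. Your additional material (the divisibility check for $OS$, the index-convention discussion, and the mixed-radix bijection verifying disjoint coverage) goes beyond what the paper supplies but does not change the route.
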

\begin{proof}
  Since MDA $\MDA$'s index sets are contiguous ranges of natural numbers, it holds the $i_j$~--~the index to access the $j$-th element within an MDA's part (Definition~\ref{app_mda_part})~--~is equal to $j$ itself.
\end{proof}

\vspace*{20px}

\subsection{De-Composition Phase}
\label{app_code_gen_decomp}

Listing~\ref{lst_app_code_gen_1} shows our implementation of the de-composition phase (Figure~\ref{fig_gen_ll}).

\begin{lstlisting}[
  caption={De-Composition Phase},
  label={lst_app_code_gen_1}
  ]
// 1. de-composition phase

  // 1.1. initialization
  ll_inp_mda<<$\bot$>> =: ll_inp_mda<<$\tp{\sigma_{\downarrow\texttt{-ord}}(1,1)}$>>

  // 1.2. main
  int p_$\tp{\sigma_{\downarrow\texttt{-ord}}(1,1)}$ $\in^{<\tp{\leftrightarrow_{\downarrow\texttt{-ass}}(1,1)}>}$ $\tp{\texttt{\#PRT}}$($\tp{\sigma_{\downarrow\texttt{-ord}}(1,1)}$)
  {
    ll_inp_mda<<$\tp{\sigma_{\downarrow\texttt{-ord}}(1,1)}$>> =:$_{\texttt{cc<}\tp{\scriptstyle\sigma_{\downarrow\texttt{-ord}}(1,1)}\texttt{>}}$ ll_inp_mda<<$\tp{\sigma_{\downarrow\texttt{-ord}}(1,2)}$>>;
    int p_$\tp{\sigma_{\downarrow\texttt{-ord}}(1,2)}$ $\in^{<\tp{\leftrightarrow_{\downarrow\texttt{-ass}}(1,2)}>}$ $\tp{\texttt{\#PRT}}$($\tp{\sigma_{\downarrow\texttt{-ord}}(1,2)}$)
    {
      ll_inp_mda<<$\tp{\sigma_{\downarrow\texttt{-ord}}(1,2)}$>> =:$_{\texttt{cc<}\tp{\scriptstyle\sigma_{\downarrow\texttt{-ord}}(1,2)}\texttt{>}}$ ll_inp_mda<<$\tp{\sigma_{\downarrow\texttt{-ord}}(1,3)}$>>;
        $\ddots$
          int p_$\tp{\sigma_{\downarrow\texttt{-ord}}(L,D)}$ $\in^{<\tp{\leftrightarrow_{\downarrow\texttt{-ass}}(L,D)}>}$ $\tp{\texttt{\#PRT}}$($\tp{\sigma_{\downarrow\texttt{-ord}}(L,D)}$)
          {
            ll_inp_mda<<$\tp{\sigma_{\downarrow\texttt{-ord}}(L,D)}$>> =:$_{\texttt{cc<}\tp{\scriptstyle\sigma_{\downarrow\texttt{-ord}}(L,D)}\texttt{>}}$ ll_inp_mda<<f>>;
          }
        $\udots$
    }
  }
\end{lstlisting}

\newpage

\noindent
where
\begin{lstlisting}[numbers=none,frame=none]
ll_inp_mda<<$l$,$d$>> =:$_{\texttt{cc<}l,d\texttt{>}}$ ll_inp_mda<<$l'$,$d'$>>
\end{lstlisting}
abbreviates
\begin{lstlisting}[numbers=none,frame=none]
ll_inp_mda<<$l'$,$d'$>><${\overset{(l',d')}{\blacktriangledown}}{^1_1}$ ,..., ${\overset{(l',d')}{\blacktriangledown}}{^L_D}$>, ll_inp_mda<<$l$,$d$>><${\overset{(l,d)}{\blacktriangledown}}{^1_1}$ ,..., ${\overset{(l,d)}{\blacktriangledown}}{^L_D}$>
:=  cc_inv<<d>><$\fsize{1}{\dplus}{MDA}{MDA}$( I<<$\hspace*{2px}1\hspace*{2px}$>><${\overset{(l,d)}{\blacksquare}}{^1_1}$,...,${\overset{(l,d)}{\blacksquare}}{^L_1}$>(0) ), // $I_1$
                 $\vdots$                                  // ...,$I_{d-1}$,$I_{d+1}$,...
                $\fsize{D}{\dplus}{MDA}{MDA}$( I<<$D$>><${\overset{(l,d)}{\blacksquare}}{^1_D}$,...,${\overset{(l,d)}{\blacksquare}}{^L_D}$>(0) ),  $\hspace{2px}$// $I_D$

                $\fsize{d}{\dplus}{MDA}{MDA}$( I<<$\hspace*{2px}d\hspace*{2px}$>><${\overset{(l,d)}{\boxplus}}{^1_d}$,...,${\overset{(l,d)}{\boxplus}}{^L_d}$>(0) ), // $P$
                $\fsize{d}{\dplus}{MDA}{MDA}$( I<<$\hspace*{2px}d\hspace*{2px}$>><${\overset{(l,d)}{\boxtimes}}{^1_d}$,...,${\overset{(l,d)}{\boxtimes}}{^L_d}$>(0) )   $\ $// $Q$
          >( ll_inp_mda<<l,d>><${\overset{(l,d)}{\blacktriangledown}}{^1_1}$ ,..., ${\overset{(l,d)}{\blacktriangledown}}{^L_D}$> )
\end{lstlisting}
Here, functions
$
\fsize{1}{\dplus}{MDA}{MDA},\dotsc,\fsize{D}{\dplus}{MDA}{MDA}
$
are the index set functions $id$ of combine operator concatenation~$\dplus_{1},\dotsc,\dplus_{D}$~(Example~\ref{def:mda_concat}), and ${\overset{(\bullet)}{\blacksquare}}{^\mathfrak{l}_\mathfrak{d}}$, ${\overset{(\bullet)}{\boxplus}}{^\mathfrak{l}_\mathfrak{d}}$, ${\overset{(\bullet)}{\boxtimes}}{^\mathfrak{l}_\mathfrak{d}}$, for $\bullet\in\MDHLVL$, are textually replaced by:
\begin{align*}
&&&
{\overset{(\bullet)}{\blacksquare}}{^\mathfrak{l}_\mathfrak{d}} \ := \
\begin{cases}
\ p\_(\mathfrak{l},\mathfrak{d}) &: \ \tp{\sigma_{\downarrow\texttt{-ord}}}(\mathfrak{l},\mathfrak{d}) \ < \ \bullet \\[10pt]
\ [p\_(\mathfrak{l},\mathfrak{d}),\tp{\texttt{\#PRT}(\mathfrak{l},\mathfrak{d})})_{\INz} &: \ (\mathfrak{l},\mathfrak{d}) \  \: = \hspace*{30.5px} \bullet \\[10pt]
\ [0,\tp{\texttt{\#PRT}(\mathfrak{l},\mathfrak{d})})_{\IN_0} &: \ \tp{\sigma_{\downarrow\texttt{-ord}}}(\mathfrak{l},\mathfrak{d}) \ > \ \bullet
\end{cases}
\\
&&&
\\
&&&
{\overset{(\bullet)}{\boxplus}}{^\mathfrak{l}_\mathfrak{d}} \ := \
\begin{cases}
\ p\_(\mathfrak{l},\mathfrak{d}) &: \ \tp{\sigma_{\downarrow\texttt{-ord}}}(\mathfrak{l},\mathfrak{d}) \ < \ \bullet \\[10pt]
\ p\_(\mathfrak{l},\mathfrak{d}) &: \ (\mathfrak{l},\mathfrak{d}) \  \: = \hspace*{30.5px} \bullet \\[10pt]
\ [0,\tp{\texttt{\#PRT}(\mathfrak{l},\mathfrak{d})})_{\IN_0} &: \ \tp{\sigma_{\downarrow\texttt{-ord}}}(\mathfrak{l},\mathfrak{d}) \ > \ \bullet
\end{cases}
\\
&&&
\\
&&&
{\overset{(\bullet)}{\boxtimes}}{^\mathfrak{l}_\mathfrak{d}} \ := \
\begin{cases}
\ p\_(\mathfrak{l},\mathfrak{d}) &: \ \tp{\sigma_{\downarrow\texttt{-ord}}}(\mathfrak{l},\mathfrak{d}) \ < \ \bullet \\[10pt]
\ (p\_(\mathfrak{l},\mathfrak{d}),\tp{\texttt{\#PRT}(\mathfrak{l},\mathfrak{d})})_{\INz} &: \ (\mathfrak{l},\mathfrak{d}) \  \: = \hspace*{30.5px} \bullet \\[10pt]
\ [0,\tp{\texttt{\#PRT}(\mathfrak{l},\mathfrak{d})})_{\IN_0} &: \ \tp{\sigma_{\downarrow\texttt{-ord}}}(\mathfrak{l},\mathfrak{d}) \ > \ \bullet
\end{cases}
\end{align*}

\noindent
where $<$ is defined as lexicographical order, according to Section~\ref{sec_app_code_gen_0_4}.

\noindent
Note that we re-use \texttt{ll\_inp\_mda<<$l$,$d$>>} for the intermediate results given by different iterations of variable \texttt{p\_({l},{d})}.
Correctness is ensured, as it holds:
\[
A \subseteq B \ \ \Rightarrow \ \ \fsize{d}{\dplus}{MDA}{MDA}(A) \ \subseteq \ \fsize{d}{\dplus}{MDA}{MDA}(B)
\]
MDA \texttt{ll\_inp\_mda<<$l$,$d$>>} has the following type when used for the intermediate result in a particular iteration of \texttt{p\_({l},{d})}:
\begin{align*}
\fsize{1}{\dplus}{MDA}{MDA}( \, I_1^{<{\overset{(l,d)}{\blacksquare}}{^1_1},...,{\overset{(l,d)}{\blacksquare}}{^1_D}\,|\,\dotsc\,|\,{\overset{(l,d)}{\blacksquare}}{^L_1},...,{\overset{(l,d)}{\blacksquare}}{^L_D}>} \, )
\ \times \ \dotsc \ \times \
\fsize{D}{\dplus}{MDA}{MDA}( \, I_D^{<{\overset{(l,d)}{\blacksquare}}{^1_1},...,{\overset{(l,d)}{\blacksquare}}{^1_D}\,|\,\dotsc\,|\,{\overset{(l,d)}{\blacksquare}}{^L_1},...,{\overset{(l,d)}{\blacksquare}}{^L_D}>} \, )
\ \to \ T^\texttt{INP}
\end{align*}
Here, for a set $P\subseteq[0,\tp{\texttt{\#PRT}({l},{d})})_{\IN_0}$, index set
$
  I_d^{<\dotsc\,|\,\dotsc\, P \,\dotsc\,|\,\dotsc>}
$
denotes
$
  \bigcup_{p^l_d\in P} I_d^{<\dotsc\,|\,\dotsc\, p^l_d \,\dotsc\,|\,\dotsc>}
$.

\vspace*{20px}

\subsection{Scalar Phase}
\label{app_code_gen_scalar}

Listing~\ref{lst_app_code_gen_2} shows our implementation of the scalar phase (Figure~\ref{fig_gen_ll}).

\begin{lstlisting}[
  caption={Scalar Phase},
  label={lst_app_code_gen_2}
  ]
// 2. scalar phase
int p_$\tp{\sigma_{f\texttt{-ord}}(1,1)}$ $\in^{<\tp{\leftrightarrow_{f\texttt{-ass}}(1,1)>}}$ $\tp{\texttt{\#PRT}}$($\tp{\sigma_{f\texttt{-ord}}(1,1)}$)
  $\ddots$
    int p_$\tp{\sigma_{f\texttt{-ord}}(L,D)}$ $\in^{<\tp{\leftrightarrow_{f\texttt{-ass}}(L,D)>}}$ $\tp{\texttt{\#PRT}}$($\tp{\sigma_{f\texttt{-ord}}(L,D)}$)
    {
      (
        ll_out_mda<<$f$>><<
            p_(1,1) ,..., p_(1,D),
                            ...
            p_(L,1) ,..., p_(L,D)>><<$b,a$>>(
                $\fsize{1}{\co{}}{MDA}{MDA}$( I<<$1$>><p_(1,1),...,p_(L,1)>(0) ) ,
                $\vdots$
                $\fsize{D}{\co{}}{MDA}{MDA}$( I<<$D$>><p_(1,D),...,p_(L,D)>(0) ) )
        )$_{b\in[1,B^\texttt{OB}]_\IN, a\in[1,A^\texttt{OB}_b]_\IN}$ := f( ( ll_inp_mda<<$f$>><< p_(1,1) ,..., p_(1,D),
                                                 |\hspace*{7px}|...
                                                 |\hspace*{7px}|p_(L,1) ,..., p_(L,D)>><<$b,a$>>(
                                    ${\fsize{d}{\dplus}{MDA}{MDA}}$( I<<$1$>><p_(1,1),...,p_(L,1)>(0) ) ,
                                        $\vdots$
                                    ${\fsize{d}{\dplus}{MDA}{MDA}}$( I<<$D$>><p_(1,D),...,p_(L,D)>(0) ) )
                               )$_{b\in[1,B^\texttt{IB}]_\IN, a\in[1,A^\texttt{IB}_b]_\IN}$ )
    }
\end{lstlisting}

\newpage

\subsection{Re-Composition Phase}
\label{app_code_gen_comp}

Listing~\ref{lst_app_code_gen_3} shows our implementation of the re-composition phase (Figure~\ref{fig_gen_ll}).

\begin{lstlisting}[
  caption={Re-Composition Phase},
  label={lst_app_code_gen_3}
  ]
// 3. re-composition phase

  // 3.1. main
  int p_$\tp{\sigma_{\uparrow\texttt{-ord}}(1,1)}$ $\in^{<\tp{\leftrightarrow_{\uparrow\texttt{-ass}}(1,1)}>}$ $\tp{\texttt{\#PRT}}$($\tp{\sigma_{\uparrow\texttt{-ord}}(1,1)}$)
  {
    int p_$\tp{\sigma_{\uparrow\texttt{-ord}}(1,2)}$ $\in^{<\tp{\leftrightarrow_{\uparrow\texttt{-ass}}(1,2)}>}$ $\tp{\texttt{\#PRT}}$($\tp{\sigma_{\uparrow\texttt{-ord}}(1,2)}$)
    {
      $\ddots$
        int p_$\tp{\sigma_{\uparrow\texttt{-ord}}(L,D)}$ $\in^{<\tp{\leftrightarrow_{\uparrow\texttt{-ass}}(L,D)}>}$ $\tp{\texttt{\#PRT}}$($\tp{\sigma_{\uparrow\texttt{-ord}}(L,D)}$)
        {
          ll_out_mda<<$\tp{\sigma_{\uparrow\texttt{-ord}}(L,D)}$>> :=$_{\texttt{co<}\tp{\scriptstyle\sigma_{\uparrow\texttt{-ord}}(L,D)}\texttt{>}}$ ll_out_mda<<f>>;$\label{lst_app_code_gen_3_1}$
        }
      $\udots$
      ll_out_mda<<$\tp{\sigma_{\uparrow\texttt{-ord}}(1,2)}$>> :=$_{\texttt{co<}\tp{\scriptstyle\sigma_{\uparrow\texttt{-ord}}(1,2)}\texttt{>}}$ ll_out_mda<<$\tp{\sigma_{\uparrow\texttt{-ord}}(1,3)}$>>;$\label{lst_app_code_gen_3_2}$
    }
    ll_out_mda<<$\tp{\sigma_{\uparrow\texttt{-ord}}(1,1)}$>> :=$_{\texttt{co<}\tp{\scriptstyle\sigma_{\uparrow\texttt{-ord}}(1,1)}\texttt{>}}$ ll_out_mda<<$\tp{\sigma_{\uparrow\texttt{-ord}}(1,2)}$>>;$\label{lst_app_code_gen_3_3}$
  }

  // 3.2. finalization
  ll_out_mda<<$\bot$>> := ll_out_mda<<$\tp{\sigma_{\uparrow\texttt{-ord}}(1,1)}$>>
\end{lstlisting}

\noindent
where
\begin{lstlisting}[numbers=none,frame=none]
ll_out_mda<<$l$,$d$>> :=$_{\texttt{co<}l,d\texttt{>}}$ ll_out_mda<<$l'$,$d'$>>
\end{lstlisting}
abbreviates
\begin{lstlisting}[numbers=none,frame=none]
ll_out_mda<<$l$,$d$>><${\overset{(l,d)}{\blacktriangle}}{^1_1}$ ,..., ${\overset{(l,d)}{\blacktriangle}}{^L_D}$>
:= co<<d>><$\fsize{1}{\co{}}{MDA}{MDA}$( I<<$\hspace*{2px}1\hspace*{2px}$>><${\overset{(l,d)}{\blacksquare}}{^1_1}$,...,${\overset{(l,d)}{\blacksquare}}{^L_1}\hspace*{0px}$>(0) ), // $I_1$
            $\vdots$                                  // ...,$I_{d-1}$,$I_{d+1}$,...
           $\fsize{D}{\co{}}{MDA}{MDA}$( I<<$D$>><${\overset{(l,d)}{\blacksquare}}{^1_D}$,...,${\overset{(l,d)}{\blacksquare}}{^L_D}$>(0) ), $\hspace{1px}$// $I_D$

           $\fsize{d}{\co{}}{MDA}{MDA}$( I<<$\hspace*{2px}d\hspace*{2px}$>><${\overset{(l,d)}{\boxplus}}{^1_d}$,...,${\overset{(l,d)}{\boxplus}}{^L_d}$>(0) ), // $P$
           $\fsize{d}{\co{}}{MDA}{MDA}$( I<<$\hspace*{2px}d\hspace*{2px}$>><${\overset{(l,d)}{\boxtimes}}{^1_d}$,...,${\overset{(l,d)}{\boxtimes}}{^L_d}$>(0) )   $\ $// $Q$
          >( ll_out_mda<<$l\,\,$,$d\,\,\,$>><${\overset{(l,d)}{\blacktriangle}}{^1_1} \ $ ,..., $ \ {\overset{(l,d)}{\blacktriangle}}{^L_D}$> ,
             ll_out_mda<<$l'$,$d'$>><${\overset{(l',d')}{\blacktriangle}}{^1_1}$ ,..., ${\overset{(l',d')}{\blacktriangle}}{^L_D}$> )
\end{lstlisting}
Here, functions
$
\fsize{1}{\co{}}{MDA}{MDA},\dotsc,\fsize{D}{\co{}}{MDA}{MDA}
$
are the index set function
of combine operators $\co{1},\dotsc,\co{D}$ (Definition~\ref{def_combine_op}), and ${\overset{(\bullet)}{\blacksquare}}{^\mathfrak{l}_\mathfrak{d}}$, ${\overset{(\bullet)}{\boxplus}}{^\mathfrak{l}_\mathfrak{d}}$, ${\overset{(\bullet)}{\boxtimes}}{^\mathfrak{l}_\mathfrak{d}}$, for $\bullet\in\MDHLVL$, are textually replaced by:
\begin{align*}
    &&&
    {\overset{(\bullet)}{\blacksquare}}{^\mathfrak{l}_\mathfrak{d}} \ := \
    \begin{cases}
    \ p\_(\mathfrak{l},\mathfrak{d}) &: \ \tp{\sigma_{\uparrow\texttt{-ord}}}(\mathfrak{l},\mathfrak{d}) \ < \ \bullet \\[10pt]
    \ [0,p\_(\mathfrak{l},\mathfrak{d})]_{\INz} &: \ (\mathfrak{l},\mathfrak{d}) \  \: = \hspace*{30.5px} \bullet \\[10pt]
    \ [0,\tp{\texttt{\#PRT}(\mathfrak{l},\mathfrak{d})})_{\IN_0} &: \ \tp{\sigma_{\uparrow\texttt{-ord}}}(\mathfrak{l},\mathfrak{d}) \ > \ \bullet
    \end{cases}
    \\
    &&&
    \\
    &&&
    {\overset{(\bullet)}{\boxplus}}{^\mathfrak{l}_\mathfrak{d}} \ := \
    \begin{cases}
    \ p\_(\mathfrak{l},\mathfrak{d}) &: \ \tp{\sigma_{\uparrow\texttt{-ord}}}(\mathfrak{l},\mathfrak{d}) \ < \ \bullet \\[10pt]
    \ [0,p\_(\mathfrak{l},\mathfrak{d}))_{\IN_0} &: \ (\mathfrak{l},\mathfrak{d}) \  \: = \hspace*{30.5px} \bullet \\[10pt]
    \ [0,\tp{\texttt{\#PRT}(\mathfrak{l},\mathfrak{d})})_{\IN_0} &: \ \tp{\sigma_{\uparrow\texttt{-ord}}}(\mathfrak{l},\mathfrak{d}) \ > \ \bullet
    \end{cases}
    \\
    &&&
    \\
    &&&
    {\overset{(\bullet)}{\boxtimes}}{^\mathfrak{l}_\mathfrak{d}} \ := \
    \begin{cases}
    \ p\_(\mathfrak{l},\mathfrak{d}) &: \ \tp{\sigma_{\uparrow\texttt{-ord}}}(\mathfrak{l},\mathfrak{d}) \ < \ \bullet \\[10pt]
    \ p\_(\mathfrak{l},\mathfrak{d}) &: \ (\mathfrak{l},\mathfrak{d}) \  \: = \hspace*{30.5px} \bullet \\[10pt]
    \ [0,\tp{\texttt{\#PRT}(\mathfrak{l},\mathfrak{d})})_{\IN_0} &: \ \tp{\sigma_{\uparrow\texttt{-ord}}}(\mathfrak{l},\mathfrak{d}) \ > \ \bullet
    \end{cases}
\end{align*}
where $<$ is defined as lexicographical order, according to Section~\ref{sec_app_code_gen_0_4}.

\noindent
Note that we assume for index set functions $\fsize{d}{\co{}}{MDA}{MDA}$ that
\[
A \subseteq B \ \ \Rightarrow \ \ \fsize{d}{\co{}}{MDA}{MDA}(A) \ \subseteq \ \fsize{d}{\co{}}{MDA}{MDA}(B)
\]
(which holds for all kinds of index set functions
used in this paper, e.g., in Examples~\ref{def:mda_concat} and~\ref{def:mda_pw}) so that we can re-use \texttt{ll\_out\_mda<<$l$,$d$>>} for the intermediate results given by different iterations of~\texttt{p\_(\texttt{l},\texttt{d})}.
MDA \texttt{ll\_out\_mda<<$l$,$d$>>} has the following type when used for the intermediate result in a particular iteration of variable \texttt{p\_({l},{d})}:
\begin{align*}
\fsize{1}{\co{}}{MDA}{MDA}( \, I_1^{<{\overset{(l,d)}{\blacksquare}}{^1_1},...,{\overset{(l,d)}{\blacksquare}}{^1_D}\,|\,\dotsc\,|\,{\overset{(l,d)}{\blacksquare}}{^L_1},...,{\overset{(l,d)}{\blacksquare}}{^L_D}>} \, )
\ \times \ \dotsc \ \times \
\fsize{D}{\co{}}{MDA}{MDA}( \, I_D^{<{\overset{(l,d)}{\blacksquare}}{^1_1},...,{\overset{(l,d)}{\blacksquare}}{^1_D}\,|\,\dotsc\,|\,{\overset{(l,d)}{\blacksquare}}{^L_1},...,{\overset{(l,d)}{\blacksquare}}{^L_D}>} \, )
\ \to \ T^\texttt{OUT}
\end{align*}

Note that in lines~\ref{lst_app_code_gen_3_1},~\ref{lst_app_code_gen_3_2},~\ref{lst_app_code_gen_3_3} of Listing~\ref{lst_app_code_gen_3}, we implicitly override the uninitialized value in \texttt{ll\_out\_mda<<l,d>>} (not explicitly stated in the listing for brevity), thereby avoiding initializing output buffers with neutral elements of combine operators.

\vspace*{15px}

\section{Code-Level Optimizations}
\label{ch_ll_opts}
We consider optimizations that operate below the abstraction level of our low-level representation~(Section~\ref{ch:low_level}) as \emph{code-level optimizations}.
For some code-level optimizations, e.g., loop fusion,
we do not want to rely on the underlying compiler (e.g., the OpenMP/CUDA/OpenCL compiler):~we exactly know the structure of our code presented in Section~\ref{app_sec_code_generation} and thus, we are able to implement code-level optimizations without requiring complex compiler analyses for optimizations.

We outline our conducted code-level optimizations, which our systems performs automatically for the underlying compiler (OpenMP, CUDA, OpenCL, etc).
Our future work will thoroughly discuss our code-level optimizations and how we apply them to our generated program code.
Some code-level optimizations, such as loop unrolling, are (currently) left to the underlying compiler, e.g., the OpenMP, CUDA, or OpenCL compiler.
In our future work, we aim to incorporate code-level optimizations, as loop unrolling, into our auto-tunable optimization process.

\subsubsection*{Loop Fusion}
In Listings~\ref{lst_app_code_gen_1},~\ref{lst_app_code_gen_2},~\ref{lst_app_code_gen_3}, the lines containing symbol "$\in$" are mapped to \texttt{for}-loops.
These loops can often be fused;~for example, when parameters \texttt{D1}, \texttt{S1}, \texttt{R1} as well as parameters \texttt{D2}, \texttt{S2}, \texttt{R2} in Table~\ref{tab_tps} coincide (as in Figure~\ref{fig_ll_example}).
Besides reducing the overhead caused by loop control structures, loop fusion in particular allows significantly reducing the memory footprint:~we can re-use the same memory region for each BUF partition (Definition~\ref{ll_buffers}), rather than allocating memory for all the partitions.

\subsubsection*{Buffer Elimination}
In Listing~\ref{lst_app_code_gen_0_4}, we allocate BUFs for each combination of a layer and dimension.
However, when memory regions and memory layouts of BUFs coincide, we can avoid a new BUF allocation, by re-using the BUF of the upper level, thereby again reducing memory footprint.

\subsubsection*{Buffer Size Reduction}
We can reduce the sizes of BUFs when specific classes of index functions are used for views (Definitions~\ref{def_iv} and~\ref{def_ov}).
For example, in the case of \emph{Dot Product (\texttt{DOT})} (Figure~\ref{fig_hl_examples}), when accessing its input in a strided fashion~--~via index function $k\mapsto(2*k)$, instead of function $k\mapsto(k)$ (as in Figure~\ref{fig_hl_examples})~--~we would have to allocate BUFs (Listing~\ref{lst_app_code_gen_0_4}, lines~\ref{app_lst_ll_buffers_alloc_LVL_INP} and~\ref{app_lst_ll_buffers_alloc_f_INP}) of size $2*K$ for an input size of $K\in\IN$;~in these BUFs, each second value (accessed via indices $2*k+1$) would be undefined.
We avoid this waste of memory, by using index function $k\mapsto(k)$ instead of $k\mapsto(2*k)$ for allocated BUFs (Listing~\ref{lst_app_code_gen_0_5}, lines~\ref{app_lst_mda_acc_1}-\ref{app_lst_mda_acc_2} and~\ref{app_lst_mda_acc_3}-\ref{app_lst_mda_acc_4}, case "$\bullet\neq\bot$"), which avoids index functions' leading factors and potential constant additions.
Thereby, we reduce the memory footprint from $2*K$ to $K$.
Furthermore, according to our partitioning strategy (Listing~\ref{lst_app_code_gen_0_5}, line~\ref{app_lst_mda_part}, and Listings~\ref{lst_app_code_gen_1},~\ref{lst_app_code_gen_2},~\ref{lst_app_code_gen_3}), we often access BUFs via offsets: $k\mapsto(2*k)$ for $k\in\{OS+k' \ | \ k'\in\IN\}$ and offset $OS\in\IN$.
We avoid such offset by using $k\mapsto(\,2*(k-OS)\,)$, \mbox{thereby further reducing memory footprint.}

\subsubsection*{Memory Operation Minimization}
In our code, we access BUFs uniformly via MDAs (Listing~\ref{lst_app_code_gen_0_5}), which may cause unnecessary memory operations.
For example, in the de-composition phase (Listing~\ref{lst_app_code_gen_1}) of, e.g., matrix multiplication (\texttt{MatMul}) (Figure~\ref{fig_hl_examples}), we iterate over all dimensions of the input (i.e., the $i,j,k$ dimensions) for de-composition (Listing~\ref{lst_app_code_gen_0_1_2_b}).
However, the $A$ input matrix of \texttt{MatMul} is accessed via MDA indices $i$ and $k$ only (Figure~\ref{fig_hl_examples}).
We avoid these unnecessary memory operations ($J$-many in the case of an input MDA of size $J$ in dimension $j$) by using index $0$ only in dimension $j$ for the de-composition of the $A$ matrix.
Analogously, we use index $0$ only in the $i$ dimension for the de-composition of \texttt{MatMul}'s $B$ matrix which is accessed via MDA indices $k$ and $j$ only.
Moreover, we exploit all available parallelism for memory copy operations.
For example, for \texttt{MatMul}, we use also the threads intended for the $j$ dimension when de-composing the $A$ matrix, and we use the threads in the $i$ dimension for the $B$ matrix.
For this, we flatten the thread ids over all dimensions $i,j,k$ and re-structure them only in dimensions $i,k$ (for the $A$ matrix) or $k,j$ (for the $B$ matrix).

\subsubsection*{Constant Substitution}

We use constants whenever possible.
For example, in CUDA, variable \texttt{threadIdx.x} returns the thread id in dimension $x$.
However, in our code, we use constant \texttt{0} instead of \texttt{threadIdx.x} when only one thread is started in dimension \texttt{x}, enabling the CUDA compiler to significantly simplify arithmetic expressions.

\end{appendix}

\end{document}